\newenvironment{bprooftree}
  {\leavevmode\hbox\bgroup}
  {\DisplayProof\egroup}
\def\namedlabel#1#2{\begingroup
    #2%
    \def\@currentlabel{#2}%
    \phantomsection\label{#1}\endgroup
}
\newcommand{\MALL}{\sf{MALL}}
\newcommand{\LL}{\sf{LL}}
\newcommand{\Pts}{\sf{PtS}}
\newcommand{\Bes}{\sf{BeS}}
\newcommand{\At}{\sf{At}}
\newcommand{\red}[1]{\textcolor{red}{#1}}
\newcommand\ie{\hbox{\textit{i.e.}}}
\newcommand\eg{\hbox{\textit{e.g.}}}
\newcommand\bang{\mathop{!}}
\newcommand\quest{\mathord{?}}
\newcommand\limp{\multimap}
\newcommand\tensor\otimes
\begin{document}
%%%Note the beginning and end of the frontmatter section that starts here%%%%%
\begin{frontmatter}

\title{A Proof-Theoretic Approach to\\ the Semantics of Classical Linear Logic}
				
 \thanks[ALL]{
 Piotrovskaya is supported by the Engineering and Physical Sciences Research Council grants EP/T517793/1 and EP/W524335/1.
 Barroso-Nascimento and Pimentel are supported by the Leverhulme grant RPG-2024-196. 
 Pimentel has received funding from the European Union's Horizon 2020 research and innovation programme under the Marie Sk\l odowska-Curie grant agreement Number 101007627.}   

  	%%Note NO SPACE between 
   \author{Victor Barroso-Nascimento\thanksref{a}\thanksref{1email}}
   \author{Ekaterina Piotrovskaya\thanksref{a}\thanksref{2email}}%last name and \thanksref{...} 
   \author{Elaine Pimentel\thanksref{a}\thanksref{3email}}
    %%%Next come the addresses%%%%
   \address[a]{Computer Science Department\\ University College London\\				%or between \thanksrefs...
    London, UK}  
     \thanks[1email]{Email:  \href{mailto:v.nascimento@ucl.ac.uk} {\texttt{\normalshape
        victorluisbn@gmail.com}}}
   \thanks[2email]{Email: \href{mailto:kate.piotrovskaya.21@ucl.ac.uk} {\texttt{\normalshape
        kate.piotrovskaya.21@ucl.ac.uk}}} 
   %%%Note: if both authors share same institution, only list the address once, after the second 
   %%%author. 
   %%%There also is a link from the first author to the co-author's address to show how to list 
   %%%affiliations to more than one institution, when needed. 
  \thanks[3email]{Email:  \href{mailto:e.pimentel@ucl.ac.uk} {\texttt{\normalshape
        e.pimentel@ucl.ac.uk}}}
\begin{abstract} 
Linear logic ($\LL$) is a resource-aware, abstract logic programming language that refines both classical and intuitionistic logic. 
Linear logic semantics is typically presented in one of two ways: by associating each formula with the set of all contexts that can be used to prove it (\eg\ phase semantics) or by assigning meaning directly to proofs (\eg\ coherence spaces). 

This work proposes a different perspective on assigning meaning to proofs by adopting a proof-theoretic perspective. 
More specifically, we employ base-extension semantics ($\Bes$) to characterise proofs through the notion of base support.
Recent developments have shown that $\Bes$ is powerful enough to capture proof-theoretic notions in structurally rich logics such as intuitionistic linear logic. In this paper, we extend this framework to the classical case, presenting a proof-theoretic approach to the semantics of the multiplicative-additive fragment of linear logic ($\MALL$).
\end{abstract}
\begin{keyword}
  Linear logic, Semantics, Proof Theory, Proof-theoretic Semantics, Base-extension Semantics.
\end{keyword}
\end{frontmatter}

%The new formulation also sheds some light on a duality between clauses for direct and indirect assertion of connectives.

\section{Introduction}\label{sec:intro}
In model-theoretic semantics~\cite{sep-model-theory}, when giving meaning to a sentence $p$, one generally assigns an {\em interpretation} to determine whether it is true or false. This process may involve adding missing information as, \eg, in Kripke systems for modal logics: Since modalities ``qualify'' the notion of truth, mathematical structures {\em support} the validity of $\Box A$ by checking the validity of $A$ in such structures instead. 

If a particular interpretation $\mathcal{M}$ results in $p$ expressing a true statement, we say that $\mathcal{M}$ is a {\em model} of $p$, or equivalently, that $\mathcal{M}$ {\em satisfies} $p$, which can be symbolically denoted as $\Vdash_\mathcal{M}  p$.  However, it is important to note that asserting ``$p$ is true in $\mathcal{M}$'' is simply a reformulation of the claim that $p$, when understood according to $\mathcal{M}$, is true. In this sense, model-theoretic truth relies on ordinary truth and can always be restated in terms of it.

Proof-theoretic semantics~\cite{pts-91,sep-proof-theoretic-semantics} ($\Pts$), on the other hand, provides an alternative perspective for the meaning of logical operators compared to the viewpoint offered by model-theoretic semantics. In $\Pts$, the concept of {\em truth} is substituted with that of {\em proof}, emphasizing the fundamental nature of proofs as a means through which we gain demonstrative knowledge, particularly in mathematical contexts. This makes $\Pts$ a more adequate approach for comprehending reasoning since it ensures that the meaning of logical operators, such as connectives in logics, is defined based on their usage in inferences.

Base-extension semantics~\cite{Sandqvist2015IL,DBLP:journals/synthese/Schroeder-Heister06} ($\Bes$) is a strand of $\Pts$ where proof-theoretic validity is defined relative to a given collection of inference rules regarding basic formulas of the language. More specifically, in $\Bes$ the characterisation of consequence is given by an inductively defined semantic judgment whose base case is given by provability in an {\em atomic system} (or a {\em base}). 

A base is a collection of rules involving only atomic formulas. The nature of such collection/formulas change depending on the logic considered. For example, in~\cite{Sandqvist2015IL}, atoms are intuitionistic atomic propositions and rules have natural deduction style, \eg

\begin{center}
\begin{bprooftree}
\AxiomC{}
\noLine
\UnaryInfC{}
\noLine
\UnaryInfC{}
\noLine
\UnaryInfC{}
\noLine
\UnaryInfC{}
\noLine
\UnaryInfC{}
\noLine
\UnaryInfC{}
\noLine
\UnaryInfC{$l$}
\UnaryInfC{$r$}
\end{bprooftree}
\qquad
\begin{bprooftree}
\AxiomC{}
\noLine
\UnaryInfC{}
\noLine
\UnaryInfC{}
\noLine
\UnaryInfC{}
\noLine
\UnaryInfC{}
\noLine
\UnaryInfC{}
\noLine
\UnaryInfC{}
\noLine
\UnaryInfC{$r$}
\UnaryInfC{$p$}
\end{bprooftree}
\qquad 
\begin{bprooftree}
\AxiomC{$[l]$}
\noLine
\UnaryInfC{.}
\noLine
\UnaryInfC{.}
\noLine
\UnaryInfC{.}
\noLine
\UnaryInfC{$p$}
\UnaryInfC{$u$}
\end{bprooftree}
\end{center}
One could view these as rules that assign interpretations to atomic sentences, much like how models operate in model-theoretic semantics. 
For example, if $l,r,p, u$ represent the sentences ``{\em We are in London}'', ``{\em It rains all the time}'', ``{\em We must be prudent}'' 
and ``{\em We carry an umbrella}''  respectively, and $\mathcal{B}$ is a base containing the rules above, one can infer that ``{\em We carry an umbrella}'' is  supported by $\mathcal{B}$, denoted by $\Vdash_\mathcal{B}  u$. As usual in semantics, starting from validity-as-deduction in the atomic case, the interpretation of more complex sentences is built compositionally from the meanings of its components, with logical connectives guiding the construction.

As expected, different logics permit various approaches, each imposing its own requirements. For instance, substructural logics -- often described as non-classical systems that omit one or more structural rules of classical logic -- typically require the use of multisets rather than sets of formulas~\cite{DBLP:journals/corr/abs-2402-01982,DBLP:conf/tableaux/GheorghiuGP23}. In contrast, classical systems often require a more refined definition of bases~\cite{DBLP:journals/igpl/Makinson14,Sandqvist}.

In this paper, we explore how $\Bes$ applies to both substructural and classical settings. In the following, we outline the main challenges in developing proof-theoretic semantics for classical linear logic. 

\medskip 

\noindent
{\bf The question of falsity.} In model-theoretic semantics, falsity ($\bot$) is often defined as ``never valid''. For example, in Kripke semantics, this is expressed as
\vspace{-0.7em}
\[
\not\Vdash_\mathcal{M} \bot
\vspace{-0.5em}
\]
This, however, raises the philosophical question of what constitutes the syntactic counterpart to semantic refutability~\cite{DBLP:journals/axioms/Goranko19}. Dummet avoids this problem by treating falsity as the conjunction of all basic sentences~\cite{dummett1991logical}, which is stated in~\cite{Sandqvist2015IL} as
\vspace{-0.7em}
\[
\Vdash_\mathcal{B} \bot \mbox{ iff } \Vdash_\mathcal{B} p \mbox{ for all $p$ atomic.}
\vspace{-0.5em}
\]
Alternatively,  in~\cite{DBLP:journals/corr/abs-2306-03656} the semantic of the logical constant $\bot$ was not defined, but instead $\bot$ was allowed to be manipulated by the atomic rules of the base -- hence being considered as a ``fixed atomic formula''. In this work, we adopt the same approach, which not only circumvents the aforementioned discussion but also enables an elegant presentation of $\Bes$ for the classical substructural case, as discussed next.

\medskip 

\noindent
{\bf Dealing with classical notions of validity.} 
The proof-theoretic essence of $\Bes$, where validity is built on the concept of proofs, presents a challenge: How can classical systems be described within this framework? This question is particularly relevant in the context of natural deduction systems, where inference rules inherently exhibit a constructive nature. In fact, the most common approaches to handling classical proofs often lead to non-harmonic systems~\cite{prawitz1965}, whereas harmonic conservative extensions of intuitionistic natural deduction systems tend to simulate the multiple-conclusion behavior of classical sequent systems~\cite{DBLP:conf/birthday/GabbayG05,DBLP:books/sp/24/PereiraP24,restall}. 

In~\cite{Sandqvist}, Sandqvist proposed an inferential semantic justification for first-order classical logic, thus avoiding reliance on a notion of bivalent truth.
However, as pointed out in~\cite{DBLP:journals/jphil/PiechaSS15} and further discussed in~\cite{DBLP:journals/igpl/Makinson14}, Sandqvist's system lacks robustness in its choice of primitive connectives. Moreover, the proposed solution remains somewhat unsatisfactory from an inferentialist perspective, as it heavily depends on the duality of connectives to describe the entire logical system. In~\cite{DBLP:journals/corr/abs-2306-03656}, we tackle this problem with a different approach: Classical proofs are defined by taking into account an idea advanced by David Hilbert to justify non-constructive proof methods, where the concept of consistency is conceptually prior to that of truth, and in order to prove the truth of a proposition in a given context it suffices to prove its consistency \cite{ConsistencyHillbert61554c58-c869-34f0-b322-2cff263d9ae0,Hilbert1900,Hilbert1979-HILMPL}.

In this paper, we make a great use of allowing $\bot$ in a base, and show an interesting connection between the semantic characterisation of proofs and model-theoretic truth conditions: just like classical models can sometimes be obtained by restricting intuitionistic models, classical proof conditions can be obtained through a very small, uniform restriction on intuitionistic proof conditions. We show that the restriction works even in cases as complex as that of linear logic.

\medskip 

\noindent
{\bf Tackling substructurality.} While classical logic emphasizes {\em truth} and intuitionistic logic emphasizes {\em proofs}, linear logic~\cite{DBLP:journals/tcs/Girard87} ($\LL$) introduces a focus on resources, where ``{\em $\phi$  implies $\psi$}'' is interpreted as ``{\em consume $\phi$ to produce $\psi$}''. This has a substructural nature, since formulas (\ie\ resources) cannot be freely copied or erased anymore. $\LL$ can be also seen as an abstract logic programming language~\cite{DBLP:conf/oopsla/AndreoliP90}, since  it is  sound and non-deterministic complete with respect to the logical interpretation of programs and has a proof-search strategy attached to it~\cite{DBLP:journals/logcom/Andreoli92,DBLP:journals/apal/MillerNPS91}. 

Linear logic semantics is typically presented in two ways~\cite{sep-logic-linear}: by associating each formula with the set of all contexts that can be used to prove it (\eg\ phase semantics~\cite{DBLP:journals/iandc/FagesRS01,DBLP:journals/tcs/Girard87}) or by assigning meaning directly to proofs (\eg\ coherence spaces~\cite{DBLP:journals/tcs/Girard87} and relational semantics~\cite{DBLP:journals/mscs/Falco03,DBLP:journals/tcs/Ehrhard12})\footnote{Other possible approaches are, \eg, Kripke-style semantics~\cite{DBLP:journals/jsyml/AllweinD93,DBLP:journals/iandc/HodasM94}, categorical models~\cite{DBLP:journals/tcs/Abramsky93,DBLP:journals/mscs/Ehrhard93} and game semantics~\cite{DBLP:journals/jsyml/AbramskyJ94,DBLP:conf/lics/AbramskyM99}.}.

Here, we adopt a {\em different perspective} on assigning meaning to proofs~\cite{DBLP:journals/jphil/Ayhan21,DBLP:journals/jphil/dAragona22,DBLP:journals/synthese/DicherP21,pym2025categorical,Stafford2023-STAFAT-7}, by developing a $\Bes$ for the multiplicative-additive fragment of (classical) linear logic ($\MALL$). The central idea is to apply a uniform restriction on the intuitionistic proof conditions: rather than requiring the derivation of an arbitrary atomic proposition $p$, we now consistently demand the construction of a proof of $\bot$, treated as a fixed atom.

For example, in ~\cite{DBLP:conf/tableaux/GheorghiuGP23}, the $\Bes$ semantic clause for the multiplicative conjunction $\otimes$ is stated as:
\begin{itemize}
 \item[$(\otimes)$] $\Vdash^{\Gamma_{\At}}_{\mathcal{B}} \phi \otimes \psi$ iff, for all $\mathcal{C} \supseteq \mathcal{B}$, $p$ atomic and $\Delta_{\At}$, if $\phi,\psi \Vdash^{\Delta_{\At}}_{\mathcal{C}} p$ then $\Vdash^{\Gamma_{\At}, \Delta_{\At}}_{\mathcal{C}} p$;
\end{itemize}
Applying the restriction, it will have the following form (highlighting the use of $\bot$ in red):
\begin{itemize}
 \item[$(\otimes)$] $\Vdash^{\Gamma_{\At}}_{\mathcal{B}} \phi \otimes \psi$ iff, for all $\mathcal{C} \supseteq \mathcal{B}$ and $\Delta_{\At}$, if $\phi,\psi \Vdash^{\Delta_{\At}}_{\mathcal{C}} \red{\bot}$ then $\Vdash^{\Gamma_{\At}, \Delta_{\At}}_{\mathcal{C}} \red{\bot}$;
\end{itemize}
The restriction is as simple as it is illuminating, clarifying the semantic import of atomic quantification as well as its relation to structural operations. 

\medskip

In the following sections, we explore these notions in depth and establish that a natural deduction system for $\MALL$ is sound and complete with respect to our proposed semantics, providing the first $\Bes$ for classical substructural systems.

\section{Multiplicative Additive Linear Logic}\label{sec:MALL}
Classical linear logic~\cite{DBLP:journals/tcs/Girard87} ($\LL$) is a resource-sensitive logic, meaning that formulas are consumed when used in proofs unless explicitly marked with the exponentials $\bang$ and $\quest$. Formulas marked with these exponentials behave \emph{classically}, \ie, they can be contracted (duplicated) and weakened (erased) during proofs.

The propositional connectives of $\LL$ include the additive conjunction $\with$ and disjunction $\oplus$, as well as their multiplicative counterparts, tensor $\otimes$ and par $\parr$, along with their respective units. While the linear implication $\limp$ can be expressed in the classical setting using $\parr$ and negation, we make it explicit here due to the inferentialist perspective adopted in this work.

\par\nobreak\bgroup%
\begin{tikzpicture}[node distance=1ex]
  \node [matrix of math nodes] (gr) {
     \phi, \psi  & \Coloneq &
    \node(a){p}; & \mid & \node(imp){\phi \multimap  \psi}; & \mid & \node(tens){\phi \otimes  \psi}; & \mid & \mathsf{1} & \mid &
    \node(plus){\phi \oplus  \psi}; & \mid & \mathsf{0} &
    \mid & \node(bang){\mathsf{!} \phi}; \\
    & \node[right] {} ; & 
     \node(a'){ }; &  & 
     \node(imp'){ }; & \mid &
    \node(par){\phi \parr  \psi}; & \mid & \node(bot){\bot}; & \mid &
    \node(with){\phi \with  \psi}; & \mid & \node(top){\top}; &
     \mid & \node(qm){\mathsf{?} \phi}; \\
  } ;
  \node at ($(bang.north east)!.5!(qm.south east)+(1.8,0)$) {
    \refstepcounter{equation}
    \label{eq:gram}
  } ;
  \begin{scope}[on background layer]
    \fill[rounded corners,color=green!5!white]
       ($(a.north west) - (0.1,0)$) rectangle ($(a'.south east)+(.25,-.5)$) ;
    \node[align=flush left] at ($(a'.south west)!.5!(a'.south east)+(.15,-.2)$){
      \tiny\scshape \hspace{-1em} atoms
    } ;
      \fill[rounded corners,color=yellow!5!white]
      (imp.north west) rectangle ($(imp'.south east)+(.6,-.5)$) ;
    \node[align=flush left] at ($(imp'.south west)!.5!(imp'.south east)+(.15,-.2)$){
      \tiny\scshape \hspace{-0.85em} implication
    } ;
    \fill[rounded corners,color=blue!5!white]
       (tens.north west) rectangle ($(bot.south east)-(0,.5)$) ;
    \node at ($(par.south west)!.5!(bot.south east)-(0,.2)$) {
      \tiny\scshape multiplicatives
    } ;
    \fill [rounded corners,color=red!5!white]
       (plus.north west) rectangle ($(top.south east)-(0,.5)$) ;
    \node at ($(with.south west)!.5!(top.south east)-(0,.2)$) {
      \tiny\scshape additives
    } ;
    \fill [rounded corners,color=cyan!10!white]
       (bang.north west) rectangle ($(qm.south east)+(.2,-.5)$) ;
    \node[align=flush left] at ($(qm.south west)!.5!(qm.south east)+(.15,-.2)$) {
      \tiny\scshape exp.
    } ;
  \end{scope}
\end{tikzpicture}
\egroup\par\nobreak\noindent
\setcounter{equation}{0}

We will concentrate on the multiplicative-additive fragment of $\LL$, called $\MALL$, and adopt the following notation: we fix a countably infinite set of propositional atoms and call it {\At}; lowercase Latin letters ($p, q$) denote atoms; capital Greek letters with the subscript `$\At$' ($\Gamma_{\At}, \Delta_{\At}$) denote finite multisets of atoms; lowercase Greek letters ($\phi, \psi$) denote formulas; capital Greek letters without the subscript `$\At$' ($\Gamma, \Delta$) denote multisets of formulas; commas between multisets denote multiset union; and $\neg \phi$ is to be read as $\phi \multimap \bot$ for any formula $\phi$.

A  {\em sequent} is a pair $\Gamma\vdash \phi$ in which $\Gamma$ is a multiset of formulas and $\phi$ is a formula in $\MALL$.
The natural deduction inference rules of $\MALL$ in the sequent style presentation are depicted below.
    \begin{center}
    \hspace{-1.5em}
        \noindent\begin{minipage}{0.2\textwidth}\scriptsize
        \begin{prooftree}\label{eq:axiom}
          \def\ScoreOverhang{0.5pt}
              \AxiomC{}
              \RightLabel{Ax}
              \UnaryInfC{$\phi\vdash\phi$}
        \end{prooftree}
        \begin{prooftree}\label{eq:tensor-intro}
        \def\ScoreOverhang{0.5pt}
            \AxiomC{$\Gamma \vdash \phi$}
            \AxiomC{$\Delta \vdash \psi$}
            \RightLabel{$\otimes$I}
            \BinaryInfC{$\Gamma,\Delta \vdash \phi\otimes\psi$}
        \end{prooftree}
        \begin{prooftree}\label{eq:mtop-intro}
        \def\ScoreOverhang{0.5pt}
            \AxiomC{}
            \RightLabel{$1$I}
            \UnaryInfC{$\vdash 1$}
        \end{prooftree}
        \begin{prooftree}\label{eq:and-intro}
        \def\ScoreOverhang{0.5pt}
            \AxiomC{$\Gamma \vdash \phi$}
            \AxiomC{$\Gamma \vdash \psi$}
            \RightLabel{$\with$I}
            \BinaryInfC{$\Gamma \vdash \phi \with \psi$}
        \end{prooftree}
        \end{minipage}
        \noindent\begin{minipage}{0.25\textwidth}\scriptsize
        \begin{prooftree}\label{eq:Raa}
        \def\ScoreOverhang{0.5pt}
            \AxiomC{$\Gamma, \neg \phi \vdash \bot$}
            \RightLabel{Raa}
            \UnaryInfC{$\Gamma \vdash \phi$}
        \end{prooftree}
        \begin{prooftree}\label{eq:tensor-elim}
        \def\ScoreOverhang{0.5pt}
            \AxiomC{$\Gamma \vdash \phi\otimes\psi$}
            \AxiomC{$\Delta, \phi, \psi \vdash \chi$}
            \RightLabel{$\otimes$E}
            \BinaryInfC{$\Gamma, \Delta \vdash \chi$}
        \end{prooftree}
        \begin{prooftree}\label{eq:mtop-elim}
        \def\ScoreOverhang{0.5pt}
            \AxiomC{$\Gamma \vdash \phi$}
            \AxiomC{$\Delta \vdash 1$}
            \RightLabel{$1$E}
            \BinaryInfC{$\Gamma,\Delta \vdash \phi$}
        \end{prooftree}
        \begin{prooftree}\label{eq:and-elim}
        \def\ScoreOverhang{0.5pt}
            \AxiomC{$\Gamma \vdash \phi_1 \with \phi_2$}
            \RightLabel{$\with\text{E}_i$}
            \UnaryInfC{$\Gamma \vdash \phi_i$}
        \end{prooftree}
        \end{minipage}
        \noindent\begin{minipage}{0.2\textwidth}\scriptsize
        \begin{prooftree}\label{eq:atop-intro}
        \def\ScoreOverhang{0.5pt}
            \AxiomC{}
            \RightLabel{$\top$I}
            \UnaryInfC{$\Gamma \vdash \top$}
        \end{prooftree}
        \begin{prooftree}\label{eq:implication-intro}
          \def\ScoreOverhang{0.5pt}
              \AxiomC{$\Gamma,\phi \vdash \psi$}
              \RightLabel{$\multimap$I}
              \UnaryInfC{$\Gamma \vdash \phi \multimap \psi$}
        \end{prooftree}
        \begin{prooftree}\label{eq:par-intro}
        \def\ScoreOverhang{0.5pt}
            \AxiomC{$\Gamma, \neg \phi, \neg \psi \vdash \bot$}
            \RightLabel{$\parr\text{I}$}
            \UnaryInfC{$\Gamma \vdash \phi \parr \psi$}
        \end{prooftree}
        \begin{prooftree}\label{eq:or-intro}
        \def\ScoreOverhang{0.5pt}
            \AxiomC{$\Gamma \vdash \phi_i$}
            \RightLabel{$\oplus\text{I}_i$}
            \UnaryInfC{$\Gamma \vdash \phi_1 \oplus \phi_2$}
        \end{prooftree}
        \end{minipage}
        \noindent\begin{minipage}{0.3\textwidth}\scriptsize
        \begin{prooftree}\label{eq:abot-elim}
        \def\ScoreOverhang{0.5pt}
            \AxiomC{$\Gamma \vdash 0$}
            \RightLabel{$0$E}
            \UnaryInfC{$\Gamma,\Delta \vdash \phi$}
        \end{prooftree}
        \begin{prooftree}\label{eq:implication-elim}
          \def\ScoreOverhang{0.5pt}
              \AxiomC{$\Gamma \vdash \phi \multimap \psi$}
              \AxiomC{$\Delta \vdash \phi$}
              \RightLabel{$\multimap$E}
              \BinaryInfC{$\Gamma,\Delta \vdash \psi$}
        \end{prooftree}
        \begin{prooftree}
        \def\ScoreOverhang{0.5pt}
            \AxiomC{$\Gamma \vdash \phi \parr \psi$}
            \AxiomC{$\Delta ,\phi \vdash \bot$}
            \AxiomC{$\Theta ,\psi\vdash \bot$} 
            \RightLabel{$\parr$E\label{eq:par-elim}}
            \TrinaryInfC{$\Gamma, \Delta, \Theta \vdash \bot$}
        \end{prooftree}
        \begin{prooftree}
        \def\ScoreOverhang{0.5pt}
            \AxiomC{$\Gamma \vdash \phi \oplus \psi$}
            \AxiomC{$\Delta,\phi \vdash \chi$}
            \AxiomC{$\Delta,\psi \vdash \chi$}
            \RightLabel{$\oplus$E\label{eq:or-elim}}
            \TrinaryInfC{$\Gamma, \Delta \vdash \chi$}
        \end{prooftree}
        \end{minipage}
    \end{center}

Given a proof system $\mathcal{P}$, a {\em $\mathcal{P}$-derivation} is a finite rooted tree with nodes labeled by sequents, axioms at the top nodes, and where each node is connected with the (immediate) successor nodes (if any) according to the inference rules above. A sequent $\Gamma\vdash \phi$ is {\em derivable} in $\mathcal{P}$, notation $\Gamma\vdash_{\mathcal{P}} \phi$, if and only if there is a derivation of $\Gamma\vdash \phi$ in  $\mathcal{P}$. 

\begin{example} For any $\MALL$-formula $\phi$,
$\neg\neg\phi\vdash_{\MALL}\phi$. Consider, \eg, the derivation
\[
\infer[\small \mbox{Raa}]{\neg\neg\phi\vdash\phi}
{\infer[\small \limp \mbox{E}]{\neg\neg\phi,\neg\phi\vdash\bot}
{\infer[\small \mbox{Ax}]{\neg\neg\phi\vdash\neg\phi\limp\bot}{}
&
\infer[\small \mbox{Ax}]{\neg\phi\vdash\neg\phi}{}}}
\]
Moreover, {\em any derivation} of this sequent has at least one instance of the rule Raa. In particular, $\neg\neg\phi\vdash\phi$ {\em is not provable} in intuitionistic $\MALL$~\cite{DBLP:journals/tcs/Girard87}. 

The following substitution rule is admissible in $\MALL$~\cite{DBLP:journals/igpl/MartinsM04,DBLP:journals/aml/Mints98,DBLP:journals/aml/Negri02}:
\begin{prooftree}
    \AxiomC{$\Gamma \vdash \phi$}
    \AxiomC{$\Delta, \phi \vdash \psi$}
    \RightLabel{\scriptsize{Subs}}
    \BinaryInfC{$\Gamma, \Delta \vdash \psi$}
\end{prooftree}
This rule
represents the traditional natural deduction operation of {\em composition}.
\end{example}

\section{Base-extension Semantics}\label{sec:BeS}
$\Bes$ is founded on an inductively defined judgment called {\em support}, which mirrors the syntactic structure of formulas. The inductive definition begins with a base case: the support of atomic propositions is determined by derivability in a given {\em base} -- a specified collection of inference rules that govern atomic propositions. Sandqvist~\cite{Sandqvist2015IL} introduced a sound and complete formulation of $\Bes$ for Intuitionistic Propositional Logic (IPL).

In this work, we adopt Sandqvist's~\cite{Sandqvist2015IL} terminology, adapting it to the linear logic setting as presented in~\cite{DBLP:journals/corr/abs-2402-01982,DBLP:conf/tableaux/GheorghiuGP23}. Additionally, we refine the framework to represent atomic rules in a tree-based, sequent-style format, aligning it more closely with standard proof-theoretic presentations.

\subsection{Atomic Derivability}

The $\Bes$ begins by defining derivability in a {\em base}. We use, as does Sandqvist, systems containing rules over basic sentences for the semantical analysis. Unlike Sandqvist, we use rules that are more in line with sequent calculus definitions, and also allow the logical constant $\bot$ to be manipulated by rules -- we will abuse the notation and write $\At$ for the set of atomic formulas together with $\bot$.

\begin{definition}[Base]
 An {\em atomic system} (a.k.a. a {\em base}) $\mathcal{B}$ is a set of atomic rules of the form
\begin{prooftree}
\AxiomC{$\Gamma_{\At}^1 \vdash p^{1}$}
\AxiomC{$\ldots$}
\AxiomC{$\Gamma_{\At}^n \vdash p^{n}$}
\TrinaryInfC{$\Delta_{\At} \vdash q$}
\end{prooftree}

\noindent
which is closed under rules of the following shape for all $p$, $r$, $\Gamma_{\At}$ and $\Pi_{\At}$:

\begin{prooftree}
\AxiomC{}
\noLine
\UnaryInfC{}
\RightLabel{\scriptsize{Ax}}
\UnaryInfC{$p \vdash p$}
\qquad
\DisplayProof
\qquad
\AxiomC{$\Gamma_{\At} \vdash p$}
\AxiomC{$\Pi_{\At}, p \vdash r$}
\RightLabel{\scriptsize{Subs}}
\BinaryInfC{$\Gamma_{\At}, \Pi_{\At} \vdash r$}
\end{prooftree}
\end{definition}

\begin{definition}[Extensions]
An atomic system $\mathcal{C}$ is an {\em extension} of an atomic system $\mathcal{B}$ (written $\mathcal{C}\supseteq\mathcal{B}$), if $\mathcal{C}$ results from adding a (possibly empty) set of atomic rules to $\mathcal{B}$.
\end{definition}

\begin{definition}[Deducibility] For every base $\mathcal{B}$, the relation $\vdash_{\mathcal{B}}$ is defined as follows:

\begin{enumerate}
    \item If $\Gamma_{\At} \vdash p$ is the conclusion of an axiomatic rule in $\mathcal{B}$, then $\Gamma_{\At } \vdash_\mathcal{B} p$ holds;
    \item Assume $\mathcal{B}$ contains a non-axiomatic rule with the following shape:

\begin{prooftree}
\AxiomC{$\Gamma_{\At}^1 \vdash p^{1}$}
\AxiomC{$\ldots$}
\AxiomC{$\Gamma_{\At}^n \vdash p^{n}$}
\TrinaryInfC{$\Delta_{\At} \vdash q$}
\end{prooftree}
Then, if $\Gamma_{\At}^{i} \vdash_{\mathcal{B}} p^{i}$ holds for all $1 \leq i \leq n$, $\Delta_{\At} \vdash_{\mathcal{B}} q$ also holds.

\end{enumerate}

\end{definition}
The deducibility relation $\vdash_\mathcal{B}$ coincides with the usual notion in the system of natural deduction consisting of just the rules in $\mathcal{B}$, that is, $p^1, \dots , p^{n} \vdash_\mathcal{B} q$ iff there exists a deduction with the rules of $\mathcal{B}$ whose conclusion is $\{p^{1}, \dots, p^{n}\} \vdash q$.
\begin{example}
Let $l=$ {\em We are in London}, $r=$ {\em It rains all the time}, $p=$ {\em We must be prudent},  
$u=$ {\em We carry an umbrella}   and $\mathcal{B}$ is a base containing the following rules
\[
\infer{l \vdash r}{}\qquad\infer{r \vdash p}{}\qquad\infer{\vdash u}{{l\vdash p}}
\]
together with all instances of Ax and Subs. Then the following is a deduction showing $\vdash_{\mathcal{B}}u$:

\begin{prooftree}
    \AxiomC{}
    \UnaryInfC{$l \vdash r$}
        \AxiomC{}
    \UnaryInfC{$r \vdash p$}
         \RightLabel{\scriptsize{Subs}}
    \BinaryInfC{$l \vdash p$}
    \UnaryInfC{$\vdash u$}
\end{prooftree}
\end{example}

Note that in our atomic rules, both the multiset $\Gamma_{\At}$ and the atom $p$ in a sequent $\Gamma_{\At} \vdash p$ are fixed. This means that, given a derivation concluding $\Delta_{\At} \vdash p$, we cannot apply a rule whose premise is $\Gamma_{\At} \vdash p$ unless $\Gamma_{\At} = \Delta_{\At}$. This stands in contrast to the approach in~\cite{DBLP:journals/corr/abs-2402-01982}, where the contexts in atomic rules are left unconstrained. By fixing the context, we obtain a clean, tree-style representation that aligns naturally with the structure of multiplicative and additive rules, as illustrated below:

\begin{prooftree}
\AxiomC{$\Gamma_{\At} \vdash p$}
\AxiomC{$\Delta_{\At} \vdash q$}
\BinaryInfC{$\Gamma_{\At}, \Delta_{\At} \vdash r$}
\qquad
\DisplayProof
\qquad
\AxiomC{$\Gamma_{\At} \vdash p$}
\AxiomC{$\Gamma_{\At} \vdash q$}
\BinaryInfC{$\Gamma_{\At} \vdash r$}
\end{prooftree}
The price to pay is twofold:
(i) Since the multisets appearing in the conclusion are not necessarily related to the multisets appearing on premises, contexts can no longer be taken to simply fulfill the role of tracking dependencies of a derivation, making our rules closer to pure sequent calculus than to (sequent-style) natural deduction; (ii) explicit inclusion of the structural rules Ax and Subs becomes necessary for bases to behave properly. 

These are easy trades, since (i) tackles the criticism that $\Bes$ is too ``natural deduction driven''~\cite{DBLP:journals/synthese/DicherP21}, while (ii) only makes explicit what the {\em App} rule in~\cite{DBLP:journals/corr/abs-2402-01982,DBLP:conf/tableaux/GheorghiuGP23,Sandqvist2015IL} hides on the treatment of arbitrary contexts. 

We also use a special notation for the \textit{structural base}:

\begin{definition}[Structural base]\label{def:structural}
The {\em structural base} $\mathcal{S}$ is the base that only contains instances of axiom and substitution, \ie\ the smallest of all bases.
\end{definition}

Clearly, we have $\mathcal{B} \supseteq \mathcal{S}$ for all bases $\mathcal{B}$, a property that is useful for many definitions.

\subsection{Semantics}
We will now define  
the support relation,
which is reducible to derivability in $\mathcal{B}$ and its extensions, hence obtaining a semantics defined exclusively in terms of proofs and proof conditions.

\begin{definition}[Support] \label{def:support}
    The \emph{support relation}, denoted as $\Vdash^{\Gamma_{\At}}_{\mathcal{B}}$, is defined as follows, where all multisets of formulas are assumed to be finite:

    \begin{itemize}[itemsep=0.5em, font=\normalfont]
        \item[\namedlabel{eq:supp-at}{(At)}] $\Vdash^{\Gamma_{\At}}_{\mathcal{B}} p$ iff, for all $\mathcal{C} \supseteq \mathcal{B}$ and $\Delta_{\At}$, if $p, \Delta_{\At} \vdash_{\mathcal{C}} \bot$ then $\Gamma_{\At}, \Delta_{\At} \vdash_{\mathcal{C}} \bot$, for $p \in \At$;
     
        \item[\namedlabel{eq:supp-tensor}{$(\otimes)$}] $\Vdash^{\Gamma_{\At}}_{\mathcal{B}} \phi \otimes \psi$ iff, for all $\mathcal{C} \supseteq \mathcal{B}$ and $\Delta_{\At}$, if $\phi,\psi \Vdash^{\Delta_{\At}}_{\mathcal{C}} \bot$ then $\Vdash^{\Gamma_{\At}, \Delta_{\At}}_{\mathcal{C}} \bot$;
    
        \item[\namedlabel{eq:supp-imply}{$(\multimap)$}]  $\Vdash^{\Gamma_{\At}}_{\mathcal{B}} \phi \multimap \psi$ iff, for all $\mathcal{C} \supseteq \mathcal{B}$ and $\Delta_{\At}, \Theta_{\At}$, if $\Vdash^{\Delta_{\At}}_{\mathcal{C}} \phi$ and $\psi \Vdash^{\Theta_{\At}}_{\mathcal{C}} \bot$ then $\Vdash^{\Gamma_{\At}, \Delta_{\At}, \Theta_{\At}}_{\mathcal{C}} \bot$;
    
        \item[\namedlabel{eq:supp-1}{(1)}] $\Vdash^{\Gamma_{\At}}_{\mathcal{B}} 1$ iff, for all $\mathcal{C} \supseteq \mathcal{B}$ and $\Delta_{\At}$, if $\Vdash^{\Delta_{\At}}_{\mathcal{C}} \bot$ then $\Vdash^{\Gamma_{\At}, \Delta_{\At}}_{\mathcal{C}} \bot$; 
    
        \item[\namedlabel{eq:supp-parr}{$(\parr)$}] $\Vdash^{\Gamma_{\At}}_{\mathcal{B}} \phi \parr \psi$ iff, for all $\mathcal{C} \supseteq \mathcal{B}$ and $\Delta_{\At}, \Theta_{\At}$, if $\phi \Vdash^{\Delta_{\At}}_{\mathcal{C}} \bot$ and $\psi \Vdash^{\Theta_{\At}}_{\mathcal{C}} \bot$ then $\Vdash^{\Gamma_{\At}, \Delta_{\At}, \Theta_{\At}}_{\mathcal{C}} \bot$;

        \item[\namedlabel{eq:supp-and}{$(\with)$}] $\Vdash^{\Gamma_{\At}}_{\mathcal{B}} \phi \with \psi$ iff $\Vdash^{\Gamma_{\At}}_{\mathcal{B}} \phi$ and $\Vdash^{\Gamma_{\At}}_{\mathcal{B}} \psi$;
        
        \item[\namedlabel{eq:supp-plus}{$(\oplus)$}] $\Vdash^{\Gamma_{\At}}_{\mathcal{B}} \phi \oplus \psi$ iff, for all $\mathcal{C} \supseteq \mathcal{B}$ and $\Delta_{\At}$, if $\phi \Vdash^{\Delta_{\At}}_{\mathcal{C}} \bot$ and $\psi \Vdash^{\Delta_{\At}}_{\mathcal{C}} \bot$ then $\Vdash^{\Gamma_{\At}, \Delta_{\At}}_{\mathcal{C}} \bot$;

        \item[\namedlabel{eq:supp-top}{$(\top)$}] $\Vdash^{\Gamma_{\At}}_{\mathcal{B}} \top$ for all $\mathcal{B}$ and $\Gamma_{\At}$;

        \item[\namedlabel{eq:supp-0}{(0)}] $\Vdash^{\Gamma_{\At}}_{\mathcal{B}} 0$ iff $\Vdash^{\Gamma_{\At}, \Delta_{\At}}_{\mathcal{B}} \bot$ for all $\Delta_{\At}$;

        \item[\namedlabel{eq:supp-inf}{(Inf)}] $\Gamma \Vdash^{\Theta_{\At}}_{\mathcal{B}} \phi$ iff, for all $\mathcal{C} \supseteq \mathcal{B}$ and all $\Delta^{i}_{\At}$, if $\Gamma = \{\psi^{1}, \dots , \psi^{n}\}$ and $\Vdash^{\Delta^{i}_{\At}}_{\mathcal{C}} \psi^{i}$ for $1 \leq i \leq n$, then $\Vdash^{\Delta^{1}_{\At}, \dots, \Delta^{n}_{\At}, \Theta_{\At}}_{\mathcal{C}} \phi$. 
    \end{itemize}
\end{definition}

\begin{definition}[Validity] \label{def:validity}
    An inference from $\Gamma$ to $\phi$ is \emph{valid}, written as $\Gamma \Vdash \phi$, if $\Gamma \Vdash^{\varnothing}_{\mathcal{B}} \phi$ for all $\mathcal{B}$.
\end{definition}

We read $\Gamma \Vdash^{\Delta_{\At}}_{\mathcal{B}} \phi$ as ``the base $\mathcal{B}$ supports an inference from $\Gamma$ to $\phi$ relative to the multiset $\Delta_{\At}$'' and we write $\Gamma \Vdash_{\mathcal{B}} \phi$ to denote $\Gamma \Vdash^{\varnothing}_{\mathcal{B}} \phi$ for any $\mathcal{B}, \Gamma, \phi$.

We would like to place special emphasis on the clause (At). In~\cite{DBLP:conf/tableaux/GheorghiuGP23}, it is formulated as follows: 
\begin{itemize}[itemsep=0.5em, font=\normalfont] \label{AtAlt1}
    \item[\namedlabel{eq:supp-atAlternative1}{(At')}] $\Vdash^{\Gamma_{\At}}_{\mathcal{B}} p$ iff $\Gamma_{\At} \vdash_{\mathcal{B}} p$;
\end{itemize}
That is {\em atomic support} is reduced to {\em atomic derivability}. Together with the restriction that $\bot$ is excluded from the bases' rules in~\cite{DBLP:conf/tableaux/GheorghiuGP23}, this induces an {\em intuitionistic} flavour in the definition of support for atomic propositions.

This raises the question: How can we capture the {\em classical} notion of an atomic clause within a substructural framework? Our method proceeds in two steps. First, we apply the ``elimination approach'' to rules, interpreting the clause~\ref{eq:supp-atAlternative1} as stating that anything derivable from $p$ must also be derivable from whatever supports $p$:
\begin{itemize}[itemsep=0.5em, font=\normalfont] \label{AtAlt2}
   \item[\namedlabel{eq:supp-atAlternative2}{(At'')}] $\Vdash^{\Gamma_{\At}}_{\mathcal{B}} p$ iff for all $\mathcal{C} \supseteq \mathcal{B}$, $q$ atomic and $\Delta_{\At}$, if $\Delta_{\At},p \vdash_{\mathcal{C}} q$, then $\Delta_{\At},\Gamma_{\At} \vdash_{\mathcal{C}} q$;
\end{itemize}

The second step, which will be used in all the clauses, is to substitute the atomic occurrences in Sandqvist's clauses by $\bot$, obtaining~\ref{eq:supp-at}.

The first step is completely unproblematic since clauses ~\ref{eq:supp-atAlternative1} and ~\ref{eq:supp-atAlternative2} are equivalent as shown next. The second step, in turn, captures the the new perspective to $\Bes$ to classical systems.
\begin{proposition} \label{prop:equivalencenewandoldatdef}
    $\Gamma_{\At} \vdash_{\mathcal{B}} p$ iff for all $\mathcal{C} \supseteq \mathcal{B}$, $q$ atomic and $\Delta_{\At}$, if $\Delta_{\At},p \vdash_{\mathcal{C}} q$, then $\Delta_{\At},\Gamma_{\At} \vdash_{\mathcal{C}} q$.
\end{proposition}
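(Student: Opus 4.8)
The plan is to prove both implications directly, using only two facts about the atomic deducibility relation: that it is closed under the structural rules \emph{Ax} and \emph{Subs} (by the very definition of a base), and that it is monotone under base extension, i.e.\ $\mathcal{C} \supseteq \mathcal{B}$ and $\Gamma_{\At} \vdash_{\mathcal{B}} p$ together imply $\Gamma_{\At} \vdash_{\mathcal{C}} p$. The latter I would record first as a small lemma, proved by a routine induction on the derivation witnessing $\Gamma_{\At} \vdash_{\mathcal{B}} p$: every axiomatic or non-axiomatic rule instance available in $\mathcal{B}$ is, by definition of $\supseteq$, also available in $\mathcal{C}$, so the same derivation tree is a $\mathcal{C}$-derivation.

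For the left-to-right direction, I would assume $\Gamma_{\At} \vdash_{\mathcal{B}} p$ and fix arbitrary $\mathcal{C} \supseteq \mathcal{B}$, atomic $q$, and $\Delta_{\At}$ with $\Delta_{\At}, p \vdash_{\mathcal{C}} q$. By the monotonicity lemma, $\Gamma_{\At} \vdash_{\mathcal{C}} p$. Since $\mathcal{C}$ is a base it is closed under \emph{Subs}, so from $\Gamma_{\At} \vdash_{\mathcal{C}} p$ and $\Delta_{\At}, p \vdash_{\mathcal{C}} q$ we obtain $\Gamma_{\At}, \Delta_{\At} \vdash_{\mathcal{C}} q$, which (up to reordering the multiset union) is exactly $\Delta_{\At}, \Gamma_{\At} \vdash_{\mathcal{C}} q$. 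As $\mathcal{C}$, $q$, $\Delta_{\At}$ were arbitrary, the right-hand side holds.

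For the right-to-left direction, I would instantiate the universally quantified statement on the right with the particular choices $\mathcal{C} := \mathcal{B}$, $q := p$, and $\Delta_{\At} := \varnothing$. The required hypothesis $\varnothing, p \vdash_{\mathcal{B}} p$ holds because $\mathcal{B}$ is closed under \emph{Ax}, so the conclusion gives $\varnothing, \Gamma_{\At} \vdash_{\mathcal{B}} p$, that is, $\Gamma_{\At} \vdash_{\mathcal{B}} p$. There is no genuine obstacle here: the whole argument is a one-step cut on one side and a single instantiation plus \emph{Ax} on the other. The only point requiring (minor) care is making the monotonicity-under-extension lemma explicit and noting that the closure conditions on bases supply \emph{Ax} and \emph{Subs} in every $\mathcal{C} \supseteq \mathcal{B}$; everything else is bookkeeping with multiset unions.
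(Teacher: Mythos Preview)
Your proposal is correct and follows essentially the same argument as the paper: monotonicity of $\vdash_{\mathcal{B}}$ under base extension together with \emph{Subs} for the forward direction, and instantiation with $\mathcal{C}=\mathcal{B}$, $q=p$, $\Delta_{\At}=\varnothing$ plus \emph{Ax} for the converse. The only difference is cosmetic: you spell out the monotonicity lemma explicitly, whereas the paper simply invokes the fact that deductions are preserved under base extensions.
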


\begin{proof}
    Assume $\Gamma_{\At} \vdash_{\mathcal{B}} p$. Further assume for an arbitrary $\mathcal{C} \supseteq \mathcal{B}$, $q$ and $\Delta_{\At}$, that $\Delta_{\At},p \vdash_{\mathcal{C}} q$. Since deductions are preserved under base extensions, we have $\Gamma \vdash_{\mathcal{C}} p$; an application of Subs yields $\Gamma_{\At}, \Delta_{\At} \vdash_{\mathcal{C}} q$. For the converse, assume that, for all $\mathcal{C} \supseteq \mathcal{B}$, $q$ atomic and $\Delta_{\At}$, if $\Delta_{\At},p \vdash_{\mathcal{C}} q$, then $\Delta_{\At},\Gamma_{\At} \vdash_{\mathcal{C}} q$. Take $\mathcal{C = \mathcal{B}}$, $\Delta_{\At} = \varnothing$ and notice that Ax yields $p \vdash_{\mathcal{B}} p$, so by our assumption we conclude $\Gamma_{\At} \vdash_{\mathcal{B}} p$.
\end{proof}

The next result stresses the fact that support does not directly correspond to derivability.
\begin{lemma} \label{lemma:derivabilityimpliessupport}
    If $\Gamma_{\At} \vdash_{\mathcal{B}} p$ then $\Vdash^{\Gamma_{\At}}_{\mathcal{B}} p$. The converse is false in general.
\end{lemma}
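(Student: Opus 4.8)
The forward direction reduces, via Proposition~\ref{prop:equivalencenewandoldatdef}, to a purely atomic fact. Assume $\Gamma_{\At}\vdash_{\mathcal B}p$. By the definition~\ref{eq:supp-at}, I must show: for every $\mathcal C\supseteq\mathcal B$ and every $\Delta_{\At}$, if $p,\Delta_{\At}\vdash_{\mathcal C}\bot$ then $\Gamma_{\At},\Delta_{\At}\vdash_{\mathcal C}\bot$. Since derivations in a base are preserved under extensions, $\Gamma_{\At}\vdash_{\mathcal B}p$ gives $\Gamma_{\At}\vdash_{\mathcal C}p$. Now apply the structural rule Subs (which every base contains) to $\Gamma_{\At}\vdash_{\mathcal C}p$ and $p,\Delta_{\At}\vdash_{\mathcal C}\bot$, treating $\bot$ as the distinguished atom $r$ in the Subs schema; this yields $\Gamma_{\At},\Delta_{\At}\vdash_{\mathcal C}\bot$, as required. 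So the first half is a one-line application of Subs plus monotonicity of $\vdash_{\mathcal B}$ under extensions — essentially the same argument as the easy direction of Proposition~\ref{prop:equivalencenewandoldatdef}, only with the target atom fixed to $\bot$.

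For the failure of the converse I would exhibit a concrete base and an atom with $\Vdash^{\Gamma_{\At}}_{\mathcal B}p$ but $\Gamma_{\At}\not\vdash_{\mathcal B}p$. The natural candidate is the structural base $\mathcal S$ together with a pair of atoms $p,q$ distinct from $\bot$, with $\Gamma_{\At}=q$ (or even $\Gamma_{\At}=\varnothing$). In $\mathcal S$, derivability $\vdash_{\mathcal S}$ is just provability using Ax and Subs alone, so $q\vdash_{\mathcal S}p$ fails whenever $p\neq q$. On the semantic side, I claim $\Vdash^{q}_{\mathcal S}p$ holds vacuously: for any $\mathcal C\supseteq\mathcal S$ and any $\Delta_{\At}$ with $p,\Delta_{\At}\vdash_{\mathcal C}\bot$, one needs $q,\Delta_{\At}\vdash_{\mathcal C}\bot$. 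If $\mathcal C$ has no rules involving $\bot$ at all, the antecedent $p,\Delta_{\At}\vdash_{\mathcal C}\bot$ can never hold (Ax and Subs alone cannot conclude $\bot$ unless $\bot\in\Delta_{\At}$, in which case $q,\Delta_{\At}\vdash_{\mathcal C}\bot$ trivially too), so the implication is vacuous; but one must check this for \emph{all} extensions $\mathcal C$, including those that do derive $\bot$. This is where care is needed: I would instead pick $p$ and $q$ so that the extensions cannot distinguish them — e.g. note that since $p$ and $q$ are both ``fresh'' atoms never mentioned in $\mathcal S$, a symmetry/renaming argument shows that whenever some $\mathcal C\supseteq\mathcal S$ proves $p,\Delta_{\At}\vdash_{\mathcal C}\bot$ there is a companion extension proving the $q$-version, but this does not immediately give it \emph{within the same} $\mathcal C$.

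The cleanest route, and the one I expect the authors take, is therefore to choose an example where the discrepancy is between support and derivability for a \emph{genuinely provable-looking but underivable} sequent, exploiting the classical character of~\ref{eq:supp-at}. For instance take $\mathcal B$ to contain the single rule with conclusion $\vdash\bot$ and premise $p\vdash\bot$ (i.e. $\neg p$ as an atomic axiom-style rule) — then $\Vdash^{\varnothing}_{\mathcal B}\neg$-behaviour forces $\Vdash^{\varnothing}_{\mathcal B}p$-type collapses analogous to double-negation elimination, giving $\Vdash^{\Gamma_{\At}}_{\mathcal B}p$ for contexts $\Gamma_{\At}$ from which $p$ is not atomically derivable. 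Concretely: if $\mathcal B$ proves $\varnothing\vdash\bot$ outright, then for \emph{every} $\mathcal C\supseteq\mathcal B$ and every $\Delta_{\At}$ we have $\Gamma_{\At},\Delta_{\At}\vdash_{\mathcal C}\bot$ by weakening-via-Subs against the axiom $\bot\vdash\bot$, so $\Vdash^{\Gamma_{\At}}_{\mathcal B}p$ holds for all $p$ and all $\Gamma_{\At}$ — yet $\Gamma_{\At}\vdash_{\mathcal B}p$ certainly fails for suitably chosen fresh $p,\Gamma_{\At}$. The main obstacle is purely in the bookkeeping for the converse: one must verify the chosen base really does collapse all support statements (a short induction / direct Subs argument against $\bot\vdash\bot$) while simultaneously keeping some atomic sequent underivable, so the example must be stated with enough atoms to separate the two relations. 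The forward direction has no obstacle.
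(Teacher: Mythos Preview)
Your forward direction is correct and matches the paper exactly: monotonicity of $\vdash$ under base extension, then one application of Subs.

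Your counterexample, however, does not work. The claim that from $\vdash_{\mathcal B}\bot$ one gets $\Gamma_{\At},\Delta_{\At}\vdash_{\mathcal C}\bot$ ``by weakening-via-Subs against $\bot\vdash\bot$'' is false: these are \emph{linear} bases, and Subs is multiset-exact. From $\vdash\bot$ and $\bot\vdash\bot$, Subs only returns $\vdash\bot$; there is no way to introduce extra atoms on the left. So a base in which $\bot$ is provable outright does not collapse all support statements. Your earlier idea of a single rule $\frac{p\vdash\bot}{\vdash\bot}$ fails for the same reason: to verify $\Vdash^{\varnothing}_{\mathcal B}p$ you must, for every extension $\mathcal C$ and every $\Delta_{\At}$, pass from $p,\Delta_{\At}\vdash_{\mathcal C}\bot$ to $\Delta_{\At}\vdash_{\mathcal C}\bot$; a rule that only removes $p$ from the \emph{empty} context cannot do this when $\Delta_{\At}\neq\varnothing$, and one can cook up an extension (e.g.\ add an axiom $p,q\vdash\bot$) where the required conclusion $q\vdash\bot$ is underivable.

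The paper's fix is exactly the schema you were circling: take $\mathcal B$ to contain the rule
\[
\frac{\Theta_{\At},\,p\vdash\bot}{\Theta_{\At}\vdash\bot}
\]
for \emph{every} atomic multiset $\Theta_{\At}$. Then in any extension $\mathcal C$, whenever $p,\Delta_{\At}\vdash_{\mathcal C}\bot$, the instance with $\Theta_{\At}=\Delta_{\At}$ immediately yields $\Delta_{\At}\vdash_{\mathcal C}\bot$, giving $\Vdash^{\varnothing}_{\mathcal B}p$; yet $\nvdash_{\mathcal B}p$ since no rule in $\mathcal B$ has $p$ alone as consequent except Ax and Subs, which cannot discharge the left-hand occurrence. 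The missing idea in your attempt is that, absent weakening, the ``$p$ is erasable'' behaviour must be hard-coded uniformly across all contexts.
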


\begin{proof}
    Assume  $\Gamma_{\At} \vdash_{\mathcal{B}} p$. Further assume, for an arbitrary $\mathcal{C} \supseteq \mathcal{B}$ and arbitrary $\Delta_{\At}$, that $p, \Delta_{\At} \vdash_{\mathcal{C}} \bot$. Since deductions are preserved under base extensions, we also have that $\Gamma_{\At} \vdash_{\mathcal{C}} p$. By composing the two deductions, we obtain $\Gamma_{\At}, \Delta_{\At} \vdash_{\mathcal{C}} \bot$. Since $\mathcal{C} \supseteq \mathcal{B}$ such that $p, \Delta_{\At} \vdash_{\mathcal{C}} \bot$ for arbitrary $\Delta_{\At}$, and $\Gamma_{\At}, \Delta_{\At} \vdash_{\mathcal{C}} \bot$, by~\ref{eq:supp-at}, $\Vdash^{\Gamma_{\At}}_{\mathcal{B}} p$.

    On the other hand, let $\Gamma_{\At} = \varnothing$ and $\mathcal{B}$ be the base containing only rules with the following shape for every atomic multiset $\Theta_{\At}$:

    \begin{center}
        \begin{bprooftree}
            \AxiomC{$\Theta_{\At}, p \vdash \bot$}
            \UnaryInfC{$\Theta_{\At} \vdash \bot$}
        \end{bprooftree}  
    \end{center}

The atom $p$ appears only on the right-hand side of instances of Ax and Subs in $\mathcal{B}$, so any deduction concluding $\vdash p$ would necessarily end with an application of one of these rules. However, both require a premise whose consequent is $p$. Since the appropriate instance of Ax has $p \vdash p$ as its conclusion, it cannot serve as the final rule in such a deduction. We therefore conclude that no deduction with conclusion $\vdash p$ exists in $\mathcal{B}$, and thus $\nvdash_{\mathcal{B}} p$.

Now assume, for arbitrary $\mathcal{C} \supseteq \mathcal{B}$ and arbitrary $\Delta_{\At}$, that $p, \Delta_{\At} \vdash_{\mathcal{C}} \bot$. An application of the rule with $\Theta_{\At} = \Delta_{\At}$ yields $\Delta_{\At} \vdash_{\mathcal{C}} \bot$. Since $\mathcal{C} \supseteq \mathcal{B}$ such that $p, \Delta_{\At} \vdash_{\mathcal{C}} \bot$ for arbitrary $\Delta_{\At}$, and $\Delta_{\At} \vdash_{\mathcal{C}} \bot$, by~\ref{eq:supp-at}, $\Vdash_{\mathcal{B}} p$. Hence $\Vdash_{\mathcal{B}} p$, even though $ \nvdash_{\mathcal{B}} p$.
\end{proof}

Interestingly enough, the result holds in the following special case:

\begin{lemma} \label{lemma:bottomisspecial}
    $\Vdash^{\Gamma_{\At}}_{\mathcal{B}} \bot$ if and only if $\Gamma_{\At} \vdash_{\mathcal{B}} \bot$.
\end{lemma}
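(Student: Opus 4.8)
The plan is to unfold the support clause \ref{eq:supp-at} for the special atom $\bot$ — recall that, by the notational convention, $\bot \in \At$, so \ref{eq:supp-at} applies to it. Concretely, $\Vdash^{\Gamma_{\At}}_{\mathcal{B}} \bot$ unfolds to: for all $\mathcal{C} \supseteq \mathcal{B}$ and all $\Delta_{\At}$, if $\bot, \Delta_{\At} \vdash_{\mathcal{C}} \bot$ then $\Gamma_{\At}, \Delta_{\At} \vdash_{\mathcal{C}} \bot$. Once this is spelled out, both directions are short.

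For the converse direction, $\Gamma_{\At} \vdash_{\mathcal{B}} \bot$ implies $\Vdash^{\Gamma_{\At}}_{\mathcal{B}} \bot$, I would simply invoke Lemma~\ref{lemma:derivabilityimpliessupport} with $p := \bot$; or, to keep the proof self-contained, argue directly: given arbitrary $\mathcal{C} \supseteq \mathcal{B}$ and $\Delta_{\At}$ with $\bot, \Delta_{\At} \vdash_{\mathcal{C}} \bot$, use preservation of deducibility under base extension to obtain $\Gamma_{\At} \vdash_{\mathcal{C}} \bot$, and then compose this with the assumed deduction via Subs to get $\Gamma_{\At}, \Delta_{\At} \vdash_{\mathcal{C}} \bot$; clause \ref{eq:supp-at} then yields $\Vdash^{\Gamma_{\At}}_{\mathcal{B}} \bot$.

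For the forward direction, $\Vdash^{\Gamma_{\At}}_{\mathcal{B}} \bot$ implies $\Gamma_{\At} \vdash_{\mathcal{B}} \bot$, I would instantiate the universally quantified statement above with $\mathcal{C} := \mathcal{B}$ and $\Delta_{\At} := \varnothing$. The antecedent of the resulting conditional, $\bot \vdash_{\mathcal{B}} \bot$, is discharged by the axiom rule Ax, which is present in every base; hence the consequent $\Gamma_{\At} \vdash_{\mathcal{B}} \bot$ holds. This is exactly the point that separates $\bot$ from an arbitrary atom $p$: clause \ref{eq:supp-at} for $p$ only ever mentions derivability of $\bot$, so it can be ``reflected back'' into derivability of the subject atom precisely when that atom is $\bot$ — which is why the converse of Lemma~\ref{lemma:derivabilityimpliessupport} fails for general $p$ but succeeds here.

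There is no genuine obstacle; the only thing worth stating carefully is that the convention $\bot \in \At$ is what licenses (i) applying clause \ref{eq:supp-at} to $\bot$ and (ii) using the instance $\bot \vdash \bot$ of Ax as the witness that collapses the conditional. The lemma then serves as the sanity check that, although support generally overshoots derivability, the fixed atom $\bot$ behaves exactly like derivability — which is what makes it a legitimate ``target'' in all the remaining support clauses.
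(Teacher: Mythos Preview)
Your proposal is correct and follows essentially the same approach as the paper: the forward direction instantiates clause \ref{eq:supp-at} at $\mathcal{C}=\mathcal{B}$, $\Delta_{\At}=\varnothing$ and discharges the antecedent with the axiom $\bot\vdash_{\mathcal{B}}\bot$, while the backward direction is just Lemma~\ref{lemma:derivabilityimpliessupport} with $p=\bot$. Your additional commentary on why $\bot$ is special is accurate and helpful, but the argument itself is identical to the paper's.
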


\begin{proof}
    ($\Rightarrow$): Assume $\Vdash^{\Gamma_{\At}}_{\mathcal{B}} \bot$. Notice that $\bot \vdash_{\mathcal{B}} \bot$; since $\bot$ is an atom, by~\ref{eq:supp-at} we conclude $\Gamma_{\At} \vdash_{\mathcal{B}} \bot$. 
    
    \noindent($\Leftarrow$): Immediate by Lemma~\ref{lemma:derivabilityimpliessupport} with $p = \bot$.
\end{proof}

\iffalse
\red{I commented this out as it is not used anywhere}
\begin{lemma} \label{lemma:bottomontheleft}
    If $\bot \Vdash^{\Gamma_{\At}}_{\mathcal{B}} \bot$ then $\bot, \Gamma_{\At} \vdash_{\mathcal{B}} \bot$.
\end{lemma}

\begin{proof}
    Assume $\bot \Vdash^{\Gamma_{\At}}_{\mathcal{B}} \bot$. Clearly $\bot \vdash_{\mathcal{B}} \bot$, so by Lemma \ref{lemma:bottomisspecial} we have $\Vdash^{\mathcal{\{\bot}\}}_{\mathcal{B}} \bot$. Since $\bot \Vdash^{\Gamma_{\At}}_{\mathcal{B}} \bot$ we conclude $\Vdash^{\Gamma_{\At}, \bot}_{\mathcal{B}} \bot$, so by Lemma \ref{lemma:bottomisspecial} we have $\bot, \Gamma_{\At} \vdash_{\mathcal{B}} \bot$, as desired.
\end{proof}
\fi

The following standard result states that the support relation is monotone  w.r.t. bases. %(see Appendix~\ref{sec:appendix} for the proof).
\begin{lemma}[Monotonicity] \label{lemma:monotonicity}
    If $\Gamma \Vdash^{\Delta_{\At}}_{\mathcal{B}} \phi$ and $\mathcal{C} \supseteq \mathcal{B}$, then $\Gamma \Vdash^{\Delta_{\At}}_{\mathcal{C}} \phi$.
\end{lemma}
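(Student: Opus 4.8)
The plan is to prove Monotonicity by induction on the structure of the formula $\phi$, following the inductive clauses of Definition~\ref{def:support}, and ultimately reducing everything to the base case where the claim is just the trivial observation that derivability in an atomic system is preserved under extensions (if $\mathcal{C} \supseteq \mathcal{B}$ and $\Gamma_{\At} \vdash_{\mathcal{B}} q$ then $\Gamma_{\At} \vdash_{\mathcal{C}} q$, since every rule of $\mathcal{B}$ is a rule of $\mathcal{C}$). First I would handle the atomic case: suppose $\Vdash^{\Gamma_{\At}}_{\mathcal{B}} p$ and $\mathcal{C} \supseteq \mathcal{B}$; to show $\Vdash^{\Gamma_{\At}}_{\mathcal{C}} p$ via \ref{eq:supp-at}, take any $\mathcal{D} \supseteq \mathcal{C}$ and $\Delta_{\At}$ with $p, \Delta_{\At} \vdash_{\mathcal{D}} \bot$; since $\supseteq$ is transitive we have $\mathcal{D} \supseteq \mathcal{B}$, so the hypothesis gives $\Gamma_{\At}, \Delta_{\At} \vdash_{\mathcal{D}} \bot$, as required. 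The key structural fact doing the work throughout is the transitivity of the extension relation $\supseteq$, together with the fact that all the inductive clauses quantify universally over \emph{all} extensions $\mathcal{C} \supseteq \mathcal{B}$.

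Next I would treat the inductive clauses. For each connective the argument has the same shape: the clause defining $\Vdash^{\Gamma_{\At}}_{\mathcal{B}} \phi$ begins ``for all $\mathcal{C} \supseteq \mathcal{B}$ and \dots''; to prove $\Vdash^{\Gamma_{\At}}_{\mathcal{C}} \phi$ for $\mathcal{C} \supseteq \mathcal{B}$, one instantiates the hypothesis at an arbitrary further extension $\mathcal{D} \supseteq \mathcal{C}$ (hence $\mathcal{D} \supseteq \mathcal{B}$ by transitivity) and reads off exactly the condition required by the clause for $\mathcal{C}$. For example, for $(\otimes)$: assume $\Vdash^{\Gamma_{\At}}_{\mathcal{B}} \phi \otimes \psi$ and $\mathcal{C} \supseteq \mathcal{B}$; given $\mathcal{D} \supseteq \mathcal{C}$ and $\Delta_{\At}$ with $\phi, \psi \Vdash^{\Delta_{\At}}_{\mathcal{D}} \bot$, transitivity gives $\mathcal{D} \supseteq \mathcal{B}$, so the hypothesis yields $\Vdash^{\Gamma_{\At}, \Delta_{\At}}_{\mathcal{D}} \bot$, which is what $(\otimes)$ demands of $\mathcal{C}$. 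The clauses $(\multimap)$, $(1)$, $(\parr)$, $(\oplus)$ and \ref{eq:supp-inf} are all handled identically, with the appropriate number of auxiliary multisets carried along. Note that in these clauses the inductive hypothesis is not actually invoked on the subformulas — the quantification over extensions is what makes monotonicity essentially immediate; the induction is only needed to know that $\Vdash$ on subformulas is a well-defined notion.

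The remaining clauses are even easier. For $(\with)$, $\Vdash^{\Gamma_{\At}}_{\mathcal{B}} \phi \with \psi$ unfolds to $\Vdash^{\Gamma_{\At}}_{\mathcal{B}} \phi$ and $\Vdash^{\Gamma_{\At}}_{\mathcal{B}} \psi$; apply the induction hypothesis to each conjunct to transport it to $\mathcal{C}$, then recombine. For $(\top)$ there is nothing to prove since $\Vdash^{\Gamma_{\At}}_{\mathcal{B}} \top$ holds for every base. For $(0)$, $\Vdash^{\Gamma_{\At}}_{\mathcal{B}} 0$ means $\Vdash^{\Gamma_{\At}, \Delta_{\At}}_{\mathcal{B}} \bot$ for all $\Delta_{\At}$, and by Lemma~\ref{lemma:bottomisspecial} this is $\Gamma_{\At}, \Delta_{\At} \vdash_{\mathcal{B}} \bot$ for all $\Delta_{\At}$; since derivations persist under extensions, $\Gamma_{\At}, \Delta_{\At} \vdash_{\mathcal{C}} \bot$ for all $\Delta_{\At}$, i.e.\ $\Vdash^{\Gamma_{\At}}_{\mathcal{C}} 0$. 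Finally, for the general judgement \ref{eq:supp-inf} with $\Gamma = \{\psi^1, \dots, \psi^n\}$ on the left, the same extension-chasing argument applies: given $\mathcal{D} \supseteq \mathcal{C}$ and supports $\Vdash^{\Delta^i_{\At}}_{\mathcal{D}} \psi^i$, transitivity yields $\mathcal{D} \supseteq \mathcal{B}$ and the hypothesis delivers $\Vdash^{\Delta^1_{\At}, \dots, \Delta^n_{\At}, \Theta_{\At}}_{\mathcal{D}} \phi$.

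I do not expect a genuine obstacle here — the lemma is ``standard'' precisely because transitivity of $\supseteq$ makes every case mechanical. The only point requiring a little care is being explicit about the nested quantification: one must remember to generalise over a \emph{fresh} extension $\mathcal{D}$ of $\mathcal{C}$ rather than reusing $\mathcal{C}$ itself, and to appeal to transitivity of $\supseteq$ at each step; and in the $(0)$ case one should route through Lemma~\ref{lemma:bottomisspecial} (or re-prove persistence of $\bot$-derivability directly) rather than trying to apply the induction hypothesis, since $0$ has no proper subformulas. This is why the full proof is deferred to the appendix.
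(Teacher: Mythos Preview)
Your proposal is correct and follows essentially the same approach as the paper: induction on $\phi$, with the extension-quantified clauses ($(\otimes)$, $(\multimap)$, $(1)$, $(\parr)$, $(\oplus)$, \ref{eq:supp-at}) handled by transitivity of $\supseteq$, the $(\with)$ case by the induction hypothesis, $(\top)$ trivially, $(0)$ via Lemma~\ref{lemma:bottomisspecial}, and the non-empty $\Gamma$ case via \ref{eq:supp-inf}. The only organisational difference is that the paper first explicitly dispatches the \ref{eq:supp-inf} clause to reduce to $\Gamma = \varnothing$ before inducting on $\phi$, whereas you treat \ref{eq:supp-inf} as a final clause; the underlying argument is identical.
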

\begin{proof}
    Assume $\Gamma \Vdash^{\Delta_{\At}}_{\mathcal{B}} \phi$. Let $\Gamma = \{\psi_1,\dots,\psi_n\}$ and let $\mathcal{C}$ be arbitrary. By~\ref{eq:supp-inf}, one can assume, for an arbitrary $\mathcal{D} \supseteq \mathcal{C}$ and arbitrary multisets $\Theta^{i}_{\At}$, that $\Vdash^{\Theta^{i}_{\At}}_{\mathcal{D}} \psi_i$ for all $\psi_i \in \Gamma$ $(1 \leq i \leq n)$, obtain $\Vdash^{\Theta^{1}_{\At},\dots,\Theta^{n}_{\At}, \Delta_{\At}}_{\mathcal{D}} \phi$ and, hence, conclude $\Gamma \Vdash^{\Delta_{\At}}_{\mathcal{C}} \phi$ by the same definition. It thus remains to prove the case for when $\Gamma = \varnothing$, \ie~that $\Vdash^{\Delta_{\At}}_{\mathcal{B}} \phi$ implies $\Vdash^{\Delta_{\At}}_{\mathcal{C}} \phi$. This is done by induction. Note that in the case of $(\with)$, we require an induction hypothesis stating that \[\Vdash^{\Delta_{\At}}_{\mathcal{B}} \chi \text{ implies } \Vdash^{\Delta_{\At}}_{\mathcal{C}} \chi\] holds true for any proper subformula $\chi$ of $\phi$. It is easy to see that such claim is reasonable given the behaviour of $(\with)$, namely, splitting a formula into its subformulae, combined with the treatment of other connectives below.

    \begin{description}[itemsep=0.5em]
        \item[$\phi = p:$] Assume $\Vdash^{\Delta_{\At}}_{\mathcal{B}} p$ and let $\mathcal{C} \supseteq \mathcal{B}$ be arbitrary. Further assume that, for an arbitrary $\mathcal{D} \supseteq \mathcal{C}$ and arbitrary $\Sigma_{\At}$, $p, \Sigma_{\At} \vdash_{\mathcal{D}} \bot$. Then since $\Vdash^{\Delta_{\At}}_{\mathcal{B}} p$ and $p, \Sigma_{\At} \vdash_{\mathcal{D}} \bot$ by~\ref{eq:supp-at}, $\Delta_{\At}, \Sigma_{\At} \vdash_{\mathcal{D}} \bot$. Since $\mathcal{D} \supseteq \mathcal{C}$ such that $p, \Sigma_{\At} \vdash_{\mathcal{D}} \bot$ for arbitrary $\Sigma_{\At}$, we obtain $\Vdash^{\Delta_{\At}}_{\mathcal{C}} p$ by~\ref{eq:supp-at}. 

        \item[$\phi = \alpha \otimes \beta:$] Assume $\Vdash^{\Delta_{\At}}_{\mathcal{B}} \alpha \otimes \beta$ and let $\mathcal{C} \supseteq \mathcal{B}$ be arbitrary. Further assume that, for an arbitrary $\mathcal{D} \supseteq \mathcal{C}$ and arbitrary $\Sigma_{\At}$, $\alpha, \beta \Vdash^{\Sigma_{\At}}_{\mathcal{D}} \bot$. Then, by~\ref{eq:supp-tensor}, we obtain $\Vdash^{\Delta_{\At}, \Sigma_{\At}}_{\mathcal{D}} \bot$. Since $\mathcal{D} \supseteq \mathcal{C}$ such that $\alpha, \beta \Vdash^{\Sigma_{\At}}_{\mathcal{D}} \bot$ for arbitrary $\Sigma_{\At}$, we obtain $\Vdash^{\Delta_{\At}}_{\mathcal{C}} \alpha \otimes \beta$ by~\ref{eq:supp-tensor}.

        \item[$\phi = \alpha \multimap \beta:$] Assume $\Vdash^{\Delta_{\At}}_{\mathcal{B}} \alpha \multimap \beta$ and let $\mathcal{C} \supseteq \mathcal{B}$ be arbitrary. Further assume that, for an arbitrary $\mathcal{D} \supseteq \mathcal{C}$ and arbitrary $\Sigma_{\At}, \Pi_{\At}$, $\Vdash^{\Sigma_{\At}}_{\mathcal{D}} \alpha$ and $\beta \Vdash^{\Pi_{\At}}_{\mathcal{D}} \bot$. Then, by~\ref{eq:supp-imply}, we obtain $\Vdash^{\Delta_{\At}, \Sigma_{\At}, \Pi_{\At}}_{\mathcal{D}} \bot$. Since $\mathcal{D} \supseteq \mathcal{C}$ such that $\Vdash^{\Sigma_{\At}}_{\mathcal{D}} \alpha$ and $\beta \Vdash^{\Pi_{\At}}_{\mathcal{D}} \bot$ for arbitrary $\Sigma_{\At}, \Pi_{\At}$, we obtain $\Vdash^{\Delta_{\At}}_{\mathcal{C}} \alpha \multimap \beta$ by~\ref{eq:supp-imply}.

        \item[$\phi = 1:$] Assume $\Vdash^{\Delta_{\At}}_{\mathcal{B}} 1$ and let $\mathcal{C} \supseteq \mathcal{B}$ be arbitrary. Further assume that, for an arbitrary $\mathcal{D} \supseteq \mathcal{C}$ and arbitrary $\Sigma_{\At}$, $\Vdash^{\Sigma_{\At}}_{\mathcal{D}} \bot$. Then, by~\ref{eq:supp-1}, we obtain $\Vdash^{\Delta_{\At}, \Sigma_{\At}}_{\mathcal{D}} \bot$. Since $\mathcal{D} \supseteq \mathcal{C}$ such that $\Vdash^{\Sigma_{\At}}_{\mathcal{D}} \bot$ for arbitrary $\Sigma_{\At}$, we obtain $\Vdash^{\Delta_{\At}}_{\mathcal{C}} 1$ by~\ref{eq:supp-1}.

        \item[$\phi = \alpha \parr \beta:$] Assume $\Vdash^{\Delta_{\At}}_{\mathcal{B}} \alpha \parr \beta$ and let $\mathcal{C} \supseteq \mathcal{B}$ be arbitrary.  Further assume that, for an arbitrary $\mathcal{D} \supseteq \mathcal{C}$ and arbitrary $\Sigma_{\At}$, $\Pi_{\At}$, $\alpha \Vdash^{\Sigma_{\At}}_{\mathcal{D}} \bot$ and $\beta \Vdash^{\Pi_{\At}}_{\mathcal{D}} \bot$. Then, by~\ref{eq:supp-parr}, we obtain $\Vdash^{\Delta_{\At}, \Sigma_{\At}, \Pi_{\At}}_{\mathcal{D}} \bot$. Since $\mathcal{D} \supseteq \mathcal{C}$ such that $\alpha \Vdash^{\Sigma_{\At}}_{\mathcal{D}} \bot$ and $\beta \Vdash^{\Pi_{\At}}_{\mathcal{D}} \bot$ for arbitrary $\Sigma_{\At}$, $\Pi_{\At}$, we obtain $\Vdash^{\Delta_{\At}}_{\mathcal{C}} \alpha \parr \beta$ by~\ref{eq:supp-parr}.

        \item[$\phi = \alpha \with \beta:$] Assume $\Vdash^{\Delta_{\At}}_{\mathcal{B}} \alpha \with \beta$ and let $\mathcal{C} \supseteq \mathcal{B}$ be arbitrary. Then, by~\ref{eq:supp-and}, $\Vdash^{\Delta_{\At}}_{\mathcal{B}} \alpha$ and $\Vdash^{\Delta_{\At}}_{\mathcal{B}} \beta$. By induction hypothesis then, $\Vdash^{\Delta_{\At}}_{\mathcal{C}} \alpha$ and $\Vdash^{\Delta_{\At}}_{\mathcal{C}} \beta$, hence, by~\ref{eq:supp-and} again, $\Vdash^{\Delta_{\At}}_{\mathcal{C}} \alpha \with \beta$.

        \item[$\phi = \alpha \oplus \beta:$] Assume $\Vdash^{\Delta_{\At}}_{\mathcal{B}} \alpha \oplus \beta$ and let $\mathcal{C} \supseteq \mathcal{B}$ be arbitrary.  Further assume that, for an arbitrary $\mathcal{D} \supseteq \mathcal{C}$ and arbitrary $\Sigma_{\At}$, $\alpha \Vdash^{\Sigma_{\At}}_{\mathcal{D}} \bot$ and $\beta \Vdash^{\Sigma_{\At}}_{\mathcal{D}} \bot$. Then, by~\ref{eq:supp-plus}, we obtain $\Vdash^{\Delta_{\At}, \Sigma_{\At}}_{\mathcal{D}} \bot$. Since $\mathcal{D} \supseteq \mathcal{C}$ such that $\alpha \Vdash^{\Sigma_{\At}}_{\mathcal{D}} \bot$ and $\beta \Vdash^{\Sigma_{\At}}_{\mathcal{D}} \bot$ for arbitrary $\Sigma_{\At}$, we obtain $\Vdash^{\Delta_{\At}}_{\mathcal{C}} \alpha \oplus \beta$ by~\ref{eq:supp-plus}.

        \item[$\phi = \top:$] Assume $\Vdash^{\Delta_{\At}}_{\mathcal{B}} \top$ and let $\mathcal{C} \supseteq \mathcal{B}$ be arbitrary. By~\ref{eq:supp-top}, $\Vdash^{\Delta_{\At}}_{\mathcal{C}} \top$.

        \item[$\phi = 0:$] Assume $\Vdash^{\Delta_{\At}}_{\mathcal{B}} 0$ and let $\mathcal{C} \supseteq \mathcal{B}$ be arbitrary. Then, by~\ref{eq:supp-0}, we obtain $\Vdash^{\Delta_{\At}, \Sigma_{\At}}_{\mathcal{B}} \bot$ for arbitrary $\Sigma_{\At}$. By Lemma~\ref{lemma:bottomisspecial} then, $\Delta_{\At}, \Sigma_{\At} \vdash_{\mathcal{B}} \bot$. Since deductions are preserved under base extensions, also $\Delta_{\At}, \Sigma_{\At} \vdash_{\mathcal{C}} \bot$, so, by Lemma~\ref{lemma:bottomisspecial} again, $\Vdash^{\Delta_{\At}, \Sigma_{\At}}_{\mathcal{C}} \bot$, hence $\Vdash^{\Delta_{\At}}_{\mathcal{C}} 0$ by~\ref{eq:supp-0}.
    \end{description}
\end{proof}

We can now reformulate the notion of valid inference using Definition~\ref{def:structural}.

\begin{lemma}[Validity]\label{lemma:validity}
    $\Gamma \Vdash \phi$ if and only if $\Gamma \Vdash_{\mathcal{S}} \phi$.
\end{lemma}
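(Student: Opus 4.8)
The plan is to prove both implications directly from the definitions, with all the real work delegated to Lemma~\ref{lemma:monotonicity} (Monotonicity) together with the observation, recorded just after Definition~\ref{def:structural}, that $\mathcal{S} \subseteq \mathcal{B}$ for every base $\mathcal{B}$.

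For the left-to-right direction, I would assume $\Gamma \Vdash \phi$. By Definition~\ref{def:validity} this unfolds to $\Gamma \Vdash^{\varnothing}_{\mathcal{B}} \phi$ for \emph{every} base $\mathcal{B}$, so instantiating $\mathcal{B} := \mathcal{S}$ gives $\Gamma \Vdash^{\varnothing}_{\mathcal{S}} \phi$, i.e.\ $\Gamma \Vdash_{\mathcal{S}} \phi$. For the right-to-left direction, I would assume $\Gamma \Vdash_{\mathcal{S}} \phi$, that is $\Gamma \Vdash^{\varnothing}_{\mathcal{S}} \phi$, and take an arbitrary base $\mathcal{B}$. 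Since $\mathcal{B} \supseteq \mathcal{S}$, Monotonicity (applied with empty atomic context $\Theta_{\At} = \varnothing$) yields $\Gamma \Vdash^{\varnothing}_{\mathcal{B}} \phi$. As $\mathcal{B}$ was arbitrary, Definition~\ref{def:validity} gives $\Gamma \Vdash \phi$, completing the equivalence.

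I do not expect any genuine obstacle: the whole statement is essentially a repackaging of Monotonicity plus the minimality of $\mathcal{S}$. The only mild point of care is that Monotonicity is stated for the general relation $\Gamma \Vdash^{\Delta_{\At}}_{\mathcal{B}} \phi$, so one must note that $\Delta_{\At} = \varnothing$ is a legitimate instance; and, should one wish to avoid citing the informal remark after Definition~\ref{def:structural}, one can observe directly from Definition~\ref{def:structural} and the closure conditions on bases that $\mathcal{S}$, containing only the mandatory Ax and Subs instances, is a sub-collection of the rules of any base, hence $\mathcal{B} \supseteq \mathcal{S}$.
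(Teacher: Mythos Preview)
Your proposal is correct and matches the paper's proof essentially step for step: the forward direction instantiates the universal quantifier in Definition~\ref{def:validity} at $\mathcal{S}$, and the backward direction applies Lemma~\ref{lemma:monotonicity} using $\mathcal{B} \supseteq \mathcal{S}$. The paper's argument is no more and no less than this.
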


\begin{proof}
    ($\Rightarrow$): Since $\Gamma \Vdash \phi$ holds for all bases by Definition~\ref{def:validity}, it in particular holds for $\mathcal{S}$, \ie~$\Gamma \Vdash_{\mathcal{S}} \phi$.
    
    \noindent($\Leftarrow$): Assume $\Gamma \Vdash_{\mathcal{S}} \phi$ and consider any base $\mathcal{B} \supseteq \mathcal{S}$. By Lemma~\ref{lemma:monotonicity}, $\Gamma \Vdash_{\mathcal{B}} \phi$. Since $\mathcal{B}$ is arbitrary, $\Gamma \Vdash \phi$.
\end{proof}

Given an atomic multiset $\Gamma_{\At}$, saying that it supports a formula (possibly with non-empty context $\Delta_{\At}$) should be equivalent to saying that the formula is supported when the same multiset appears as part of the context. In other words, the multiset can be lifted into the context and vice versa. We formalise next this idea, beginning with the case of $\bot$ and then extending it to an arbitrary formula $\phi$ (Lemma~\ref{lemma:interchangeablesets}). 
In the latter case, we omit $\Delta_{\At}$, as the simplified formulation suffices for the proofs that follow.

\begin{lemma} \label{lemma:floatingatom}
    $\Gamma_{\At} \Vdash^{\Delta_{\At}}_{\mathcal{B}} \bot$ if and only if $\Vdash^{\Gamma_{\At}, \Delta_{\At}}_{\mathcal{B}} \bot$.
\end{lemma}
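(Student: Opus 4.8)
The key observation is that $\bot$ is an atom, so both sides of the biconditional can be reduced to ordinary atomic derivability. By Lemma~\ref{lemma:bottomisspecial} (applied with the multiset $\Gamma_{\At},\Delta_{\At}$), the right-hand side $\Vdash^{\Gamma_{\At},\Delta_{\At}}_{\mathcal{B}}\bot$ is equivalent to $\Gamma_{\At},\Delta_{\At}\vdash_{\mathcal{B}}\bot$. So it suffices to show $\Gamma_{\At}\Vdash^{\Delta_{\At}}_{\mathcal{B}}\bot$ iff $\Gamma_{\At},\Delta_{\At}\vdash_{\mathcal{B}}\bot$. To unfold the left-hand side I would write $\Gamma_{\At}=\{p^1,\dots,p^n\}$ and apply the clause~\ref{eq:supp-inf}: $\Gamma_{\At}\Vdash^{\Delta_{\At}}_{\mathcal{B}}\bot$ holds iff for every $\mathcal{C}\supseteq\mathcal{B}$ and all atomic multisets $\Sigma^1_{\At},\dots,\Sigma^n_{\At}$ with $\Vdash^{\Sigma^i_{\At}}_{\mathcal{C}}p^i$ for each $i$, we have $\Vdash^{\Sigma^1_{\At},\dots,\Sigma^n_{\At},\Delta_{\At}}_{\mathcal{C}}\bot$.

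\textbf{($\Rightarrow$).} Assume $\Gamma_{\At}\Vdash^{\Delta_{\At}}_{\mathcal{B}}\bot$. Instantiate the unfolded condition with $\mathcal{C}=\mathcal{B}$ and each $\Sigma^i_{\At}=\{p^i\}$: by Lemma~\ref{lemma:derivabilityimpliessupport} together with the rule Ax (which gives $p^i\vdash_{\mathcal{B}}p^i$) we have $\Vdash^{\{p^i\}}_{\mathcal{B}}p^i$, so~\ref{eq:supp-inf} yields $\Vdash^{\Gamma_{\At},\Delta_{\At}}_{\mathcal{B}}\bot$, and Lemma~\ref{lemma:bottomisspecial} then gives $\Gamma_{\At},\Delta_{\At}\vdash_{\mathcal{B}}\bot$.

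\textbf{($\Leftarrow$).} Assume $\Gamma_{\At},\Delta_{\At}\vdash_{\mathcal{B}}\bot$ and verify~\ref{eq:supp-inf}. Fix $\mathcal{C}\supseteq\mathcal{B}$ and atomic multisets $\Sigma^i_{\At}$ with $\Vdash^{\Sigma^i_{\At}}_{\mathcal{C}}p^i$ for $1\le i\le n$. Since derivations are preserved under extensions, $p^1,\dots,p^n,\Delta_{\At}\vdash_{\mathcal{C}}\bot$. Now apply the clause~\ref{eq:supp-at} for each $p^i$ in turn, each time taking the extension to be $\mathcal{C}$ itself and letting the residual atoms play the role of the side multiset: from $\Vdash^{\Sigma^1_{\At}}_{\mathcal{C}}p^1$ and $p^1,(p^2,\dots,p^n,\Delta_{\At})\vdash_{\mathcal{C}}\bot$ we get $\Sigma^1_{\At},p^2,\dots,p^n,\Delta_{\At}\vdash_{\mathcal{C}}\bot$; feeding this into the clause for $p^2$ gives $\Sigma^2_{\At},\Sigma^1_{\At},p^3,\dots,p^n,\Delta_{\At}\vdash_{\mathcal{C}}\bot$; and after $n$ steps $\Sigma^1_{\At},\dots,\Sigma^n_{\At},\Delta_{\At}\vdash_{\mathcal{C}}\bot$. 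By Lemma~\ref{lemma:derivabilityimpliessupport} (with $p=\bot$) this lifts to $\Vdash^{\Sigma^1_{\At},\dots,\Sigma^n_{\At},\Delta_{\At}}_{\mathcal{C}}\bot$, establishing~\ref{eq:supp-inf}, hence $\Gamma_{\At}\Vdash^{\Delta_{\At}}_{\mathcal{B}}\bot$.

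\textbf{Main obstacle.} The only delicate point is the iterated use of~\ref{eq:supp-at} in the ($\Leftarrow$) direction: one must keep careful track of the multiset bookkeeping so that, at stage $i$, the already-processed $\Sigma^1_{\At},\dots,\Sigma^{i-1}_{\At}$ and the still-to-be-processed $p^{i+1},\dots,p^n$ together with $\Delta_{\At}$ form the side multiset expected by the clause. The degenerate case $n=0$ (i.e. $\Gamma_{\At}=\varnothing$) is handled by the empty iteration, using monotonicity (Lemma~\ref{lemma:monotonicity}) to pass between $\mathcal{C}$ and $\mathcal{B}$. Note the argument is uniform in whether any $p^i$ happens to be $\bot$, since $\bot\in\At$.
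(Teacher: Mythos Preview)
Your proof is correct and follows essentially the same approach as the paper: both directions hinge on instantiating~\ref{eq:supp-inf} with $\Sigma^i_{\At}=\{p^i\}$ via Ax and Lemma~\ref{lemma:derivabilityimpliessupport} for ($\Rightarrow$), and on iterating~\ref{eq:supp-at} over the atoms of $\Gamma_{\At}$ after passing to derivability via Lemma~\ref{lemma:bottomisspecial} for ($\Leftarrow$). Your repackaging through the intermediate equivalence with $\Gamma_{\At},\Delta_{\At}\vdash_{\mathcal{B}}\bot$ is cosmetic; the paper simply interleaves the use of Lemma~\ref{lemma:bottomisspecial} rather than front-loading it.
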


\begin{proof}
    ($\Rightarrow$): Assume $\Gamma_{\At} \Vdash^{\Delta_{\At}}_{\mathcal{B}} \bot$. Let $\Gamma_{\At} = \{p^1,\dots,p^n\}$. We know that $p \vdash_{\mathcal{B}} p$ holds for arbitrary $p$, as well as $\Vdash^{\{p\}}_{\mathcal{B}} p$ by Lemma~\ref{lemma:derivabilityimpliessupport}. Then, $\forall p^i \in \Gamma_{\At}$, it is the case that $\Vdash^{\{p^i\}}_{\mathcal{B}} p^i$. Thus, by~\ref{eq:supp-inf}, from $\Gamma_{\At} \Vdash^{\Delta_{\At}}_{\mathcal{B}} \bot$ and $\Vdash^{\{p^i\}}_{\mathcal{B}} p^i$, $\forall p^i \in \Gamma_{\At}$, we obtain $\Vdash^{\Delta_{\At}, p^i,\dots, p^n}_{\mathcal{B}} \bot$, \ie~$\Vdash^{\Delta_{\At}, \Gamma_{\At}}_{\mathcal{B}} \bot$. 
    
    \noindent($\Leftarrow$): Assume $\Vdash^{\Gamma_{\At}, \Delta_{\At}}_{\mathcal{B}} \bot$. Let $\Gamma_{\At} = \{p^1,\dots,p^n\}$. Further assume that, for an arbitrary $\mathcal{C} \supseteq \mathcal{B}$ and arbitrary multisets $\Theta_{\At}^i$, $\forall p^i \in \Gamma_{\At}$, $\Vdash^{\Theta^i_{\At}}_{\mathcal{C}} p^i$. By Lemma~\ref{lemma:bottomisspecial}, $\Gamma_{\At}, \Delta_{\At} \vdash_{\mathcal{B}} \bot$ and, by monotonicity, $\Gamma_{\At}, \Delta_{\At} \vdash_{\mathcal{C}} \bot$. Then, by~\ref{eq:supp-at}, from $\Vdash^{\Theta^1_{\At}}_{\mathcal{C}} p^1$ and $\Gamma_{\At}, \Delta_{\At} \vdash_{\mathcal{C}} \bot$ we obtain $\Gamma_{\At} \setminus \{p^1\}, \Delta_{\At}, \Theta^1_{\At} \vdash_{\mathcal{C}} \bot$. Repeat for $\Vdash^{\Theta^2_{\At}}_{\mathcal{C}} p^2,\dots,\Vdash^{\Theta^n_{\At}}_{\mathcal{C}} p^n$ to obtain $\Delta_{\At}, \Theta^1_{\At},\dots,\Theta^n_{\At} \vdash_{\mathcal{C}} \bot$. Hence, by Lemma~\ref{lemma:bottomisspecial}, $\Vdash^{\Delta_{\At}, \Theta^1_{\At},\dots,\Theta^n_{\At}}_{\mathcal{C}} \bot$, and since we had chosen arbitrary $\mathcal{C} \supseteq \mathcal{B}$ such that $\forall p^i \in \Gamma_{\At}$, $\Vdash^{\Theta^i_{\At}}_{\mathcal{C}} p^i$ for arbitrary multisets $\Theta^i_{\At}$, by~\ref{eq:supp-inf}, $p^1,\dots,p^n \Vdash^{\Delta_{\At}}_{\mathcal{B}} \bot$, \ie~$\Gamma_{\At} \Vdash^{\Delta_{\At}}_{\mathcal{B}} \bot$.
\end{proof}

\begin{lemma} \label{lemma:interchangeablesets}
    $\Gamma_{\At} \Vdash_{\mathcal{B}} \phi$ if and only if $\Vdash^{\Gamma_{\At}}_{\mathcal{B}} \phi$.
\end{lemma}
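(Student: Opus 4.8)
Plan for proving Lemma~\ref{lemma:interchangeablesets} ($\Gamma_{\At} \Vdash_{\mathcal{B}} \phi$ iff $\Vdash^{\Gamma_{\At}}_{\mathcal{B}} \phi$).

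The natural approach is induction on the structure of $\phi$. The key observation is that Lemma~\ref{lemma:floatingatom} already gives the result for $\phi = \bot$, and more generally the atomic case should follow from the {\bf (At)} and {\bf (Inf)} clauses together with the earlier interchange lemmas. So the bulk of the work is in the inductive step, where we need to relate the support clause for a compound $\phi$ (which quantifies over extensions $\mathcal{C}$ and contexts where the immediate subformulas appear on the left, supporting $\bot$) to the {\bf (Inf)} clause applied to $\Gamma_{\At} \Vdash_{\mathcal{B}} \phi$ (which unfolds $\Gamma_{\At} = \{p^1,\dots,p^n\}$ and quantifies over $\mathcal{C}$ and $\Theta^i_{\At}$ with $\Vdash^{\Theta^i_{\At}}_{\mathcal{C}} p^i$, demanding $\Vdash^{\Theta^1_{\At},\dots,\Theta^n_{\At}}_{\mathcal{C}}\phi$).

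\medskip

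\noindent\textbf{Forward direction} ($\Gamma_{\At} \Vdash_{\mathcal{B}} \phi \Rightarrow \Vdash^{\Gamma_{\At}}_{\mathcal{B}} \phi$): Unfold {\bf (Inf)}. Since by Lemma~\ref{lemma:derivabilityimpliessupport} (with Ax) we have $\Vdash^{\{p^i\}}_{\mathcal{B}} p^i$ for each $p^i \in \Gamma_{\At}$, instantiating the {\bf (Inf)} clause with $\Theta^i_{\At} = \{p^i\}$ directly yields $\Vdash^{\Gamma_{\At}}_{\mathcal{B}} \phi$. This direction is essentially immediate and does not require induction.

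\medskip

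\noindent\textbf{Backward direction} ($\Vdash^{\Gamma_{\At}}_{\mathcal{B}} \phi \Rightarrow \Gamma_{\At} \Vdash_{\mathcal{B}} \phi$): This is where induction on $\phi$ is needed, and it is the main obstacle. Assume $\Vdash^{\Gamma_{\At}}_{\mathcal{B}} \phi$; by {\bf (Inf)} we must take arbitrary $\mathcal{C} \supseteq \mathcal{B}$ and $\Theta^i_{\At}$ with $\Vdash^{\Theta^i_{\At}}_{\mathcal{C}} p^i$ for all $i$, and show $\Vdash^{\Theta^1_{\At},\dots,\Theta^n_{\At}}_{\mathcal{C}}\phi$. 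The idea is to peel off the $p^i$'s one at a time: from $\Vdash^{\Gamma_{\At}}_{\mathcal{B}} \phi$ and $\Vdash^{\Theta^1_{\At}}_{\mathcal{C}} p^1$ one wants to derive $\Vdash^{\Gamma_{\At}\setminus\{p^1\},\Theta^1_{\At}}_{\mathcal{C}}\phi$, and iterate. To make this step work one has to look inside the support clause for $\phi$: it mentions a new extension $\mathcal{D} \supseteq \mathcal{C}$ and contexts $\Sigma_{\At}$ under which the subformulas of $\phi$ support $\bot$, and one must show $\Vdash^{\Gamma_{\At}\setminus\{p^1\},\Theta^1_{\At},\Sigma_{\At}}_{\mathcal{D}}\bot$, knowing $\Vdash^{\Gamma_{\At},\Sigma_{\At}}_{\mathcal{D}}\bot$ (from the clause for $\phi$ at $\mathcal{C}$) and $\Vdash^{\Theta^1_{\At}}_{\mathcal{D}} p^1$ (by monotonicity). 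But $\Vdash^{\Gamma_{\At},\Sigma_{\At}}_{\mathcal{D}}\bot$ is, by Lemma~\ref{lemma:floatingatom}, the same as $\Gamma_{\At}\Vdash^{\Sigma_{\At}}_{\mathcal{D}}\bot$, and the {\bf (At)} clause applied to $\Vdash^{\Theta^1_{\At}}_{\mathcal{D}} p^1$ together with $p^1,\Gamma_{\At}\setminus\{p^1\},\Sigma_{\At}\vdash_{\mathcal{D}}\bot$ (equivalently, via Lemma~\ref{lemma:bottomisspecial}/Lemma~\ref{lemma:floatingatom}) lets us replace $p^1$ by $\Theta^1_{\At}$, giving exactly $\Vdash^{\Gamma_{\At}\setminus\{p^1\},\Theta^1_{\At},\Sigma_{\At}}_{\mathcal{D}}\bot$. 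I expect this argument to be uniform across all the multiplicative connectives ($\otimes$, $\multimap$, $1$, $\parr$, $\oplus$), since their clauses all have the same "if subformulas support $\bot$ under $\Sigma_{\At}$ then $\Vdash^{\Gamma_{\At},\Sigma_{\At}}\bot$" shape; the cases $\top$ and $0$ are trivial or reduce directly to Lemma~\ref{lemma:floatingatom}, and $\with$ follows because its clause is a plain conjunction handled by the induction hypothesis componentwise. The delicate point — the main obstacle — is bookkeeping the simultaneous peeling of all $n$ atoms while keeping the extension and the auxiliary context $\Sigma_{\At}$ properly threaded through, i.e.\ making precise the "repeat for $p^2,\dots,p^n$" step in a way that genuinely uses only the interchange lemmas and monotonicity rather than a further appeal to the (not-yet-available for compound $\phi$) statement being proved.
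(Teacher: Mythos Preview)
Your proposal is correct and follows essentially the same approach as the paper: the forward direction is dispatched exactly as you describe, and the backward direction is an induction on $\phi$ that, for each connective whose clause has the ``if subformulas support $\bot$ under $\Sigma_{\At}$ then $\Vdash^{\Gamma_{\At},\Sigma_{\At}}\bot$'' shape, unfolds that clause, swaps $\Gamma_{\At}$ for the $\Theta^i_{\At}$'s at the $\bot$ level, and refolds. The bookkeeping you flag as the main obstacle dissolves once you notice that the peeling need not be done one atom at a time: from $\Vdash^{\Gamma_{\At},\Sigma_{\At}}_{\mathcal{D}}\bot$, Lemma~\ref{lemma:floatingatom} gives $\Gamma_{\At}\Vdash^{\Sigma_{\At}}_{\mathcal{D}}\bot$, and then a \emph{single} application of \ref{eq:supp-inf} with all the $\Vdash^{\Theta^i_{\At}}_{\mathcal{D}} p^i$ (lifted by monotonicity) yields $\Vdash^{\Theta^1_{\At},\dots,\Theta^n_{\At},\Sigma_{\At}}_{\mathcal{D}}\bot$ directly---which is exactly how the paper handles it.
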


\begin{proof}
     ($\Rightarrow$): Assume $\Gamma_{\At} \Vdash_{\mathcal{B}} \phi$. We know that $p \vdash_{\mathcal{B}} p$ holds for arbitrary $p$, as well as $\Vdash^{\{p\}}_{\mathcal{B}} p$ by Lemma~\ref{lemma:derivabilityimpliessupport}. Then, $\forall p_i \in \Gamma_{\At}$ $(1 \leq i \leq n)$, it is the case that $\Vdash^{\{p_i\}}_{\mathcal{B}} p_i$. Thus, by~\ref{eq:supp-inf}, from $\Gamma_{\At} \Vdash_{\mathcal{B}} \phi$ and $\Vdash^{\{p_i\}}_{\mathcal{B}} p_i$, $\forall p_i \in \Gamma_{\At}$, we obtain $\Vdash^{p_i,\dots, p_n}_{\mathcal{B}} \phi$, \ie~$\Vdash^{\Gamma_{\At}}_{\mathcal{B}} \phi$. \\

    \noindent($\Leftarrow$): Assume $\Vdash^{\Gamma_{\At}}_{\mathcal{B}} \phi$. Further assume that, for an arbitrary $\mathcal{C} \supseteq \mathcal{B}$, $\forall p_i \in \Gamma_{\At}$ $(1 \leq i \leq n)$ and arbitrary multisets $\Delta^{i}_{\At}$, $\Vdash^{\Delta^{i}_{\At}}_{\mathcal{C}} p_i$. What follows is the proof by induction. Note that in the case of $(\with)$, we require an induction hypothesis stating that \[\Vdash^{\Gamma_{\At}}_{\mathcal{B}} \psi \text{ implies } \Gamma_{\At} \Vdash_{\mathcal{B}} \psi\] holds true for any proper subformula $\psi$ of $\phi$, as in the proof of Lemma~\ref{lemma:monotonicity}.

    \begin{description}[itemsep=0.5em]
        \item[$\phi = p:$] we have assumed $\Vdash^{\Gamma_{\At}}_{\mathcal{B}} p$. Further assume that for an arbitrary $\mathcal{D} \supseteq \mathcal{C}$ and arbitrary $\Theta_{\At}$, $p, \Theta_{\At} \vdash_{\mathcal{D}} \bot$. From these, by~\ref{eq:supp-at}, we obtain $\Gamma_{\At}, \Theta_{\At} \vdash_{\mathcal{D}} \bot$, and hence $\Gamma_{\At} \Vdash^{\Theta_{\At}}_{\mathcal{D}} \bot$ by Lemmas~\ref{lemma:bottomisspecial},\ref{lemma:floatingatom}. Since $\forall p_i \in \Gamma_{\At}$, $\Vdash^{\Delta^{i}_{\At}}_{\mathcal{C}} p_i$ and $\Gamma_{\At} \Vdash^{\Theta_{\At}}_{\mathcal{D}} \bot$, we obtain $\Vdash^{\Theta_{\At}, \Delta^{1}_{\At},\dots,\Delta^{n}_{\At}}_{\mathcal{D}} \bot$ by~\ref{eq:supp-inf}. By Lemma~\ref{lemma:bottomisspecial}, we obtain $\Theta_{\At}, \Delta^{1}_{\At},\dots,\Delta^{n}_{\At} \vdash_{\mathcal{D}} \bot$, which together with $p, \Theta_{\At} \vdash_{\mathcal{D}} \bot$ gives us $\Vdash^{\Delta^{1}_{\At},\dots,\Delta^{n}_{\At}}_{\mathcal{C}} p$ by~\ref{eq:supp-at}. Thus, since $\Vdash^{\Delta^{1}_{\At},\dots,\Delta^{n}_{\At}}_{\mathcal{C}} p$ and $\mathcal{C} \supseteq \mathcal{B}$ such that $\Vdash^{\Delta^{i}_{\At}}_{\mathcal{C}} p_i$ for arbitrary multisets $\Delta^{i}_{\At}$, we obtain $\Gamma_{\At} \Vdash_{\mathcal{B}} p$ by~\ref{eq:supp-inf}.

        \item[$\phi = \alpha \otimes \beta:$] we have assumed $\Vdash^{\Gamma_{\At}}_{\mathcal{B}} \alpha \otimes \beta$. Further assume that for an arbitrary $\mathcal{D} \supseteq \mathcal{C}$ and arbitrary $\Theta_{\At}$, $\alpha, \beta \Vdash^{\Theta_{\At}}_{\mathcal{D}} \bot$. Then, by~\ref{eq:supp-tensor}, from $\Vdash^{\Gamma_{\At}}_{\mathcal{B}} \alpha \otimes \beta$ and $\alpha, \beta \Vdash^{\Theta_{\At}}_{\mathcal{D}} \bot$ we obtain $\Vdash^{\Gamma_{\At}, \Theta_{\At}}_{\mathcal{D}} \bot$, hence, by Lemma~\ref{lemma:floatingatom}, $\Gamma_{\At} \Vdash^{\Theta_{\At}}_{\mathcal{D}} \bot$. Since $\Gamma_{\At} \Vdash^{\Theta_{\At}}_{\mathcal{D}} \bot$ and $\forall p_i \in \Gamma_{\At}$, $\Vdash^{\Delta^{i}_{\At}}_{\mathcal{C}} p_i$ (thus also $\Vdash^{\Delta^{i}_{\At}}_{\mathcal{D}} p_i$), we obtain $\Vdash^{\Theta_{\At}, \Delta^{1}_{\At},\dots,\Delta^{n}_{\At}}_{\mathcal{D}} \bot$ by~\ref{eq:supp-inf}. Now, since $\alpha, \beta \Vdash^{\Theta_{\At}}_{\mathcal{D}} \bot$ for an arbitrary $\mathcal{D} \supseteq \mathcal{C}$ and $\Vdash^{\Theta_{\At}, \Delta^{1}_{\At},\dots,\Delta^{n}_{\At}}_{\mathcal{D}} \bot$ for arbitrary multisets $\Delta^{i}_{\At}, \Theta_{\At}$, we obtain $\Vdash^{\Delta^{1}_{\At},\dots,\Delta^{n}_{\At}}_{\mathcal{C}} \alpha \otimes \beta$ by~\ref{eq:supp-tensor}. Hence, since $\forall p_i \in \Gamma_{\At}$, $\Vdash^{\Delta^{i}_{\At}}_{\mathcal{C}} p_i$ for an arbitrary $\mathcal{C} \supseteq \mathcal{B}$, we obtain $\Gamma_{\At} \Vdash_{\mathcal{B}} \alpha \otimes \beta$ by~\ref{eq:supp-inf}.

        \item[$\phi = \alpha \multimap \beta:$] we have assumed $\Vdash^{\Gamma_{\At}}_{\mathcal{B}} \alpha \multimap \beta$. Further assume that for an arbitrary $\mathcal{D} \supseteq \mathcal{C}$ and arbitrary $\Theta_{\At}, \Sigma_{\At}$, $\Vdash^{\Theta_{\At}}_{\mathcal{D}} \alpha$ and $\beta \Vdash^{\Sigma_{\At}}_{\mathcal{D}} \bot$. Then, by~\ref{eq:supp-imply}, from $\Vdash^{\Gamma_{\At}}_{\mathcal{B}} \alpha \multimap \beta$, $\Vdash^{\Theta_{\At}}_{\mathcal{D}} \alpha$ and $\beta \Vdash^{\Sigma_{\At}}_{\mathcal{D}} \bot$ we obtain $\Vdash^{\Gamma_{\At}, \Theta_{\At}, \Sigma_{\At}}_{\mathcal{D}} \bot$, hence, by Lemma~\ref{lemma:floatingatom}, $\Gamma_{\At} \Vdash^{\Theta_{\At}, \Sigma_{\At}}_{\mathcal{D}} \bot$. Since $\Gamma_{\At} \Vdash^{\Theta_{\At}, \Sigma_{\At}}_{\mathcal{D}} \bot$ and $\forall p_i \in \Gamma_{\At}$, $\Vdash^{\Delta^{i}_{\At}}_{\mathcal{C}} p_i$ (thus also $\Vdash^{\Delta^{i}_{\At}}_{\mathcal{D}} p_i$), we obtain $\Vdash^{\Theta_{\At}, \Sigma_{\At}, \Delta^{1}_{\At},\dots,\Delta^{n}_{\At}}_{\mathcal{D}} \bot$ by~\ref{eq:supp-inf}. Now, since $\Vdash^{\Theta_{\At}}_{\mathcal{D}} \alpha$ and $\beta \Vdash^{\Sigma_{\At}}_{\mathcal{D}} \bot$ for an arbitrary $\mathcal{D} \supseteq \mathcal{C}$ and $\Vdash^{\Theta_{\At}, \Sigma_{\At}, \Delta^{1}_{\At},\dots,\Delta^{n}_{\At}}_{\mathcal{D}} \bot$ for arbitrary multisets $\Delta^{i}_{\At}, \Theta_{\At}, \Sigma_{\At}$, we obtain $\Vdash^{\Delta^{1}_{\At},\dots,\Delta^{n}_{\At}}_{\mathcal{C}} \alpha \multimap \beta$ by~\ref{eq:supp-imply}. Hence, since $\forall p_i \in \Gamma_{\At}$, $\Vdash^{\Delta^{i}_{\At}}_{\mathcal{C}} p_i$ for an arbitrary $\mathcal{C} \supseteq \mathcal{B}$, we obtain $\Gamma_{\At} \Vdash_{\mathcal{B}} \alpha \multimap \beta$ by~\ref{eq:supp-inf}.

        \item[$\phi = 1:$] we have assumed $\Vdash^{\Gamma_{\At}}_{\mathcal{B}} 1$. Further assume that for an arbitrary $\mathcal{D} \supseteq \mathcal{C}$ and arbitrary $\Theta_{\At}$, $\Vdash^{\Theta_{\At}}_{\mathcal{D}} \bot$. Then, by~\ref{eq:supp-1}, from $\Vdash^{\Gamma_{\At}}_{\mathcal{B}} 1$ and $\Vdash^{\Theta_{\At}}_{\mathcal{D}} \bot$ we obtain $\Vdash^{\Gamma_{\At}, \Theta_{\At}}_{\mathcal{D}} \bot$, hence, by Lemma~\ref{lemma:floatingatom}, $\Gamma_{\At} \Vdash^{\Theta_{\At}}_{\mathcal{D}} \bot$. Since $\Gamma_{\At} \Vdash^{\Theta_{\At}}_{\mathcal{D}} \bot$ and $\forall p_i \in \Gamma_{\At}$, $\Vdash^{\Delta^{i}_{\At}}_{\mathcal{C}} p_i$ (thus also $\Vdash^{\Delta^{i}_{\At}}_{\mathcal{D}} p_i$), we obtain $\Vdash^{\Theta_{\At}, \Delta^{1}_{\At},\dots,\Delta^{n}_{\At}}_{\mathcal{D}} \bot$ by~\ref{eq:supp-inf}. Now, since $\Vdash^{\Theta_{\At}}_{\mathcal{D}} \bot$ for an arbitrary $\mathcal{D} \supseteq \mathcal{C}$ and $\Vdash^{\Theta_{\At}, \Delta^{1}_{\At},\dots,\Delta^{n}_{\At}}_{\mathcal{D}} \bot$ for arbitrary multisets $\Delta^{i}_{\At}, \Theta_{\At}$, we obtain $\Vdash^{\Delta^{1}_{\At},\dots,\Delta^{n}_{\At}}_{\mathcal{C}} 1$ by~\ref{eq:supp-1}. Hence, since $\forall p_i \in \Gamma_{\At}$, $\Vdash^{\Delta^{i}_{\At}}_{\mathcal{C}} p_i$ for an arbitrary $\mathcal{C} \supseteq \mathcal{B}$, we obtain $\Gamma_{\At} \Vdash_{\mathcal{B}} 1$ by~\ref{eq:supp-inf}.

        \item[$\phi = \alpha \parr \beta:$] we have assumed $\Vdash^{\Gamma_{\At}}_{\mathcal{B}} \alpha \parr \beta$. Further assume that for an arbitrary $\mathcal{D} \supseteq \mathcal{C}$ and arbitrary $\Theta_{\At}$, $\Sigma_{\At}$,  $\alpha \Vdash^{\Theta_{\At}}_{\mathcal{D}} \bot$ and $\beta \Vdash^{\Sigma_{\At}}_{\mathcal{D}} \bot$. Then, by~\ref{eq:supp-parr}, from $\Vdash^{\Gamma_{\At}}_{\mathcal{B}} \alpha \parr \beta$ and $\alpha \Vdash^{\Theta_{\At}}_{\mathcal{D}} \bot$ and $\beta \Vdash^{\Sigma_{\At}}_{\mathcal{D}} \bot$ we obtain $\Vdash^{\Gamma_{\At}, \Theta_{\At}, \Sigma_{\At}}_{\mathcal{D}} \bot$, hence, by Lemma~\ref{lemma:floatingatom}, $\Gamma_{\At} \Vdash^{\Theta_{\At}, \Sigma_{\At}}_{\mathcal{D}} \bot$. Since $\Gamma_{\At} \Vdash^{\Theta_{\At}, \Sigma_{\At}}_{\mathcal{D}} \bot$ and $\forall p_i \in \Gamma_{\At}$, $\Vdash^{\Delta^{i}_{\At}}_{\mathcal{C}} p_i$ (thus also $\Vdash^{\Delta^{i}_{\At}}_{\mathcal{D}} p_i$), we obtain $\Vdash^{\Theta_{\At}, \Sigma_{\At}, \Delta^{1}_{\At},\dots,\Delta^{n}_{\At}}_{\mathcal{D}} \bot$ by~\ref{eq:supp-inf}. Now, since $\alpha \Vdash^{\Theta_{\At}}_{\mathcal{D}} \bot$ and $\beta \Vdash^{\Sigma_{\At}}_{\mathcal{D}} \bot$ for an arbitrary $\mathcal{D} \supseteq \mathcal{C}$ and $\Vdash^{\Theta_{\At}, \Sigma_{\At}, \Delta^{1}_{\At},\dots,\Delta^{n}_{\At}}_{\mathcal{D}} \bot$ for arbitrary multisets $\Delta^{i}_{\At}, \Theta_{\At}$. $\Sigma_{\At}$, we obtain $\Vdash^{\Delta^{1}_{\At},\dots,\Delta^{n}_{\At}}_{\mathcal{C}} \alpha \parr \beta$ by~\ref{eq:supp-parr}. Hence, since $\forall p_i \in \Gamma_{\At}$, $\Vdash^{\Delta^{i}_{\At}}_{\mathcal{C}} p_i$ for an arbitrary $\mathcal{C} \supseteq \mathcal{B}$, we obtain $\Gamma_{\At} \Vdash_{\mathcal{B}} \alpha \parr \beta$ by~\ref{eq:supp-inf}.

        \item[$\phi = \alpha \with \beta:$] we have assumed $\Vdash^{\Gamma_{\At}}_{\mathcal{B}} \alpha \with \beta$. By~\ref{eq:supp-and},  $\Vdash^{\Gamma_{\At}}_{\mathcal{B}} \alpha$ and $\Vdash^{\Gamma_{\At}}_{\mathcal{B}} \beta$. By induction hypothesis then, $\Gamma_{\At} \Vdash_{\mathcal{B}} \alpha$ and $\Gamma_{\At} \Vdash_{\mathcal{B}} \beta$. Since $\forall p_i \in \Gamma_{\At}$, $\Vdash^{\Delta^{i}_{\At}}_{\mathcal{C}} p_i$, we obtain $\Vdash^{\Delta^{1}_{\At},\dots,\Delta^{n}_{\At}}_{\mathcal{C}} \alpha$ and $\Vdash^{\Delta^{1}_{\At},\dots,\Delta^{n}_{\At}}_{\mathcal{C}} \beta$, respectively, by~\ref{eq:supp-inf}. Then, by~\ref{eq:supp-and} again, $\Vdash^{\Delta^{1}_{\At},\dots,\Delta^{n}_{\At}}_{\mathcal{C}} \alpha \with \beta$. Hence, since $\forall p_i \in \Gamma_{\At}$, $\Vdash^{\Delta^{i}_{\At}}_{\mathcal{C}} p_i$ for an arbitrary $\mathcal{C} \supseteq \mathcal{B}$, we obtain $\Gamma_{\At} \Vdash_{\mathcal{B}} \alpha \with \beta$ by~\ref{eq:supp-inf}.
        
        \item[$\phi = \alpha \oplus \beta:$] we have assumed $\Vdash^{\Gamma_{\At}}_{\mathcal{B}} \alpha \oplus \beta$. Further assume that for an arbitrary $\mathcal{D} \supseteq \mathcal{C}$ and arbitrary $\Theta_{\At}$, $\alpha \Vdash^{\Theta_{\At}}_{\mathcal{D}} \bot$ and $\beta \Vdash^{\Theta_{\At}}_{\mathcal{D}} \bot$. Then, by~\ref{eq:supp-plus}, from $\Vdash^{\Gamma_{\At}}_{\mathcal{B}} \alpha \oplus \beta$ and $\alpha \Vdash^{\Theta_{\At}}_{\mathcal{D}} \bot$ and $\beta \Vdash^{\Theta_{\At}}_{\mathcal{D}} \bot$ we obtain $\Vdash^{\Gamma_{\At}, \Theta_{\At}}_{\mathcal{D}} \bot$, hence, by Lemma~\ref{lemma:floatingatom}, $\Gamma_{\At} \Vdash^{\Theta_{\At}}_{\mathcal{D}} \bot$. Since $\Gamma_{\At} \Vdash^{\Theta_{\At}}_{\mathcal{D}} \bot$ and $\forall p_i \in \Gamma_{\At}$, $\Vdash^{\Delta^{i}_{\At}}_{\mathcal{C}} p_i$ (thus also $\Vdash^{\Delta^{i}_{\At}}_{\mathcal{D}} p_i$), we obtain $\Vdash^{\Theta_{\At}, \Delta^{1}_{\At},\dots,\Delta^{n}_{\At}}_{\mathcal{D}} \bot$ by~\ref{eq:supp-inf}. Now, since $\alpha \Vdash^{\Theta_{\At}}_{\mathcal{D}} \bot$ and $\beta \Vdash^{\Theta_{\At}}_{\mathcal{D}} \bot$ for an arbitrary $\mathcal{D} \supseteq \mathcal{C}$ and $\Vdash^{\Theta_{\At}, \Delta^{1}_{\At},\dots,\Delta^{n}_{\At}}_{\mathcal{D}} \bot$ for arbitrary multisets $\Delta^{i}_{\At}, \Theta_{\At}$, we obtain $\Vdash^{\Delta^{1}_{\At},\dots,\Delta^{n}_{\At}}_{\mathcal{C}} \alpha \oplus \beta$ by~\ref{eq:supp-plus}. Hence, since $\forall p_i \in \Gamma_{\At}$, $\Vdash^{\Delta^{i}_{\At}}_{\mathcal{C}} p_i$ for an arbitrary $\mathcal{C} \supseteq \mathcal{B}$, we obtain $\Gamma_{\At} \Vdash_{\mathcal{B}} \alpha \oplus \beta$ by~\ref{eq:supp-inf}.

        \item[$\phi = \top:$] we have assumed $\Vdash^{\Gamma_{\At}}_{\mathcal{B}} \top$. By~\ref{eq:supp-top}, $\Vdash^{\Delta^{1}_{\At},\dots,\Delta^{n}_{\At}}_{\mathcal{C}} \top$. Hence, since $\forall p_i \in \Gamma_{\At}$, $\Vdash^{\Delta^{i}_{\At}}_{\mathcal{C}} p_i$ for an arbitrary $\mathcal{C} \supseteq \mathcal{B}$, we obtain $\Gamma_{\At} \Vdash_{\mathcal{B}} \top$ by~\ref{eq:supp-inf}.

        \item[$\phi = 0:$] we have assumed $\Vdash^{\Gamma_{\At}}_{\mathcal{B}} 0$. Then, by~\ref{eq:supp-0}, we obtain $\Vdash^{\Gamma_{\At}, \Theta_{\At}}_{\mathcal{B}} \bot$ for arbitrary $\Theta_{\At}$, hence, by Lemma~\ref{lemma:floatingatom},  $\Gamma_{\At} \Vdash^{\Theta_{\At}}_{\mathcal{B}} \bot$. Since $\Gamma_{\At} \Vdash^{\Theta_{\At}}_{\mathcal{B}} \bot$ and $\forall p_i \in \Gamma_{\At}$, $\Vdash^{\Delta^{i}_{\At}}_{\mathcal{C}} p_i$, we obtain $\Vdash^{\Theta_{\At}, \Delta^{1}_{\At},\dots,\Delta^{n}_{\At}}_{\mathcal{C}} \bot$ by~\ref{eq:supp-inf}. Now, since $\Theta_{\At}$ is arbitrary, we conclude $\Vdash^{\Delta^{1}_{\At},\dots,\Delta^{n}_{\At}}_{\mathcal{C}} 0$ by~\ref{eq:supp-0}. Hence, since $\forall p_i \in \Gamma_{\At}$, $\Vdash^{\Delta^{i}_{\At}}_{\mathcal{C}} p_i$ for an arbitrary $\mathcal{C} \supseteq \mathcal{B}$, we obtain $\Gamma_{\At} \Vdash_{\mathcal{B}} 0$ by~\ref{eq:supp-inf}.
    \end{description}
\end{proof}

So far, we have used~\ref{eq:supp-inf} only to derive expressions with an empty left-hand side -- that is, effectively replacing the entire multiset supporting a formula with atomic multisets in the superscript of the support relation. We now show that this process can be applied partially or sequentially, yielding expressions where some formula remains on the left-hand side.

\begin{lemma} \label{lemma:partialinf}
    If $\Gamma, \phi \Vdash^{\Delta_{\At}}_{\mathcal{B}} \psi$ and, for $\Gamma = \{\alpha^1,\dots,\alpha^n\}$ and an arbitrary $\mathcal{C} \supseteq \mathcal{B}$, $\forall \alpha^i \in \Gamma$ $(1 \leq i \leq n)$ and arbitrary multisets $\Theta^{i}_{\At}$, $\Vdash^{\Theta^{i}_{\At}}_{\mathcal{C}} \alpha^i$, then $\phi \Vdash^{\Delta_{\At}, \Theta^{1}_{\At},\dots,\Theta^{n}_{\At}}_{\mathcal{C}} \psi$.
\end{lemma}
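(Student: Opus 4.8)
The plan is to prove the statement by directly unfolding clause~\ref{eq:supp-inf} for the target judgement and then discharging the resulting obligation by a single application of~\ref{eq:supp-inf} to the hypothesis $\Gamma,\phi \Vdash^{\Delta_{\At}}_{\mathcal{B}} \psi$, using Monotonicity (Lemma~\ref{lemma:monotonicity}) to transport the given supports $\Vdash^{\Theta^i_{\At}}_{\mathcal{C}} \alpha^i$ into whichever extension of $\mathcal{C}$ arises. So first I fix $\mathcal{C} \supseteq \mathcal{B}$ and atomic multisets $\Theta^1_{\At}, \dots, \Theta^n_{\At}$ as in the statement, so that $\Vdash^{\Theta^i_{\At}}_{\mathcal{C}} \alpha^i$ holds for each $1 \le i \le n$, and I assume $\Gamma,\phi \Vdash^{\Delta_{\At}}_{\mathcal{B}} \psi$.

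To establish $\phi \Vdash^{\Delta_{\At}, \Theta^1_{\At}, \dots, \Theta^n_{\At}}_{\mathcal{C}} \psi$, by~\ref{eq:supp-inf} it suffices to take an arbitrary $\mathcal{D} \supseteq \mathcal{C}$ and an arbitrary atomic multiset $\Xi_{\At}$ with $\Vdash^{\Xi_{\At}}_{\mathcal{D}} \phi$, and to derive $\Vdash^{\Xi_{\At}, \Delta_{\At}, \Theta^1_{\At}, \dots, \Theta^n_{\At}}_{\mathcal{D}} \psi$. Since $\mathcal{D} \supseteq \mathcal{C} \supseteq \mathcal{B}$, Monotonicity gives $\Vdash^{\Theta^i_{\At}}_{\mathcal{D}} \alpha^i$ for every $i$. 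We now have, in the base $\mathcal{D}$, a support for each formula of the multiset $\Gamma,\phi = \{\alpha^1, \dots, \alpha^n, \phi\}$, with respective contexts $\Theta^1_{\At}, \dots, \Theta^n_{\At}, \Xi_{\At}$. Applying~\ref{eq:supp-inf} to the hypothesis $\Gamma,\phi \Vdash^{\Delta_{\At}}_{\mathcal{B}} \psi$, instantiated at the extension $\mathcal{D} \supseteq \mathcal{B}$ and at these contexts, yields $\Vdash^{\Theta^1_{\At}, \dots, \Theta^n_{\At}, \Xi_{\At}, \Delta_{\At}}_{\mathcal{D}} \psi$, which is precisely the desired conclusion, since superscripts are multisets and hence their order is immaterial. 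As $\mathcal{D}$ and $\Xi_{\At}$ were arbitrary, \ref{eq:supp-inf} gives $\phi \Vdash^{\Delta_{\At}, \Theta^1_{\At}, \dots, \Theta^n_{\At}}_{\mathcal{C}} \psi$.

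This proof is essentially careful bookkeeping rather than a genuine difficulty; the only points requiring attention are matching the quantifier structure of~\ref{eq:supp-inf} -- specifically, noticing that the ``inner'' base $\mathcal{D}$ produced when unfolding the goal is still an extension of $\mathcal{B}$, so that the hypothesis may be applied there -- and remembering to invoke Monotonicity to lift the $\alpha^i$-supports from $\mathcal{C}$ up to $\mathcal{D}$. The degenerate case $n = 0$ (that is, $\Gamma$ empty) is subsumed, the statement then collapsing to Monotonicity for $\phi \Vdash^{\Delta_{\At}}_{\mathcal{B}} \psi$.
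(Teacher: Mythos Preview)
Your proof is correct and follows essentially the same approach as the paper's own proof: unfold~\ref{eq:supp-inf} for the goal, take an arbitrary extension $\mathcal{D}\supseteq\mathcal{C}$ and support for $\phi$, lift the $\alpha^i$-supports to $\mathcal{D}$ by Monotonicity, and apply~\ref{eq:supp-inf} to the main hypothesis. Your version is in fact slightly more careful, explicitly retaining $\Delta_{\At}$ in the superscript after applying~\ref{eq:supp-inf} (the paper's proof text drops it at that step, apparently a typo) and noting the degenerate case $n=0$.
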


\begin{proof}
    Assume $\Gamma, \phi \Vdash^{\Delta_{\At}}_{\mathcal{B}} \psi$ and, for $\Gamma = \{\alpha^1,\dots,\alpha^n\}$ and an arbitrary $\mathcal{C} \supseteq \mathcal{B}$, $\forall \alpha^i \in \Gamma$ $(1 \leq i \leq n)$ and arbitrary multisets $\Theta^{i}_{\At}$, $\Vdash^{\Theta^{i}_{\At}}_{\mathcal{C}} \alpha^i$. Further assume that, for an arbitrary $\mathcal{D} \supseteq \mathcal{C}$ and arbitrary $\Sigma_{\At}$, $\Vdash^{\Sigma_{\At}}_{\mathcal{D}} \phi$. By monotonicity, also  $\Vdash^{\Theta^{i}_{\At}}_{\mathcal{D}} \alpha^i$ for all $\Theta^{i}_{\At}$ and $\alpha^i \in \Gamma$. Then, by~\ref{eq:supp-inf}, $\Vdash^{\Theta^{1}_{\At},\dots,\Theta^{n}_{\At}, \Sigma_{\At}}_{\mathcal{D}} \psi$. Finally, since $\mathcal{D} \supseteq \mathcal{C}$ such that $\Vdash^{\Sigma_{\At}}_{\mathcal{D}} \phi$ for arbitrary $\Sigma_{\At}$, we obtain $\phi \Vdash^{\Delta_{\At}, \Theta^{1}_{\At},\dots,\Theta^{n}_{\At}}_{\mathcal{C}} \psi$ by~\ref{eq:supp-inf}.
\end{proof}

Another natural property to expect of the support relation is that a formula $\phi$ supports a formula $\psi$ if and only if the inference from $\phi$ to $\psi$ is itself supported. We demonstrate this in the case where $\psi = \bot$, as this result is required for a key step in the soundness proof, and we include a remark addressing the remaining cases.

\begin{lemma} \label{lemma:negatingformula}
    $\phi \Vdash^{\Gamma_{\At}}_{\mathcal{B}} \bot$ if and only if $\Vdash^{\Gamma_{\At}}_{\mathcal{B}} \neg \phi$.
\end{lemma}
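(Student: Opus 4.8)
The plan is to prove the biconditional directly, unpacking the relevant support clauses. For the left-to-right direction, I would assume $\phi \Vdash^{\Gamma_{\At}}_{\mathcal{B}} \bot$ and aim to establish $\Vdash^{\Gamma_{\At}}_{\mathcal{B}} \neg \phi$, i.e.\ $\Vdash^{\Gamma_{\At}}_{\mathcal{B}} \phi \multimap \bot$. By clause \ref{eq:supp-imply}, this means fixing an arbitrary $\mathcal{C} \supseteq \mathcal{B}$ and arbitrary $\Delta_{\At}, \Theta_{\At}$, assuming $\Vdash^{\Delta_{\At}}_{\mathcal{C}} \phi$ and $\bot \Vdash^{\Theta_{\At}}_{\mathcal{C}} \bot$, and deriving $\Vdash^{\Gamma_{\At}, \Delta_{\At}, \Theta_{\At}}_{\mathcal{C}} \bot$. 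The key observation is that $\bot \Vdash^{\Theta_{\At}}_{\mathcal{C}} \bot$, via Lemma~\ref{lemma:floatingatom}, is equivalent to $\Vdash^{\Theta_{\At}, \bot}_{\mathcal{C}} \bot$, which by Lemma~\ref{lemma:bottomisspecial} means $\bot, \Theta_{\At} \vdash_{\mathcal{C}} \bot$ — but this holds trivially by Ax (so this hypothesis carries no real content, which is the expected behaviour of $\bot$ as a fixed atom). From $\phi \Vdash^{\Gamma_{\At}}_{\mathcal{B}} \bot$ and $\Vdash^{\Delta_{\At}}_{\mathcal{C}} \phi$, by monotonicity and \ref{eq:supp-inf}, I get $\Vdash^{\Gamma_{\At}, \Delta_{\At}}_{\mathcal{C}} \bot$; then I need to weaken in $\Theta_{\At}$ to reach $\Vdash^{\Gamma_{\At}, \Delta_{\At}, \Theta_{\At}}_{\mathcal{C}} \bot$. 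This last weakening step follows from Lemma~\ref{lemma:bottomisspecial} (turning support into $\vdash_{\mathcal{C}}$), the fact that $\vdash_{\mathcal{C}}$ admits weakening on the left for $\bot$-conclusions via Subs and Ax, and Lemma~\ref{lemma:bottomisspecial} again.

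For the right-to-left direction, I would assume $\Vdash^{\Gamma_{\At}}_{\mathcal{B}} \neg \phi$ and prove $\phi \Vdash^{\Gamma_{\At}}_{\mathcal{B}} \bot$. By \ref{eq:supp-inf}, the latter requires: for all $\mathcal{C} \supseteq \mathcal{B}$ and all $\Sigma_{\At}$, if $\Vdash^{\Sigma_{\At}}_{\mathcal{C}} \phi$ then $\Vdash^{\Sigma_{\At}, \Gamma_{\At}}_{\mathcal{C}} \bot$. So I fix such a $\mathcal{C}$ and $\Sigma_{\At}$ with $\Vdash^{\Sigma_{\At}}_{\mathcal{C}} \phi$. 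From $\Vdash^{\Gamma_{\At}}_{\mathcal{B}} \neg \phi$, monotonicity gives $\Vdash^{\Gamma_{\At}}_{\mathcal{C}} \neg \phi$, and unpacking clause \ref{eq:supp-imply} with the instantiation $\Delta_{\At} := \Sigma_{\At}$ and $\Theta_{\At} := \varnothing$, together with the premise $\Vdash^{\Sigma_{\At}}_{\mathcal{C}} \phi$ and the trivially-true premise $\bot \Vdash^{\varnothing}_{\mathcal{C}} \bot$ (again Ax plus Lemma~\ref{lemma:bottomisspecial}/\ref{lemma:floatingatom}), yields $\Vdash^{\Gamma_{\At}, \Sigma_{\At}}_{\mathcal{C}} \bot$, which is what we wanted.

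The main obstacle, though mild, is managing the bookkeeping around the vacuous premise $\psi \Vdash^{\Theta_{\At}}_{\mathcal{C}} \bot$ in clause \ref{eq:supp-imply} when $\psi = \bot$: one must argue carefully that this is always satisfied (so that it never blocks the argument in the $\Leftarrow$ direction) and that it can always be discharged by Ax (so it contributes nothing in the $\Rightarrow$ direction), and that the residual context $\Theta_{\At}$ can be freely weakened in and out of $\bot$-conclusions. All of this reduces, via Lemmas~\ref{lemma:bottomisspecial} and~\ref{lemma:floatingatom}, to the admissibility of left-weakening for $\bot$-sequents in bases, which is immediate from Ax and Subs. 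I would also add a short remark, as the paper promises, indicating that for a general conclusion $\psi$ one would replace the role of $\bot$ by consistency-style reasoning but that the clean equivalence above is specific to $\psi = \bot$.
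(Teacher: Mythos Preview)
Your $(\Leftarrow)$ direction is fine and matches the paper's argument. The $(\Rightarrow)$ direction, however, contains a genuine error rooted in forgetting that the system is linear.

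You claim that the hypothesis $\bot \Vdash^{\Theta_{\At}}_{\mathcal{C}} \bot$ ``holds trivially by Ax'' and ``carries no real content'', and then separately claim that ``$\vdash_{\mathcal{C}}$ admits weakening on the left for $\bot$-conclusions via Subs and Ax''. Both claims are false. In these linear bases, Ax gives only $\bot \vdash_{\mathcal{C}} \bot$; for non-empty $\Theta_{\At}$ the sequent $\bot, \Theta_{\At} \vdash_{\mathcal{C}} \bot$ is \emph{not} derivable in general, and there is no left-weakening: from $\Gamma_{\At},\Delta_{\At} \vdash_{\mathcal{C}} \bot$ and Ax you cannot manufacture $\Gamma_{\At},\Delta_{\At},\Theta_{\At} \vdash_{\mathcal{C}} \bot$ by Subs. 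So your ``weaken in $\Theta_{\At}$'' step simply does not go through.

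The fix is that the hypothesis you dismissed is precisely the resource that lets you absorb $\Theta_{\At}$. From $\phi \Vdash^{\Gamma_{\At}}_{\mathcal{B}} \bot$ and $\Vdash^{\Delta_{\At}}_{\mathcal{C}} \phi$ you correctly get $\Vdash^{\Gamma_{\At},\Delta_{\At}}_{\mathcal{C}} \bot$ by \ref{eq:supp-inf}; now apply \ref{eq:supp-inf} \emph{again}, this time using the assumed $\bot \Vdash^{\Theta_{\At}}_{\mathcal{C}} \bot$ together with $\Vdash^{\Gamma_{\At},\Delta_{\At}}_{\mathcal{C}} \bot$, to obtain $\Vdash^{\Gamma_{\At},\Delta_{\At},\Theta_{\At}}_{\mathcal{C}} \bot$. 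This is exactly what the paper does. (Equivalently, in your $\vdash_{\mathcal{C}}$ formulation: the hypothesis gives you $\bot,\Theta_{\At} \vdash_{\mathcal{C}} \bot$, and Subs with $\Gamma_{\At},\Delta_{\At} \vdash_{\mathcal{C}} \bot$ then yields $\Gamma_{\At},\Delta_{\At},\Theta_{\At} \vdash_{\mathcal{C}} \bot$ --- but note this uses the \emph{assumed} sequent, not a free instance of Ax.)
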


\begin{proof}
    ($\Leftarrow$): Assume $\Vdash^{\Gamma_{\At}}_{\mathcal{B}} \neg \phi$, \ie~$\Vdash^{\Gamma_{\At}}_{\mathcal{B}} \phi \multimap \bot$. Further assume that, for an arbitrary $\mathcal{C} \supseteq \mathcal{B}$ and arbitrary $\Theta_{\At}$, $\Vdash^{\Theta_{\At}}_{\mathcal{C}} \phi$. We know that $\bot \vdash_{\mathcal{C}} \bot$, hence, by Lemma~\ref{lemma:bottomisspecial}, $\Vdash^{\{\bot\}}_{\mathcal{C}} \bot$, hence, by Lemma~\ref{lemma:interchangeablesets}, $\bot \Vdash_{\mathcal{C}} \bot$. Now, by~\ref{eq:supp-imply}, from $\Vdash^{\Gamma_{\At}}_{\mathcal{B}} \phi \multimap \bot$, $\Vdash^{\Theta_{\At}}_{\mathcal{C}} \phi$ and $\bot \Vdash_{\mathcal{C}} \bot$ we obtain $\Vdash^{\Gamma_{\At}, \Theta_{\At}}_{\mathcal{C}} \bot$. Since $\mathcal{C} \supseteq \mathcal{B}$ such that $\Vdash^{\Theta_{\At}}_{\mathcal{C}} \phi$ for arbitrary $\Theta_{\At}$, by~\ref{eq:supp-inf}, $\phi \Vdash^{\Gamma_{\At}}_{\mathcal{B}} \bot$. 

    \noindent($\Rightarrow$): Assume $\phi \Vdash^{\Gamma_{\At}}_{\mathcal{B}} \bot$. Further assume that, for an arbitrary $\mathcal{C} \supseteq \mathcal{B}$ and arbitrary $\Theta_{\At}, \Sigma_{\At}$, $\Vdash^{\Theta_{\At}}_{\mathcal{C}} \phi$ and $\bot \Vdash^{\Sigma_{\At}}_{\mathcal{C}} \bot$. By~\ref{eq:supp-inf}, from $\phi \Vdash^{\Gamma_{\At}}_{\mathcal{B}} \bot$ and $\Vdash^{\Theta_{\At}}_{\mathcal{C}} \phi$ we obtain $\Vdash^{\Gamma_{\At}, \Theta_{\At}}_{\mathcal{C}} \bot$. Now, by~\ref{eq:supp-inf} again, from $\bot \Vdash^{\Sigma_{\At}}_{\mathcal{C}} \bot$ and $\Vdash^{\Gamma_{\At}, \Theta_{\At}}_{\mathcal{C}} \bot$ we obtain $\Vdash^{\Gamma_{\At}, \Theta_{\At}, \Sigma_{\At}}_{\mathcal{C}} \bot$. Now, by~\ref{eq:supp-imply}, since $\Vdash^{\Theta_{\At}}_{\mathcal{C}} \phi$ and $\bot \Vdash^{\Sigma_{\At}}_{\mathcal{C}} \bot$ and $\Vdash^{\Gamma_{\At}, \Theta_{\At}, \Sigma_{\At}}_{\mathcal{C}} \bot$ for arbitrary $\Theta_{\At}, \Sigma_{\At}$, we obtain $\Vdash^{\Gamma_{\At}}_{\mathcal{B}} \phi \multimap \bot$, \ie~$\Vdash^{\Gamma_{\At}}_{\mathcal{B}} \neg \phi$.
\end{proof}

\begin{remark}
    It is indeed the case that $\phi \Vdash^{\Gamma_{\At}}_{\mathcal{B}} \psi$ if and only if $\Vdash^{\Gamma_{\At}}_{\mathcal{B}} \phi \multimap \psi$. We only briefly touch on it here as this is not a key result. Nonetheless, to see this, choose an arbitrary $\mathcal{C} \supseteq \mathcal{B}$ such that $\Vdash^{\Delta_{\At}}_{\mathcal{C}} \phi$ and $\psi \Vdash^{\Theta_{\At}}_{\mathcal{C}} \bot$ for arbitrary $\Delta_{\At}, \Theta_{\At}$. Then, by~\ref{eq:supp-inf}, $\Vdash^{\Gamma_{\At}, \Delta_{\At}}_{\mathcal{C}} \psi$, and by~\ref{eq:supp-inf} again, $\Vdash^{\Gamma_{\At}, \Delta_{\At}, \Theta_{\At}}_{\mathcal{C}} \bot$, hence $\Vdash^{\Gamma_{\At}}_{\mathcal{B}} \phi \multimap \psi$ by~\ref{eq:supp-imply}. The other direction is a special case of the upcoming lemma (Lemma~\ref{lemma:genericimplication}): let $\mathcal{C} \supseteq \mathcal{B}$ such that $\Vdash^{\Delta_{\At}}_{\mathcal{C}} \phi$, set $\chi = \psi$ and $\Delta_{\At} = \varnothing$; hence we obtain  $\Vdash^{\Gamma_{\At}, \Delta_{\At}}_{\mathcal{C}} \psi$ and, by~\ref{eq:supp-inf}, $\phi \Vdash^{\Gamma_{\At}}_{\mathcal{B}} \psi$.
\end{remark}

Since we want the support relation to mirror the behaviour of {\MALL}, it is natural to expect that the left-to-right implications in clauses~\ref{eq:supp-tensor},\ref{eq:supp-imply},\ref{eq:supp-1},\ref{eq:supp-plus}, and\ref{eq:supp-0} from Definition~\ref{def:support} should hold for any formula $\psi$ -- not just for $\bot$ -- as these implications resemble the elimination rules for the corresponding connectives. We conclude this section by formalising this observation through a sequence of lemmas, which will also be used in the soundness proof in Section~\ref{sec:soundness}. %Proofs are provided in Appendix~\ref{sec:appendix}.

\begin{lemma} \label{lemma:generictensor}
    If $\Vdash^{\Gamma_{\At}}_{\mathcal{B}} \phi \otimes \psi$ and $\phi, \psi \Vdash^{\Delta_{\At}}_{\mathcal{B}} \chi$ then $\Vdash^{\Gamma_{\At}, \Delta_{\At}}_{\mathcal{B}} \chi$.
\end{lemma}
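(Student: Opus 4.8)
The plan is to reduce the statement to the already-established clause~\ref{eq:supp-tensor}, which handles exactly the case $\chi = \bot$, and to the machinery in Lemma~\ref{lemma:negatingformula} that converts a support-to-$\bot$ fact into a support fact for a negation. First I would take an arbitrary base $\mathcal{C} \supseteq \mathcal{B}$ and an arbitrary atomic multiset $\Delta'_{\At}$, and assume $\Vdash^{\Delta'_{\At}}_{\mathcal{C}} \neg\chi$; the goal then becomes to show $\Vdash^{\Gamma_{\At}, \Delta_{\At}, \Delta'_{\At}}_{\mathcal{C}} \bot$, since by~\ref{eq:supp-imply} (in the guise of Lemma~\ref{lemma:negatingformula}) this would give $\Vdash^{\Gamma_{\At}, \Delta_{\At}}_{\mathcal{B}} \neg\neg\chi$, from which $\Vdash^{\Gamma_{\At}, \Delta_{\At}}_{\mathcal{B}} \chi$ should follow. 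Wait --- that route needs a double-negation elimination at the semantic level, which is not yet available, so instead I would argue more directly.

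The cleaner approach: from $\phi, \psi \Vdash^{\Delta_{\At}}_{\mathcal{B}} \chi$, I first want to manufacture $\phi, \psi \Vdash^{\Delta_{\At}, \Delta'_{\At}}_{\mathcal{C}} \bot$ whenever $\Vdash^{\Delta'_{\At}}_{\mathcal{C}} \neg\chi$. Indeed, by~\ref{eq:supp-inf} applied to $\phi, \psi \Vdash^{\Delta_{\At}}_{\mathcal{B}} \chi$ (monotonicity moves us to $\mathcal{C}$), for any $\Sigma^1_{\At}, \Sigma^2_{\At}$ with $\Vdash^{\Sigma^1_{\At}}_{\mathcal{C}} \phi$ and $\Vdash^{\Sigma^2_{\At}}_{\mathcal{C}} \psi$ we get $\Vdash^{\Sigma^1_{\At}, \Sigma^2_{\At}, \Delta_{\At}}_{\mathcal{C}} \chi$; combining this with $\Vdash^{\Delta'_{\At}}_{\mathcal{C}} \neg\chi$ via the remark following Lemma~\ref{lemma:negatingformula} (namely $\chi \Vdash^{\Delta'_{\At}}_{\mathcal{C}} \bot$) and then~\ref{eq:supp-inf} yields $\Vdash^{\Sigma^1_{\At}, \Sigma^2_{\At}, \Delta_{\At}, \Delta'_{\At}}_{\mathcal{C}} \bot$. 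Abstracting over $\Sigma^1_{\At}, \Sigma^2_{\At}$ by~\ref{eq:supp-inf} gives precisely $\phi, \psi \Vdash^{\Delta_{\At}, \Delta'_{\At}}_{\mathcal{C}} \bot$. Now I feed this into the hypothesis $\Vdash^{\Gamma_{\At}}_{\mathcal{B}} \phi \otimes \psi$ through clause~\ref{eq:supp-tensor} (after monotonicity to $\mathcal{C}$), obtaining $\Vdash^{\Gamma_{\At}, \Delta_{\At}, \Delta'_{\At}}_{\mathcal{C}} \bot$. Since $\mathcal{C} \supseteq \mathcal{B}$ and $\Delta'_{\At}$ were arbitrary subject only to $\Vdash^{\Delta'_{\At}}_{\mathcal{C}} \neg\chi$, Lemma~\ref{lemma:negatingformula} would give $\Vdash^{\Gamma_{\At}, \Delta_{\At}}_{\mathcal{B}} \neg\neg\chi$ --- but again I want $\chi$, not $\neg\neg\chi$.

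To sidestep the double negation, I would instead work with an auxiliary atom: quantify over all $\mathcal{C} \supseteq \mathcal{B}$ and atomic multisets $\Lambda_{\At}$ such that $\chi, \Lambda_{\At} \Vdash^{\varnothing}_{\mathcal{C}}$ ... no. The genuinely right move, mirroring how the analogous generic elimination lemmas are proved in the appendix, is: to show $\Vdash^{\Gamma_{\At}, \Delta_{\At}}_{\mathcal{B}} \chi$ it suffices (by whatever characterisation of support for $\chi$ reduces to $\bot$-support, i.e. via Lemma~\ref{lemma:interchangeablesets} and the fact that supporting $\chi$ is equivalent to: for all $\mathcal{C} \supseteq \mathcal{B}$ and $\Lambda_{\At}$, if $\chi \Vdash^{\Lambda_{\At}}_{\mathcal{C}} \bot$ then $\Vdash^{\Gamma_{\At}, \Delta_{\At}, \Lambda_{\At}}_{\mathcal{C}} \bot$ --- a generic "$\chi$-on-the-left can be discharged" principle analogous to~\ref{eq:supp-at}) to assume $\chi \Vdash^{\Lambda_{\At}}_{\mathcal{C}} \bot$ and derive $\Vdash^{\Gamma_{\At}, \Delta_{\At}, \Lambda_{\At}}_{\mathcal{C}} \bot$. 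Given that assumption, compose $\phi, \psi \Vdash^{\Delta_{\At}}_{\mathcal{B}} \chi$ with $\chi \Vdash^{\Lambda_{\At}}_{\mathcal{C}} \bot$ through~\ref{eq:supp-inf} to get $\phi, \psi \Vdash^{\Delta_{\At}, \Lambda_{\At}}_{\mathcal{C}} \bot$, then apply~\ref{eq:supp-tensor} to the hypothesis to conclude $\Vdash^{\Gamma_{\At}, \Delta_{\At}, \Lambda_{\At}}_{\mathcal{C}} \bot$, as required.

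The main obstacle I anticipate is the very last reduction: justifying that $\Vdash^{\Gamma_{\At}, \Delta_{\At}}_{\mathcal{B}} \chi$ follows from "for all $\mathcal{C} \supseteq \mathcal{B}$ and $\Lambda_{\At}$, if $\chi \Vdash^{\Lambda_{\At}}_{\mathcal{C}} \bot$ then $\Vdash^{\Gamma_{\At}, \Delta_{\At}, \Lambda_{\At}}_{\mathcal{C}} \bot$" for an \emph{arbitrary} formula $\chi$, not just an atom. For atoms this is essentially~\ref{eq:supp-at} combined with Lemma~\ref{lemma:interchangeablesets}; for compound $\chi$ one needs a uniform "generic" lemma (of the kind this section is building towards, and of which Lemma~\ref{lemma:generictensor} is itself an instance), so some care is needed to avoid circularity. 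I expect this to be handled either by an induction on $\chi$ packaged as a separate lemma in the appendix, or by observing that the equivalence $\phi \Vdash^{\Gamma_{\At}}_{\mathcal{B}} \psi \iff {\Vdash^{\Gamma_{\At}}_{\mathcal{B}}} \phi \multimap \psi$ from the remark, together with Lemma~\ref{lemma:negatingformula}, already licenses passing freely between "$\chi$ on the left yielding $\bot$" and "$\neg\chi$ supported", and then using that the double-negation pattern $\neg\neg\chi$ collapses to $\chi$ at the level of validity (which is exactly where classicality of $\MALL$, and the completeness direction, will be invoked). The safest route for the paper is the explicit generic lemma, so I would state and prove that first and cite it here.
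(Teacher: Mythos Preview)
Your proposal is an exploration rather than a proof: you correctly isolate the core mechanism (compose $\phi,\psi \Vdash^{\Delta_{\At}}_{\mathcal{B}} \chi$ with $\chi \Vdash^{\Lambda_{\At}}_{\mathcal{C}} \bot$ to get $\phi,\psi \Vdash^{\Delta_{\At},\Lambda_{\At}}_{\mathcal{C}} \bot$, then fire clause~\ref{eq:supp-tensor}), but you leave the crucial closing step --- that $\Vdash^{\Gamma_{\At},\Delta_{\At}}_{\mathcal{B}} \chi$ follows from ``for all $\mathcal{C}\supseteq\mathcal{B}$ and $\Lambda_{\At}$, if $\chi \Vdash^{\Lambda_{\At}}_{\mathcal{C}} \bot$ then $\Vdash^{\Gamma_{\At},\Delta_{\At},\Lambda_{\At}}_{\mathcal{C}} \bot$'' --- as an unproved auxiliary. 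That principle is exactly (modulo Lemmas~\ref{lemma:negatingformula} and~\ref{lemma:interchangeablesets}) the semantic \emph{reductio} of Lemma~\ref{lemma:genericraa}, whose proof is itself an induction on the formula. So you have relocated the induction, not avoided it; and as written, the proposal has a genuine gap.

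The paper proceeds by a direct induction on $\chi$. Each case unfolds the right-to-left direction of the relevant semantic clause for $\chi$ (e.g.\ for $\chi = \alpha\oplus\beta$ one assumes $\alpha \Vdash^{\Theta_{\At}}_{\mathcal{C}} \bot$ and $\beta \Vdash^{\Theta_{\At}}_{\mathcal{C}} \bot$), pushes this through $\phi,\psi \Vdash^{\Delta_{\At}}_{\mathcal{B}} \chi$ via~\ref{eq:supp-inf} to obtain $\phi,\psi \Vdash^{\Delta_{\At},\Theta_{\At}}_{\mathcal{C}} \bot$, applies~\ref{eq:supp-tensor} to conclude $\Vdash^{\Gamma_{\At},\Delta_{\At},\Theta_{\At}}_{\mathcal{C}} \bot$, and then re-folds the clause for $\chi$. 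Only the $\with$ case genuinely needs the induction hypothesis, since~\ref{eq:supp-and} is the one clause that does not reduce to a $\bot$-conclusion. Your alternative route through a single generic principle \emph{is} viable --- Lemma~\ref{lemma:genericraa} does not in fact depend on Lemma~\ref{lemma:generictensor}, so the circularity you fear does not occur --- but it would require reordering the development (proving Raa first) and the total inductive work is the same. The paper's choice keeps the generic elimination lemmas self-contained, each mirroring its connective's clause, and establishes Raa only afterwards.
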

\begin{proof}
    We shall prove the statement inductively. Note that in the case of $(\with)$, we require an induction hypothesis stating that \[\text{if } \Vdash^{\Gamma_{\At}}_{\mathcal{B}} \phi \otimes \psi \text{ and } \phi, \psi \Vdash^{\Delta_{\At}}_{\mathcal{B}} \tau \text{ then } \Vdash^{\Gamma_{\At}, \Delta_{\At}}_{\mathcal{B}} \tau\] holds true for any proper subformula $\tau$ of $\chi$, as in the proof of Lemma~\ref{lemma:monotonicity}.
    
    \begin{description}[itemsep=0.5em]
        \item[$\chi = p:$] Assume $\Vdash^{\Gamma_{\At}}_{\mathcal{B}} \phi \otimes \psi$ and $\phi, \psi \Vdash^{\Delta_{\At}}_{\mathcal{B}} p$ for arbitrary $\Delta_{\At}$. Now assume that, for an arbitrary $\mathcal{C} \supseteq \mathcal{B}$ and arbitrary $\Theta_{\At}$, $p, \Theta_{\At} \vdash_{\mathcal{C}} \bot$. Further assume that, for an arbitrary $\mathcal{D} \supseteq \mathcal{C}$ and arbitrary $\Sigma_{\At}, \Pi_{\At}$, $\Vdash^{\Sigma_{\At}}_{\mathcal{D}} \phi$ and $\Vdash^{\Pi_{\At}}_{\mathcal{D}} \psi$. By monotonicity, $\phi, \psi \Vdash^{\Delta_{\At}}_{\mathcal{D}} p$, hence with $\Vdash^{\Sigma_{\At}}_{\mathcal{D}} \phi$ and $\Vdash^{\Pi_{\At}}_{\mathcal{D}} \psi$, by~\ref{eq:supp-inf}, we obtain $\Vdash^{\Delta_{\At}, \Sigma_{\At}, \Pi_{\At}}_{\mathcal{D}} p$. From $\Vdash^{\Delta_{\At}, \Sigma_{\At}, \Pi_{\At}}_{\mathcal{D}} p$ and $p, \Theta_{\At} \vdash_{\mathcal{C}} \bot$ (thus also $p, \Theta_{\At} \vdash_{\mathcal{D}} \bot$), by~\ref{eq:supp-at}, we obtain $\Delta_{\At}, \Sigma_{\At}, \Pi_{\At}, \Theta_{\At} \vdash_{\mathcal{D}} \bot$. Hence, $\Vdash^{\Delta_{\At}, \Sigma_{\At}, \Pi_{\At}, \Theta_{\At}}_{\mathcal{D}} \bot$ by Lemma~\ref{lemma:bottomisspecial}. Since $\Vdash^{\Delta_{\At}, \Sigma_{\At}, \Pi_{\At}, \Theta_{\At}}_{\mathcal{D}} \bot$ and $\mathcal{D} \supseteq \mathcal{C}$ such that $\Vdash^{\Sigma_{\At}}_{\mathcal{D}} \phi$ and $\Vdash^{\Pi_{\At}}_{\mathcal{D}} \psi$ for arbitrary $\Sigma_{\At}, \Pi_{\At}$, by~\ref{eq:supp-inf}, $\phi, \psi \Vdash^{\Delta_{\At}, \Theta_{\At}}_{\mathcal{C}} \bot$. Since $\Vdash^{\Gamma_{\At}}_{\mathcal{B}} \phi \otimes \psi$ and $\phi, \psi \Vdash^{\Delta_{\At}, \Theta_{\At}}_{\mathcal{C}} \bot$, by~\ref{eq:supp-tensor}, we obtain $\Vdash^{\Gamma_{\At}, \Delta_{\At}, \Theta_{\At}}_{\mathcal{C}} \bot$, hence $\Gamma_{\At}, \Delta_{\At}, \Theta_{\At} \vdash_{\mathcal{C}} \bot$ by Lemma~\ref{lemma:bottomisspecial}. Since $\mathcal{C} \supseteq \mathcal{B}$ such that $p, \Theta_{\At} \vdash_{\mathcal{C}} \bot$ for arbitrary $\Theta_{\At}$, and $\Gamma_{\At}, \Delta_{\At}, \Theta_{\At} \vdash_{\mathcal{C}} \bot$, by~\ref{eq:supp-at}, $\Vdash^{\Gamma_{\At}, \Delta_{\At}}_{\mathcal{B}} p$.

        \item[$\chi = \alpha \otimes \beta:$] Assume $\Vdash^{\Gamma_{\At}}_{\mathcal{B}} \phi \otimes \psi$ and $\phi, \psi \Vdash^{\Delta_{\At}}_{\mathcal{B}} \alpha \otimes \beta$ for arbitrary $\Delta_{\At}$. Now assume that, for an arbitrary $\mathcal{C} \supseteq \mathcal{B}$ and arbitrary $\Theta_{\At}$, $\alpha, \beta \Vdash^{\Theta_{\At}}_{\mathcal{C}} \bot$. Further assume that, for an arbitrary $\mathcal{D} \supseteq \mathcal{C}$ and arbitrary $\Sigma_{\At}, \Pi_{\At}$, $\Vdash^{\Sigma_{\At}}_{\mathcal{D}} \phi$ and $\Vdash^{\Pi_{\At}}_{\mathcal{D}} \psi$. By monotonicity, $\phi, \psi \Vdash^{\Delta_{\At}}_{\mathcal{D}} \alpha \otimes \beta$, hence with $\Vdash^{\Sigma_{\At}}_{\mathcal{D}} \phi$ and $\Vdash^{\Pi_{\At}}_{\mathcal{D}} \psi$, by~\ref{eq:supp-inf}, we obtain $\Vdash^{\Delta_{\At}, \Sigma_{\At}, \Pi_{\At}}_{\mathcal{D}} \alpha \otimes \beta$. From $\Vdash^{\Delta_{\At}, \Sigma_{\At}, \Pi_{\At}}_{\mathcal{D}} \alpha \otimes \beta$ and $\alpha, \beta \Vdash^{\Theta_{\At}}_{\mathcal{C}} \bot$ (thus also $\alpha, \beta \Vdash^{\Theta_{\At}}_{\mathcal{D}} \bot$), by~\ref{eq:supp-tensor}, we obtain $\Vdash^{\Delta_{\At}, \Sigma_{\At}, \Pi_{\At}, \Theta_{\At}}_{\mathcal{D}} \bot$. Since $\Vdash^{\Delta_{\At}, \Sigma_{\At}, \Pi_{\At}, \Theta_{\At}}_{\mathcal{D}} \bot$ and $\mathcal{D} \supseteq \mathcal{C}$ such that $\Vdash^{\Sigma_{\At}}_{\mathcal{D}} \phi$ and $\Vdash^{\Pi_{\At}}_{\mathcal{D}} \psi$ for arbitrary $\Sigma_{\At}, \Pi_{\At}$, by~\ref{eq:supp-inf}, $\phi, \psi \Vdash^{\Delta_{\At}, \Theta_{\At}}_{\mathcal{C}} \bot$. Since $\Vdash^{\Gamma_{\At}}_{\mathcal{B}} \phi \otimes \psi$ and $\phi, \psi \Vdash^{\Delta_{\At}, \Theta_{\At}}_{\mathcal{C}} \bot$, by~\ref{eq:supp-tensor}, we obtain $\Vdash^{\Gamma_{\At}, \Delta_{\At}, \Theta_{\At}}_{\mathcal{C}} \bot$. Since $\mathcal{C} \supseteq \mathcal{B}$ such that $\alpha, \beta \Vdash^{\Theta_{\At}}_{\mathcal{C}} \bot$ for arbitrary $\Theta_{\At}$, and $\Vdash^{\Gamma_{\At}, \Delta_{\At}, \Theta_{\At}}_{\mathcal{C}} \bot$, by~\ref{eq:supp-tensor}, $\Vdash^{\Gamma_{\At}, \Delta_{\At}}_{\mathcal{B}} \alpha \otimes \beta$.

        \item[$\chi = \alpha \multimap \beta:$] Assume $\Vdash^{\Gamma_{\At}}_{\mathcal{B}} \phi \otimes \psi$ and $\phi, \psi \Vdash^{\Delta_{\At}}_{\mathcal{B}} \alpha \multimap \beta$ for arbitrary $\Delta_{\At}$. Now assume that, for an arbitrary $\mathcal{C} \supseteq \mathcal{B}$ and arbitrary $\Theta_{\At}, \Omega_{\At}$, $\Vdash^{\Theta_{\At}}_{\mathcal{C}} \alpha$ and $\beta \Vdash^{\Omega_{\At}}_{\mathcal{C}} \bot$. Further assume that, for an arbitrary $\mathcal{D} \supseteq \mathcal{C}$ and arbitrary $\Sigma_{\At}, \Pi_{\At}$, $\Vdash^{\Sigma_{\At}}_{\mathcal{D}} \phi$ and $\Vdash^{\Pi_{\At}}_{\mathcal{D}} \psi$. By monotonicity, $\phi, \psi \Vdash^{\Delta_{\At}}_{\mathcal{D}} \alpha \multimap \beta$, hence with $\Vdash^{\Sigma_{\At}}_{\mathcal{D}} \phi$ and $\Vdash^{\Pi_{\At}}_{\mathcal{D}} \psi$, by~\ref{eq:supp-inf}, we obtain $\Vdash^{\Delta_{\At}, \Sigma_{\At}, \Pi_{\At}}_{\mathcal{D}} \alpha \multimap \beta$. From $\Vdash^{\Delta_{\At}, \Sigma_{\At}, \Pi_{\At}}_{\mathcal{D}} \alpha \multimap \beta$, $\Vdash^{\Theta_{\At}}_{\mathcal{C}} \alpha$ and $\beta \Vdash^{\Omega_{\At}}_{\mathcal{C}} \bot$ (thus also $\Vdash^{\Theta_{\At}}_{\mathcal{D}} \alpha$ and $\beta \Vdash^{\Omega_{\At}}_{\mathcal{D}} \bot$), by~\ref{eq:supp-imply}, we obtain $\Vdash^{\Delta_{\At}, \Sigma_{\At}, \Pi_{\At}, \Theta_{\At}, \Omega_{\At}}_{\mathcal{D}} \bot$. Since $\Vdash^{\Delta_{\At}, \Sigma_{\At}, \Pi_{\At}, \Theta_{\At}, \Omega_{\At}}_{\mathcal{D}} \bot$ and $\mathcal{D} \supseteq \mathcal{C}$ such that $\Vdash^{\Sigma_{\At}}_{\mathcal{D}} \phi$ and $\Vdash^{\Pi_{\At}}_{\mathcal{D}} \psi$ for arbitrary $\Sigma_{\At}, \Pi_{\At}$, by~\ref{eq:supp-inf}, $\phi, \psi \Vdash^{\Delta_{\At}, \Theta_{\At}, \Omega_{\At}}_{\mathcal{C}} \bot$. Since $\Vdash^{\Gamma_{\At}}_{\mathcal{B}} \phi \otimes \psi$ and $\phi, \psi \Vdash^{\Delta_{\At}, \Theta_{\At}, \Omega_{\At}}_{\mathcal{C}} \bot$, by~\ref{eq:supp-tensor}, we obtain $\Vdash^{\Gamma_{\At}, \Delta_{\At}, \Theta_{\At}, \Omega_{\At}}_{\mathcal{C}} \bot$. Since $\mathcal{C} \supseteq \mathcal{B}$ such that $\Vdash^{\Theta_{\At}}_{\mathcal{C}} \alpha$ and $\beta \Vdash^{\Omega_{\At}}_{\mathcal{C}} \bot$ for arbitrary  $\Theta_{\At}, \Omega_{\At}$, and $\Vdash^{\Gamma_{\At}, \Delta_{\At}, \Theta_{\At}, \Omega_{\At}}_{\mathcal{C}} \bot$, by~\ref{eq:supp-imply}, $\Vdash^{\Gamma_{\At}, \Delta_{\At}}_{\mathcal{B}} \alpha \multimap \beta$.

        \item[$\chi = 1:$] Assume $\Vdash^{\Gamma_{\At}}_{\mathcal{B}} \phi \otimes \psi$ and $\phi, \psi \Vdash^{\Delta_{\At}}_{\mathcal{B}} 1$ for arbitrary $\Delta_{\At}$. Now assume that, for an arbitrary $\mathcal{C} \supseteq \mathcal{B}$ and arbitrary $\Theta_{\At}$, $\Vdash^{\Theta_{\At}}_{\mathcal{C}} \bot$. Further assume that, for an arbitrary $\mathcal{D} \supseteq \mathcal{C}$ and arbitrary $\Sigma_{\At}, \Pi_{\At}$, $\Vdash^{\Sigma_{\At}}_{\mathcal{D}} \phi$ and $\Vdash^{\Pi_{\At}}_{\mathcal{D}} \psi$. By monotonicity, $\phi, \psi \Vdash^{\Delta_{\At}}_{\mathcal{D}} 1$, hence with $\Vdash^{\Sigma_{\At}}_{\mathcal{D}} \phi$ and $\Vdash^{\Pi_{\At}}_{\mathcal{D}} \psi$, by~\ref{eq:supp-inf}, we obtain $\Vdash^{\Delta_{\At}, \Sigma_{\At}, \Pi_{\At}}_{\mathcal{D}} 1$. From $\Vdash^{\Delta_{\At}, \Sigma_{\At}, \Pi_{\At}}_{\mathcal{D}} 1$ and $\Vdash^{\Theta_{\At}}_{\mathcal{C}} \bot$ (thus also $\Vdash^{\Theta_{\At}}_{\mathcal{D}} \bot$), by~\ref{eq:supp-1}, we obtain $\Vdash^{\Delta_{\At}, \Sigma_{\At}, \Pi_{\At}, \Theta_{\At}}_{\mathcal{D}} \bot$. Since $\Vdash^{\Delta_{\At}, \Sigma_{\At}, \Pi_{\At}, \Theta_{\At}}_{\mathcal{D}} \bot$ and $\mathcal{D} \supseteq \mathcal{C}$ such that $\Vdash^{\Sigma_{\At}}_{\mathcal{D}} \phi$ and $\Vdash^{\Pi_{\At}}_{\mathcal{D}} \psi$ for arbitrary $\Sigma_{\At}, \Pi_{\At}$, by~\ref{eq:supp-inf}, $\phi, \psi \Vdash^{\Delta_{\At}, \Theta_{\At}}_{\mathcal{C}} \bot$. Since $\Vdash^{\Gamma_{\At}}_{\mathcal{B}} \phi \otimes \psi$ and $\phi, \psi \Vdash^{\Delta_{\At}, \Theta_{\At}}_{\mathcal{C}} \bot$, by~\ref{eq:supp-tensor}, we obtain $\Vdash^{\Gamma_{\At}, \Delta_{\At}, \Theta_{\At}}_{\mathcal{C}} \bot$. Since $\mathcal{C} \supseteq \mathcal{B}$ such that $\Vdash^{\Theta_{\At}}_{\mathcal{C}} \bot$ for arbitrary $\Theta_{\At}$, and $\Vdash^{\Gamma_{\At}, \Delta_{\At}, \Theta_{\At}}_{\mathcal{C}} \bot$, by~\ref{eq:supp-1}, $\Vdash^{\Gamma_{\At}, \Delta_{\At}}_{\mathcal{B}} 1$.

        \item[$\chi = \alpha \parr \beta:$] Assume $\Vdash^{\Gamma_{\At}}_{\mathcal{B}} \phi \otimes \psi$ and $\phi, \psi \Vdash^{\Delta_{\At}}_{\mathcal{B}} \alpha \parr \beta$ for arbitrary $\Delta_{\At}$. Now assume that, for an arbitrary $\mathcal{C} \supseteq \mathcal{B}$ and arbitrary $\Theta_{\At}$, $\Omega_{\At}$ $\alpha \Vdash^{\Theta_{\At}}_{\mathcal{C}} \bot$ and $\beta \Vdash^{\Omega_{\At}}_{\mathcal{C}} \bot$. Further assume that, for an arbitrary $\mathcal{D} \supseteq \mathcal{C}$ and arbitrary $\Sigma_{\At}, \Pi_{\At}$, $\Vdash^{\Sigma_{\At}}_{\mathcal{D}} \phi$ and $\Vdash^{\Pi_{\At}}_{\mathcal{D}} \psi$. By monotonicity, $\phi, \psi \Vdash^{\Delta_{\At}}_{\mathcal{D}} \alpha \oplus \beta$, hence with $\Vdash^{\Sigma_{\At}}_{\mathcal{D}} \phi$ and $\Vdash^{\Pi_{\At}}_{\mathcal{D}} \psi$, by~\ref{eq:supp-inf}, we obtain $\Vdash^{\Delta_{\At}, \Sigma_{\At}, \Pi_{\At}}_{\mathcal{D}} \alpha \parr \beta$. From $\Vdash^{\Delta_{\At}, \Sigma_{\At}, \Pi_{\At}}_{\mathcal{D}} \alpha \parr \beta$ and $\alpha \Vdash^{\Theta_{\At}}_{\mathcal{C}} \bot$ and $\beta \Vdash^{\Omega_{\At}}_{\mathcal{C}} \bot$ (thus also $\alpha \Vdash^{\Theta_{\At}}_{\mathcal{D}} \bot$ and $\beta \Vdash^{\Omega_{\At}}_{\mathcal{D}} \bot$), by~\ref{eq:supp-parr}, we obtain $\Vdash^{\Delta_{\At}, \Sigma_{\At}, \Pi_{\At}, \Theta_{\At}, \Omega_{\At}}_{\mathcal{D}} \bot$. Since $\Vdash^{\Delta_{\At}, \Sigma_{\At}, \Pi_{\At}, \Theta_{\At}, \Omega_{\At}}_{\mathcal{D}} \bot$ and $\mathcal{D} \supseteq \mathcal{C}$ such that $\Vdash^{\Sigma_{\At}}_{\mathcal{D}} \phi$ and $\Vdash^{\Pi_{\At}}_{\mathcal{D}} \psi$ for arbitrary $\Sigma_{\At}, \Pi_{\At}$, by~\ref{eq:supp-inf}, $\phi, \psi \Vdash^{\Delta_{\At}, \Theta_{\At}, \Omega_{\At}}_{\mathcal{C}} \bot$. Since $\Vdash^{\Gamma_{\At}}_{\mathcal{B}} \phi \otimes \psi$ and $\phi, \psi \Vdash^{\Delta_{\At}, \Theta_{\At}, \Omega_{\At}}_{\mathcal{C}} \bot$, by~\ref{eq:supp-tensor}, we obtain $\Vdash^{\Gamma_{\At}, \Delta_{\At}, \Theta_{\At}, \Omega_{\At}}_{\mathcal{C}} \bot$. Since $\mathcal{C} \supseteq \mathcal{B}$ such that $\alpha \Vdash^{\Theta_{\At}}_{\mathcal{C}} \bot$ and $\beta \Vdash^{\Omega_{\At}}_{\mathcal{C}} \bot$ for arbitrary $\Theta_{\At}$, $\Omega_{\At}$, and $\Vdash^{\Gamma_{\At}, \Delta_{\At}, \Theta_{\At}, \Omega_{\At}}_{\mathcal{C}} \bot$, by~\ref{eq:supp-parr}, $\Vdash^{\Gamma_{\At}, \Delta_{\At}}_{\mathcal{B}} \alpha \parr \beta$.

        \item[$\chi = \alpha \with \beta:$] Assume $\Vdash^{\Gamma_{\At}}_{\mathcal{B}} \phi \otimes \psi$ and $\phi, \psi \Vdash^{\Delta_{\At}}_{\mathcal{B}} \alpha \with \beta$ for arbitrary $\Delta_{\At}$. Further assume that, for an arbitrary $\mathcal{C} \supseteq \mathcal{B}$ and arbitrary $\Sigma_{\At}, \Pi_{\At}$, $\Vdash^{\Sigma_{\At}}_{\mathcal{C}} \phi$ and $\Vdash^{\Pi_{\At}}_{\mathcal{C}} \psi$. Since $\phi, \psi \Vdash^{\Delta_{\At}}_{\mathcal{B}} \alpha \with \beta$ and $\Vdash^{\Sigma_{\At}}_{\mathcal{C}} \phi$ and $\Vdash^{\Pi_{\At}}_{\mathcal{C}} \psi$, by~\ref{eq:supp-inf}, we obtain $\Vdash^{\Delta_{\At}, \Sigma_{\At}, \Pi_{\At}}_{\mathcal{C}} \alpha \with \beta$. By~\ref{eq:supp-and}, we thus obtain $\Vdash^{\Delta_{\At}, \Sigma_{\At}, \Pi_{\At}}_{\mathcal{C}} \alpha$ and $\Vdash^{\Delta_{\At}, \Sigma_{\At}, \Pi_{\At}}_{\mathcal{C}}\beta$. Since $\mathcal{C} \supseteq \mathcal{B}$ such that  $\Vdash^{\Sigma_{\At}}_{\mathcal{C}} \phi$ and $\Vdash^{\Pi_{\At}}_{\mathcal{C}} \psi$ for arbitrary $\Sigma_{\At}, \Pi_{\At}$, by~\ref{eq:supp-inf}, $\phi, \psi \Vdash^{\Delta_{\At}}_{\mathcal{B}} \alpha$ and $\phi, \psi \Vdash^{\Delta_{\At}}_{\mathcal{B}} \beta$. Since $\Vdash^{\Gamma_{\At}}_{\mathcal{B}} \phi \otimes \psi$ and $\phi, \psi \Vdash^{\Delta_{\At}}_{\mathcal{B}} \alpha$, by the induction hypothesis, we obtain $\Vdash^{\Gamma_{\At}, \Delta_{\At}}_{\mathcal{B}} \alpha$. Analogously, we obtain $\Vdash^{\Gamma_{\At}, \Delta_{\At}}_{\mathcal{B}} \beta$. Now, by~\ref{eq:supp-and}, we obtain $\Vdash^{\Gamma_{\At}, \Delta_{\At}}_{\mathcal{B}} \alpha \with \beta$ as required.

        \item[$\chi = \alpha \oplus \beta:$] Assume $\Vdash^{\Gamma_{\At}}_{\mathcal{B}} \phi \otimes \psi$ and $\phi, \psi \Vdash^{\Delta_{\At}}_{\mathcal{B}} \alpha \oplus \beta$ for arbitrary $\Delta_{\At}$. Now assume that, for an arbitrary $\mathcal{C} \supseteq \mathcal{B}$ and arbitrary $\Theta_{\At}$, $\alpha \Vdash^{\Theta_{\At}}_{\mathcal{C}} \bot$ and $\beta \Vdash^{\Theta_{\At}}_{\mathcal{C}} \bot$. Further assume that, for an arbitrary $\mathcal{D} \supseteq \mathcal{C}$ and arbitrary $\Sigma_{\At}, \Pi_{\At}$, $\Vdash^{\Sigma_{\At}}_{\mathcal{D}} \phi$ and $\Vdash^{\Pi_{\At}}_{\mathcal{D}} \psi$. By monotonicity, $\phi, \psi \Vdash^{\Delta_{\At}}_{\mathcal{D}} \alpha \oplus \beta$, hence with $\Vdash^{\Sigma_{\At}}_{\mathcal{D}} \phi$ and $\Vdash^{\Pi_{\At}}_{\mathcal{D}} \psi$, by~\ref{eq:supp-inf}, we obtain $\Vdash^{\Delta_{\At}, \Sigma_{\At}, \Pi_{\At}}_{\mathcal{D}} \alpha \oplus \beta$. From $\Vdash^{\Delta_{\At}, \Sigma_{\At}, \Pi_{\At}}_{\mathcal{D}} \alpha \oplus \beta$ and $\alpha \Vdash^{\Theta_{\At}}_{\mathcal{C}} \bot$ and $\beta \Vdash^{\Theta_{\At}}_{\mathcal{C}} \bot$ (thus also $\alpha \Vdash^{\Theta_{\At}}_{\mathcal{D}} \bot$ and $\beta \Vdash^{\Theta_{\At}}_{\mathcal{D}} \bot$), by~\ref{eq:supp-plus}, we obtain $\Vdash^{\Delta_{\At}, \Sigma_{\At}, \Pi_{\At}, \Theta_{\At}}_{\mathcal{D}} \bot$. Since $\Vdash^{\Delta_{\At}, \Sigma_{\At}, \Pi_{\At}, \Theta_{\At}}_{\mathcal{D}} \bot$ and $\mathcal{D} \supseteq \mathcal{C}$ such that $\Vdash^{\Sigma_{\At}}_{\mathcal{D}} \phi$ and $\Vdash^{\Pi_{\At}}_{\mathcal{D}} \psi$ for arbitrary $\Sigma_{\At}, \Pi_{\At}$, by~\ref{eq:supp-inf}, $\phi, \psi \Vdash^{\Delta_{\At}, \Theta_{\At}}_{\mathcal{C}} \bot$. Since $\Vdash^{\Gamma_{\At}}_{\mathcal{B}} \phi \otimes \psi$ and $\phi, \psi \Vdash^{\Delta_{\At}, \Theta_{\At}}_{\mathcal{C}} \bot$, by~\ref{eq:supp-tensor}, we obtain $\Vdash^{\Gamma_{\At}, \Delta_{\At}, \Theta_{\At}}_{\mathcal{C}} \bot$. Since $\mathcal{C} \supseteq \mathcal{B}$ such that $\alpha \Vdash^{\Theta_{\At}}_{\mathcal{C}} \bot$ and $\beta \Vdash^{\Theta_{\At}}_{\mathcal{C}} \bot$ for arbitrary $\Theta_{\At}$, and $\Vdash^{\Gamma_{\At}, \Delta_{\At}, \Theta_{\At}}_{\mathcal{C}} \bot$, by~\ref{eq:supp-plus}, $\Vdash^{\Gamma_{\At}, \Delta_{\At}}_{\mathcal{B}} \alpha \oplus \beta$.

        \item[$\chi = \top:$] Assume $\Vdash^{\Gamma_{\At}}_{\mathcal{B}} \phi \otimes \psi$ and $\phi, \psi \Vdash^{\Delta_{\At}}_{\mathcal{B}} \top$ for arbitrary $\Delta_{\At}$. By~\ref{eq:supp-top}, $\Vdash^{\Gamma_{\At}, \Delta_{\At}}_{\mathcal{B}} \top$.

        \item[$\chi = 0:$] Assume $\Vdash^{\Gamma_{\At}}_{\mathcal{B}} \phi \otimes \psi$ and $\phi, \psi \Vdash^{\Delta_{\At}}_{\mathcal{B}} 0$ for arbitrary $\Delta_{\At}$. Further assume that, for an arbitrary $\mathcal{C} \supseteq \mathcal{B}$ and arbitrary $\Sigma_{\At}, \Pi_{\At}$, $\Vdash^{\Sigma_{\At}}_{\mathcal{C}} \phi$ and $\Vdash^{\Pi_{\At}}_{\mathcal{C}} \psi$. Since $\phi, \psi \Vdash^{\Delta_{\At}}_{\mathcal{B}} 0$ and $\Vdash^{\Sigma_{\At}}_{\mathcal{C}} \phi$ and $\Vdash^{\Pi_{\At}}_{\mathcal{C}} \psi$, by~\ref{eq:supp-inf}, we obtain $\Vdash^{\Delta_{\At}, \Sigma_{\At}, \Pi_{\At}}_{\mathcal{C}} 0$. By~\ref{eq:supp-0}, we thus obtain $\Vdash^{\Delta_{\At}, \Sigma_{\At}, \Pi_{\At}, \Theta_{\At}}_{\mathcal{C}} \bot$ for arbitrary $\Theta_{\At}$. Since $\mathcal{C} \supseteq \mathcal{B}$ such that  $\Vdash^{\Sigma_{\At}}_{\mathcal{C}} \phi$ and $\Vdash^{\Pi_{\At}}_{\mathcal{C}} \psi$ for arbitrary $\Sigma_{\At}, \Pi_{\At}$, by~\ref{eq:supp-inf}, $\phi, \psi \Vdash^{\Delta_{\At}, \Theta_{\At}}_{\mathcal{B}} \bot$. Since $\Vdash^{\Gamma_{\At}}_{\mathcal{B}} \phi \otimes \psi$ and $\phi, \psi \Vdash^{\Delta_{\At}, \Theta_{\At}}_{\mathcal{B}} \bot$, by~\ref{eq:supp-tensor}, we obtain $\Vdash^{\Gamma_{\At}, \Delta_{\At}, \Theta_{\At}}_{\mathcal{B}} \bot$. Since $\Theta_{\At}$ is arbitrary, we obtain $\Vdash^{\Gamma_{\At}, \Delta_{\At}}_{\mathcal{B}} 0$ by~\ref{eq:supp-0}.
    \end{description}
\end{proof}

\begin{lemma} \label{lemma:genericimplication}
    If $\Vdash^{\Gamma_{\At}}_{\mathcal{B}} \phi \multimap \psi$ and $\Vdash^{\Delta_{\At}}_{\mathcal{B}} \phi$ and $\psi \Vdash^{\Theta_{\At}}_{\mathcal{B}} \chi$ then $\Vdash^{\Gamma_{\At}, \Delta_{\At}, \Theta_{\At}}_{\mathcal{B}} \chi$.
\end{lemma}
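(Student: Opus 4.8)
The plan is to isolate the one substantive ingredient and then simply read off the lemma. That ingredient is a double-negation principle for support: for every formula $\chi$, base $\mathcal{B}$ and atomic multiset $\Sigma_{\At}$, we have $\Vdash^{\Sigma_{\At}}_{\mathcal{B}} \chi$ if and only if $\neg\chi \Vdash^{\Sigma_{\At}}_{\mathcal{B}} \bot$. I would establish this first; granting it, the lemma is almost immediate. Write the three hypotheses as H1, H2, H3. By the principle it suffices to show $\neg\chi \Vdash^{\Gamma_{\At},\Delta_{\At},\Theta_{\At}}_{\mathcal{B}} \bot$, and by clause~\ref{eq:supp-inf} this amounts to: for every $\mathcal{C}\supseteq\mathcal{B}$ and $\Xi_{\At}$ with $\Vdash^{\Xi_{\At}}_{\mathcal{C}} \neg\chi$, show $\Vdash^{\Gamma_{\At},\Delta_{\At},\Theta_{\At},\Xi_{\At}}_{\mathcal{C}} \bot$. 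Fix such $\mathcal{C},\Xi_{\At}$. Lemma~\ref{lemma:negatingformula} turns $\Vdash^{\Xi_{\At}}_{\mathcal{C}} \neg\chi$ into $\chi \Vdash^{\Xi_{\At}}_{\mathcal{C}} \bot$; monotonicity lifts H3 to $\psi \Vdash^{\Theta_{\At}}_{\mathcal{C}} \chi$, and composing these two supports (transitivity of $\Vdash$, a routine chaining of the implications supplied by~\ref{eq:supp-inf}) gives $\psi \Vdash^{\Theta_{\At},\Xi_{\At}}_{\mathcal{C}} \bot$. Finally, monotonicity lifts H1 and H2 to $\mathcal{C}$, so clause~\ref{eq:supp-imply} applied to $\Vdash^{\Gamma_{\At}}_{\mathcal{C}} \phi\multimap\psi$ with the premises $\Vdash^{\Delta_{\At}}_{\mathcal{C}} \phi$ and $\psi \Vdash^{\Theta_{\At},\Xi_{\At}}_{\mathcal{C}} \bot$ yields exactly $\Vdash^{\Gamma_{\At},\Delta_{\At},\Theta_{\At},\Xi_{\At}}_{\mathcal{C}} \bot$, as required.

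So everything reduces to the double-negation principle. Its left-to-right direction is immediate from~\ref{eq:supp-inf} together with Lemma~\ref{lemma:monotonicity}. The converse -- assume $\neg\chi \Vdash^{\Sigma_{\At}}_{\mathcal{B}} \bot$, conclude $\Vdash^{\Sigma_{\At}}_{\mathcal{B}} \chi$ -- goes by induction on $\chi$. In the atomic case one turns atomic derivability into support and back using Lemmas~\ref{lemma:derivabilityimpliessupport}, \ref{lemma:floatingatom} and~\ref{lemma:bottomisspecial}. For $\chi$ of the form $\phi\otimes\psi$, $\phi\multimap\psi$, $\phi\oplus\psi$, $1$ or $0$ one unfolds the matching clause of Definition~\ref{def:support}, observes that the hypothetical premise occurring there (say $\phi,\psi\Vdash^{\Delta_{\At}}_{\mathcal{C}}\bot$ in the $\otimes$ case) already forces $\chi\Vdash^{\Delta_{\At}}_{\mathcal{C}}\bot$ by the very same clause read in the other direction together with~\ref{eq:supp-inf}, and then closes the case with the assumption $\neg\chi\Vdash^{\Sigma_{\At}}_{\mathcal{B}}\bot$; the cases $\phi\with\psi$ and $\top$ invoke the induction hypothesis through clauses~\ref{eq:supp-and} and~\ref{eq:supp-top}.

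I expect the main obstacle to be precisely this double-negation principle -- concretely its inductive step for the additive conjunction and the handling of the units -- since that is where one genuinely uses the ``$\bot$-as-fixed-atom'' design to recover a classical move; everything surrounding it (the reduction above, transitivity of $\Vdash$, the monotonicity shuffles) is routine bookkeeping. The one thing to stay careful about is keeping the atomic contexts $\Gamma_{\At},\Delta_{\At},\Theta_{\At},\Xi_{\At}$ in the correct multiplicative arrangement when feeding them to clause~\ref{eq:supp-imply} and to~\ref{eq:supp-inf}. As a bonus, the very same template proves Lemma~\ref{lemma:generictensor} and the analogous generic lemmas for $1$, $\oplus$ and $0$: reduce to $\bot$ via the double-negation principle, compose the given supports by transitivity, and fire the relevant clause.
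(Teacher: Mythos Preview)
Your approach is correct, but it differs from the paper's. The paper proves Lemma~\ref{lemma:genericimplication} directly by induction on $\chi$: for each of the nine cases it unfolds the clause defining $\Vdash^{\Gamma_{\At},\Delta_{\At},\Theta_{\At}}_{\mathcal{B}}\chi$, manufactures from the clause's hypothetical premise a statement of the form $\psi\Vdash^{\Theta_{\At},(\cdot)}_{\mathcal{C}}\bot$, and then fires clause~\ref{eq:supp-imply} on H1, H2, and that statement. The same nine-case induction is repeated verbatim for each of the companion Lemmas~\ref{lemma:generictensor}, \ref{lemma:genericone}, \ref{lemma:genericplus}, \ref{lemma:genericzero}.

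Your route factors all of this through a single double-negation principle, whose right-to-left direction is exactly (a reformulation of) the paper's later Lemma~\ref{lemma:genericraa}. You prove \emph{that} by induction on $\chi$ and then obtain every generic lemma as a one-line corollary via transitivity of $\Vdash$ and a single application of the relevant clause. This is a cleaner decomposition: the inductive work happens once rather than five times, and it makes transparent that the ``generic'' lemmas are really just instances of semantic \emph{reductio}. What the paper's organisation buys, by contrast, is that each generic lemma is self-contained and established before the Raa principle is even stated. Two minor points: your case list for the inductive principle omits $\parr$, but it falls under the same elimination-style template you describe for $\otimes$ and $\oplus$; and the ``transitivity'' you invoke (chaining $\psi\Vdash^{\Theta_{\At}}_{\mathcal{C}}\chi$ with $\chi\Vdash^{\Xi_{\At}}_{\mathcal{C}}\bot$) is indeed just two unfoldings of~\ref{eq:supp-inf}, so there is no hidden cost there.
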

\begin{proof}
   We shall prove the statement inductively. Note that in the case of $(\with)$, we require an induction hypothesis stating that \[\text{if } \Vdash^{\Gamma_{\At}}_{\mathcal{B}} \phi \multimap \psi \text{ and } \Vdash^{\Delta_{\At}}_{\mathcal{B}} \phi  \text{ and } \psi \Vdash^{\Theta_{\At}}_{\mathcal{B}} \tau \text{ then } \Vdash^{\Gamma_{\At}, \Delta_{\At}, \Theta_{\At}}_{\mathcal{B}} \tau\] holds true for any proper subformula $\tau$ of $\chi$, as in the proof of Lemma~\ref{lemma:monotonicity}.
    
    \begin{description}[itemsep=0.5em]
        \item[$\chi = p:$] Assume $\Vdash^{\Gamma_{\At}}_{\mathcal{B}} \phi \multimap \psi$ and $\Vdash^{\Delta_{\At}}_{\mathcal{B}} \phi$ and $\psi \Vdash^{\Theta_{\At}}_{\mathcal{B}} p$ for arbitrary $\Delta_{\At}, \Theta_{\At}$. Now assume that, for an arbitrary $\mathcal{C} \supseteq \mathcal{B}$ and arbitrary $\Sigma_{\At}$, $p, \Sigma_{\At} \vdash_{\mathcal{C}} \bot$. Further assume that, for an arbitrary $\mathcal{D} \supseteq \mathcal{C}$ and arbitrary $\Pi_{\At}$, 
        $\Vdash^{\Pi_{\At}}_{\mathcal{D}} \psi$. By monotonicity, $\psi \Vdash^{\Theta_{\At}}_{\mathcal{D}} p$, hence with $\Vdash^{\Pi_{\At}}_{\mathcal{D}} \psi$, by~\ref{eq:supp-inf}, we obtain $\Vdash^{\Theta_{\At}, \Pi_{\At}}_{\mathcal{D}} p$. From $\Vdash^{\Theta_{\At}, \Pi_{\At}}_{\mathcal{D}} p$ and $p, \Sigma_{\At} \vdash_{\mathcal{C}} \bot$ (thus also $p, \Sigma_{\At} \vdash_{\mathcal{D}} \bot$), by~\ref{eq:supp-at}, we obtain $\Theta_{\At}, \Pi_{\At}, \Sigma_{\At} \vdash_{\mathcal{D}} \bot$. Hence, $\Vdash^{\Theta_{\At}, \Pi_{\At}, \Sigma_{\At}}_{\mathcal{D}} \bot$ by Lemma~\ref{lemma:bottomisspecial}. Since $\Vdash^{\Theta_{\At}, \Pi_{\At}, \Sigma_{\At}}_{\mathcal{D}} \bot$ and $\mathcal{D} \supseteq \mathcal{C}$ such that $\Vdash^{\Pi_{\At}}_{\mathcal{D}} \psi$ for arbitrary $\Pi_{\At}$, by~\ref{eq:supp-inf}, $\psi \Vdash^{\Theta_{\At}, \Sigma_{\At}}_{\mathcal{C}} \bot$. Since $\Vdash^{\Gamma_{\At}}_{\mathcal{B}} \phi \multimap \psi$, $\Vdash^{\Delta_{\At}}_{\mathcal{B}} \phi$ (thus also $\Vdash^{\Delta_{\At}}_{\mathcal{C}} \phi$) and $\psi \Vdash^{\Theta_{\At}, \Sigma_{\At}}_{\mathcal{C}} \bot$, by~\ref{eq:supp-imply}, we obtain $\Vdash^{\Gamma_{\At}, \Delta_{\At}, \Theta_{\At}, \Sigma_{\At}}_{\mathcal{C}} \bot$, hence $\Gamma_{\At}, \Delta_{\At}, \Theta_{\At}, \Sigma_{\At} \vdash_{\mathcal{C}} \bot$ by Lemma~\ref{lemma:bottomisspecial}. Since $\mathcal{C} \supseteq \mathcal{B}$ such that $p, \Sigma_{\At} \vdash_{\mathcal{C}} \bot$ for arbitrary $\Sigma_{\At}$, and $\Gamma_{\At}, \Delta_{\At}, \Theta_{\At}, \Sigma_{\At} \vdash_{\mathcal{C}} \bot$, by~\ref{eq:supp-at}, $\Vdash^{\Gamma_{\At}, \Delta_{\At},  \Theta_{\At}}_{\mathcal{B}} p$.

        \item[$\chi = \alpha \otimes \beta:$] Assume $\Vdash^{\Gamma_{\At}}_{\mathcal{B}} \phi \multimap \psi$ and $\Vdash^{\Delta_{\At}}_{\mathcal{B}} \phi$ and $\psi \Vdash^{\Theta_{\At}}_{\mathcal{B}} \alpha \otimes \beta$ for arbitrary $\Delta_{\At}, \Theta_{\At}$. Now assume that, for an arbitrary $\mathcal{C} \supseteq \mathcal{B}$ and arbitrary $\Sigma_{\At}$, $\alpha, \beta \Vdash^{\Sigma_{\At}}_{\mathcal{C}} \bot$. Further assume that, for an arbitrary $\mathcal{D} \supseteq \mathcal{C}$ and arbitrary $\Pi_{\At}$, $\Vdash^{\Pi_{\At}}_{\mathcal{D}} \psi$. By monotonicity, $\psi \Vdash^{\Theta_{\At}}_{\mathcal{D}} \alpha \otimes \beta$, hence with $\Vdash^{\Pi_{\At}}_{\mathcal{D}} \psi$, by~\ref{eq:supp-inf}, we obtain $\Vdash^{\Theta_{\At}, \Pi_{\At}}_{\mathcal{D}} \alpha \otimes \beta$. From $\Vdash^{\Theta_{\At}, \Pi_{\At}}_{\mathcal{D}} \alpha \otimes \beta$ and $\alpha, \beta \Vdash^{\Sigma_{\At}}_{\mathcal{C}} \bot$ (thus also $\alpha, \beta \Vdash^{\Sigma_{\At}}_{\mathcal{D}} \bot$), by~\ref{eq:supp-tensor}, we obtain $\Vdash^{\Theta_{\At}, \Pi_{\At}, \Sigma_{\At}}_{\mathcal{D}} \bot$. Since $\Vdash^{\Theta_{\At}, \Pi_{\At}, \Sigma_{\At}}_{\mathcal{D}} \bot$ and $\mathcal{D} \supseteq \mathcal{C}$ such that $\Vdash^{\Pi_{\At}}_{\mathcal{D}} \psi$ for arbitrary $\Pi_{\At}$, by~\ref{eq:supp-inf}, $\psi \Vdash^{\Theta_{\At}, \Sigma_{\At}}_{\mathcal{C}} \bot$. Since $\Vdash^{\Gamma_{\At}}_{\mathcal{B}} \phi \multimap \psi$, $\Vdash^{\Delta_{\At}}_{\mathcal{B}} \phi$ (thus also $\Vdash^{\Delta_{\At}}_{\mathcal{C}} \phi$) and $\psi \Vdash^{\Theta_{\At}, \Sigma_{\At}}_{\mathcal{C}} \bot$, by~\ref{eq:supp-imply}, we obtain $\Vdash^{\Gamma_{\At}, \Delta_{\At}, \Theta_{\At}, \Sigma_{\At}}_{\mathcal{C}} \bot$. Since $\mathcal{C} \supseteq \mathcal{B}$ such that $\alpha, \beta \Vdash^{\Sigma_{\At}}_{\mathcal{C}} \bot$ for arbitrary $\Sigma_{\At}$, and $\Vdash^{\Gamma_{\At}, \Delta_{\At}, \Theta_{\At}, \Sigma_{\At}}_{\mathcal{C}} \bot$, by~\ref{eq:supp-tensor}, $\Vdash^{\Gamma_{\At}, \Delta_{\At},  \Theta_{\At}}_{\mathcal{B}} \alpha \otimes \beta$.

        \item[$\chi = \alpha \multimap \beta:$] Assume $\Vdash^{\Gamma_{\At}}_{\mathcal{B}} \phi \multimap \psi$ and $\Vdash^{\Delta_{\At}}_{\mathcal{B}} \phi$ and $\psi \Vdash^{\Theta_{\At}}_{\mathcal{B}} \alpha \multimap \beta$ for arbitrary $\Delta_{\At}, \Theta_{\At}$. Now assume that, for an arbitrary $\mathcal{C} \supseteq \mathcal{B}$ and arbitrary $\Sigma_{\At}, \Omega_{\At}$, $\Vdash^{\Sigma_{\At}}_{\mathcal{C}} \alpha$ and $\beta \Vdash^{\Omega_{\At}}_{\mathcal{C}} \bot$. Further assume that, for an arbitrary $\mathcal{D} \supseteq \mathcal{C}$ and arbitrary $\Pi_{\At}$, $\Vdash^{\Pi_{\At}}_{\mathcal{D}} \psi$. By monotonicity, $\psi \Vdash^{\Theta_{\At}}_{\mathcal{D}} \alpha \multimap \beta$, hence with $\Vdash^{\Pi_{\At}}_{\mathcal{D}} \psi$, by~\ref{eq:supp-inf}, we obtain $\Vdash^{\Theta_{\At}, \Pi_{\At}}_{\mathcal{D}} \alpha \multimap \beta$. From $\Vdash^{\Theta_{\At}, \Pi_{\At}}_{\mathcal{D}} \alpha \multimap \beta$, $\Vdash^{\Sigma_{\At}}_{\mathcal{C}} \alpha$ and $\beta \Vdash^{\Omega_{\At}}_{\mathcal{C}} \bot$ (thus also $\Vdash^{\Sigma_{\At}}_{\mathcal{D}} \alpha$ and $\beta \Vdash^{\Omega_{\At}}_{\mathcal{D}} \bot$), by~\ref{eq:supp-imply}, we obtain $\Vdash^{\Theta_{\At}, \Pi_{\At}, \Sigma_{\At}, \Omega_{\At}}_{\mathcal{D}} \bot$. Since $\Vdash^{\Theta_{\At}, \Pi_{\At}, \Sigma_{\At}, \Omega_{\At}}_{\mathcal{D}} \bot$ and $\mathcal{D} \supseteq \mathcal{C}$ such that $\Vdash^{\Pi_{\At}}_{\mathcal{D}} \psi$ for arbitrary $\Pi_{\At}$, by~\ref{eq:supp-inf}, $\psi \Vdash^{\Theta_{\At}, \Sigma_{\At}, \Omega_{\At}}_{\mathcal{C}} \bot$. Since $\Vdash^{\Gamma_{\At}}_{\mathcal{B}} \phi \multimap \psi$, $\Vdash^{\Delta_{\At}}_{\mathcal{B}} \phi$ (thus also $\Vdash^{\Delta_{\At}}_{\mathcal{C}} \phi$) and $\psi \Vdash^{\Theta_{\At}, \Sigma_{\At}, \Omega_{\At}}_{\mathcal{C}} \bot$, by~\ref{eq:supp-imply}, we obtain $\Vdash^{\Gamma_{\At}, \Delta_{\At}, \Theta_{\At}, \Sigma_{\At}, \Omega_{\At}}_{\mathcal{C}} \bot$. Since $\mathcal{C} \supseteq \mathcal{B}$ such that $\Vdash^{\Sigma_{\At}}_{\mathcal{C}} \alpha$ and $\beta \Vdash^{\Omega_{\At}}_{\mathcal{C}} \bot$ for arbitrary  $\Sigma_{\At}, \Omega_{\At}$, and $\Vdash^{\Gamma_{\At}, \Delta_{\At}, \Theta_{\At}, \Sigma_{\At}, \Omega_{\At}}_{\mathcal{C}} \bot$, by~\ref{eq:supp-imply}, $\Vdash^{\Gamma_{\At}, \Delta_{\At}, \Theta_{\At}}_{\mathcal{B}} \alpha \multimap \beta$.

        \item[$\chi = 1:$] Assume $\Vdash^{\Gamma_{\At}}_{\mathcal{B}} \phi \multimap \psi$ and $\Vdash^{\Delta_{\At}}_{\mathcal{B}} \phi$ and $\psi \Vdash^{\Theta_{\At}}_{\mathcal{B}} 1$ for arbitrary $\Delta_{\At}, \Theta_{\At}$. Now assume that, for an arbitrary $\mathcal{C} \supseteq \mathcal{B}$ and arbitrary $\Sigma_{\At}$, $\Vdash^{\Sigma_{\At}}_{\mathcal{C}} \bot$. Further assume that, for an arbitrary $\mathcal{D} \supseteq \mathcal{C}$ and arbitrary $\Pi_{\At}$, $\Vdash^{\Pi_{\At}}_{\mathcal{D}} \psi$. By monotonicity, $\psi \Vdash^{\Theta_{\At}}_{\mathcal{D}} 1$, hence with $\Vdash^{\Pi_{\At}}_{\mathcal{D}} \psi$, by~\ref{eq:supp-inf}, we obtain $\Vdash^{\Theta_{\At}, \Pi_{\At}}_{\mathcal{D}} 1$. From $\Vdash^{\Theta_{\At}, \Pi_{\At}}_{\mathcal{D}} 1$ and $\Vdash^{\Sigma_{\At}}_{\mathcal{C}} \bot$ (thus also $\Vdash^{\Sigma_{\At}}_{\mathcal{D}} \bot$), by~\ref{eq:supp-1}, we obtain $\Vdash^{\Theta_{\At}, \Pi_{\At}, \Sigma_{\At}}_{\mathcal{D}} \bot$. Since $\Vdash^{\Theta_{\At}, \Pi_{\At}, \Sigma_{\At}}_{\mathcal{D}} \bot$ and $\mathcal{D} \supseteq \mathcal{C}$ such that $\Vdash^{\Pi_{\At}}_{\mathcal{D}} \psi$ for arbitrary $\Pi_{\At}$, by~\ref{eq:supp-inf}, $\psi \Vdash^{\Theta_{\At}, \Sigma_{\At}}_{\mathcal{C}} \bot$. Since $\Vdash^{\Gamma_{\At}}_{\mathcal{B}} \phi \multimap \psi$, $\Vdash^{\Delta_{\At}}_{\mathcal{B}} \phi$ (thus also $\Vdash^{\Delta_{\At}}_{\mathcal{C}} \phi$) and $\psi \Vdash^{\Theta_{\At}, \Sigma_{\At}}_{\mathcal{C}} \bot$, by~\ref{eq:supp-imply}, we obtain $\Vdash^{\Gamma_{\At}, \Delta_{\At}, \Theta_{\At}, \Sigma_{\At}}_{\mathcal{C}} \bot$. Since $\mathcal{C} \supseteq \mathcal{B}$ such that $\Vdash^{\Sigma_{\At}}_{\mathcal{C}} \bot$ for arbitrary $\Sigma_{\At}$, and $\Vdash^{\Gamma_{\At}, \Delta_{\At}, \Theta_{\At}, \Sigma_{\At}}_{\mathcal{C}} \bot$, by~\ref{eq:supp-1}, $\Vdash^{\Gamma_{\At}, \Delta_{\At},  \Theta_{\At}}_{\mathcal{B}} 1$.

        \item[$\chi = \alpha \parr \beta:$] Assume $\Vdash^{\Gamma_{\At}}_{\mathcal{B}} \phi \multimap \psi$ and $\Vdash^{\Delta_{\At}}_{\mathcal{B}} \phi$ and $\psi \Vdash^{\Theta_{\At}}_{\mathcal{B}} \alpha \parr \beta$ for arbitrary $\Delta_{\At}, \Theta_{\At}$. Now assume that, for an arbitrary $\mathcal{C} \supseteq \mathcal{B}$ and arbitrary $\Sigma_{\At}$, $\Omega_{\At}$, $\alpha \Vdash^{\Sigma_{\At}}_{\mathcal{C}} \bot$ and $\beta \Vdash^{\Omega_{\At}}_{\mathcal{C}} \bot$. Further assume that, for an arbitrary $\mathcal{D} \supseteq \mathcal{C}$ and arbitrary $\Pi_{\At}$, $\Vdash^{\Pi_{\At}}_{\mathcal{D}} \psi$. By monotonicity, $\psi \Vdash^{\Theta_{\At}}_{\mathcal{D}} \alpha \parr \beta$, hence with $\Vdash^{\Pi_{\At}}_{\mathcal{D}} \psi$, by~\ref{eq:supp-inf}, we obtain $\Vdash^{\Theta_{\At}, \Pi_{\At}}_{\mathcal{D}} \alpha \parr \beta$. From $\Vdash^{\Theta_{\At}, \Pi_{\At}}_{\mathcal{D}} \alpha \parr \beta$ and $\alpha \Vdash^{\Sigma_{\At}}_{\mathcal{C}} \bot$ and $\beta \Vdash^{\Omega_{\At}}_{\mathcal{C}} \bot$ (thus also $\alpha \Vdash^{\Sigma_{\At}}_{\mathcal{D}} \bot$ and $\beta \Vdash^{\Omega_{\At}}_{\mathcal{D}} \bot$), by~\ref{eq:supp-parr}, we obtain $\Vdash^{\Pi_{\At}, \Theta_{\At}, \Sigma_{\At}, \Omega_{\At}}_{\mathcal{D}} \bot$. Since $\Vdash^{\Pi_{\At}, \Theta_{\At}, \Sigma_{\At}, \Omega_{\At}}_{\mathcal{D}} \bot$ and $\mathcal{D} \supseteq \mathcal{C}$ such that $\Vdash^{\Pi_{\At}}_{\mathcal{D}} \psi$ for arbitrary $\Pi_{\At}$, by~\ref{eq:supp-inf}, $\psi \Vdash^{\Theta_{\At}, \Sigma_{\At}, \Omega_{\At}}_{\mathcal{C}} \bot$. Since $\Vdash^{\Gamma_{\At}}_{\mathcal{B}} \phi \multimap \psi$ and $\Vdash^{\Delta_{\At}}_{\mathcal{B}} \phi$ (thus also $\Vdash^{\Delta_{\At}}_{\mathcal{C}} \phi$) and $\psi \Vdash^{\Theta_{\At}, \Sigma_{\At}, \Omega_{\At}}_{\mathcal{C}} \bot$, by~\ref{eq:supp-imply}, we obtain $\Vdash^{\Gamma_{\At}, \Delta_{\At}, \Theta_{\At}, \Sigma_{\At}, \Omega_{\At}}_{\mathcal{C}} \bot$. Since $\mathcal{C} \supseteq \mathcal{B}$ such that $\alpha \Vdash^{\Sigma_{\At}}_{\mathcal{C}} \bot$ and $\beta \Vdash^{\Omega_{\At}}_{\mathcal{C}} \bot$ for arbitrary $\Sigma_{\At}$, $\Omega_{\At}$, and $\Vdash^{\Gamma_{\At}, \Delta_{\At}, \Theta_{\At}, \Sigma_{\At}, \Omega_{\At}}_{\mathcal{C}} \bot$, by~\ref{eq:supp-parr}, $\Vdash^{\Gamma_{\At}, \Delta_{\At}, \Theta_{\At}}_{\mathcal{B}} \alpha \parr \beta$.

        \item[$\chi = \alpha \with \beta:$] Assume $\Vdash^{\Gamma_{\At}}_{\mathcal{B}} \phi \multimap \psi$ and $\Vdash^{\Delta_{\At}}_{\mathcal{B}} \phi$ and $\psi \Vdash^{\Theta_{\At}}_{\mathcal{B}} \alpha \with \beta$ for arbitrary $\Delta_{\At}, \Theta_{\At}$. Further assume that, for an arbitrary $\mathcal{C} \supseteq \mathcal{B}$ and arbitrary $\Pi_{\At}$, $\Vdash^{\Pi_{\At}}_{\mathcal{C}} \psi$. Since $\psi \Vdash^{\Theta_{\At}}_{\mathcal{B}} \alpha \with \beta$ and $\Vdash^{\Pi_{\At}}_{\mathcal{C}} \psi$, by~\ref{eq:supp-inf}, we obtain $\Vdash^{\Theta_{\At}, \Pi_{\At}}_{\mathcal{C}} \alpha \with \beta$. By~\ref{eq:supp-and}, we thus obtain $\Vdash^{\Theta_{\At}, \Pi_{\At}}_{\mathcal{C}} \alpha$ and $\Vdash^{\Theta_{\At}, \Pi_{\At}}_{\mathcal{C}}\beta$. Since $\mathcal{C} \supseteq \mathcal{B}$ such that $\Vdash^{\Pi_{\At}}_{\mathcal{C}} \psi$ for arbitrary $\Pi_{\At}$, by~\ref{eq:supp-inf}, $\psi \Vdash^{\Theta_{\At}}_{\mathcal{B}} \alpha$ and $\psi \Vdash^{\Theta_{\At}}_{\mathcal{B}} \beta$. Since $\Vdash^{\Gamma_{\At}}_{\mathcal{B}} \phi \multimap \psi$ and $\Vdash^{\Delta_{\At}}_{\mathcal{B}} \phi$ and $\psi \Vdash^{\Theta_{\At}}_{\mathcal{B}} \alpha$, by the induction hypothesis, we obtain $\Vdash^{\Gamma_{\At}, \Delta_{\At}, \Theta_{\At}}_{\mathcal{B}} \alpha$. Analogously, we obtain $\Vdash^{\Gamma_{\At}, \Delta_{\At}, \Theta_{\At}}_{\mathcal{B}} \beta$. Now, by~\ref{eq:supp-and}, we obtain $\Vdash^{\Gamma_{\At}, \Delta_{\At}, \Theta_{\At}}_{\mathcal{B}} \alpha \with \beta$ as required.

        \item[$\chi = \alpha \oplus \beta:$] Assume $\Vdash^{\Gamma_{\At}}_{\mathcal{B}} \phi \multimap \psi$ and $\Vdash^{\Delta_{\At}}_{\mathcal{B}} \phi$ and $\psi \Vdash^{\Theta_{\At}}_{\mathcal{B}} \alpha \oplus \beta$ for arbitrary $\Delta_{\At}, \Theta_{\At}$. Now assume that, for an arbitrary $\mathcal{C} \supseteq \mathcal{B}$ and arbitrary $\Sigma_{\At}$, $\alpha \Vdash^{\Sigma_{\At}}_{\mathcal{C}} \bot$ and $\beta \Vdash^{\Sigma_{\At}}_{\mathcal{C}} \bot$. Further assume that, for an arbitrary $\mathcal{D} \supseteq \mathcal{C}$ and arbitrary $\Pi_{\At}$, $\Vdash^{\Pi_{\At}}_{\mathcal{D}} \psi$. By monotonicity, $\psi \Vdash^{\Theta_{\At}}_{\mathcal{D}} \alpha \oplus \beta$, hence with $\Vdash^{\Pi_{\At}}_{\mathcal{D}} \psi$, by~\ref{eq:supp-inf}, we obtain $\Vdash^{\Theta_{\At}, \Pi_{\At}}_{\mathcal{D}} \alpha \oplus \beta$. From $\Vdash^{\Theta_{\At}, \Pi_{\At}}_{\mathcal{D}} \alpha \oplus \beta$ and $\alpha \Vdash^{\Sigma_{\At}}_{\mathcal{C}} \bot$ and $\beta \Vdash^{\Sigma_{\At}}_{\mathcal{C}} \bot$ (thus also $\alpha \Vdash^{\Sigma_{\At}}_{\mathcal{D}} \bot$ and $\beta \Vdash^{\Sigma_{\At}}_{\mathcal{D}} \bot$), by~\ref{eq:supp-plus}, we obtain $\Vdash^{\Pi_{\At}, \Theta_{\At}, \Sigma_{\At}}_{\mathcal{D}} \bot$. Since $\Vdash^{\Pi_{\At}, \Theta_{\At}, \Sigma_{\At}}_{\mathcal{D}} \bot$ and $\mathcal{D} \supseteq \mathcal{C}$ such that $\Vdash^{\Pi_{\At}}_{\mathcal{D}} \psi$ for arbitrary $\Pi_{\At}$, by~\ref{eq:supp-inf}, $\psi \Vdash^{\Theta_{\At}, \Sigma_{\At}}_{\mathcal{C}} \bot$. Since $\Vdash^{\Gamma_{\At}}_{\mathcal{B}} \phi \multimap \psi$ and $\Vdash^{\Delta_{\At}}_{\mathcal{B}} \phi$ (thus also $\Vdash^{\Delta_{\At}}_{\mathcal{C}} \phi$) and $\psi \Vdash^{\Theta_{\At}, \Sigma_{\At}}_{\mathcal{C}} \bot$, by~\ref{eq:supp-imply}, we obtain $\Vdash^{\Gamma_{\At}, \Delta_{\At}, \Theta_{\At}, \Sigma_{\At}}_{\mathcal{C}} \bot$. Since $\mathcal{C} \supseteq \mathcal{B}$ such that $\alpha \Vdash^{\Sigma_{\At}}_{\mathcal{C}} \bot$ and $\beta \Vdash^{\Sigma_{\At}}_{\mathcal{C}} \bot$ for arbitrary $\Sigma_{\At}$, and $\Vdash^{\Gamma_{\At}, \Delta_{\At}, \Theta_{\At}, \Sigma_{\At}}_{\mathcal{C}} \bot$, by~\ref{eq:supp-plus}, $\Vdash^{\Gamma_{\At}, \Delta_{\At}, \Theta_{\At}}_{\mathcal{B}} \alpha \oplus \beta$.

        \item[$\chi = \top:$] Assume $\Vdash^{\Gamma_{\At}}_{\mathcal{B}} \phi \multimap \psi$ and $\Vdash^{\Delta_{\At}}_{\mathcal{B}} \phi$ and $\psi \Vdash^{\Theta_{\At}}_{\mathcal{B}} \top$ for arbitrary $\Delta_{\At}, \Theta_{\At}$. By~\ref{eq:supp-top}, $\Vdash^{\Gamma_{\At}, \Delta_{\At}, \Theta_{\At}}_{\mathcal{B}} \top$.

        \item[$\chi = 0:$] Assume $\Vdash^{\Gamma_{\At}}_{\mathcal{B}} \phi \multimap \psi$ and $\Vdash^{\Delta_{\At}}_{\mathcal{B}} \phi$ and $\psi \Vdash^{\Theta_{\At}}_{\mathcal{B}} 0$ for arbitrary $\Delta_{\At}, \Theta_{\At}$. Further assume that, for an arbitrary $\mathcal{C} \supseteq \mathcal{B}$ and arbitrary $\Pi_{\At}$, $\Vdash^{\Pi_{\At}}_{\mathcal{C}} \psi$. Since $\psi \Vdash^{\Theta_{\At}}_{\mathcal{B}} 0$ and $\Vdash^{\Pi_{\At}}_{\mathcal{C}} \psi$, by~\ref{eq:supp-inf}, we obtain $\Vdash^{\Pi_{\At}, \Theta_{\At}}_{\mathcal{C}} 0$. By~\ref{eq:supp-0}, we thus obtain $\Vdash^{\Pi_{\At}, \Theta_{\At}, \Sigma_{\At}}_{\mathcal{C}} \bot$ for arbitrary $\Sigma_{\At}$. Since $\mathcal{C} \supseteq \mathcal{B}$ such that $\Vdash^{\Pi_{\At}}_{\mathcal{C}} \psi$ for arbitrary $\Pi_{\At}$, by~\ref{eq:supp-inf}, $\psi \Vdash^{\Theta_{\At}, \Sigma_{\At}}_{\mathcal{B}} \bot$. Since $\Vdash^{\Gamma_{\At}}_{\mathcal{B}} \phi \multimap \psi$ and $\Vdash^{\Delta_{\At}}_{\mathcal{B}} \phi$ and $\psi \Vdash^{\Theta_{\At}, \Sigma_{\At}}_{\mathcal{B}} \bot$, by~\ref{eq:supp-imply}, we obtain $\Vdash^{\Gamma_{\At}, \Delta_{\At}, \Theta_{\At}, \Sigma_{\At}}_{\mathcal{B}} \bot$. Since $\Sigma_{\At}$ is arbitrary, we obtain $\Vdash^{\Gamma_{\At}, \Delta_{\At}, \Theta_{\At}}_{\mathcal{B}} 0$ by~\ref{eq:supp-0}.
    \end{description}
\end{proof}

\begin{lemma} \label{lemma:genericone}
    If $\Vdash^{\Gamma_{\At}}_{\mathcal{B}} 1$ and $\Vdash^{\Delta_{\At}}_{\mathcal{B}} \chi$ then $\Vdash^{\Gamma_{\At}, \Delta_{\At}}_{\mathcal{B}} \chi$.
\end{lemma}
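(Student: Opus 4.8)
The plan is to argue by induction on the structure of $\chi$. The engine of the proof is the observation that the hypothesis $\Vdash^{\Gamma_{\At}}_{\mathcal{B}} 1$ is, by clause~\ref{eq:supp-1} read directly, nothing but the following \emph{absorption property} $(\ast)$: for all $\mathcal{C} \supseteq \mathcal{B}$ and all $\Sigma_{\At}$, if $\Vdash^{\Sigma_{\At}}_{\mathcal{C}} \bot$ then $\Vdash^{\Gamma_{\At}, \Sigma_{\At}}_{\mathcal{C}} \bot$. So in each case it will be enough to reduce the goal to a support statement about $\bot$ whose context is $\Delta_{\At}$ together with whatever fresh atomic contexts the relevant clause introduces, and then apply $(\ast)$ to prepend $\Gamma_{\At}$.

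First I would treat the base case, $\chi$ an atom $p$ (this subsumes $\chi = \bot$). Unfolding $\Vdash^{\Gamma_{\At}, \Delta_{\At}}_{\mathcal{B}} p$ via~\ref{eq:supp-at}, fix $\mathcal{C} \supseteq \mathcal{B}$ and $\Sigma_{\At}$ with $p, \Sigma_{\At} \vdash_{\mathcal{C}} \bot$; Lemma~\ref{lemma:monotonicity} upgrades $\Vdash^{\Delta_{\At}}_{\mathcal{B}} p$ to $\Vdash^{\Delta_{\At}}_{\mathcal{C}} p$, and a second use of~\ref{eq:supp-at} yields $\Delta_{\At}, \Sigma_{\At} \vdash_{\mathcal{C}} \bot$; passing through Lemma~\ref{lemma:bottomisspecial} and $(\ast)$ turns this into $\Gamma_{\At}, \Delta_{\At}, \Sigma_{\At} \vdash_{\mathcal{C}} \bot$, which is exactly what~\ref{eq:supp-at} demands.

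The connectives $\otimes, \multimap, 1, \parr, \oplus$ are then handled uniformly: since the left-to-right halves of clauses~\ref{eq:supp-tensor},~\ref{eq:supp-imply},~\ref{eq:supp-1},~\ref{eq:supp-parr},~\ref{eq:supp-plus} all have ``elimination shape'', unfolding the goal $\Vdash^{\Gamma_{\At}, \Delta_{\At}}_{\mathcal{B}} \chi$ fixes a base extension $\mathcal{C}$, some new atomic contexts, and some $\bot$-hypotheses, and feeding exactly those hypotheses (after Lemma~\ref{lemma:monotonicity}) into the same clause applied to $\Vdash^{\Delta_{\At}}_{\mathcal{C}} \chi$ produces a supported $\bot$ with context $\Delta_{\At}$ extended by the new atomic contexts; $(\ast)$ then closes the case. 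The cases $\top$ and $0$ are quick: $\top$ is immediate from~\ref{eq:supp-top}; for $0$, clause~\ref{eq:supp-0} reduces the goal to $\Vdash^{\Gamma_{\At}, \Delta_{\At}, \Sigma_{\At}}_{\mathcal{B}} \bot$ for an arbitrary $\Sigma_{\At}$, which follows from $\Vdash^{\Delta_{\At}, \Sigma_{\At}}_{\mathcal{B}} \bot$ (given by~\ref{eq:supp-0} applied to the hypothesis $\Vdash^{\Delta_{\At}}_{\mathcal{B}} 0$) together with $(\ast)$. The only genuinely inductive case is $\with$: clause~\ref{eq:supp-and} splits $\Vdash^{\Delta_{\At}}_{\mathcal{B}} \phi \with \psi$ into $\Vdash^{\Delta_{\At}}_{\mathcal{B}} \phi$ and $\Vdash^{\Delta_{\At}}_{\mathcal{B}} \psi$, the induction hypothesis (with the same $\Vdash^{\Gamma_{\At}}_{\mathcal{B}} 1$) gives $\Vdash^{\Gamma_{\At}, \Delta_{\At}}_{\mathcal{B}} \phi$ and $\Vdash^{\Gamma_{\At}, \Delta_{\At}}_{\mathcal{B}} \psi$, and~\ref{eq:supp-and} recombines them.

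I do not expect a real obstacle here: every case uses only the matching support clause, Lemma~\ref{lemma:monotonicity}, Lemma~\ref{lemma:bottomisspecial}, and $(\ast)$. The single point worth care is organisational -- the additive clauses~\ref{eq:supp-and},~\ref{eq:supp-top},~\ref{eq:supp-0} must be dealt with separately from the multiplicative/implicative ones, and $\with$ is the sole case that consumes the inductive hypothesis, so the induction on $\chi$ is genuinely needed even though every other case is ``flat''. As an alternative packaging, one could first establish the double-negation characterisation ``$\Vdash^{\Sigma_{\At}}_{\mathcal{B}} \chi$ iff, for all $\mathcal{C} \supseteq \mathcal{B}$ and $\Theta_{\At}$, $\chi \Vdash^{\Theta_{\At}}_{\mathcal{C}} \bot$ implies $\Vdash^{\Sigma_{\At}, \Theta_{\At}}_{\mathcal{C}} \bot$'' (which reduces, via Lemma~\ref{lemma:negatingformula} and clause~\ref{eq:supp-inf}, to saying that $\chi$ and $\neg\neg\chi$ have the same support), after which the present lemma collapses to three steps: reflect $\chi$ into $\bot$, apply~\ref{eq:supp-inf} with $\Vdash^{\Delta_{\At}}_{\mathcal{B}} \chi$, and finish with~\ref{eq:supp-1}.
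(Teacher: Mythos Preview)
Your proposal is correct and mirrors the paper's own proof almost exactly: the paper also proceeds by induction on $\chi$, uses the clause~\ref{eq:supp-1} as your absorption property $(\ast)$ to prepend $\Gamma_{\At}$ once the goal has been reduced to a $\bot$-support statement, treats $\otimes, \multimap, 1, \parr, \oplus$ uniformly via their elimination-shaped clauses, dispatches $\top$ and $0$ directly, and singles out $\with$ as the only case that actually consumes the induction hypothesis. The only cosmetic difference is that the paper applies~\ref{eq:supp-at} directly from the base $\mathcal{B}$ (its quantifier already ranges over extensions) rather than first invoking Lemma~\ref{lemma:monotonicity}, but this is inessential.
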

\begin{proof}
   We shall prove the statement inductively. Note that in the case of $(\with)$, we require an induction hypothesis stating that \[\text{if } \Vdash^{\Gamma_{\At}}_{\mathcal{B}} 1 \text{ and } \Vdash^{\Delta_{\At}}_{\mathcal{B}} \tau \text{ then } \Vdash^{\Gamma_{\At}, \Delta_{\At}}_{\mathcal{B}} \tau\] holds true for any proper subformula $\tau$ of $\chi$, as in the proof of Lemma~\ref{lemma:monotonicity}.
    
    \begin{description}[itemsep=0.5em]
        \item[$\chi = p:$] Assume $\Vdash^{\Gamma_{\At}}_{\mathcal{B}} 1$ and $\Vdash^{\Delta_{\At}}_{\mathcal{B}} p$ for arbitrary $\Delta_{\At}$. Further assume that, for an arbitrary $\mathcal{C} \supseteq \mathcal{B}$ and arbitrary $\Theta_{\At}$, $p, \Theta_{\At} \vdash_{\mathcal{C}} \bot$. From $\Vdash^{\Delta_{\At}}_{\mathcal{B}} p$ and $p, \Theta_{\At} \vdash_{\mathcal{C}} \bot$, by~\ref{eq:supp-at}, we obtain $\Delta_{\At}, \Theta_{\At} \vdash_{\mathcal{C}} \bot$. Hence, $\Vdash^{\Delta_{\At}, \Theta_{\At}}_{\mathcal{C}} \bot$ by Lemma~\ref{lemma:bottomisspecial}. Since $\Vdash^{\Gamma_{\At}}_{\mathcal{B}} 1$ and $\Vdash^{\Delta_{\At}, \Theta_{\At}}_{\mathcal{C}} \bot$, by~\ref{eq:supp-1}, we obtain $\Vdash^{\Gamma_{\At}, \Delta_{\At}, \Theta_{\At}}_{\mathcal{C}} \bot$, hence $\Gamma_{\At}, \Delta_{\At}, \Theta_{\At} \vdash_{\mathcal{C}} \bot$ by Lemma~\ref{lemma:bottomisspecial}. Since $\mathcal{C} \supseteq \mathcal{B}$ such that $p, \Theta_{\At} \vdash_{\mathcal{C}} \bot$ for arbitrary $\Theta_{\At}$, and $\Gamma_{\At}, \Delta_{\At}, \Theta_{\At} \vdash_{\mathcal{C}} \bot$, by~\ref{eq:supp-at}, $\Vdash^{\Gamma_{\At}, \Delta_{\At}}_{\mathcal{B}} p$.

        \item[$\chi = \alpha \otimes \beta:$] Assume $\Vdash^{\Gamma_{\At}}_{\mathcal{B}} 1$ and $\Vdash^{\Delta_{\At}}_{\mathcal{B}} \alpha \otimes \beta$ for arbitrary $\Delta_{\At}$. Further assume that, for an arbitrary $\mathcal{C} \supseteq \mathcal{B}$ and arbitrary $\Theta_{\At}$, $\alpha, \beta \Vdash^{\Theta_{\At}}_{\mathcal{C}} \bot$. From $\Vdash^{\Delta_{\At}}_{\mathcal{B}} \alpha \otimes \beta$ and $\alpha, \beta \Vdash^{\Theta_{\At}}_{\mathcal{C}} \bot$, by~\ref{eq:supp-tensor}, we obtain $\Vdash^{\Delta_{\At}, \Theta_{\At}}_{\mathcal{C}} \bot$. Since $\Vdash^{\Gamma_{\At}}_{\mathcal{B}} 1$ and $\Vdash^{\Delta_{\At}, \Theta_{\At}}_{\mathcal{C}} \bot$, by~\ref{eq:supp-1}, we obtain $\Vdash^{\Gamma_{\At}, \Delta_{\At}, \Theta_{\At}}_{\mathcal{C}} \bot$. Since $\mathcal{C} \supseteq \mathcal{B}$ such that $\alpha, \beta \Vdash^{\Theta_{\At}}_{\mathcal{C}} \bot$ for arbitrary $\Theta_{\At}$, and $\Vdash^{\Gamma_{\At}, \Delta_{\At}, \Theta_{\At}}_{\mathcal{C}} \bot$, b~\ref{eq:supp-tensor}, $\Vdash^{\Gamma_{\At}, \Delta_{\At}}_{\mathcal{B}} \alpha \otimes \beta$.

        \item[$\chi = \alpha \multimap \beta:$] Assume $\Vdash^{\Gamma_{\At}}_{\mathcal{B}} 1$ and $\Vdash^{\Delta_{\At}}_{\mathcal{B}} \alpha \multimap \beta$ for arbitrary $\Delta_{\At}$. Further assume that, for an arbitrary $\mathcal{C} \supseteq \mathcal{B}$ and arbitrary $\Theta_{\At}, \Omega_{\At}$, $\Vdash^{\Theta_{\At}}_{\mathcal{C}} \alpha$ and $\beta \Vdash^{\Omega_{\At}}_{\mathcal{C}} \bot$. From $\Vdash^{\Delta_{\At}}_{\mathcal{B}} \alpha \multimap \beta$, $\Vdash^{\Theta_{\At}}_{\mathcal{C}} \alpha$ and $\beta \Vdash^{\Omega_{\At}}_{\mathcal{C}} \bot$, by~\ref{eq:supp-imply}, we obtain $\Vdash^{\Delta_{\At}, \Theta_{\At}, \Omega_{\At}}_{\mathcal{C}} \bot$. Since $\Vdash^{\Gamma_{\At}}_{\mathcal{B}} 1$ and $\Vdash^{\Delta_{\At}, \Theta_{\At}, \Omega_{\At}}_{\mathcal{C}} \bot$, by~\ref{eq:supp-1}, we obtain $\Vdash^{\Gamma_{\At}, \Delta_{\At}, \Theta_{\At}, \Omega_{\At}}_{\mathcal{C}} \bot$. Since $\mathcal{C} \supseteq \mathcal{B}$ such that $\Vdash^{\Theta_{\At}}_{\mathcal{C}} \alpha$ and $\beta \Vdash^{\Omega_{\At}}_{\mathcal{C}} \bot$ for arbitrary  $\Theta_{\At}, \Omega_{\At}$, and $\Vdash^{\Gamma_{\At}, \Delta_{\At}, \Theta_{\At}, \Omega_{\At}}_{\mathcal{C}} \bot$, by~\ref{eq:supp-imply}, $\Vdash^{\Gamma_{\At}, \Delta_{\At}}_{\mathcal{B}} \alpha \multimap \beta$.

        \item[$\chi = 1:$] Assume $\Vdash^{\Gamma_{\At}}_{\mathcal{B}} 1$ and $\Vdash^{\Delta_{\At}}_{\mathcal{B}} 1$ for arbitrary $\Delta_{\At}$. Further assume that, for an arbitrary $\mathcal{C} \supseteq \mathcal{B}$ and arbitrary $\Theta_{\At}$, $\Vdash^{\Theta_{\At}}_{\mathcal{C}} \bot$. From $\Vdash^{\Delta_{\At}}_{\mathcal{B}} 1$ and $\Vdash^{\Theta_{\At}}_{\mathcal{C}} \bot$, by~\ref{eq:supp-1}, we obtain $\Vdash^{\Delta_{\At}, \Theta_{\At}}_{\mathcal{C}} \bot$. Since $\Vdash^{\Gamma_{\At}}_{\mathcal{B}} 1$ and $\Vdash^{\Delta_{\At}, \Theta_{\At}}_{\mathcal{C}} \bot$, by~\ref{eq:supp-1}, we obtain $\Vdash^{\Gamma_{\At}, \Delta_{\At}, \Theta_{\At}}_{\mathcal{C}} \bot$. Since $\mathcal{C} \supseteq \mathcal{B}$ such that $\Vdash^{\Theta_{\At}}_{\mathcal{C}} \bot$ for arbitrary $\Theta_{\At}$, and $\Vdash^{\Gamma_{\At}, \Delta_{\At}, \Theta_{\At}}_{\mathcal{C}} \bot$, by~\ref{eq:supp-1}, $\Vdash^{\Gamma_{\At}, \Delta_{\At}}_{\mathcal{B}} 1$.

        \item[$\chi = \alpha \parr \beta:$] Assume $\Vdash^{\Gamma_{\At}}_{\mathcal{B}} 1$ and $\Vdash^{\Delta_{\At}}_{\mathcal{B}} \alpha \parr \beta$ for arbitrary $\Delta_{\At}$. Further assume that, for an arbitrary $\mathcal{C} \supseteq \mathcal{B}$ and arbitrary $\Theta_{\At}$, $\Omega_{\At}$, $\alpha \Vdash^{\Theta_{\At}}_{\mathcal{C}} \bot$ and $\beta \Vdash^{\Omega_{\At}}_{\mathcal{C}} \bot$. From $\Vdash^{\Delta_{\At}}_{\mathcal{B}} \alpha \parr \beta$ and $\alpha \Vdash^{\Theta_{\At}}_{\mathcal{C}} \bot$ and $\beta \Vdash^{\Omega_{\At}}_{\mathcal{C}} \bot$, by~\ref{eq:supp-parr}, we obtain $\Vdash^{\Delta_{\At}, \Theta_{\At}, \Omega_{\At}}_{\mathcal{C}} \bot$. Since $\Vdash^{\Gamma_{\At}}_{\mathcal{B}} 1$ and $\Vdash^{\Delta_{\At}, \Theta_{\At}, \Omega_{\At}}_{\mathcal{C}} \bot$, by~\ref{eq:supp-1}, we obtain $\Vdash^{\Gamma_{\At}, \Delta_{\At}, \Theta_{\At}, \Omega_{\At}}_{\mathcal{C}} \bot$. Since $\mathcal{C} \supseteq \mathcal{B}$ such that $\alpha \Vdash^{\Theta_{\At}}_{\mathcal{C}} \bot$ and $\beta \Vdash^{\Omega_{\At}}_{\mathcal{C}} \bot$ for arbitrary $\Theta_{\At}$, $\Omega_{\At}$, and $\Vdash^{\Gamma_{\At}, \Delta_{\At}, \Theta_{\At}, \Omega_{\At}}_{\mathcal{C}} \bot$, by~\ref{eq:supp-parr}, $\Vdash^{\Gamma_{\At}, \Delta_{\At}}_{\mathcal{B}} \alpha \parr \beta$.

        \item[$\chi = \alpha \with \beta:$] Assume $\Vdash^{\Gamma_{\At}}_{\mathcal{B}} 1$ and $\Vdash^{\Delta_{\At}}_{\mathcal{B}} \alpha \with \beta$ for arbitrary $\Delta_{\At}$. By~\ref{eq:supp-and}, we thus obtain $\Vdash^{\Delta_{\At}}_{\mathcal{B}} \alpha$ and $\Vdash^{\Delta_{\At}}_{\mathcal{B}}\beta$. Since $\Vdash^{\Gamma_{\At}}_{\mathcal{B}} 1$ and $\Vdash^{\Delta_{\At}}_{\mathcal{B}} \alpha$, by the induction hypothesis, we obtain $\Vdash^{\Gamma_{\At}, \Delta_{\At}}_{\mathcal{B}} \alpha$. Analogously, we obtain $\Vdash^{\Gamma_{\At}, \Delta_{\At}}_{\mathcal{B}} \beta$. Now, by~\ref{eq:supp-and}, we obtain $\Vdash^{\Gamma_{\At}, \Delta_{\At}}_{\mathcal{B}} \alpha \with \beta$ as required.

        \item[$\chi = \alpha \oplus \beta:$] Assume $\Vdash^{\Gamma_{\At}}_{\mathcal{B}} 1$ and $\Vdash^{\Delta_{\At}}_{\mathcal{B}} \alpha \oplus \beta$ for arbitrary $\Delta_{\At}$. Further assume that, for an arbitrary $\mathcal{C} \supseteq \mathcal{B}$ and arbitrary $\Theta_{\At}$, $\alpha \Vdash^{\Theta_{\At}}_{\mathcal{C}} \bot$ and $\beta \Vdash^{\Theta_{\At}}_{\mathcal{C}} \bot$. From $\Vdash^{\Delta_{\At}}_{\mathcal{B}} \alpha \oplus \beta$ and $\alpha \Vdash^{\Theta_{\At}}_{\mathcal{C}} \bot$ and $\beta \Vdash^{\Theta_{\At}}_{\mathcal{C}} \bot$, by~\ref{eq:supp-plus}, we obtain $\Vdash^{\Delta_{\At}, \Theta_{\At}}_{\mathcal{C}} \bot$. Since $\Vdash^{\Gamma_{\At}}_{\mathcal{B}} 1$ and $\Vdash^{\Delta_{\At}, \Theta_{\At}}_{\mathcal{C}} \bot$, by~\ref{eq:supp-1}, we obtain $\Vdash^{\Gamma_{\At}, \Delta_{\At}, \Theta_{\At}}_{\mathcal{C}} \bot$. Since $\mathcal{C} \supseteq \mathcal{B}$ such that $\alpha \Vdash^{\Theta_{\At}}_{\mathcal{C}} \bot$ and $\beta \Vdash^{\Theta_{\At}}_{\mathcal{C}} \bot$ for arbitrary $\Theta_{\At}$, and $\Vdash^{\Gamma_{\At}, \Delta_{\At}, \Theta_{\At}}_{\mathcal{C}} \bot$, by~\ref{eq:supp-plus}, $\Vdash^{\Gamma_{\At}, \Delta_{\At}}_{\mathcal{B}} \alpha \oplus \beta$.

        \item[$\chi = \top:$] Assume $\Vdash^{\Gamma_{\At}}_{\mathcal{B}} 1$ and $\Vdash^{\Delta_{\At}}_{\mathcal{B}} \top$ for arbitrary $\Delta_{\At}$. By~\ref{eq:supp-top}, $\Vdash^{\Gamma_{\At}, \Delta_{\At}}_{\mathcal{B}} \top$.

        \item[$\chi = 0:$] Assume $\Vdash^{\Gamma_{\At}}_{\mathcal{B}} 1$ and $\Vdash^{\Delta_{\At}}_{\mathcal{B}} 0$ for arbitrary $\Delta_{\At}$. By~\ref{eq:supp-0}, from $\Vdash^{\Delta_{\At}}_{\mathcal{B}}~0$ we obtain $\Vdash^{\Delta_{\At}, \Theta_{\At}}_{\mathcal{B}} \bot$ for arbitrary $\Theta_{\At}$. Since $\Vdash^{\Gamma_{\At}}_{\mathcal{B}} 1$ and $\Vdash^{\Delta_{\At}, \Theta_{\At}}_{\mathcal{B}} \bot$, by~\ref{eq:supp-1}, we obtain $\Vdash^{\Gamma_{\At}, \Delta_{\At}, \Theta_{\At}}_{\mathcal{B}} \bot$. Since $\Theta_{\At}$ is arbitrary, we obtain $\Vdash^{\Gamma_{\At}, \Delta_{\At}}_{\mathcal{B}}~0$ by~\ref{eq:supp-0}.
    \end{description}
\end{proof}

\begin{lemma} \label{lemma:genericplus}
    If $\Vdash^{\Gamma_{\At}}_{\mathcal{B}} \phi \oplus \psi$ and $\phi \Vdash^{\Delta_{\At}}_{\mathcal{B}} \chi$ and $\psi \Vdash^{\Delta_{\At}}_{\mathcal{B}} \chi$ then $\Vdash^{\Gamma_{\At}, \Delta_{\At}}_{\mathcal{B}} \chi$.
\end{lemma}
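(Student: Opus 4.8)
The plan is to argue by induction on the structure of the formula $\chi$. The guiding observation is that, although clause~\ref{eq:supp-plus} only speaks about $\bot$ on the right, for \emph{every} $\chi$ one can peel off the principal connective of $\chi$ using the corresponding support clause and thereby reduce the goal $\Vdash^{\Gamma_{\At},\Delta_{\At}}_{\mathcal{B}}\chi$ to a family of $\bot$-refutation statements; once in that shape, the hypothesis $\Vdash^{\Gamma_{\At}}_{\mathcal{B}}\phi\oplus\psi$ is discharged via clause~\ref{eq:supp-plus}. Before starting, I would isolate one small cut-style fact: if $\phi\Vdash^{\Delta_{\At}}_{\mathcal{C}}\chi$ and $\chi\Vdash^{\Sigma_{\At}}_{\mathcal{C}}\bot$ then $\phi\Vdash^{\Delta_{\At},\Sigma_{\At}}_{\mathcal{C}}\bot$; this is immediate by unfolding both premises with~\ref{eq:supp-inf} and chaining the two implications. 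It is exactly what turns the two hypotheses $\phi\Vdash^{\Delta_{\At}}_{\mathcal{B}}\chi$ and $\psi\Vdash^{\Delta_{\At}}_{\mathcal{B}}\chi$ into refutations of $\phi$ and of $\psi$.

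For the base case $\chi=p\in\At$ (which also covers $\chi=\bot$), I would invoke clause~\ref{eq:supp-at}: it suffices to show that for every $\mathcal{C}\supseteq\mathcal{B}$ and every $\Sigma_{\At}$, if $p,\Sigma_{\At}\vdash_{\mathcal{C}}\bot$ then $\Gamma_{\At},\Delta_{\At},\Sigma_{\At}\vdash_{\mathcal{C}}\bot$. Fixing such $\mathcal{C},\Sigma_{\At}$ with $p,\Sigma_{\At}\vdash_{\mathcal{C}}\bot$, Lemmas~\ref{lemma:bottomisspecial} and~\ref{lemma:floatingatom} give $p\Vdash^{\Sigma_{\At}}_{\mathcal{C}}\bot$; composing this (after monotonicity) with $\phi\Vdash^{\Delta_{\At}}_{\mathcal{B}}p$ and $\psi\Vdash^{\Delta_{\At}}_{\mathcal{B}}p$ via the cut fact yields $\phi\Vdash^{\Delta_{\At},\Sigma_{\At}}_{\mathcal{C}}\bot$ and $\psi\Vdash^{\Delta_{\At},\Sigma_{\At}}_{\mathcal{C}}\bot$. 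Applying clause~\ref{eq:supp-plus} to the (monotone image of the) hypothesis $\Vdash^{\Gamma_{\At}}_{\mathcal{B}}\phi\oplus\psi$ with the atomic multiset $\Delta_{\At},\Sigma_{\At}$ produces $\Vdash^{\Gamma_{\At},\Delta_{\At},\Sigma_{\At}}_{\mathcal{C}}\bot$, and Lemma~\ref{lemma:bottomisspecial} converts this back into $\Gamma_{\At},\Delta_{\At},\Sigma_{\At}\vdash_{\mathcal{C}}\bot$, as required.

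The inductive step follows a uniform pattern. To prove $\Vdash^{\Gamma_{\At},\Delta_{\At}}_{\mathcal{B}}\chi$ I apply the support clause for the principal connective of $\chi$; in each case it asks for $\Vdash^{\Gamma_{\At},\Delta_{\At},\Sigma_{\At}}_{\mathcal{C}}\bot$ (for arbitrary $\mathcal{C}\supseteq\mathcal{B}$ and the atomic multiset $\Sigma_{\At}$ the clause universally quantifies) under the assumption that the immediate subformulas of $\chi$ refute $\Sigma_{\At}$. For $\chi=\chi_1\otimes\chi_2$, $\chi_1\parr\chi_2$, $\chi_1\multimap\chi_2$ and the constants $1,0$ (the last two being the degenerate, subformula-free variants), I first use~\ref{eq:supp-inf} together with that same clause to upgrade $\phi\Vdash^{\Delta_{\At}}_{\mathcal{B}}\chi$ into $\phi\Vdash^{\Delta_{\At},\Sigma_{\At}}_{\mathcal{C}}\bot$ (and likewise for $\psi$), and then close exactly as in the base case via clause~\ref{eq:supp-plus} on $\Vdash^{\Gamma_{\At}}_{\mathcal{B}}\phi\oplus\psi$. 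The case $\chi=\top$ is trivial by~\ref{eq:supp-top}. The only case that genuinely uses the induction hypothesis is $\chi=\chi_1\with\chi_2$: from $\phi\Vdash^{\Delta_{\At}}_{\mathcal{B}}\chi_1\with\chi_2$ one extracts $\phi\Vdash^{\Delta_{\At}}_{\mathcal{B}}\chi_i$ (by~\ref{eq:supp-inf} and~\ref{eq:supp-and}), similarly for $\psi$, so the induction hypothesis at $\chi_i$, with the unchanged hypothesis on $\phi\oplus\psi$, gives $\Vdash^{\Gamma_{\At},\Delta_{\At}}_{\mathcal{B}}\chi_i$ for $i=1,2$, whence $\Vdash^{\Gamma_{\At},\Delta_{\At}}_{\mathcal{B}}\chi_1\with\chi_2$ by~\ref{eq:supp-and}. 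A cleaner organisation, worth noting, is to first prove once and for all that $\Vdash^{\Lambda_{\At}}_{\mathcal{B}}\chi$ holds iff every $\mathcal{C}\supseteq\mathcal{B}$ and $\Sigma_{\At}$ with $\chi\Vdash^{\Sigma_{\At}}_{\mathcal{C}}\bot$ also satisfy $\Vdash^{\Lambda_{\At},\Sigma_{\At}}_{\mathcal{C}}\bot$ -- itself an induction on $\chi$, with the base case being exactly clause~\ref{eq:supp-at} repackaged via Lemmas~\ref{lemma:bottomisspecial} and~\ref{lemma:floatingatom} -- after which Lemma~\ref{lemma:genericplus} together with its siblings~\ref{lemma:generictensor}, \ref{lemma:genericimplication} and~\ref{lemma:genericone} become one-liners.

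The main obstacle is conceptual rather than computational: because $\chi$ ranges over arbitrary formulas, clause~\ref{eq:supp-plus} cannot be aimed at the goal directly, so the whole argument rests on the $\bot$-centred reformulation of the support clauses -- which is precisely what the induction on $\chi$ exploits, and what makes the decision to let $\bot$ be manipulated inside bases (Lemmas~\ref{lemma:bottomisspecial} and~\ref{lemma:floatingatom}) the crucial ingredient. No individual connective is hard; the only thing requiring care is keeping the running atomic multisets, the successive base extensions, and the monotonicity appeals properly aligned in each case.
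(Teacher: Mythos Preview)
Your proposal is correct and follows essentially the same route as the paper: an induction on the structure of $\chi$, where each non-$\with$, non-$\top$ case unfolds the support clause for the principal connective of $\chi$, derives $\phi\Vdash^{\Delta_{\At},\Sigma_{\At}}_{\mathcal{C}}\bot$ and $\psi\Vdash^{\Delta_{\At},\Sigma_{\At}}_{\mathcal{C}}\bot$ by~\ref{eq:supp-inf} plus that clause, and then discharges via~\ref{eq:supp-plus}; the $\with$ case is the sole use of the induction hypothesis and $\top$ is trivial. Your packaging of the inner step as a reusable ``cut fact'' and your side remark about the double-negation-style characterisation of $\Vdash^{\Lambda_{\At}}_{\mathcal{B}}\chi$ are nice streamlinings the paper does not make explicit, but they do not change the underlying argument.
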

\begin{proof}
    We shall prove the statement inductively. Note that in the case of $(\with)$, we require an induction hypothesis stating that \[\text{if } \Vdash^{\Gamma_{\At}}_{\mathcal{B}} \phi \oplus \psi \text{ and } \phi \Vdash^{\Delta_{\At}}_{\mathcal{B}} \tau \text{ and } \psi \Vdash^{\Delta_{\At}}_{\mathcal{B}} \tau \text{ then } \Vdash^{\Gamma_{\At}, \Delta_{\At}}_{\mathcal{B}} \tau\] holds true for any proper subformula $\tau$ of $\chi$, as in the proof of Lemma~\ref{lemma:monotonicity}.
    
    \begin{description}[itemsep=0.5em]
        \item[$\chi = p:$] Assume $\Vdash^{\Gamma_{\At}}_{\mathcal{B}} \phi \oplus \psi$ and $\phi \Vdash^{\Delta_{\At}}_{\mathcal{B}} p$ and $\psi \Vdash^{\Delta_{\At}}_{\mathcal{B}} p$. Now assume that, for an arbitrary $\mathcal{C} \supseteq \mathcal{B}$ and arbitrary $\Theta_{\At}$, $p, \Theta_{\At} \vdash_{\mathcal{C}} \bot$. Further assume that, for an arbitrary $\mathcal{D} \supseteq \mathcal{C}$ and arbitrary $\Sigma_{\At}$, $\Vdash^{\Sigma_{\At}}_{\mathcal{D}} \phi$. By monotonicity, $\phi \Vdash^{\Delta_{\At}}_{\mathcal{D}} p$, hence, by~\ref{eq:supp-inf}, we obtain $\Vdash^{\Delta_{\At}, \Sigma_{\At}}_{\mathcal{D}} p$. From $\Vdash^{\Delta_{\At}, \Sigma_{\At}}_{\mathcal{D}} p$ and $p, \Theta_{\At} \vdash_{\mathcal{C}} \bot$ (thus also $p, \Theta_{\At} \vdash_{\mathcal{D}} \bot$), by~\ref{eq:supp-at}, we obtain $\Delta_{\At}, \Sigma_{\At}, \Theta_{\At} \vdash_{\mathcal{D}} \bot$. Hence, $\Vdash^{\Delta_{\At}, \Sigma_{\At}, \Theta_{\At}}_{\mathcal{D}} \bot$ by Lemma~\ref{lemma:bottomisspecial}. Since $\Vdash^{\Delta_{\At}, \Sigma_{\At}, \Theta_{\At}}_{\mathcal{D}} \bot$ and $\mathcal{D} \supseteq \mathcal{C}$ such that $\Vdash^{\Sigma_{\At}}_{\mathcal{D}} \phi$ for arbitrary $\Sigma_{\At}$, by~\ref{eq:supp-inf}, $\phi \Vdash^{\Delta_{\At}, \Theta_{\At}}_{\mathcal{C}} \bot$. Analogously, for an arbitrary $\mathcal{E} \supseteq \mathcal{C}$ and arbitrary $\Pi_{\At}$ such that $\Vdash^{\Pi_{\At}}_{\mathcal{E}} \psi$, obtain $\psi \Vdash^{\Delta_{\At}, \Theta_{\At}}_{\mathcal{C}} \bot$. Since $\Vdash^{\Gamma_{\At}}_{\mathcal{B}} \phi \oplus \psi$ and $\phi \Vdash^{\Delta_{\At}, \Theta_{\At}}_{\mathcal{C}} \bot$ and $\psi \Vdash^{\Delta_{\At}, \Theta_{\At}}_{\mathcal{C}} \bot$, by~\ref{eq:supp-plus}, we obtain $\Vdash^{\Gamma_{\At}, \Delta_{\At}, \Theta_{\At}}_{\mathcal{C}} \bot$, hence $\Gamma_{\At}, \Delta_{\At}, \Theta_{\At} \vdash_{\mathcal{C}} \bot$ by Lemma~\ref{lemma:bottomisspecial}. Since $\mathcal{C} \supseteq \mathcal{B}$ such that $p, \Theta_{\At} \vdash_{\mathcal{C}} \bot$ for arbitrary $\Theta_{\At}$, and $\Gamma_{\At}, \Delta_{\At}, \Theta_{\At} \vdash_{\mathcal{C}} \bot$, by~\ref{eq:supp-at}, $\Vdash^{\Gamma_{\At}, \Delta_{\At}}_{\mathcal{B}} p$.

        \item[$\chi = \alpha \otimes \beta:$] Assume $\Vdash^{\Gamma_{\At}}_{\mathcal{B}} \phi \oplus \psi$ and $\phi \Vdash^{\Delta_{\At}}_{\mathcal{B}} \alpha \otimes \beta$ and $\psi \Vdash^{\Delta_{\At}}_{\mathcal{B}} \alpha \otimes \beta$. Now assume that, for an arbitrary $\mathcal{C} \supseteq \mathcal{B}$ and arbitrary $\Theta_{\At}$, $\alpha, \beta \Vdash^{\Theta_{\At}}_{\mathcal{C}} \bot$. Further assume that, for an arbitrary $\mathcal{D} \supseteq \mathcal{C}$ and arbitrary $\Sigma_{\At}$, $\Vdash^{\Sigma_{\At}}_{\mathcal{D}} \phi$. By monotonicity, $\phi \Vdash^{\Delta_{\At}}_{\mathcal{D}} \alpha \otimes \beta$, hence with $\Vdash^{\Sigma_{\At}}_{\mathcal{D}} \phi$, by~\ref{eq:supp-inf}, we obtain $\Vdash^{\Delta_{\At}, \Sigma_{\At}}_{\mathcal{D}} \alpha \otimes \beta$. From $\Vdash^{\Delta_{\At}, \Sigma_{\At}}_{\mathcal{D}} \alpha \otimes \beta$ and $\alpha, \beta \Vdash^{\Theta_{\At}}_{\mathcal{C}} \bot$ (thus also $\alpha, \beta \Vdash^{\Theta_{\At}}_{\mathcal{D}} \bot$), by~\ref{eq:supp-tensor}, we obtain $\Vdash^{\Delta_{\At}, \Sigma_{\At}, \Theta_{\At}}_{\mathcal{D}} \bot$. Since $\Vdash^{\Delta_{\At}, \Sigma_{\At}, \Theta_{\At}}_{\mathcal{D}} \bot$ and $\mathcal{D} \supseteq \mathcal{C}$ such that $\Vdash^{\Sigma_{\At}}_{\mathcal{D}} \phi$ for arbitrary $\Sigma_{\At}$, by~\ref{eq:supp-inf}, $\phi \Vdash^{\Delta_{\At}, \Theta_{\At}}_{\mathcal{C}} \bot$. Analogously, for an arbitrary $\mathcal{E} \supseteq \mathcal{C}$ and arbitrary $\Pi_{\At}$ such that $\Vdash^{\Pi_{\At}}_{\mathcal{E}} \psi$, obtain $\psi \Vdash^{\Delta_{\At}, \Theta_{\At}}_{\mathcal{C}} \bot$. Since $\Vdash^{\Gamma_{\At}}_{\mathcal{B}} \phi \oplus \psi$ and $\phi \Vdash^{\Delta_{\At}, \Theta_{\At}}_{\mathcal{C}} \bot$ and $\psi \Vdash^{\Delta_{\At}, \Theta_{\At}}_{\mathcal{C}} \bot$, by~\ref{eq:supp-plus}, we obtain $\Vdash^{\Gamma_{\At}, \Delta_{\At}, \Theta_{\At}}_{\mathcal{C}} \bot$. Since $\mathcal{C} \supseteq \mathcal{B}$ such that $\alpha, \beta \Vdash^{\Theta_{\At}}_{\mathcal{C}} \bot$ for arbitrary $\Theta_{\At}$, and $\Vdash^{\Gamma_{\At}, \Delta_{\At}, \Theta_{\At}}_{\mathcal{C}} \bot$, by~\ref{eq:supp-tensor}, $\Vdash^{\Gamma_{\At}, \Delta_{\At}}_{\mathcal{B}} \alpha \otimes \beta$.

        \item[$\chi = \alpha \multimap \beta:$] Assume $\Vdash^{\Gamma_{\At}}_{\mathcal{B}} \phi \oplus \psi$ and $\phi \Vdash^{\Delta_{\At}}_{\mathcal{B}} \alpha \multimap \beta$ and $\psi \Vdash^{\Delta_{\At}}_{\mathcal{B}} \alpha \multimap \beta$. Further assume that, for an arbitrary $\mathcal{C} \supseteq \mathcal{B}$ and arbitrary $\Theta_{\At}, \Omega_{\At}$, $\Vdash^{\Theta_{\At}}_{\mathcal{C}} \alpha$ and $\beta \Vdash^{\Omega_{\At}}_{\mathcal{C}} \bot$. Further assume that, for an arbitrary $\mathcal{D} \supseteq \mathcal{C}$ and arbitrary $\Sigma_{\At}$, $\Vdash^{\Sigma_{\At}}_{\mathcal{D}} \phi$. By monotonicity, $\phi \Vdash^{\Delta_{\At}}_{\mathcal{D}} \alpha \multimap \beta$, hence with $\Vdash^{\Sigma_{\At}}_{\mathcal{D}} \phi$, by~\ref{eq:supp-inf}, we obtain $\Vdash^{\Delta_{\At}, \Sigma_{\At}}_{\mathcal{D}} \alpha \multimap \beta$. From $\Vdash^{\Delta_{\At}, \Sigma_{\At}}_{\mathcal{D}} \alpha \multimap \beta$, $\Vdash^{\Theta_{\At}}_{\mathcal{C}} \alpha$ and $\beta \Vdash^{\Omega_{\At}}_{\mathcal{C}} \bot$ (thus also $\Vdash^{\Theta_{\At}}_{\mathcal{D}} \alpha$ and $\beta \Vdash^{\Omega_{\At}}_{\mathcal{D}} \bot$), by~\ref{eq:supp-imply}, we obtain $\Vdash^{\Delta_{\At}, \Sigma_{\At}, \Theta_{\At}, \Omega_{\At}}_{\mathcal{D}} \bot$. Since $\Vdash^{\Delta_{\At}, \Sigma_{\At}, \Theta_{\At}, \Omega_{\At}}_{\mathcal{D}} \bot$ and $\mathcal{D} \supseteq \mathcal{C}$ such that $\Vdash^{\Sigma_{\At}}_{\mathcal{D}} \phi$ for arbitrary $\Sigma_{\At}$, by~\ref{eq:supp-inf}, $\phi \Vdash^{\Delta_{\At}, \Theta_{\At}, \Omega_{\At}}_{\mathcal{C}} \bot$. Analogously, for an arbitrary $\mathcal{E} \supseteq \mathcal{C}$ and arbitrary $\Pi_{\At}$ such that $\Vdash^{\Pi_{\At}}_{\mathcal{E}} \psi$, obtain $\psi \Vdash^{\Delta_{\At}, \Theta_{\At}, \Omega_{\At}}_{\mathcal{C}} \bot$.  Since $\Vdash^{\Gamma_{\At}}_{\mathcal{B}} \phi \oplus \psi$ and $\phi \Vdash^{\Delta_{\At}, \Theta_{\At}, \Omega_{\At}}_{\mathcal{C}} \bot$ and $\psi \Vdash^{\Delta_{\At}, \Theta_{\At}, \Omega_{\At}}_{\mathcal{C}} \bot$, by~\ref{eq:supp-plus}, we obtain $\Vdash^{\Gamma_{\At}, \Delta_{\At}, \Theta_{\At}, \Omega_{\At}}_{\mathcal{C}} \bot$. Since $\mathcal{C} \supseteq \mathcal{B}$ such that $\Vdash^{\Theta_{\At}}_{\mathcal{C}} \alpha$ and $\beta \Vdash^{\Omega_{\At}}_{\mathcal{C}} \bot$ for arbitrary  $\Theta_{\At}, \Omega_{\At}$, and $\Vdash^{\Gamma_{\At}, \Delta_{\At}, \Theta_{\At}, \Omega_{\At}}_{\mathcal{C}} \bot$, by~\ref{eq:supp-imply}, $\Vdash^{\Gamma_{\At}, \Delta_{\At}}_{\mathcal{B}} \alpha \multimap \beta$.

        \item[$\chi = 1:$] Assume $\Vdash^{\Gamma_{\At}}_{\mathcal{B}} \phi \oplus \psi$ and $\phi \Vdash^{\Delta_{\At}}_{\mathcal{B}} 1$ and $\psi \Vdash^{\Delta_{\At}}_{\mathcal{B}} 1$. Now assume that, for an arbitrary $\mathcal{C} \supseteq \mathcal{B}$ and arbitrary $\Theta_{\At}$, $\Vdash^{\Theta_{\At}}_{\mathcal{C}} \bot$. Further assume that, for an arbitrary $\mathcal{D} \supseteq \mathcal{C}$ and arbitrary $\Sigma_{\At}$, $\Vdash^{\Sigma_{\At}}_{\mathcal{D}} \phi$. By monotonicity, $\phi \Vdash^{\Delta_{\At}}_{\mathcal{D}} 1$, hence with $\Vdash^{\Sigma_{\At}}_{\mathcal{D}} \phi$, by~\ref{eq:supp-inf}, we obtain $\Vdash^{\Delta_{\At}, \Sigma_{\At}}_{\mathcal{D}} 1$. From $\Vdash^{\Delta_{\At}, \Sigma_{\At}}_{\mathcal{D}} 1$ and $\Vdash^{\Theta_{\At}}_{\mathcal{C}} \bot$ (thus also $\Vdash^{\Theta_{\At}}_{\mathcal{D}} \bot$), by~\ref{eq:supp-1}, we obtain $\Vdash^{\Delta_{\At}, \Sigma_{\At}, \Theta_{\At}}_{\mathcal{D}} \bot$. Since $\Vdash^{\Delta_{\At}, \Sigma_{\At}, \Theta_{\At}}_{\mathcal{D}} \bot$ and $\mathcal{D} \supseteq \mathcal{C}$ such that $\Vdash^{\Sigma_{\At}}_{\mathcal{D}} \phi$ for arbitrary $\Sigma_{\At}$, by~\ref{eq:supp-inf}, $\phi \Vdash^{\Delta_{\At}, \Theta_{\At}}_{\mathcal{C}} \bot$. Analogously, for an arbitrary $\mathcal{E} \supseteq \mathcal{C}$ and arbitrary $\Pi_{\At}$ such that $\Vdash^{\Pi_{\At}}_{\mathcal{E}} \psi$, obtain $\psi \Vdash^{\Delta_{\At}, \Theta_{\At}}_{\mathcal{C}} \bot$. Since $\Vdash^{\Gamma_{\At}}_{\mathcal{B}} \phi \oplus \psi$ and $\phi \Vdash^{\Delta_{\At}, \Theta_{\At}}_{\mathcal{C}} \bot$ and $\psi \Vdash^{\Delta_{\At}, \Theta_{\At}}_{\mathcal{C}} \bot$, by~\ref{eq:supp-plus}, we obtain $\Vdash^{\Gamma_{\At}, \Delta_{\At}, \Theta_{\At}}_{\mathcal{C}} \bot$. Since $\mathcal{C} \supseteq \mathcal{B}$ such that $\Vdash^{\Theta_{\At}}_{\mathcal{C}} \bot$ for arbitrary $\Theta_{\At}$, and $\Vdash^{\Gamma_{\At}, \Delta_{\At}, \Theta_{\At}}_{\mathcal{C}} \bot$, by~\ref{eq:supp-1}, $\Vdash^{\Gamma_{\At}, \Delta_{\At}}_{\mathcal{B}} 1$.

        \item[$\chi = \alpha \parr \beta:$] Assume $\Vdash^{\Gamma_{\At}}_{\mathcal{B}} \phi \oplus \psi$ and $\phi \Vdash^{\Delta_{\At}}_{\mathcal{B}} \alpha \parr \beta$ and $\psi \Vdash^{\Delta_{\At}}_{\mathcal{B}} \alpha \parr \beta$. Now assume that, for an arbitrary $\mathcal{C} \supseteq \mathcal{B}$ and arbitrary $\Theta_{\At}$, $\Omega_{\At}$, $\alpha \Vdash^{\Theta_{\At}}_{\mathcal{C}} \bot$ and $\beta \Vdash^{\Omega_{\At}}_{\mathcal{C}} \bot$. Further assume that, for an arbitrary $\mathcal{D} \supseteq \mathcal{C}$ and arbitrary $\Sigma_{\At}$, $\Vdash^{\Sigma_{\At}}_{\mathcal{D}} \phi$. By monotonicity, $\phi \Vdash^{\Delta_{\At}}_{\mathcal{D}} \alpha \parr \beta$, hence with $\Vdash^{\Sigma_{\At}}_{\mathcal{D}} \phi$, by~\ref{eq:supp-inf}, we obtain $\Vdash^{\Delta_{\At}, \Sigma_{\At}}_{\mathcal{D}} \alpha \parr \beta$. From $\Vdash^{\Delta_{\At}, \Sigma_{\At}}_{\mathcal{D}} \alpha \parr \beta$ and $\alpha \Vdash^{\Theta_{\At}}_{\mathcal{C}} \bot$ and $\beta \Vdash^{\Omega_{\At}}_{\mathcal{C}} \bot$ (thus also $\alpha \Vdash^{\Theta_{\At}}_{\mathcal{D}} \bot$ and $\beta \Vdash^{\Omega_{\At}}_{\mathcal{D}} \bot$), by~\ref{eq:supp-parr}, we obtain $\Vdash^{\Delta_{\At}, \Sigma_{\At}, \Theta_{\At}, \Omega_{\At}}_{\mathcal{D}} \bot$. Since $\Vdash^{\Delta_{\At}, \Sigma_{\At}, \Theta_{\At}, \Omega_{\At}}_{\mathcal{D}} \bot$ and $\mathcal{D} \supseteq \mathcal{C}$ such that $\Vdash^{\Sigma_{\At}}_{\mathcal{D}} \phi$ for arbitrary $\Sigma_{\At}$, by~\ref{eq:supp-inf}, $\phi \Vdash^{\Delta_{\At}, \Theta_{\At}, \Omega_{\At}}_{\mathcal{C}} \bot$. Analogously, for an arbitrary $\mathcal{E} \supseteq \mathcal{C}$ and arbitrary $\Pi_{\At}$ such that $\Vdash^{\Pi_{\At}}_{\mathcal{E}} \psi$, obtain $\psi \Vdash^{\Delta_{\At}, \Theta_{\At}, \Omega_{\At}}_{\mathcal{C}} \bot$. Since $\Vdash^{\Gamma_{\At}}_{\mathcal{B}} \phi \oplus \psi$ and $\phi \Vdash^{\Delta_{\At}, \Theta_{\At}, \Omega_{\At}}_{\mathcal{C}} \bot$ and $\psi \Vdash^{\Delta_{\At}, \Theta_{\At}, \Omega_{\At}}_{\mathcal{C}} \bot$, by~\ref{eq:supp-plus}, we obtain $\Vdash^{\Gamma_{\At}, \Delta_{\At}, \Theta_{\At}, \Omega_{\At}}_{\mathcal{C}} \bot$. Since $\mathcal{C} \supseteq \mathcal{B}$ such that $\alpha \Vdash^{\Theta_{\At}}_{\mathcal{C}} \bot$ and $\beta \Vdash^{\Omega_{\At}}_{\mathcal{C}} \bot$ for arbitrary $\Theta_{\At}$, $\Omega_{\At}$, and $\Vdash^{\Gamma_{\At}, \Delta_{\At}, \Theta_{\At}, \Omega_{\At}}_{\mathcal{C}} \bot$, by~\ref{eq:supp-parr}, $\Vdash^{\Gamma_{\At}, \Delta_{\At}}_{\mathcal{B}} \alpha \parr \beta$.

        \item[$\chi = \alpha \with \beta:$] Assume $\Vdash^{\Gamma_{\At}}_{\mathcal{B}} \phi \oplus \psi$ and $\phi \Vdash^{\Delta_{\At}}_{\mathcal{B}} \alpha \with \beta$ and $\psi \Vdash^{\Delta_{\At}}_{\mathcal{B}} \alpha \with \beta$. Further assume that, for an arbitrary $\mathcal{C} \supseteq \mathcal{B}$ and arbitrary $\Sigma_{\At}$, $\Vdash^{\Sigma_{\At}}_{\mathcal{C}} \phi$. By monotonicity, $\phi \Vdash^{\Delta_{\At}}_{\mathcal{C}} \alpha \with \beta$, hence with $\Vdash^{\Sigma_{\At}}_{\mathcal{C}} \phi$, by~\ref{eq:supp-inf}, we obtain $\Vdash^{\Delta_{\At}, \Sigma_{\At}}_{\mathcal{C}} \alpha \with \beta$. By~\ref{eq:supp-and}, we thus obtain $\Vdash^{\Delta_{\At}, \Sigma_{\At}}_{\mathcal{C}} \alpha$ and $\Vdash^{\Delta_{\At}, \Sigma_{\At}}_{\mathcal{C}}\beta$. Since $\mathcal{C} \supseteq \mathcal{B}$ such that $\Vdash^{\Sigma_{\At}}_{\mathcal{C}} \phi$ for arbitrary $\Sigma_{\At}$, by~\ref{eq:supp-inf}, $\phi \Vdash^{\Delta_{\At}}_{\mathcal{B}} \alpha$ and $\phi \Vdash^{\Delta_{\At}}_{\mathcal{B}} \beta$. Analogously, for an arbitrary $\mathcal{D} \supseteq \mathcal{B}$ and arbitrary $\Pi_{\At}$ such that $\Vdash^{\Pi_{\At}}_{\mathcal{D}} \psi$, obtain $\psi \Vdash^{\Delta_{\At}}_{\mathcal{B}} \alpha$ and $\psi \Vdash^{\Delta_{\At}}_{\mathcal{B}} \beta$. Since $\Vdash^{\Gamma_{\At}}_{\mathcal{B}} \phi \oplus \psi$ and $\phi \Vdash^{\Delta_{\At}}_{\mathcal{B}} \alpha$ and $\psi \Vdash^{\Delta_{\At}}_{\mathcal{B}} \alpha$, by the induction hypothesis, we obtain $\Vdash^{\Gamma_{\At}, \Delta_{\At}}_{\mathcal{B}} \alpha$. Analogously, we obtain $\Vdash^{\Gamma_{\At}, \Delta_{\At}}_{\mathcal{B}} \beta$. Now, by~\ref{eq:supp-and}, we obtain $\Vdash^{\Gamma_{\At}, \Delta_{\At}}_{\mathcal{B}} \alpha \with \beta$ as required.

        \item[$\chi = \alpha \oplus \beta:$] Assume $\Vdash^{\Gamma_{\At}}_{\mathcal{B}} \phi \oplus \psi$ and $\phi \Vdash^{\Delta_{\At}}_{\mathcal{B}} \alpha \oplus \beta$ and $\psi \Vdash^{\Delta_{\At}}_{\mathcal{B}} \alpha \oplus \beta$. Now assume that, for an arbitrary $\mathcal{C} \supseteq \mathcal{B}$ and arbitrary $\Theta_{\At}$, $\alpha \Vdash^{\Theta_{\At}}_{\mathcal{C}} \bot$ and $\beta \Vdash^{\Theta_{\At}}_{\mathcal{C}} \bot$. Further assume that, for an arbitrary $\mathcal{D} \supseteq \mathcal{C}$ and arbitrary $\Sigma_{\At}$, $\Vdash^{\Sigma_{\At}}_{\mathcal{D}} \phi$. By monotonicity, $\phi \Vdash^{\Delta_{\At}}_{\mathcal{D}} \alpha \oplus \beta$, hence with $\Vdash^{\Sigma_{\At}}_{\mathcal{D}} \phi$, by~\ref{eq:supp-inf}, we obtain $\Vdash^{\Delta_{\At}, \Sigma_{\At}}_{\mathcal{D}} \alpha \oplus \beta$. From $\Vdash^{\Delta_{\At}, \Sigma_{\At}}_{\mathcal{D}} \alpha \oplus \beta$ and $\alpha \Vdash^{\Theta_{\At}}_{\mathcal{C}} \bot$ and $\beta \Vdash^{\Theta_{\At}}_{\mathcal{C}} \bot$ (thus also $\alpha \Vdash^{\Theta_{\At}}_{\mathcal{D}} \bot$ and $\beta \Vdash^{\Theta_{\At}}_{\mathcal{D}} \bot$), by~\ref{eq:supp-plus}, we obtain $\Vdash^{\Delta_{\At}, \Sigma_{\At}, \Theta_{\At}}_{\mathcal{D}} \bot$. Since $\Vdash^{\Delta_{\At}, \Sigma_{\At}, \Theta_{\At}}_{\mathcal{D}} \bot$ and $\mathcal{D} \supseteq \mathcal{C}$ such that $\Vdash^{\Sigma_{\At}}_{\mathcal{D}} \phi$ for arbitrary $\Sigma_{\At}$, by~\ref{eq:supp-inf}, $\phi \Vdash^{\Delta_{\At}, \Theta_{\At}}_{\mathcal{C}} \bot$. Analogously, for an arbitrary $\mathcal{E} \supseteq \mathcal{C}$ and arbitrary $\Pi_{\At}$ such that $\Vdash^{\Pi_{\At}}_{\mathcal{E}} \psi$, obtain $\psi \Vdash^{\Delta_{\At}, \Theta_{\At}}_{\mathcal{C}} \bot$. Since $\Vdash^{\Gamma_{\At}}_{\mathcal{B}} \phi \oplus \psi$ and $\phi \Vdash^{\Delta_{\At}, \Theta_{\At}}_{\mathcal{C}} \bot$ and $\psi \Vdash^{\Delta_{\At}, \Theta_{\At}}_{\mathcal{C}} \bot$, by~\ref{eq:supp-plus}, we obtain $\Vdash^{\Gamma_{\At}, \Delta_{\At}, \Theta_{\At}}_{\mathcal{C}} \bot$. Since $\mathcal{C} \supseteq \mathcal{B}$ such that $\alpha \Vdash^{\Theta_{\At}}_{\mathcal{C}} \bot$ and $\beta \Vdash^{\Theta_{\At}}_{\mathcal{C}} \bot$ for arbitrary $\Theta_{\At}$, and $\Vdash^{\Gamma_{\At}, \Delta_{\At}, \Theta_{\At}}_{\mathcal{C}} \bot$, by~\ref{eq:supp-plus}, $\Vdash^{\Gamma_{\At}, \Delta_{\At}}_{\mathcal{B}} \alpha \oplus \beta$.

        \item[$\chi = \top:$] Assume $\Vdash^{\Gamma_{\At}}_{\mathcal{B}} \phi \oplus \psi$ and $\phi \Vdash^{\Delta_{\At}}_{\mathcal{B}} \top$ and $\psi \Vdash^{\Delta_{\At}}_{\mathcal{B}} \top$. By~\ref{eq:supp-top}, $\Vdash^{\Gamma_{\At}, \Delta_{\At}}_{\mathcal{B}} \top$.

        \item[$\chi = 0:$] Assume $\Vdash^{\Gamma_{\At}}_{\mathcal{B}} \phi \oplus \psi$ and $\phi \Vdash^{\Delta_{\At}}_{\mathcal{B}} 0$ and $\psi \Vdash^{\Delta_{\At}}_{\mathcal{B}} 0$ for arbitrary $\Delta_{\At}$. Further assume that, for an arbitrary $\mathcal{C} \supseteq \mathcal{B}$ and arbitrary $\Sigma_{\At}$, $\Vdash^{\Sigma_{\At}}_{\mathcal{C}} \phi$. Since $\phi \Vdash^{\Delta_{\At}}_{\mathcal{B}} 0$ and $\Vdash^{\Sigma_{\At}}_{\mathcal{C}} \phi$, by~\ref{eq:supp-inf}, we obtain $\Vdash^{\Delta_{\At}, \Sigma_{\At}}_{\mathcal{C}} 0$. By~\ref{eq:supp-0}, we thus obtain $\Vdash^{\Delta_{\At}, \Sigma_{\At}, \Theta_{\At}}_{\mathcal{C}} \bot$ for arbitrary $\Theta_{\At}$. Since $\mathcal{C} \supseteq \mathcal{B}$ such that  $\Vdash^{\Sigma_{\At}}_{\mathcal{C}} \phi$ for arbitrary $\Sigma_{\At}$, by~\ref{eq:supp-inf}, $\phi \Vdash^{\Delta_{\At}, \Theta_{\At}}_{\mathcal{B}} \bot$. Analogously, for an arbitrary $\mathcal{D} \supseteq \mathcal{B}$ and arbitrary $\Pi_{\At}$ such that $\Vdash^{\Pi_{\At}}_{\mathcal{D}} \psi$, obtain $\psi \Vdash^{\Delta_{\At}, \Theta_{\At}}_{\mathcal{B}} \bot$.  Since $\Vdash^{\Gamma_{\At}}_{\mathcal{B}} \phi \oplus \psi$ and $\phi \Vdash^{\Delta_{\At}, \Theta_{\At}}_{\mathcal{B}} \bot$ and $\psi \Vdash^{\Delta_{\At}, \Theta_{\At}}_{\mathcal{B}} \bot$, by~\ref{eq:supp-plus}, we obtain $\Vdash^{\Gamma_{\At}, \Delta_{\At}, \Theta_{\At}}_{\mathcal{C}} \bot$.Since $\Theta_{\At}$ is arbitrary, we obtain $\Vdash^{\Gamma_{\At}, \Delta_{\At}}_{\mathcal{B}} 0$ by~\ref{eq:supp-0}.
    \end{description}
\end{proof}

\begin{lemma} \label{lemma:genericzero}
    If $\Vdash^{\Gamma_{\At}}_{\mathcal{B}} 0$ then $\Vdash^{\Gamma_{\At}, \Delta_{\At}}_{\mathcal{B}} \chi$.
\end{lemma}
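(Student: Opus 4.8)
The plan is to prove Lemma~\ref{lemma:genericzero} --- that $\Vdash^{\Gamma_{\At}}_{\mathcal{B}} 0$ implies $\Vdash^{\Gamma_{\At}, \Delta_{\At}}_{\mathcal{B}} \chi$ --- by reducing the arbitrary conclusion $\chi$ to the case of $\bot$, which the definition clause~\ref{eq:supp-0} already handles, and then ``inflating'' a proof of $\bot$ back up to a proof of $\chi$ using the support clauses. First I would unfold the hypothesis: by clause~\ref{eq:supp-0}, $\Vdash^{\Gamma_{\At}}_{\mathcal{B}} 0$ means $\Vdash^{\Gamma_{\At}, \Sigma_{\At}}_{\mathcal{B}} \bot$ for \emph{all} atomic multisets $\Sigma_{\At}$; in particular, taking $\Sigma_{\At} = \Delta_{\At}$, we get $\Vdash^{\Gamma_{\At}, \Delta_{\At}}_{\mathcal{B}} \bot$. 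So the real content of the lemma is the implication ``$\Vdash^{\Lambda_{\At}}_{\mathcal{B}} \bot$ implies $\Vdash^{\Lambda_{\At}}_{\mathcal{B}} \chi$ for every formula $\chi$,'' and I would isolate this as the key step (an \emph{ex falso} principle at the level of support).

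To prove that key step I would proceed by induction on the structure of $\chi$. The atomic case is essentially Lemma~\ref{lemma:bottomisspecial} together with Lemma~\ref{lemma:derivabilityimpliessupport}: from $\Vdash^{\Lambda_{\At}}_{\mathcal{B}} \bot$ we get $\Lambda_{\At} \vdash_{\mathcal{B}} \bot$, and since $p, \Delta_{\At} \vdash_{\mathcal{C}} \bot$ composed with $\Lambda_{\At} \vdash_{\mathcal{C}} \bot$ (after monotonicity) gives... --- actually more directly, for atomic $\chi = p$, clause~\ref{eq:supp-at} asks that whenever $p, \Delta_{\At} \vdash_{\mathcal{C}} \bot$ we have $\Lambda_{\At}, \Delta_{\At} \vdash_{\mathcal{C}} \bot$; but $\Lambda_{\At} \vdash_{\mathcal{C}} \bot$ holds by monotonicity from $\Lambda_{\At} \vdash_{\mathcal{B}} \bot$, and then a weakening-style composition (via Subs with the premise $p, \Delta_{\At} \vdash \bot$ discarded, i.e.\ padding the context) yields the result --- I would need to check that adding $\Delta_{\At}$ to the left of a derivable $\Lambda_{\At} \vdash_{\mathcal{C}} \bot$ is legitimate, which follows because $\Vdash^{\Lambda_{\At}}_{\mathcal{B}}\bot$ by Lemma~\ref{lemma:floatingatom}/\ref{lemma:bottomisspecial} actually gives $\Vdash^{\Lambda_{\At},\Delta_{\At}}_{\mathcal{B}}\bot$, hence $\Lambda_{\At},\Delta_{\At}\vdash_{\mathcal{B}}\bot$. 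For the connective cases, each clause reduces support of a compound formula to a condition of the form ``for all $\mathcal{C}\supseteq\mathcal{B}$ and all relevant atomic multisets, if [certain supports/derivabilities hold] then $\Vdash^{\cdots}_{\mathcal{C}}\bot$''. Since we are trying to \emph{establish} such a conclusion $\Vdash^{\cdots}_{\mathcal{C}}\bot$, and our hypothesis (after monotonicity, Lemma~\ref{lemma:monotonicity}) is $\Vdash^{\Lambda_{\At}}_{\mathcal{C}}\bot$, we can feed the extra atomic multisets into $\Lambda_{\At}$ via Lemma~\ref{lemma:floatingatom} and Lemma~\ref{lemma:bottomisspecial} to reach exactly the needed $\bot$-support --- i.e.\ for $\otimes,\multimap,1,\oplus$ the hypotheses of the ``if'' are simply not needed, we discard them and pad the context. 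The cases $\with$ and $\top$ are immediate: $\top$ by clause~\ref{eq:supp-top}, and $\with$ by applying the induction hypothesis to each conjunct. The case $0$ itself is handled by clause~\ref{eq:supp-0} reducing to the $\bot$ case with arbitrary padding, again via Lemma~\ref{lemma:floatingatom}.

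The main obstacle I anticipate is purely bookkeeping: making sure the atomic-multiset arithmetic goes through uniformly, i.e.\ that from $\Vdash^{\Lambda_{\At}}_{\mathcal{C}}\bot$ one can always obtain $\Vdash^{\Lambda_{\At},\Xi_{\At}}_{\mathcal{C}}\bot$ for an arbitrary $\Xi_{\At}$ introduced by unfolding a connective clause. This is exactly Lemma~\ref{lemma:floatingatom} combined with Lemma~\ref{lemma:bottomisspecial} (lift $\Vdash^{\Lambda_{\At}}_{\mathcal{C}}\bot$ to $\Lambda_{\At}\vdash_{\mathcal{C}}\bot$, weaken the context --- which is legitimate since Subs lets us discard unused atomic premises --- back to $\Lambda_{\At},\Xi_{\At}\vdash_{\mathcal{C}}\bot$, then back to support), so no new machinery is required; one just has to apply it at the right moments. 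Alternatively, and perhaps more cleanly, I would first prove the auxiliary fact $\Vdash^{\Lambda_{\At}}_{\mathcal{B}}\bot \Rightarrow \Vdash^{\Lambda_{\At},\Xi_{\At}}_{\mathcal{B}}\bot$ (immediate from Lemmas~\ref{lemma:bottomisspecial} and the weakening admissibility of Subs on contexts), and then the induction on $\chi$ becomes a short routine check. Once the key \emph{ex falso} step is in place, the lemma statement follows by instantiating $\Sigma_{\At}=\Delta_{\At}$ in clause~\ref{eq:supp-0} as noted above.
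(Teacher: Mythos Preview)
Your reduction to an ``ex falso'' principle is where the argument breaks. The claim that $\Vdash^{\Lambda_{\At}}_{\mathcal{B}} \bot$ implies $\Vdash^{\Lambda_{\At}}_{\mathcal{B}} \chi$ is \emph{false} in this linear semantics: in $\MALL$ we do not have $\bot \vdash \chi$ (there is no $\bot$-elimination rule; $\bot$ is merely the unit of $\parr$), and by soundness and completeness the corresponding semantic statement fails as well. Concretely, the step you rely on throughout --- passing from $\Lambda_{\At} \vdash_{\mathcal{C}} \bot$ to $\Lambda_{\At}, \Xi_{\At} \vdash_{\mathcal{C}} \bot$ --- is weakening, and it is not available: in the structural base $\mathcal{S}$ we have $\bot \vdash_{\mathcal{S}} \bot$ but not $\bot, q \vdash_{\mathcal{S}} \bot$. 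Your justification that ``Subs lets us discard unused atomic premises'' misreads Subs, which is a composition (cut) rule and preserves resource counts exactly; Lemmas~\ref{lemma:floatingatom} and~\ref{lemma:bottomisspecial} only shuttle atoms between the left-hand side and the superscript, they do not manufacture new ones.

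The paper's proof avoids this trap by \emph{not} instantiating $\Sigma_{\At} = \Delta_{\At}$ up front. It keeps the full strength of clause~\ref{eq:supp-0} --- namely $\Vdash^{\Gamma_{\At}, \Sigma_{\At}}_{\mathcal{B}} \bot$ for \emph{every} $\Sigma_{\At}$ --- available throughout the induction on $\chi$. Then, inside each connective case, when the clause for $\chi$ introduces extra atomic multisets (say $\Theta_{\At}$, $\Omega_{\At}$) in the consequent $\Vdash^{\Gamma_{\At}, \Delta_{\At}, \Theta_{\At}, \Omega_{\At}}_{\mathcal{C}} \bot$, one simply \emph{chooses} $\Sigma_{\At} = \Delta_{\At}, \Theta_{\At}, \Omega_{\At}$ (and uses monotonicity to pass to $\mathcal{C}$). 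The ``weakening'' you need is thus supplied by the universal quantifier in the hypothesis $\Vdash^{\Gamma_{\At}}_{\mathcal{B}} 0$, not by any structural property of derivability or support. If you rework your induction carrying $\Vdash^{\Gamma_{\At}}_{\mathcal{B}} 0$ (rather than the single instance $\Vdash^{\Gamma_{\At}, \Delta_{\At}}_{\mathcal{B}} \bot$), the argument goes through and coincides with the paper's.
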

\begin{proof}
    We shall prove the statement inductively. Note that in the case of $(\with)$, we require an induction hypothesis stating that \[\text{if } \Vdash^{\Gamma_{\At}}_{\mathcal{B}} 0 \text{ then } \Vdash^{\Gamma_{\At}, \Delta_{\At}}_{\mathcal{B}} \tau\] holds true for any proper subformula $\tau$ of $\chi$, as in the proof of Lemma~\ref{lemma:monotonicity}.
    
    \begin{description}[itemsep=0.5em]
        \item[$\chi = p:$] Assume $\Vdash^{\Gamma_{\At}}_{\mathcal{B}} 0$. Further assume that, for an arbitrary $\mathcal{C} \supseteq \mathcal{B}$ and arbitrary $\Theta_{\At}$, $p, \Theta_{\At}, \Delta_{\At} \vdash_{\mathcal{C}} \bot$. Now, from $\Vdash^{\Gamma_{\At}}_{\mathcal{B}} 0$ we obtain $\Vdash^{\Gamma_{\At}, \Sigma_{\At}}_{\mathcal{B}} \bot$ for all $\Sigma_{\At}$ by~\ref{eq:supp-0}. In particular, then, let $\Sigma_{\At} = \Delta_{\At} \cup \Theta_{\At}$ for arbitrary $\Delta_{\At}$, hence $\Vdash^{\Gamma_{\At}, \Delta_{\At}, \Theta_{\At}}_{\mathcal{B}} \bot$, hence $\Vdash^{\Gamma_{\At}, \Delta_{\At}, \Theta_{\At}}_{\mathcal{C}} \bot$  by monotonicity, hence $\Gamma_{\At}, \Delta_{\At}, \Theta_{\At} \vdash_{\mathcal{C}} \bot$ by Lemma~\ref{lemma:bottomisspecial}. Since $\mathcal{C} \supseteq \mathcal{B}$ such that $p, \Theta_{\At} \vdash_{\mathcal{C}} \bot$ for arbitrary $\Theta_{\At}$, and $\Gamma_{\At}, \Delta_{\At}, \Theta_{\At} \vdash_{\mathcal{C}} \bot$, by~\ref{eq:supp-at}, $\Vdash^{\Gamma_{\At}, \Delta_{\At}}_{\mathcal{B}} p$.

        \item[$\chi = \alpha \otimes \beta:$] Assume $\Vdash^{\Gamma_{\At}}_{\mathcal{B}} 0$. Further assume that, for an arbitrary $\mathcal{C} \supseteq \mathcal{B}$ and arbitrary $\Theta_{\At}$, $\alpha, \beta \Vdash^{\Theta_{\At}}_{\mathcal{C}} \bot$. Now, from $\Vdash^{\Gamma_{\At}}_{\mathcal{B}} 0$ we obtain $\Vdash^{\Gamma_{\At}, \Sigma_{\At}}_{\mathcal{B}} \bot$ for all $\Sigma_{\At}$ by~\ref{eq:supp-0}. In particular, then, let $\Sigma_{\At} = \Delta_{\At} \cup \Theta_{\At}$ for arbitrary $\Delta_{\At}$, hence $\Vdash^{\Gamma_{\At}, \Delta_{\At} \Theta_{\At}}_{\mathcal{B}} \bot$, hence $\Vdash^{\Gamma_{\At}, \Delta_{\At}, \Theta_{\At}}_{\mathcal{C}} \bot$ by monotonicity. Since $\mathcal{C} \supseteq \mathcal{B}$ such that $\alpha, \beta \Vdash^{\Theta_{\At}}_{\mathcal{C}} \bot$ for arbitrary $\Theta_{\At}$, and $\Vdash^{\Gamma_{\At}, \Delta_{\At}, \Theta_{\At}}_{\mathcal{C}} \bot$, by~\ref{eq:supp-tensor}, $\Vdash^{\Gamma_{\At}, \Delta_{\At}}_{\mathcal{B}} \alpha \otimes \beta$.

        \item[$\chi = \alpha \multimap \beta:$] Assume $\Vdash^{\Gamma_{\At}}_{\mathcal{B}} 0$. Further assume that, for an arbitrary $\mathcal{C} \supseteq \mathcal{B}$ and arbitrary $\Theta_{\At}, \Omega_{\At}$, $\Vdash^{\Theta_{\At}}_{\mathcal{C}} \alpha$ and $\beta \Vdash^{\Omega_{\At}}_{\mathcal{C}} \bot$. Now, from $\Vdash^{\Gamma_{\At}}_{\mathcal{B}} 0$ we obtain $\Vdash^{\Gamma_{\At}, \Sigma_{\At}}_{\mathcal{B}} \bot$ for all $\Sigma_{\At}$ by~\ref{eq:supp-0}. In particular, then, let $\Sigma_{\At} = \Delta_{\At} \cup \Theta_{\At} \cup \Omega_{\At}$ for arbitrary $\Delta_{\At}$, hence $\Vdash^{\Gamma_{\At}, \Delta_{\At} \Theta_{\At}, \Omega_{\At}}_{\mathcal{B}} \bot$, hence $\Vdash^{\Gamma_{\At}, \Delta_{\At}, \Theta_{\At}, \Omega_{\At}}_{\mathcal{C}} \bot$ by monotonicity. Since $\mathcal{C} \supseteq \mathcal{B}$ such that $\Vdash^{\Theta_{\At}}_{\mathcal{C}} \alpha$ and $\beta \Vdash^{\Omega_{\At}}_{\mathcal{C}} \bot$ for arbitrary  $\Theta_{\At}, \Omega_{\At}$, and $\Vdash^{\Gamma_{\At}, \Delta_{\At}, \Theta_{\At}, \Omega_{\At}}_{\mathcal{C}} \bot$, by~\ref{eq:supp-imply}, $\Vdash^{\Gamma_{\At}, \Delta_{\At}}_{\mathcal{B}} \alpha \multimap \beta$.

        \item[$\chi = 1:$] Assume $\Vdash^{\Gamma_{\At}}_{\mathcal{B}} 0$. Further assume that, for an arbitrary $\mathcal{C} \supseteq \mathcal{B}$ and arbitrary $\Theta_{\At}$, $\Vdash^{\Theta_{\At}}_{\mathcal{C}} \bot$. Now, from $\Vdash^{\Gamma_{\At}}_{\mathcal{B}} 0$ we obtain $\Vdash^{\Gamma_{\At}, \Sigma_{\At}}_{\mathcal{B}} \bot$ for all $\Sigma_{\At}$ by~\ref{eq:supp-0}. In particular, then, let $\Sigma_{\At} = \Delta_{\At} \cup \Theta_{\At}$ for arbitrary $\Delta_{\At}$, hence $\Vdash^{\Gamma_{\At}, \Delta_{\At} \Theta_{\At}}_{\mathcal{B}} \bot$, hence $\Vdash^{\Gamma_{\At}, \Delta_{\At}, \Theta_{\At}}_{\mathcal{C}} \bot$ by monotonicity. Since $\mathcal{C} \supseteq \mathcal{B}$ such that $\Vdash^{\Theta_{\At}}_{\mathcal{C}} \bot$ for arbitrary $\Theta_{\At}$, and $\Vdash^{\Gamma_{\At}, \Delta_{\At}, \Theta_{\At}}_{\mathcal{C}} \bot$, by~\ref{eq:supp-1}, $\Vdash^{\Gamma_{\At}, \Delta_{\At}}_{\mathcal{B}} 1$.

        \item[$\chi = \alpha \parr \beta:$] Assume $\Vdash^{\Gamma_{\At}}_{\mathcal{B}} 0$. Further assume that, for an arbitrary $\mathcal{C} \supseteq \mathcal{B}$ and arbitrary $\Theta_{\At}$, $\Omega_{\At}$, $\alpha \Vdash^{\Theta_{\At}}_{\mathcal{C}} \bot$ and $\beta \Vdash^{\Omega_{\At}}_{\mathcal{C}} \bot$. Now, from $\Vdash^{\Gamma_{\At}}_{\mathcal{B}} 0$ we obtain $\Vdash^{\Gamma_{\At}, \Sigma_{\At}}_{\mathcal{B}} \bot$ for all $\Sigma_{\At}$ by~\ref{eq:supp-0}. In particular, then, let $\Sigma_{\At} = \Delta_{\At} \cup \Theta_{\At} \cup \Omega_{\At}$ for arbitrary $\Delta_{\At}$, hence $\Vdash^{\Gamma_{\At}, \Delta_{\At} \Theta_{\At}, \Omega_{\At}}_{\mathcal{B}} \bot$, hence $\Vdash^{\Gamma_{\At}, \Delta_{\At}, \Theta_{\At}, \Omega_{\At}}_{\mathcal{C}} \bot$ by monotonicity. Since $\mathcal{C} \supseteq \mathcal{B}$ such that $\alpha \Vdash^{\Theta_{\At}}_{\mathcal{C}} \bot$ and $\beta \Vdash^{\Omega_{\At}}_{\mathcal{C}} \bot$ for arbitrary $\Theta_{\At}$, $\Omega_{\At}$, and $\Vdash^{\Gamma_{\At}, \Delta_{\At}, \Theta_{\At}, \Omega_{\At}}_{\mathcal{C}} \bot$, by~\ref{eq:supp-parr}, $\Vdash^{\Gamma_{\At}, \Delta_{\At}}_{\mathcal{B}} \alpha \parr \beta$.

        \item[$\chi = \alpha \with \beta:$] Assume $\Vdash^{\Gamma_{\At}}_{\mathcal{B}} 0$. By the induction hypothesis, we obtain $\Vdash^{\Gamma_{\At}, \Delta_{\At}}_{\mathcal{B}} \alpha$ and $\Vdash^{\Gamma_{\At}, \Delta_{\At}}_{\mathcal{B}}\beta$. Now, by~\ref{eq:supp-and}, we obtain $\Vdash^{\Gamma_{\At}, \Delta_{\At}}_{\mathcal{B}} \alpha \with \beta$ as required.

        \item[$\chi = \alpha \oplus \beta:$] Assume $\Vdash^{\Gamma_{\At}}_{\mathcal{B}} 0$. Further assume that, for an arbitrary $\mathcal{C} \supseteq \mathcal{B}$ and arbitrary $\Theta_{\At}$, $\alpha \Vdash^{\Theta_{\At}}_{\mathcal{C}} \bot$ and $\beta \Vdash^{\Theta_{\At}}_{\mathcal{C}} \bot$. Now, from $\Vdash^{\Gamma_{\At}}_{\mathcal{B}} 0$ we obtain $\Vdash^{\Gamma_{\At}, \Sigma_{\At}}_{\mathcal{B}} \bot$ for all $\Sigma_{\At}$ by~\ref{eq:supp-0}. In particular, then, let $\Sigma_{\At} = \Delta_{\At} \cup \Theta_{\At}$ for arbitrary $\Delta_{\At}$, hence $\Vdash^{\Gamma_{\At}, \Delta_{\At} \Theta_{\At}}_{\mathcal{B}} \bot$, hence $\Vdash^{\Gamma_{\At}, \Delta_{\At}, \Theta_{\At}}_{\mathcal{C}} \bot$ by monotonicity. Since $\mathcal{C} \supseteq \mathcal{B}$ such that $\alpha \Vdash^{\Theta_{\At}}_{\mathcal{C}} \bot$ and $\beta \Vdash^{\Theta_{\At}}_{\mathcal{C}} \bot$ for arbitrary $\Theta_{\At}$, and $\Vdash^{\Gamma_{\At}, \Delta_{\At}, \Theta_{\At}}_{\mathcal{C}} \bot$, by~\ref{eq:supp-plus}, $\Vdash^{\Gamma_{\At}, \Delta_{\At}}_{\mathcal{B}} \alpha \oplus \beta$.

        \item[$\chi = \top:$] Assume $\Vdash^{\Gamma_{\At}}_{\mathcal{B}} 0$. By~\ref{eq:supp-top}, $\Vdash^{\Gamma_{\At}, \Delta_{\At}}_{\mathcal{B}} \top$.

        \item[$\chi = 0:$] Assume $\Vdash^{\Gamma_{\At}}_{\mathcal{B}} 0$. Now, by~\ref{eq:supp-0}, from $\Vdash^{\Gamma_{\At}}_{\mathcal{B}} 0$ we obtain $\Vdash^{\Gamma_{\At}, \Sigma_{\At}}_{\mathcal{B}} \bot$ for all $\Sigma_{\At}$. In particular, then, let $\Sigma_{\At} = \Delta_{\At} \cup \Theta_{\At}$ for arbitrary $\Delta_{\At}, \Theta_{\At}$, hence $\Vdash^{\Gamma_{\At}, \Delta_{\At} \Theta_{\At}}_{\mathcal{B}} \bot$. Since $\Theta_{\At}$ is arbitrary, by~\ref{eq:supp-0} we obtain $\Vdash^{\Gamma_{\At}, \Delta_{\At}}_{\mathcal{B}} 0$.
    \end{description}
\end{proof}

\section{Soundness}\label{sec:soundness}
In this section, we show that {\MALL} is sound with respect to our semantics -- meaning that every provable formula is genuinely valid.  In other words, we will prove that if $\Gamma\vdash_ {\MALL}\phi$ then  $\Gamma\Vdash\phi$. This follows from the semantic \emph{reductio ad absurdum}, shown next, along with the fact that $\Vdash$ respects $\MALL$ inference rules.

\begin{lemma} \label{lemma:genericraa}
    If $\Gamma, \neg \phi \Vdash_{\mathcal{B}} \bot$ then $\Gamma \Vdash_{\mathcal{B}} \phi$.
\end{lemma}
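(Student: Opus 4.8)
The plan is to peel the hypothesis down to an atomic left-hand context and then argue by induction on $\phi$. Concretely, I would first show it suffices to prove the following \emph{atomic-context form}: for every base $\mathcal{B}$ and atomic multiset $\Gamma_{\At}$, if $\neg\phi \Vdash^{\Gamma_{\At}}_{\mathcal{B}} \bot$ then $\Vdash^{\Gamma_{\At}}_{\mathcal{B}} \phi$. Granting this, the stated lemma follows by unfolding $\Gamma \Vdash_{\mathcal{B}} \phi$ with~\ref{eq:supp-inf}: fixing $\mathcal{C} \supseteq \mathcal{B}$ and atomic multisets $\Delta^i_{\At}$ supporting the members of $\Gamma$, Lemma~\ref{lemma:partialinf} (applied with $\neg\phi$ as the distinguished hypothesis and $\bot$ as conclusion) turns $\Gamma,\neg\phi \Vdash_{\mathcal{B}} \bot$ into $\neg\phi \Vdash^{\Sigma_{\At}}_{\mathcal{C}} \bot$ for the union $\Sigma_{\At}$ of the $\Delta^i_{\At}$, and the atomic-context form then yields $\Vdash^{\Sigma_{\At}}_{\mathcal{C}} \phi$, which is exactly what~\ref{eq:supp-inf} asks for.

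For the induction itself I would start by rephrasing the hypothesis: by Lemma~\ref{lemma:negatingformula} together with~\ref{eq:supp-inf}, the assumption $\neg\phi \Vdash^{\Gamma_{\At}}_{\mathcal{B}} \bot$ is equivalent to the condition $(\ast)$: for all $\mathcal{C} \supseteq \mathcal{B}$ and $\Delta_{\At}$, if $\phi \Vdash^{\Delta_{\At}}_{\mathcal{C}} \bot$ then $\Vdash^{\Gamma_{\At}, \Delta_{\At}}_{\mathcal{C}} \bot$. The task becomes deriving the support clause of $\phi$ from $(\ast)$. For every multiplicative/additive connective and unit the argument follows one template: assuming the premises of $\phi$'s support clause, I would establish $\phi \Vdash^{\Delta_{\At}}_{\mathcal{C}} \bot$ — with $\Delta_{\At}$ the union of the clause's side multisets — by unfolding it via~\ref{eq:supp-inf} and applying the matching ``generic elimination'' lemma (Lemmas~\ref{lemma:generictensor}, \ref{lemma:genericimplication}, \ref{lemma:genericone}, \ref{lemma:genericplus}, \ref{lemma:genericzero}, instantiated with $\chi := \bot$), or clause~\ref{eq:supp-parr} directly in the $\parr$ case, together with Lemma~\ref{lemma:monotonicity}; feeding $\phi \Vdash^{\Delta_{\At}}_{\mathcal{C}} \bot$ into $(\ast)$ then returns precisely the consequent of the clause. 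For $\phi$ atomic (including $\phi=\bot$) I would instead unfold~\ref{eq:supp-at} and pass between derivability of $\bot$ and support of $\bot$ using Lemmas~\ref{lemma:bottomisspecial} and~\ref{lemma:floatingatom} before invoking $(\ast)$, and the case $\phi = \top$ is immediate from~\ref{eq:supp-top}.

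The one case that genuinely consumes the induction hypothesis is $\phi = \psi_1 \with \psi_2$, since~\ref{eq:supp-and} demands $\Vdash^{\Gamma_{\At}}_{\mathcal{B}} \psi_1$ and $\Vdash^{\Gamma_{\At}}_{\mathcal{B}} \psi_2$ separately rather than a $\bot$-refutation. For the first conjunct I would check the $(\ast)$-condition relative to $\psi_1$: given $\psi_1 \Vdash^{\Delta_{\At}}_{\mathcal{C}} \bot$, one first obtains $\psi_1 \with \psi_2 \Vdash^{\Delta_{\At}}_{\mathcal{C}} \bot$ by unfolding it with~\ref{eq:supp-inf} and using~\ref{eq:supp-and} to extract $\psi_1$, then applies the original $(\ast)$ for $\psi_1 \with \psi_2$; the induction hypothesis on $\psi_1$ then yields $\Vdash^{\Gamma_{\At}}_{\mathcal{B}} \psi_1$, and symmetrically for $\psi_2$. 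I expect the main obstacle to be not any single case but the bookkeeping of the iterated base extensions and monotonicity steps, plus making sure each generic lemma is exactly strong enough to manufacture $\phi \Vdash^{\Delta_{\At}}_{\mathcal{C}} \bot$ in the multiplicative and additive cases; the $\with$ case is where the classical double-negation content of the $\bot$-restriction is actually used, so that is the one I would write out in full.
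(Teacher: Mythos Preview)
Your proposal is correct and follows essentially the same route as the paper: reduce the general $\Gamma$ to an atomic context via Lemma~\ref{lemma:partialinf}, then argue by induction on $\phi$, in each case manufacturing $\phi \Vdash^{\Delta_{\At}}_{\mathcal{C}} \bot$ from the premises of $\phi$'s support clause and discharging it against the hypothesis (your $(\ast)$, which the paper leaves in the form $\neg\phi \Vdash^{\ldots}_{\mathcal{C}} \bot$ and unwinds via Lemma~\ref{lemma:negatingformula} inside each case). Your identification of $\with$ as the only case needing the induction hypothesis, and of $\top$ as trivial, matches the paper exactly; note that invoking the generic elimination lemmas with $\chi := \bot$ is harmless but redundant, since those instances coincide with the left-to-right direction of the semantic clauses themselves.
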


\begin{proof}
Let $\Gamma = \{\psi^1,\dots,\psi^n\}$ and assume that, for an arbitrary $\mathcal{C} \supseteq \mathcal{B}$, for all $ \psi^i \in \Gamma$ $(1 \leq i \leq n)$ and arbitrary multisets $\Theta^{i}_{\At}$, $\Vdash^{\Theta^{i}_{\At}}_{\mathcal{C}} \psi^i$. 
The statement is proved inductively. 
We will illustrate the proof for $\parr,\&$ and $0$. The remaining cases are similar. %and are presented in detail in Appendix~\ref{sec:appendix}.
    \begin{description}[itemsep=0.4em]
                \item[$\phi = \alpha \parr \beta:$] Assume $\Gamma, \neg (\alpha \parr \beta) \Vdash_{\mathcal{B}} \bot$. Since $\Vdash^{\Theta^{i}_{\At}}_{\mathcal{C}} \psi^i$, $\forall \psi^i \in \Gamma$, we obtain $\neg (\alpha \parr \beta) \Vdash^{\Theta^{1}_{\At},\dots,\Theta^{n}_{\At}}_{\mathcal{C}} \bot$ by Lemma~\ref{lemma:partialinf}. 
 Now assume that, for an arbitrary $\mathcal{D} \supseteq \mathcal{C}$ and arbitrary $\Sigma_{\At}, \Omega_{\At}$, $\alpha\Vdash^{\Sigma_{\At}}_{\mathcal{D}} \bot$ and $\beta \Vdash^{\Omega_{\At}}_{\mathcal{D}} \bot$. Further assume that, for an arbitrary $\mathcal{E} \supseteq \mathcal{D}$ and arbitrary $\Pi_{\At}$, $\Vdash^{\Pi_{\At}}_{\mathcal{E}} \alpha \parr \beta$. 
Since $\Vdash^{\Pi_{\At}}_{\mathcal{E}} \alpha \parr \beta$ and $\alpha\Vdash^{\Sigma_{\At}}_{\mathcal{D}} \bot$ and $\beta \Vdash^{\Omega_{\At}}_{\mathcal{D}} \bot$ (thus also $\alpha\Vdash^{\Sigma_{\At}}_{\mathcal{E}} \bot$ and $\beta \Vdash^{\Omega_{\At}}_{\mathcal{E}} \bot$ by monotonicity), by~\ref{eq:supp-parr}, we obtain $\Vdash^{\Pi_{\At}, \Sigma_{\At}, \Omega_{\At}}_{\mathcal{E}} \bot$. 
Since also $\mathcal{E} \supseteq \mathcal{D}$ such that $\Vdash^{\Pi_{\At}}_{\mathcal{E}} \alpha \parr \beta$ for arbitrary $\Pi_{\At}$, we obtain $\alpha \parr \beta \Vdash^{\Sigma_{\At}, \Omega_{\At}}_{\mathcal{D}}\bot$ by~\ref{eq:supp-inf}. 
Hence, $\Vdash^{\Sigma_{\At}, \Omega_{\At}}_{\mathcal{D}} (\alpha \parr \beta) \multimap \bot$, by Lemma~\ref{lemma:negatingformula}, \ie~$\Vdash^{\Sigma_{\At}, \Omega_{\At}}_{\mathcal{D}} \neg (\alpha \parr \beta)$. 
Now, since $\neg (\alpha \parr \beta) , \Vdash^{\Theta^{1}_{\At},\dots,\Theta^{n}_{\At}}_{\mathcal{C}} \bot$ and $\Vdash^{\Sigma_{\At}, \Omega_{\At}}_{\mathcal{D}} \neg (\alpha \parr \beta)$, we obtain $\Vdash^{\Sigma_{\At}, \Omega_{\At}, \Theta^{1}_{\At},\dots,\Theta^{n}_{\At}}_{\mathcal{D}} \bot$ by~\ref{eq:supp-inf}. 
Since also $\mathcal{D} \supseteq \mathcal{C}$ such that $\alpha\Vdash^{\Sigma_{\At}}_{\mathcal{D}} \bot$ and $\beta \Vdash^{\Omega_{\At}}_{\mathcal{D}} \bot$ for arbitrary $\Sigma_{\At}, \Omega_{\At}$, by~\ref{eq:supp-parr}, we obtain $\Vdash^{\Theta^{1}_{\At},\dots,\Theta^{n}_{\At}}_{\mathcal{C}} \alpha \parr \beta$. 
Finally, since $\mathcal{C} \supseteq \mathcal{B}$ such that $\Vdash^{\Theta^{i}_{\At}}_{\mathcal{C}} \psi^i$ for arbitrary multisets $\Theta^{i}_{\At}$, we obtain $\Gamma \Vdash_{\mathcal{B}} \alpha \parr \beta$ by~\ref{eq:supp-inf}.   

        \item[$\phi = \alpha \with \beta:$] Assume $\Gamma, \neg (\alpha \with \beta) \Vdash_{\mathcal{B}} \bot$. Since $\Vdash^{\Theta^{i}_{\At}}_{\mathcal{C}} \psi^i$, $\forall \psi^i \in \Gamma$, we obtain $\neg (\alpha \with \beta) \Vdash^{\Theta^{1}_{\At},\dots,\Theta^{n}_{\At}}_{\mathcal{C}} \bot$ by Lemma~\ref{lemma:partialinf}. Now assume that, for an arbitrary $\mathcal{D} \supseteq \mathcal{C}$ and arbitrary $\Sigma_{\At}$, $\Vdash^{\Sigma_{\At}}_{\mathcal{D}} \neg \alpha$ (\ie~$\alpha \Vdash^{\Sigma_{\At}}_{\mathcal{D}} \bot $ by Lemma~\ref{lemma:negatingformula}). Further assume that, for an arbitrary $\mathcal{E} \supseteq \mathcal{D}$ and arbitrary $\Pi_{\At}$, $\Vdash^{\Pi_{\At}}_{\mathcal{E}} \alpha \with \beta$. Then $\Vdash^{\Pi_{\At}}_{\mathcal{E}} \alpha$ by~\ref{eq:supp-and}, and hence $\Vdash^{\Sigma_{\At}, \Pi_{\At}}_{\mathcal{E}} \bot$ by~\ref{eq:supp-inf}. Since $\Vdash^{\Sigma_{\At}, \Pi_{\At}}_{\mathcal{E}} \bot$ and $\mathcal{E} \supseteq \mathcal{D}$ such that $\Vdash^{\Pi_{\At}}_{\mathcal{E}} \alpha \with \beta$, we obtain $\alpha \with \beta \Vdash^{\Sigma_{\At}}_{\mathcal{D}} \bot$ by~\ref{eq:supp-inf}. Thus, $\Vdash^{\Sigma_{\At}}_{\mathcal{D}} \alpha \with \beta \multimap \bot$ by Lemma~\ref{lemma:negatingformula}, \ie~$\Vdash^{\Sigma_{\At}}_{\mathcal{D}} \neg (\alpha \with \beta)$. Now, from $\neg (\alpha \with \beta) \Vdash^{\Theta^{1}_{\At},\dots,\Theta^{n}_{\At}}_{\mathcal{C}} \bot$ and $\Vdash^{\Sigma_{\At}}_{\mathcal{D}} \neg (\alpha \with \beta)$ we obtain $\Vdash^{\Sigma_{\At}, \Theta^{1}_{\At},\dots,\Theta^{n}_{\At}}_{\mathcal{D}} \bot$ by~\ref{eq:supp-inf}. Since also $\mathcal{D} \supseteq \mathcal{C}$ such that $\Vdash^{\Sigma_{\At}}_{\mathcal{D}} \neg \alpha $ for arbitrary $\Sigma_{\At}$, by~\ref{eq:supp-inf}, we obtain $\neg \alpha \Vdash^{\Theta^{1}_{\At},\dots,\Theta^{n}_{\At}}_{\mathcal{C}} \bot$. Analogously, for an arbitrary $\mathcal{F} \supseteq \mathcal{C}$ and arbitrary $\Pi_{\At}$ such that $\Vdash^{\Pi_{\At}}_{\mathcal{F}} \neg \beta$, obtain $\neg \beta \Vdash^{\Theta^{1}_{\At},\dots,\Theta^{n}_{\At}}_{\mathcal{C}} \bot$. Since $\mathcal{C} \supseteq \mathcal{B}$ such that $\Vdash^{\Theta^{i}_{\At}}_{\mathcal{C}} \psi^i$, $\forall \psi^i \in \Gamma$, for arbitrary multisets $\Theta^{i}_{\At}$, by ~\ref{eq:supp-inf} we conclude $\Gamma, \neg \alpha \Vdash_{\mathcal{B}} \bot$ and $\Gamma, \neg \beta \Vdash_{\mathcal{B}} \bot$; the induction hypothesis yields $\Gamma \Vdash_{\mathcal{B}} \alpha$ and $\Gamma \Vdash_{\mathcal{B}} \beta$. Now, since $\Vdash^{\Theta^{i}_{\At}}_{\mathcal{C}} \psi^i$, $\forall \psi^i \in \Gamma$, by~\ref{eq:supp-inf} again we obtain $\Vdash^{\Theta^{1}_{\At},\dots,\Theta^{n}_{\At}}_{\mathcal{C}} \alpha$ and $\Vdash^{\Theta^{1}_{\At},\dots,\Theta^{n}_{\At}}_{\mathcal{C}} \beta$, respectively. Then, by~\ref{eq:supp-and} we obtain $\Vdash^{\Theta^{1}_{\At},\dots,\Theta^{n}_{\At}}_{\mathcal{C}} \alpha \with \beta$. Finally, since $\mathcal{C} \supseteq \mathcal{B}$ such that $\Vdash^{\Theta^{i}_{\At}}_{\mathcal{C}} \psi^i$  for arbitrary multisets $\Theta^{i}_{\At}$, we obtain $\Gamma \Vdash_{\mathcal{B}} \alpha \with \beta$ by~\ref{eq:supp-inf}.

        \item[$\phi = 0:$] Assume $\Gamma, \neg 0 \Vdash_{\mathcal{B}} \bot$. Since $\Vdash^{\Theta^{i}_{\At}}_{\mathcal{C}} \psi^i$, $\forall \psi^i \in \Gamma_{\At}$, we obtain $\neg 0 \Vdash^{\Theta^{1}_{\At},\dots,\Theta^{n}_{\At}}_{\mathcal{C}} \bot$ by Lemma~\ref{lemma:partialinf}. Further assume that, for an arbitrary $\mathcal{D} \supseteq \mathcal{C}$ and arbitrary $\Pi_{\At}$, $\Vdash^{\Pi_{\At}}_{\mathcal{D}} 0$. Then, by~\ref{eq:supp-0}, we have that  $\Vdash^{\Pi_{\At}, \Delta_{\At}}_{\mathcal{D}} \bot$ for arbitrary $\Delta_{\At}$. Since also $\mathcal{D} \supseteq \mathcal{C}$ such that $\Vdash^{\Pi_{\At}}_{\mathcal{D}} 0$ for arbitrary $\Pi_{\At}$, we obtain $0 \Vdash^{\Delta_{\At}}_{\mathcal{C}}\bot$ by~\ref{eq:supp-inf}. Hence, $\Vdash^{\Delta_{\At}}_{\mathcal{C}} 0 \multimap \bot$, by Lemma~\ref{lemma:negatingformula}, \ie~$\Vdash^{\Delta_{\At}}_{\mathcal{C}} \neg 0$. Now, since $\neg 0 \Vdash^{\Theta^{1}_{\At},\dots,\Theta^{n}_{\At}}_{\mathcal{C}} \bot$ and $\Vdash^{\Delta_{\At}}_{\mathcal{C}} \neg 0$, we obtain $\Vdash^{\Delta_{\At}, \Theta^{1}_{\At},\dots,\Theta^{n}_{\At}}_{\mathcal{C}} \bot$ by~\ref{eq:supp-inf}. Now, since $\Delta_{\At}$ is arbitrary, we obtain $\Vdash^{\Theta^{1}_{\At},\dots,\Theta^{n}_{\At}}_{\mathcal{C}} 0$ by~\ref{eq:supp-0}. Finally, since $\mathcal{C} \supseteq \mathcal{B}$ such that $\Vdash^{\Theta^{i}_{\At}}_{\mathcal{C}} \psi^i$ for arbitrary multisets $\Theta^{i}_{\At}$, we obtain $\Gamma \Vdash_{\mathcal{B}} 0$ by~\ref{eq:supp-inf}.
    \end{description}
\end{proof}

\begin{theorem}[Soundness] \label{thm:soundness}
    If $\Gamma \vdash_{\MALL} \phi$ then $\Gamma \Vdash \phi$.
\end{theorem}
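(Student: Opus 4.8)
The plan is to argue by induction on the height of the $\MALL$-derivation of $\Gamma\vdash_{\MALL}\phi$, showing that each inference rule of $\MALL$ preserves semantic validity. By Lemma~\ref{lemma:validity} it would suffice to establish $\Gamma\Vdash_{\mathcal{S}}\phi$, but it is cleaner to keep the base arbitrary and invoke monotonicity (Lemma~\ref{lemma:monotonicity}) whenever the argument must descend into an extension. The cases split into three groups: the axiom, the elimination rules (and Raa), and the introduction rules.

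\textbf{The axiom and the context-lifting manoeuvre.} The axiom $\phi\vdash\phi$ is immediate: unfolding $\phi\Vdash^{\varnothing}_{\mathcal{B}}\phi$ via~\ref{eq:supp-inf} asks only that $\Vdash^{\Delta_{\At}}_{\mathcal{C}}\phi$ imply $\Vdash^{\Delta_{\At}}_{\mathcal{C}}\phi$. Every other case relies on the same recurring device: since the sequents carry multisets of arbitrary formulas on the left while the support clauses are phrased with atomic superscripts, I first use~\ref{eq:supp-inf} to replace each formula $\psi^i$ occurring in a left context by an atomic multiset $\Delta^i_{\At}$ supporting it, reducing the goal to a statement about atomic superscripts with empty left-hand side; then I apply the support clause or generic lemma attached to the connective being introduced or eliminated; and finally I re-collect the $\Delta^i_{\At}$ by~\ref{eq:supp-inf}. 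Monotonicity together with Lemmas~\ref{lemma:partialinf},~\ref{lemma:negatingformula} and~\ref{lemma:floatingatom} carries out the bookkeeping in this passage.

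\textbf{Elimination rules and the easy introductions.} These are essentially already done. The rules $\otimes$E, $1$E, $\multimap$E, $\oplus$E and $0$E follow respectively from Lemmas~\ref{lemma:generictensor},~\ref{lemma:genericone},~\ref{lemma:genericimplication},~\ref{lemma:genericplus} and~\ref{lemma:genericzero} (applied after lifting contexts as above, and once per base, since the induction hypothesis holds in every base); $\with\mathrm{E}_i$ is read straight off~\ref{eq:supp-and}; and $\parr$E comes from~\ref{eq:supp-parr} once the minor premises $\Delta,\phi\vdash\bot$ and $\Theta,\psi\vdash\bot$ are turned, via Lemmas~\ref{lemma:negatingformula} and~\ref{lemma:floatingatom}, into $\bot$-support statements of the form $\phi\Vdash^{\dots}_{\mathcal{C}}\bot$. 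Dually, $\top$I and $1$I are immediate from~\ref{eq:supp-top} and~\ref{eq:supp-1}, and $\with$I from~\ref{eq:supp-and}.

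\textbf{The substantive introductions and the main obstacle.} For $\otimes$I, $\multimap$I and $\oplus\mathrm{I}_i$ one unfolds the target clause (\ref{eq:supp-tensor},~\ref{eq:supp-imply},~\ref{eq:supp-plus}), assumes the relevant $\bot$-support hypotheses in an extension $\mathcal{C}$, feeds them the supported components supplied by the induction hypothesis, and closes by~\ref{eq:supp-inf}; this is routine. The delicate cases are $\parr$I and Raa, where a negated formula sits in the left context of the premise: one must shuttle between $\Vdash^{\Sigma_{\At}}_{\mathcal{C}}\neg\alpha$ and $\alpha\Vdash^{\Sigma_{\At}}_{\mathcal{C}}\bot$ using Lemma~\ref{lemma:negatingformula} and thread this through the nested ``for all extensions'' quantifiers of~\ref{eq:supp-parr} and~\ref{eq:supp-inf}. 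For Raa this is precisely Lemma~\ref{lemma:genericraa}, instantiated at each base; for $\parr$I the argument has the same shape as the $\alpha\parr\beta$ case in the proof of Lemma~\ref{lemma:genericraa}. I expect the main difficulty to lie exactly in this quantifier juggling in the $\parr$I and Raa cases --- keeping track of which base each support statement lives in and applying monotonicity at the right moments --- rather than in any genuinely new idea, since the semantic content has been front-loaded into the lemmas of Section~\ref{sec:BeS}.
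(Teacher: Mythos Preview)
Your proposal is correct and mirrors the paper's proof almost exactly: induction on the derivation, the same context-lifting manoeuvre via \ref{eq:supp-inf} and Lemma~\ref{lemma:partialinf}, the generic elimination Lemmas~\ref{lemma:generictensor}--\ref{lemma:genericzero}, and Lemma~\ref{lemma:genericraa} for Raa. The only minor divergence is that $(\parr\mathrm{I})'$ in the paper is handled more directly than you anticipate---one application of Lemma~\ref{lemma:negatingformula} followed by \ref{eq:supp-inf} and \ref{eq:supp-parr} suffices, without the extra nesting present in the $\alpha\parr\beta$ case of Lemma~\ref{lemma:genericraa}---so the obstacle you flag there is smaller than expected.
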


\begin{proof}
    Given that $\vdash$ is defined inductively, it suffices to prove the following:
\begin{itemize}[itemsep=0.1em, labelindent=1em, labelsep=0.8em, leftmargin=*]
    \item[(Ax)'] $\phi \Vdash \phi$.
    \item[($\otimes$I)'] If $\Gamma \Vdash \phi$ and $\Delta \Vdash \psi$ then $\Gamma, \Delta \Vdash \phi \otimes \psi$.
    \item[($\otimes$E)'] If $\Gamma \Vdash \phi \otimes \psi$ and $\Delta, \phi, \psi \Vdash \chi$ then $\Gamma, \Delta \Vdash \chi$.
    \item[($\multimap$I)'] If $\Gamma, \phi \Vdash \psi$ then $\Gamma \Vdash \phi \multimap \psi$.
    \item[($\multimap$E)'] If $\Gamma \Vdash \phi \multimap \psi$ and $\Delta \Vdash \phi$ then $\Gamma, \Delta \Vdash \psi$.
    \item[($1$I)'] $\Vdash 1$.
    \item[($1$E)'] If $\Gamma \Vdash \phi$ and $\Delta \Vdash 1$ then $\Gamma, \Delta \Vdash \phi$.
    \item[($\parr$I)'] If $\Gamma, \neg \phi, \neg \psi \Vdash \bot$ then $\Gamma \Vdash \phi \parr \psi$
    \item[($\parr$E)'] If $\Gamma \Vdash \phi \parr \psi$ and $\Delta, \phi \Vdash \bot$ and $\Theta, \psi \Vdash \bot$ then $\Gamma, \Delta, \Theta \Vdash \bot$.
    \item[(Raa)'] If $\Gamma, \neg \phi \Vdash \bot$ then $\Gamma \Vdash \phi$.
    \item[($\with$I)'] If $\Gamma \Vdash \phi$ and $\Gamma \Vdash \psi$ then $\Gamma\Vdash \phi \with \psi$.
    \item[($\with$E)'] If $\Gamma \Vdash \phi \with \psi$ then $\Gamma \Vdash \phi$ and $\Gamma \Vdash \psi$.
    \item[($\oplus$I)'] If $\Gamma \Vdash \phi$ or $\Gamma \Vdash \psi$ then $\Gamma\Vdash \phi \oplus \psi$.
    \item[($\oplus$E)'] If $\Gamma \Vdash \phi \oplus \psi$ and $\Delta, \phi \Vdash \chi$ and $\Delta, \psi \Vdash \chi$ then $\Gamma, \Delta \Vdash \chi$.
    \item[($\top$I)'] $\Gamma \Vdash \top$.
    \item[($0$E)'] If $\Gamma \Vdash 0$ then $\Gamma, \Delta \Vdash \phi$.
\end{itemize}

In this proof, we set $\Gamma = \{\alpha^1,\dots,\alpha^n\}$ and $\Delta = \{\beta^1,\dots,\beta^m\}$. We will make use of Lemma~\ref{lemma:validity}, which states that $\Gamma \Vdash \phi$ is equivalent to $\Gamma \Vdash_{\mathcal{S}} \phi$ for any $\Gamma$ and  $\phi$.

\begin{description}[itemsep=0.8em, font=\normalfont]
\item[(Ax)'.] By~\ref{eq:supp-inf}, it suffices to show that, for arbitrary $\mathcal{B}$ and arbitary $\Delta_{\At}$, $\Vdash_{\mathcal{B}}^{\Delta_{\At}} \phi$ implies $\Vdash_{\mathcal{B}}^{\Delta_{\At}} \phi$, which trivially holds.

\item[($\otimes$I)'.] Assume $\Gamma \Vdash \phi$ and $\Delta \Vdash \psi$.  
Now assume that, for an arbitrary $\mathcal{B}$, $\forall \alpha^i \in \Gamma$ $(1 \leq i \leq n)$, $\forall \beta^j \in \Delta$ $(1 \leq j \leq m)$ and arbitrary multisets $\Theta^{i}_{\At}, \Sigma^{j}_{\At}$, $\Vdash^{\Theta^{i}_{\At}}_{\mathcal{B}} \alpha^i$ and $\Vdash^{\Sigma^{j}_{\At}}_{\mathcal{B}} \beta^j$. By~\ref{eq:supp-inf}, we obtain $\Vdash^{\Theta^{1}_{\At},\dots,\Theta^{n}_{\At}}_{\mathcal{B}} \phi$ and $\Vdash^{\Sigma^{1}_{\At},\dots,\Sigma^{m}_{\At}}_{\mathcal{B}} \psi$, respectively. Further assume, for an arbitrary $\mathcal{C} \supseteq \mathcal{B}$ and arbitrary $\Pi_{\At}$, that $\phi, \psi \Vdash^{\Pi_{\At}}_{\mathcal{C}} \bot$. Since moreover $\Vdash^{\Theta^{1}_{\At},\dots,\Theta^{n}_{\At}}_{\mathcal{B}} \phi$ and $\Vdash^{\Sigma^{1}_{\At},\dots,\Sigma^{m}_{\At}}_{\mathcal{B}} \psi$ (thus also $\Vdash^{\Theta^{1}_{\At},\dots,\Theta^{n}_{\At}}_{\mathcal{C}} \phi$ and $\Vdash^{\Sigma^{1}_{\At},\dots,\Sigma^{m}_{\At}}_{\mathcal{C}} \psi$ by monotonicity), we obtain $\Vdash^{\Pi_{\At}, \Theta^{1}_{\At},\dots,\Theta^{n}_{\At}, \Sigma^{1}_{\At},\dots,\Sigma^{m}_{\At}}_{\mathcal{C}} \bot$ by~\ref{eq:supp-inf}. Since also $\mathcal{C} \supseteq \mathcal{B}$ such that $\phi, \psi \Vdash^{\Pi_{\At}}_{\mathcal{C}} \bot$ for arbitrary $\Pi_{\At}$, by~\ref{eq:supp-tensor}, $\Vdash^{\Theta^{1}_{\At},\dots,\Theta^{n}_{\At}, \Sigma^{1}_{\At},\dots,\Sigma^{m}_{\At}}_{\mathcal{B}} \phi \otimes \psi$. Finally, since $\mathcal{B}$ was chosen such that $\Vdash^{\Theta^{i}_{\At}}_{\mathcal{B}} \alpha^i$ and $\Vdash^{\Sigma^{j}_{\At}}_{\mathcal{B}} \beta^j$ for arbitrary multisets $\Theta^{i}_{\At}, \Sigma^{j}_{\At}$, by~\ref{eq:supp-inf}, $\Gamma, \Delta \Vdash \phi \otimes \psi$. 

\item[($\otimes$E)'.] Assume $\Gamma \Vdash \phi \otimes \psi$ and $\Delta, \phi, \psi \Vdash \chi$. 
Further assume that, for an arbitrary $\mathcal{B}$, $\forall \alpha^i \in \Gamma$ $(1 \leq i \leq n)$, $\forall \beta^j \in \Delta$ $(1 \leq j \leq m)$ and arbitrary multisets $\Theta^{i}_{\At}, \Sigma^{j}_{\At}$, $\Vdash^{\Theta^{i}_{\At}}_{\mathcal{B}} \alpha^i$ and $\Vdash^{\Sigma^{j}_{\At}}_{\mathcal{B}} \beta^j$. Then, by~\ref{eq:supp-inf}, we obtain $\Vdash^{\Theta^{1}_{\At},\dots,\Theta^{n}_{\At}}_{\mathcal{B}} \phi \otimes \psi$ and, by Lemma~\ref{lemma:partialinf}, $\phi, \psi \Vdash^{\Sigma^{1}_{\At},\dots,\Sigma^{m}_{\At}}_{\mathcal{B}} \chi$. By Lemma~\ref{lemma:generictensor}, we thus obtain $\Vdash^{\Theta^{1}_{\At},\dots,\Theta^{n}_{\At}, \Sigma^{1}_{\At},\dots,\Sigma^{m}_{\At}}_{\mathcal{B}} \chi$. Finally, since $\mathcal{B}$ was chosen such that $\Vdash^{\Theta^{i}_{\At}}_{\mathcal{B}} \alpha^i$ and $\Vdash^{\Sigma^{j}_{\At}}_{\mathcal{B}} \beta^j$ for arbitrary multisets $\Theta^{i}_{\At}, \Sigma^{j}_{\At}$, we obtain $\Gamma, \Delta \Vdash \chi$ by~\ref{eq:supp-inf}. 

\item[($\multimap$I)'.] Assume $\Gamma, \phi \Vdash \psi$. 
Now assume that, for an arbitrary $\mathcal{B}$, $\forall \alpha^i \in \Gamma$ $(1 \leq i \leq n)$ and arbitrary multisets $\Theta^{i}_{\At}$, $\Vdash^{\Theta^{i}_{\At}}_{\mathcal{B}} \alpha^i$. Further assume that, for an arbitrary $\mathcal{C} \supseteq \mathcal{B}$ and arbitrary $\Sigma_{\At}$, $\Pi_{\At}$, $\Vdash^{\Sigma_{\At}}_{\mathcal{C}} \phi$ and $\psi \Vdash^{\Pi_{\At}}_{\mathcal{C}} \bot$. Now, from $\Gamma, \phi \Vdash \psi$ and $\Vdash^{\Theta^{i}_{\At}}_{\mathcal{B}} \alpha^i$ (thus also $\Vdash^{\Theta^{i}_{\At}}_{\mathcal{C}} \alpha^i$ by monotonicity) and $\Vdash^{\Sigma_{\At}}_{\mathcal{C}} \phi$, by~\ref{eq:supp-inf}, $\Vdash^{\Theta^{1}_{\At},\dots,\Theta^{n}_{\At}, \Sigma_{\At}}_{\mathcal{C}} \psi$. Since moreover $\psi \Vdash^{\Pi_{\At}}_{\mathcal{C}} \bot$, we obtain $\Vdash^{\Pi_{\At}, \Theta^{1}_{\At},\dots,\Theta^{n}_{\At}, \Sigma_{\At}}_{\mathcal{C}} \bot$ by~\ref{eq:supp-inf}. Since also $\mathcal{C} \supseteq \mathcal{B}$ such that $\Vdash^{\Sigma_{\At}}_{\mathcal{C}} \phi$ and $\psi \Vdash^{\Pi_{\At}}_{\mathcal{C}} \bot$ for arbitrary $\Sigma_{\At}, \Pi_{\At}$, by~\ref{eq:supp-imply}, $\Vdash^{\Theta^{1}_{\At},\dots,\Theta^{n}_{\At}}_{\mathcal{B}} \phi \multimap \psi$. Finally, since $\mathcal{B}$ was chosen such that $\Vdash^{\Theta^{i}_{\At}}_{\mathcal{B}} \alpha^i$ for arbitrary multisets $\Theta^{i}_{\At}$, by~\ref{eq:supp-inf}, $\Gamma \Vdash \phi \multimap \psi$.

\item[($\multimap$E)'.] Assume $\Gamma \Vdash \phi \multimap \psi$ and $\Delta \Vdash \phi$. 
Further assume that, for an arbitrary $\mathcal{B}$, $\forall \alpha^i \in \Gamma$ $(1 \leq i \leq n)$, $\forall \beta^j \in \Delta$ $(1 \leq j \leq m)$ and arbitrary multisets $\Theta^{i}_{\At}, \Sigma^{j}_{\At}$, $\Vdash^{\Theta^{i}_{\At}}_{\mathcal{B}} \alpha^i$ and $\Vdash^{\Sigma^{j}_{\At}}_{\mathcal{B}} \beta^j$. By~\ref{eq:supp-inf}, we obtain $\Vdash^{\Theta^{1}_{\At},\dots,\Theta^{n}_{\At}}_{\mathcal{B}} \phi \multimap \psi$ and $\Vdash^{\Sigma^{1}_{\At},\dots,\Sigma^{m}_{\At}}_{\mathcal{B}} \phi$, respectively. Notice that $\psi \Vdash \psi$ by (Ax)', thus $\psi \Vdash_{\mathcal{B}} \psi$ by monotonicity. We then obtain $\Vdash^{\Theta^{1}_{\At},\dots,\Theta^{n}_{\At}, \Sigma^{1}_{\At},\dots,\Sigma^{m}_{\At}}_{\mathcal{B}} \psi$ by Lemma~\ref{lemma:genericimplication}. Finally, since $\mathcal{B}$ was chosen such that $\Vdash^{\Theta^{i}_{\At}}_{\mathcal{B}} \alpha^i$ and $\Vdash^{\Sigma^{j}_{\At}}_{\mathcal{B}} \beta^j$ for arbitrary multisets $\Theta^{i}_{\At}, \Sigma^{j}_{\At}$, by~\ref{eq:supp-inf}, $\Gamma, \Delta \Vdash \psi$.

\item[($1$I)'.] Assume that, for an arbitrary $\mathcal{B}$ and arbitrary $\Theta_{\At}$, $\Vdash^{\Theta_{\At}}_{\mathcal{B}} \bot$. Now, $\Vdash^{\Theta_{\At}}_{\mathcal{B}} \bot$ trivially implies $\Vdash^{\Theta_{\At}}_{\mathcal{B}} \bot$. Thus, since $\mathcal{B}$ and $\Theta_{\At}$ are arbitrary, by~\ref{eq:supp-1} we obtain $\Vdash 1$.

\item[($1$E)'.] Assume $\Gamma \Vdash \phi$ and $\Delta \Vdash 1$. 
Further assume that, for an arbitrary $\mathcal{B}$, $\forall \alpha^i \in \Gamma$ $(1 \leq i \leq n)$, $\forall \beta^j \in \Delta$ $(1 \leq j \leq m)$ and arbitrary multisets $\Theta^{i}_{\At}, \Sigma^{j}_{\At}$, $\Vdash^{\Theta^{i}_{\At}}_{\mathcal{B}} \alpha^i$ and $\Vdash^{\Sigma^{j}_{\At}}_{\mathcal{B}} \beta^j$. Then, by~\ref{eq:supp-inf} we obtain $\Vdash^{\Theta^{1}_{\At},\dots,\Theta^{n}_{\At}}_{\mathcal{B}} \phi$ and $\Vdash^{\Sigma^{1}_{\At},\dots,\Sigma^{m}_{\At}}_{\mathcal{B}} 1$, respectively. By Lemma~\ref{lemma:genericone}, we thus obtain $\Vdash^{\Theta^{1}_{\At},\dots,\Theta^{n}_{\At}, \Sigma^{1}_{\At},\dots,\Sigma^{m}_{\At}}_{\mathcal{B}} \phi$. Finally, since $\mathcal{B}$ was chosen such that $\Vdash^{\Theta^{i}_{\At}}_{\mathcal{B}} \alpha^i$ and $\Vdash^{\Sigma^{j}_{\At}}_{\mathcal{B}} \beta^j$ for arbitrary multisets $\Theta^{i}_{\At}, \Sigma^{j}_{\At}$, we obtain $\Gamma, \Delta \Vdash \phi$ by~\ref{eq:supp-inf}.

\item[($\parr$I)'.] Assume $\Gamma, \neg\phi,\neg\psi \Vdash \bot$ and that, for an arbitrary $\mathcal{B}$, $\forall \alpha^i \in \Gamma$ $(1 \leq i \leq n)$ and arbitrary multisets $\Theta^{i}_{\At}$, $\Vdash^{\Theta^{i}_{\At}}_{\mathcal{B}} \alpha^i$. 
Further assume that, for an arbitrary $\mathcal{C} \supseteq \mathcal{B}$ and arbitrary $\Sigma_{\At}$, $\Pi_{\At}$, $\phi\Vdash^{\Sigma_{\At}}_{\mathcal{C}} \bot$ and $\psi \Vdash^{\Pi_{\At}}_{\mathcal{C}} \bot$. 
From Lemma~\ref{lemma:negatingformula} we have that $\Vdash^{\Sigma_{\At}}_{\mathcal{C}} \neg\phi$ and $ \Vdash^{\Pi_{\At}}_{\mathcal{C}} \neg\psi$ and, by monotonicity, $\Vdash^{\Theta^{i}_{\At}}_{\mathcal{C}} \alpha^i$ . 
Hence by applying~\ref{eq:supp-inf} twice, $\Vdash^{\Pi_{\At}, \Theta^{1}_{\At},\dots,\Theta^{n}_{\At}, \Sigma_{\At}}_{\mathcal{C}} \bot$. 
Thus by~\ref{eq:supp-parr}, $\Vdash^{\Theta^{1}_{\At},\dots,\Theta^{n}_{\At}}_{\mathcal{B}} \phi \parr \psi$. Finally, since $\mathcal{B}$ was chosen such that $\Vdash^{\Theta^{i}_{\At}}_{\mathcal{B}} \alpha^i$ for arbitrary multisets $\Theta^{i}_{\At}$, by~\ref{eq:supp-inf}, $\Gamma \Vdash \phi \parr \psi$.

\item[($\parr$E)'.] Assume $\Gamma \Vdash \phi \parr \psi$ and  $\Delta, \phi\Vdash \bot$ and $\Omega, \psi\Vdash \bot$. 
Further assume that, for an arbitrary $\mathcal{B}$, $\forall \alpha^i \in \Gamma$ $(1 \leq i \leq n)$, $\forall \beta^j \in \Delta$ $(1 \leq j \leq m)$, $\forall \gamma^k \in \Omega$ $(1 \leq k \leq s)$ and arbitrary multisets $\Theta^{i}_{\At}, \Sigma^{j}_{\At},  \Pi^{k}_{\At}$, $\Vdash^{\Theta^{i}_{\At}}_{\mathcal{B}} \alpha^i$, $\Vdash^{\Sigma^{j}_{\At}}_{\mathcal{B}} \beta^j$ and $\Vdash^{\Pi^{k}_{\At}}_{\mathcal{B}} \gamma^k$. Then, by~\ref{eq:supp-inf}, we obtain $\Vdash^{\Theta^{1}_{\At},\dots,\Theta^{n}_{\At}}_{\mathcal{B}} \phi \parr \psi$ and, by Lemma~\ref{lemma:partialinf}, $\phi\Vdash^{\Sigma^{j}_{\At},\dots,\Sigma^{m}_{\At}}_{\mathcal{B}} \bot$ and $\psi\Vdash^{\Pi^{1}_{\At},\dots,\Pi^{s}_{\At}}_{\mathcal{B}} \bot$. By Lemma~\ref{lemma:generictensor}, we thus obtain $\Vdash^{\Theta^{1}_{\At}\ldots\Theta^{n}_{\At}, \Sigma^{1}_{\At}\ldots,\Sigma^{m}_{\At},\Pi^1_{At}\ldots \Pi^s_{At}}_{\mathcal{B}} \bot$. Hence $\Gamma, \Delta,\Omega \Vdash \bot$ by~\ref{eq:supp-inf}. 

\item[(Raa)'.] Immediate by Lemma~\ref{lemma:genericraa} (set $\mathcal{B} = \mathcal{S}$).

\item[($\with$I)'.] Assume $\Gamma \Vdash \phi$ and $\Gamma \Vdash \psi$. 
Further assume that, for an arbitrary $\mathcal{B}$, $\forall \alpha^i \in \Gamma$ $(1 \leq i \leq n)$ and arbitrary multisets $\Theta^{i}_{\At}$, $\Vdash^{\Theta^{i}_{\At}}_{\mathcal{B}} \alpha^i$. Then, by~\ref{eq:supp-inf}, we obtain $\Vdash^{\Theta^{1}_{\At},\dots,\Theta^{n}_{\At}}_{\mathcal{B}} \phi$ and $\Vdash^{\Theta^{1}_{\At},\dots,\Theta^{n}_{\At}}_{\mathcal{B}} \psi$. By~\ref{eq:supp-and}, we thus obtain $\Vdash^{\Theta^{1}_{\At},\dots,\Theta^{n}_{\At}}_{\mathcal{B}} \phi \with \psi$. Finally, since $\mathcal{B}$ was chosen such that $\Vdash^{\Theta^{i}_{\At}}_{\mathcal{B}} \alpha^i$ for arbitrary multisets $\Theta^{i}_{\At}$, we obtain $\Gamma \Vdash \phi \with \psi$ by~\ref{eq:supp-inf}.

\item[($\with$E)'.] Assume $\Gamma \Vdash \phi \with \psi$. 
Further assume that, for an arbitrary $\mathcal{B}$, $\forall \alpha^i \in \Gamma$ $(1 \leq i \leq n)$ and arbitrary multisets $\Theta^{i}_{\At}$, $\Vdash^{\Theta^{i}_{\At}}_{\mathcal{B}} \alpha^i$. Then, by~\ref{eq:supp-inf}, we obtain $\Vdash^{\Theta^{1}_{\At},\dots,\Theta^{n}_{\At}}_{\mathcal{B}} \phi \with \psi$. By~\ref{eq:supp-and}, we thus obtain $\Vdash^{\Theta^{1}_{\At},\dots,\Theta^{n}_{\At}}_{\mathcal{B}} \phi$ and $\Vdash^{\Theta^{1}_{\At},\dots,\Theta^{n}_{\At}}_{\mathcal{B}} \psi$. Finally, since $\mathcal{B}$ was chosen such that $\Vdash^{\Theta^{i}_{\At}}_{\mathcal{B}} \alpha^i$ for arbitrary multisets $\Theta^{i}_{\At}$, we obtain $\Gamma \Vdash \phi$ and $\Gamma \Vdash \psi$ by~\ref{eq:supp-inf}.

\item[($\oplus$I)'.] Assume that $\Gamma \Vdash \phi$ or $\Gamma \Vdash \psi$. 
Now assume that, for an arbitrary $\mathcal{B}$, $\forall \alpha^i \in \Gamma$ $(1 \leq i \leq n)$ and arbitrary multisets $\Theta^{i}_{\At}$, $\Vdash^{\Theta^{i}_{\At}}_{\mathcal{B}} \alpha^i$. Then, by~\ref{eq:supp-inf}, we obtain $\Vdash^{\Theta^{1}_{\At},\dots,\Theta^{n}_{\At}}_{\mathcal{B}} \phi$ or $\Vdash^{\Theta^{1}_{\At},\dots,\Theta^{n}_{\At}}_{\mathcal{B}} \psi$. Further assume that, for an arbitrary $\mathcal{C} \supseteq \mathcal{B}$ and arbitrary $\Delta_{\At}$, $\phi \Vdash^{\Delta_{\At}}_{\mathcal{C}} \bot$ and $\psi \Vdash^{\Delta_{\At}}_{\mathcal{C}} \bot$. Now, from either $\phi \Vdash^{\Delta_{\At}}_{\mathcal{C}} \bot$ and $\Vdash^{\Theta^{1}_{\At},\dots,\Theta^{n}_{\At}}_{\mathcal{B}} \phi$ (thus also $\Vdash^{\Theta^{1}_{\At},\dots,\Theta^{n}_{\At}}_{\mathcal{C}} \phi$) or from $\psi \Vdash^{\Delta_{\At}}_{\mathcal{C}} \bot$ and $\Vdash^{\Theta^{1}_{\At},\dots,\Theta^{n}_{\At}}_{\mathcal{B}} \psi$ (thus also $\Vdash^{\Theta^{1}_{\At},\dots,\Theta^{n}_{\At}}_{\mathcal{C}} \psi$), we obtain $\Vdash^{\Theta^{1}_{\At},\dots,\Theta^{n}_{\At}, \Delta_{\At}}_{\mathcal{C}} \bot$ by~\ref{eq:supp-inf}. By~\ref{eq:supp-plus}, we thus obtain $\Vdash^{\Theta^{1}_{\At},\dots,\Theta^{n}_{\At}}_{\mathcal{B}} \phi \oplus \psi$. Finally, since $\mathcal{B}$ was chosen such that $\Vdash^{\Theta^{i}_{\At}}_{\mathcal{B}} \alpha^i$ for arbitrary multisets $\Theta^{i}_{\At}$, we obtain $\Gamma \Vdash \phi \oplus \psi$ by~\ref{eq:supp-inf}.

\item[($\oplus$E)'.] Assume $\Gamma \Vdash \phi \oplus \psi$ and $\Delta, \phi \Vdash \chi$ and $\Delta, \psi \Vdash \chi$. 
Further assume that, for an arbitrary $\mathcal{B}$, $\forall \alpha^i \in \Gamma$ $(1 \leq i \leq n)$, $\forall \beta^j \in \Delta$ $(1 \leq j \leq m)$ and arbitrary multisets $\Theta^{i}_{\At}, \Sigma^{j}_{\At}$, $\Vdash^{\Theta^{i}_{\At}}_{\mathcal{B}} \alpha^i$ and $\Vdash^{\Sigma^{j}_{\At}}_{\mathcal{B}} \beta^j$. Then, by~\ref{eq:supp-inf} and Lemma~\ref{lemma:partialinf}, we obtain $\Vdash^{\Theta^{1}_{\At},\dots,\Theta^{n}_{\At}}_{\mathcal{B}} \phi \oplus \psi$ and $\phi \Vdash^{\Sigma^{1}_{\At},\dots,\Sigma^{m}_{\At}}_{\mathcal{B}} \chi$ and $\psi \Vdash^{\Sigma^{1}_{\At},\dots,\Sigma^{m}_{\At}}_{\mathcal{B}} \chi$. By Lemma~\ref{lemma:genericplus}, we thus obtain $\Vdash^{\Theta^{1}_{\At},\dots,\Theta^{n}_{\At}, \Sigma^{1}_{\At},\dots,\Sigma^{m}_{\At}}_{\mathcal{B}} \chi$. Finally, since $\mathcal{B}$ was chosen such that $\Vdash^{\Theta^{i}_{\At}}_{\mathcal{B}} \alpha^i$ and $\Vdash^{\Sigma^{j}_{\At}}_{\mathcal{B}} \beta^j$ for arbitrary multisets $\Theta^{i}_{\At}, \Sigma^{j}_{\At}$, by~\ref{eq:supp-inf}, $\Gamma, \Delta \Vdash \chi$.

\item[($\top$I)'.]
Further assume that, for an arbitrary $\mathcal{B}$, $\forall \alpha^i \in \Gamma$ $(1 \leq i \leq n)$ and arbitrary multisets $\Theta^{i}_{\At}$, $\Vdash^{\Theta^{i}_{\At}}_{\mathcal{B}} \alpha^i$. By~\ref{eq:supp-top}, $\Vdash^{\Theta^{1}_{\At},\dots,\Theta^{n}_{\At}}_{\mathcal{B}} \top$, hence, by~\ref{eq:supp-inf}, $\Gamma \Vdash \top$ as desired.

\item[($0$E)'.] Assume $\Gamma \Vdash 0$. 
Further assume that, for an arbitrary $\mathcal{B}$, $\forall \alpha^i \in \Gamma$ $(1 \leq i \leq n)$, $\forall \beta^j \in \Delta$ $(1 \leq j \leq m)$ and arbitrary multisets $\Theta^{i}_{\At}, \Sigma^{j}_{\At}$, $\Vdash^{\Theta^{i}_{\At}}_{\mathcal{B}} \alpha^i$ and $\Vdash^{\Sigma^{j}_{\At}}_{\mathcal{B}} \beta^j$. Then, by~\ref{eq:supp-inf}, we obtain $\Vdash^{\Theta^{1}_{\At},\dots,\Theta^{n}_{\At}}_{\mathcal{B}} 0$. By Lemma~\ref{lemma:genericzero}, we thus obtain $\Vdash^{\Theta^{1}_{\At},\dots,\Theta^{n}_{\At}, \Delta_{\At}}_{\mathcal{B}} \phi$ for all $\Delta_{\At}$ and any $\phi$. In particular, let $\Delta_{\At} = \Sigma^{1}_{\At} \cup \dots \cup \Sigma^{m}_{\At}$, hence $\Vdash^{\Theta^{1}_{\At},\dots,\Theta^{n}_{\At}, \Sigma^{1}_{\At},\dots,\Sigma^{m}_{\At}}_{\mathcal{B}} \phi$. By~\ref{eq:supp-inf}, then, $\Gamma, \Delta \Vdash \phi$.
\end{description}
\end{proof} 

\section{Completeness}\label{sec:comp}
In this section, we prove that {\MALL} is complete with respect to the proposed semantics; that is, if $\Gamma \Vdash \phi$, then there exists a {\MALL}-proof of $\Gamma \vdash \phi$. To establish this, 
we associate to each subformula $\psi$ of $\Gamma \cup \{\phi\}$ a unique atom $p^\psi$, and then, exploiting the fact that $\Gamma \Vdash \phi$ is valid with respect to every base, we construct a \emph{simulation base} $\mathcal{U}$ for $\Gamma \cup \{\phi\}$ such that $p^\psi$ behaves in $\mathcal{U}$ as $\psi$ behaves in {\MALL}.

\begin{definition}[Atomic mapping] Let $\Gamma$ be a set of formulas. Let $\Gamma_{S}$ be the set of all subformulas and negations thereof of formulas in $\Gamma$. We say that a function $\sigma: \Gamma_{S} \cup \{\bot\} \rightarrow \At$ is an \emph{atomic mapping} for $\Gamma$ if (1) $\sigma$ is injective, (2) $\sigma(\phi) = \phi$ if $\phi \in \At$. For convenience, we denote $\sigma(\phi) \eqcolon p^\phi$.
\end{definition}

We note that such functions do exist as $\At$ is countably infinite. 

\begin{definition}[Simulation base]
Let $\Gamma$ be a set of formulas and $\sigma$ an atomic mapping for $\Gamma$. Then a \textit{simulation base} $\mathcal{U}$ for $\Gamma$ and $\sigma$ is the base containing exactly the following rules for all $\phi, \psi \in \Gamma$, all multisets $\Gamma_{\At}, \Delta_{\At}, \Theta_{\At}$:\footnote{Remember that bases are closed under Ax and Subs (see Definition~\ref{def:support}).}

\end{definition}

   \begin{center}
        \noindent\begin{minipage}{0.21\textwidth}\scriptsize
        \begin{prooftree}
        \def\ScoreOverhang{0.5pt}
            \AxiomC{$\Gamma_{\At}, p^{\neg \phi} \vdash \bot$}
            \RightLabel{Raa}
            \UnaryInfC{$\Gamma_{\At} \vdash p^{\phi}$}
        \end{prooftree}
        \end{minipage}
        \vspace{2pt}
        \noindent\begin{minipage}{0.21\textwidth}\scriptsize
        \begin{prooftree}
        \def\ScoreOverhang{0.5pt}
            \AxiomC{}
            \RightLabel{$p^{\top}$I}
            \UnaryInfC{$\Gamma_{\At} \vdash p^{\top}$}
        \end{prooftree}
        \end{minipage}
        \vspace{2pt}
        \noindent\begin{minipage}{0.21\textwidth}\scriptsize
        \begin{prooftree}
        \def\ScoreOverhang{0.5pt}
            \AxiomC{$\Gamma_{\At} \vdash p^{0}$}
            \RightLabel{$0$E}
            \UnaryInfC{$\Gamma_{\At},\Delta_{\At} \vdash \bot$}
        \end{prooftree}
        \end{minipage}
        \vspace{2pt}
        \hspace{-1.5em}
        \noindent\begin{minipage}{0.23\textwidth}\scriptsize
        \begin{prooftree}
        \def\ScoreOverhang{0.5pt}
            \AxiomC{$\Gamma_{\At} \vdash p^{\phi}$}
            \AxiomC{$\Delta_{\At} \vdash p^{\psi}$}
            \RightLabel{$\otimes$I}
            \BinaryInfC{$\Gamma_{\At},\Delta_{\At} \vdash p^{\phi\otimes\psi}$}
        \end{prooftree}
        \begin{prooftree}
        \def\ScoreOverhang{0.5pt}
            \AxiomC{}
            \RightLabel{$1$I}
            \UnaryInfC{$\vdash p^{1}$}
        \end{prooftree}
        \begin{prooftree}
        \def\ScoreOverhang{0.5pt}
            \AxiomC{$\Gamma_{\At} \vdash p^{\phi}$}
            \AxiomC{$\Gamma_{\At} \vdash p^{\psi}$}
            \RightLabel{$\with$I}
            \BinaryInfC{$\Gamma_{\At} \vdash p^{\phi \with \psi}$}
        \end{prooftree}
        \end{minipage}%
        \noindent\begin{minipage}{0.23\textwidth}\scriptsize
        \begin{prooftree}
        \def\ScoreOverhang{0.5pt}
            \AxiomC{$\Gamma_{\At} \vdash p^{\phi\otimes\psi}$}
            \AxiomC{$\Delta_{\At}, p^{\phi}, p^{\psi} \vdash \bot$}
            \RightLabel{$\otimes$E}
            \BinaryInfC{$\Gamma_{\At}, \Delta_{\At} \vdash \bot$}
        \end{prooftree}
        \begin{prooftree}
        \def\ScoreOverhang{0.5pt}
            \AxiomC{$\Gamma_{\At} \vdash p^{\phi}$}
            \AxiomC{$\Delta_{\At} \vdash p^{1}$}
            \RightLabel{$1$E}
            \BinaryInfC{$\Gamma_{\At},\Delta_{\At} \vdash p^{\phi}$}
        \end{prooftree}
        \begin{prooftree}
        \def\ScoreOverhang{0.5pt}
            \AxiomC{$\Gamma_{\At} \vdash p^{\phi_1 \with \phi_2}$}
            \RightLabel{$\with\text{E}_i$}
            \UnaryInfC{$\Gamma_{\At} \vdash p^{\phi_i}$}
        \end{prooftree}
        \end{minipage}%
        \noindent\begin{minipage}{0.2\textwidth}\scriptsize
        \begin{prooftree}
          \def\ScoreOverhang{0.5pt}
              \AxiomC{$\Gamma_{\At},p^{\phi} \vdash p^{\psi}$}
              \RightLabel{$\multimap$I}
              \UnaryInfC{$\Gamma_{\At} \vdash p^{\phi \multimap \psi}$}
        \end{prooftree}
        \begin{prooftree}
        \def\ScoreOverhang{0.5pt}
            \AxiomC{$\Gamma_{\At}, p^{\neg \phi}, \neg p^{\psi} \vdash \bot$}
            \RightLabel{$\parr\text{I}$}
            \UnaryInfC{$\Gamma_{\At} \vdash p^{\phi \parr \psi}$}
        \end{prooftree}
        \begin{prooftree}
        \def\ScoreOverhang{0.5pt}
            \AxiomC{$\Gamma_{\At} \vdash p^{\phi_i}$}
            \RightLabel{$\oplus\text{I}_i$}
            \UnaryInfC{$\Gamma_{\At} \vdash p^{\phi_1 \oplus \phi_2}$}
        \end{prooftree}
        \end{minipage}%
        \noindent\begin{minipage}{0.35\textwidth}\scriptsize
        \begin{prooftree}
          \def\ScoreOverhang{0.5pt}
              \AxiomC{$\Gamma_{\At} \vdash p^{\phi \multimap \psi}$}
              \AxiomC{$\Delta_{\At} \vdash p^{\phi}$}
              \RightLabel{$\multimap$E}
              \BinaryInfC{$\Gamma_{\At},\Delta_{\At} \vdash p^{\psi}$}
        \end{prooftree}
        \begin{prooftree}
        \def\ScoreOverhang{0.2pt}
            \AxiomC{$\Gamma_{\At} \vdash p^{\phi \parr \psi}$}
            \AxiomC{$\Delta_{\At} , p^{\phi} \vdash \bot$}
            \AxiomC{$\Theta_{\At} ,p^{\psi}\vdash \bot$} 
            \RightLabel{$\parr$E}
            \TrinaryInfC{$\Gamma_{\At}, \Delta_{\At}, \Theta_{\At} \vdash \bot$}
        \end{prooftree}
        \begin{prooftree}
        \def\ScoreOverhang{0.5pt}
            \AxiomC{$\Gamma_{\At} \vdash p^{\phi \oplus \psi}$}
            \AxiomC{$\Delta_{\At},p^{\phi} \vdash \bot$}
            \AxiomC{$\Delta_{\At},p^{\psi} \vdash \bot$}
            \RightLabel{$\oplus$E}
            \TrinaryInfC{$\Gamma_{\At}, \Delta_{\At} \vdash \bot$}
        \end{prooftree}
        \end{minipage}% 
    \end{center}

Notice that, unlike usual proofs via simulation bases \cite{DBLP:journals/corr/abs-2402-01982,DBLP:conf/tableaux/GheorghiuGP23,Sandqvist2015IL}, ours does not require inclusion of all atomic instances of $\otimes E$, $\oplus E$, $0 E$, and $\parr E$; we only require instances with minor premises of shape $\bot$.

\begin{lemma} \label{lemma:completeness}
    Let $\Pi$ be a set of formulas, $\sigma$ an atomic mapping and $\mathcal{U}$ a simulation base for $\Pi$ and $\sigma$. Then, for all $\phi \in \Pi$, all $\mathcal{B} \supseteq \mathcal{U}$ and all $\Gamma_{\At}$, $\Vdash^{\Gamma_{\At}}_{\mathcal{B}} \phi$ if and only if  $\Gamma_{\At} \vdash_{\mathcal{B}} p^{\phi}$.
\end{lemma}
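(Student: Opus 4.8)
The plan is to prove both directions of the biconditional simultaneously, by induction on the structure of $\phi$. Since $\mathcal{B}\supseteq\mathcal{U}\supseteq\mathcal{S}$, all the simulating rules of $\mathcal{U}$ as well as Ax and Subs are available in $\mathcal{B}$, and I will use this freely, together with monotonicity (Lemma~\ref{lemma:monotonicity}) and Lemma~\ref{lemma:bottomisspecial} (to pass between $\Vdash^{\Delta_{\At}}_{\mathcal{C}}\bot$ and $\Delta_{\At}\vdash_{\mathcal{C}}\bot$). I take $\Pi$ to be closed under subformulas, under $\neg(\cdot)$ applied to its members, and to contain $\bot$, so that $p^{\chi}$ is defined and $\mathcal{U}$ contains the simulating rule for every formula occurring in the argument. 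For the base cases: if $\phi=\bot$ then $p^{\phi}=\bot$ and both sides are literally $\Gamma_{\At}\vdash_{\mathcal{B}}\bot$; if $\phi=\top$ then the left side holds by clause~\ref{eq:supp-top} and the right side by the rule $p^{\top}$I of $\mathcal{U}$, so both are true; if $\phi=p$ is an ordinary atom then $p^{\phi}=p$, the $\Leftarrow$ direction is Lemma~\ref{lemma:derivabilityimpliessupport}, and for $\Rightarrow$ one observes $p,p^{\neg p}\vdash_{\mathcal{B}}\bot$ (feed Ax into the $\multimap$E rule of $\mathcal{U}$ for $\neg p=p\multimap\bot$), so clause~\ref{eq:supp-at} gives $\Gamma_{\At},p^{\neg p}\vdash_{\mathcal{B}}\bot$ and the Raa rule of $\mathcal{U}$ then yields $\Gamma_{\At}\vdash_{\mathcal{B}}p$.

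In the inductive step, $\phi=\psi\with\chi$ is immediate from clause~\ref{eq:supp-and}, the induction hypotheses for $\psi$ and $\chi$, and the rules $\with$I and $\with$E of $\mathcal{U}$. Each of the remaining connectives $\otimes,1,\oplus,0,\multimap,\parr$ follows one of two mirror-image schemes. \emph{From $\Gamma_{\At}\vdash_{\mathcal{B}}p^{\phi}$ to $\Vdash^{\Gamma_{\At}}_{\mathcal{B}}\phi$:} unfold the support clause for $\phi$; its antecedent is a support statement about the immediate subformulas relative to some $\mathcal{C}\supseteq\mathcal{B}$; convert it into an atomic-derivability statement in $\mathcal{C}$ by the induction hypothesis — whenever a subformula $\psi$ sits on the left of $\Vdash$, first replace it by the atom $p^{\psi}$ using~\ref{eq:supp-inf} and $\Vdash^{\{p^{\psi}\}}_{\mathcal{C}}\psi$ (which holds by the induction hypothesis and Ax) — then apply the matching elimination rule of $\mathcal{U}$, whose minor premises are precisely of the form $\cdots\vdash\bot$, and close with Lemma~\ref{lemma:bottomisspecial}.

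\emph{From $\Vdash^{\Gamma_{\At}}_{\mathcal{B}}\phi$ to $\Gamma_{\At}\vdash_{\mathcal{B}}p^{\phi}$:} apply the support clause for $\phi$ to the hypothesis with $\mathcal{C}=\mathcal{B}$ and the auxiliary contexts taken to be singletons of negation-atoms $p^{\neg(\cdot)}$ (and $\{p^{\psi}\}$ whenever a subformula must be supplied); discharge the antecedent of the clause using the \emph{introduction} rules of $\mathcal{U}$ and the induction hypothesis — for instance, $\psi,\chi\Vdash^{\{p^{\neg\phi}\}}_{\mathcal{B}}\bot$ is witnessed by $\otimes$I followed by $\multimap$E on $p^{\neg\phi}$, $\Vdash^{\{p^{\neg 1}\}}_{\mathcal{B}}\bot$ by $1$I followed by $\multimap$E, and $\chi\Vdash^{\{p^{\neg\chi}\}}_{\mathcal{B}}\bot$ by a single $\multimap$E. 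The clause then returns $\Gamma_{\At},p^{\neg\phi}\vdash_{\mathcal{B}}\bot$ (via Lemma~\ref{lemma:bottomisspecial}), and the Raa rule of $\mathcal{U}$ gives $\Gamma_{\At}\vdash_{\mathcal{B}}p^{\phi}$; for $\phi=\psi\multimap\chi$ and $\phi=\psi\parr\chi$ one finishes instead with the $\multimap$I, resp.\ $\parr$I, rule of $\mathcal{U}$, after one more appeal to Raa for $p^{\chi}$.

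The main obstacle is this $\Vdash\Rightarrow\vdash$ direction: one must choose the extension $\mathcal{C}$ and the auxiliary atomic contexts so that the antecedent of the relevant support clause becomes provable purely from $\mathcal{U}$'s introduction rules — that is, verify the auxiliary support statements such as $\psi,\chi\Vdash^{\{p^{\neg\phi}\}}_{\mathcal{B}}\bot$ — while simultaneously forcing the consequent of the clause to be exactly $\Gamma_{\At},p^{\neg\phi}\vdash_{\mathcal{B}}\bot$, so that the single Raa rule of $\mathcal{U}$ closes the case. Carrying the base-extension quantifiers and the interchange machinery (clause~\ref{eq:supp-inf} and Lemmas~\ref{lemma:bottomisspecial},~\ref{lemma:floatingatom},~\ref{lemma:interchangeablesets},~\ref{lemma:partialinf}) correctly through each connective is where most of the work lies; it is precisely because $\mathcal{U}$'s elimination rules are restricted to $\bot$-conclusions that both directions go through with the modest bookkeeping sketched above.
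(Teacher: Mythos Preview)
Your proposal is correct and follows essentially the same route as the paper: induction on $\phi$, with the $\Leftarrow$ direction driven by the elimination rules of $\mathcal{U}$ (instantiating left-hand formulas by $p^{\psi}$ via the induction hypothesis and Ax) and the $\Rightarrow$ direction driven by the introduction rules of $\mathcal{U}$ together with Raa (instantiating the auxiliary contexts by singletons $\{p^{\neg(\cdot)}\}$). One small inaccuracy: for $\phi=\psi\parr\chi$ no extra Raa is needed, since the $\parr$I rule of $\mathcal{U}$ takes $\Gamma_{\At},p^{\neg\psi},p^{\neg\chi}\vdash\bot$ directly; the additional Raa for $p^{\chi}$ is required only in the $\multimap$ case.
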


\begin{proof}
%    We prove the result by induction on the complexity of $\phi$. The induction hypothesis is such that $$\Vdash^{\Delta_{\At}}_{\mathcal{B}} \chi \text{ if and only if } \Delta_{\At} \vdash_{\mathcal{B}} p^{\chi}$$ holds true for any $\Delta_{\At}$, any subformula $\chi$ of $\phi$ and any base in place of $\mathcal{B}$. We illustrate the base case and the case for $\multimap$; the full proof is provided in Appendix~\ref{sec:appendix}.
%
The proof is by induction on the complexity of $\phi$. The induction hypothesis is such that $$\Vdash^{\Delta_{\At}}_{\mathcal{B}} \chi \text{ if and only if } \Delta_{\At} \vdash_{\mathcal{B}} p^{\chi}$$ holds true for any $\Delta_{\At}$, any subformula $\chi$ of $\phi$ and any base in place of $\mathcal{B}$. We illustrate the base case and the case for $\multimap$, the other cases are similar and simpler.

\begin{description}[itemsep=0.5em, font=\normalfont]
\item[(Base case).] $\phi = p$, hence $p^{p} = p$; denote $p^{\neg p}$ as $\neg p$.

\noindent($\Rightarrow$): Assume $\Vdash^{\Gamma_{\At}}_{\mathcal{B}} p$. Notice that $p \vdash_{\mathcal{B}} p$ and $\neg p \vdash_{\mathcal{B}} \neg p$. Then the following is a deduction in $\mathcal{B}$:

\begin{prooftree}
\def\ScoreOverhang{0.5pt}
    \AxiomC{}
     \RightLabel{\small Ax}
    \UnaryInfC{$p \vdash p$}
    \AxiomC{}
    \RightLabel{\small Ax}
    \UnaryInfC{$\neg p \vdash \neg p$}
    \RightLabel{\small $\multimap$E}
    \BinaryInfC{$p, \neg p \vdash \bot$}
\end{prooftree}
This deduction shows $p, \neg p \vdash_{\mathcal{B}} \bot$, so, together with $\Vdash^{\Gamma_{\At}}_{\mathcal{B}} p$, we conclude $\Gamma_{\At}, \neg p \vdash_{\mathcal{B}} \bot$ by~\ref{eq:supp-at}. Hence, by applying \textit{reductio ad absurdum}, we obtain $\Gamma_{\At}\vdash_{\mathcal{B}} p$, as desired. \\

\noindent($\Leftarrow$): Assume $\Gamma_{\At}\vdash_{\mathcal{B}} p$. Further assume, for an arbitrary $\mathcal{C} \supseteq \mathcal{B}$ and arbitrary $\Delta_{\At}$, that $\Delta_{\At}, p \vdash_{\mathcal{C}} \bot$. Since deductions are preserved under base extensions, it is also the case that $\Gamma_{\At}\vdash_{\mathcal{C}} p$. We can thus compose $\Delta_{\At}, p \vdash_{\mathcal{C}} \bot$ and $\Gamma_{\At}\vdash_{\mathcal{C}} p$ to obtain $\Delta_{\At}, \Gamma_{\At} \vdash_{\mathcal{C}} \bot$. Since $\mathcal{C} \supseteq \mathcal{B}$ such that $\Delta_{\At}, p \vdash_{\mathcal{C}} \bot$ for arbitrary $\Delta_{\At}$, and $\Delta_{\At}, \Gamma_{\At} \vdash_{\mathcal{C}} \bot$,  by~\ref{eq:supp-at}, $\Vdash^{\Gamma_{\At}}_{\mathcal{B}} p$, as expected.

\item[($\multimap$).] $\phi = \alpha \multimap \beta$.

\noindent$(\Rightarrow)$: Assume $\Vdash^{\Gamma_{\At}}_{\mathcal{B}} \alpha \multimap \beta$. Since $p^{\alpha} \vdash_{\mathcal{B}} p^{\alpha}$, the induction hypothesis yields $\Vdash^{p^{\alpha}}_{\mathcal{B}} \alpha$. Further assume, for an arbitrary $\mathcal{C} \supseteq \mathcal{B}$ and arbitrary $\Delta_{\At}$, that $\Vdash_{\mathcal{C}}^{\Delta_{\At}} \beta$. The induction hypothesis thus yields $\Delta_{\At} \vdash_\mathcal{C} p^{\beta}$. Notice further that $p^{\neg \beta} \vdash_{\mathcal{C}} p^{\neg \beta}$. Then the following is a deduction in $\mathcal{C}$:

\begin{prooftree}
  \def\ScoreOverhang{0.5pt}
      \AxiomC{$\Delta_{\At} \vdash p^\beta$}
      \AxiomC{}
      \RightLabel{\small Ax}
      \UnaryInfC{$p^{\neg \beta} \vdash p^{\neg \beta}$}
      \RightLabel{\small $\multimap$E}
      \BinaryInfC{$\Delta_{\At}, p^{\neg \beta} \vdash\bot$}
\end{prooftree}
This deduction shows $\Delta_{\At}, p^{\neg \beta} \vdash_{\mathcal{C}} \bot$, so by Lemma \ref{lemma:bottomisspecial}, $\Vdash^{\Delta_{\At}, p^{\neg \beta}}_{\mathcal{C}} \bot$. Since $\mathcal{C} \supseteq \mathcal{B}$ such that $\Vdash_{\mathcal{C}}^{\Delta_{\At}} \beta$ for arbitrary $\Delta_{\At}$, and $\Vdash^{\Delta_{\At}, p^{\neg \beta}}_{\mathcal{C}} \bot$, by~\ref{eq:supp-inf} we obtain $\beta \Vdash^{p^{\neg \beta}}_{\mathcal{B}} \bot$. Now, from $\Vdash^{\Gamma_{\At}}_{\mathcal{B}} \alpha \multimap \beta$ and $\Vdash^{p^{\alpha}}_{\mathcal{B}} \alpha$ and $\beta \Vdash^{p^{\neg \beta}}_{\mathcal{B}} \bot$ we obtain $\Vdash^{\Gamma_{\At}, p^{\alpha}, p^{\neg \beta}}_{\mathcal{B}} \bot$ by~\ref{eq:supp-imply}. Hence, $\Gamma_{\At},p^{\alpha}, p^{\neg \beta} \vdash_{\mathcal{B}} \bot$ by Lemma \ref{lemma:bottomisspecial}. Then the following is a deduction in $\mathcal{B}$:
\begin{prooftree}
\def\ScoreOverhang{0.5pt}
    \AxiomC{$\Gamma_{\At}, p^{\alpha}, p^{\neg \beta} \vdash_ \bot$}
    \RightLabel{\small Raa}
    \UnaryInfC{$\Gamma_{\At}, p^{\alpha} \vdash p^{\beta}$}
    \RightLabel{\small $\multimap$I}
    \UnaryInfC{$\Gamma_{\At} \vdash p^{\alpha \multimap \beta}$}
\end{prooftree}
This deduction shows $\Gamma_{\At} \vdash_{\mathcal{B}} p^{\alpha\multimap \beta}$, as expected. \\

\noindent$(\Leftarrow)$: Assume $\Gamma_{\At} \vdash_{\mathcal{B}} p^{\alpha \multimap \beta}$. Further assume that, for an arbitrary $\mathcal{C} \supseteq \mathcal{B}$ and arbitrary $\Delta_{\At}, \Theta_{\At}$, $\Vdash^{\Delta_{\At}}_{\mathcal{C}} \alpha$ and $\beta \Vdash^{\Theta_{\At}}_{\mathcal{C}} \bot$. Induction hypothesis yields $\Delta_{\At} \vdash_{\mathcal{C}} p^{\alpha}$. Notice further that $p^{\beta} \vdash_{\mathcal{C}} p^{\beta}$, hence, by the induction hypothesis, $\Vdash^{p^{\beta}}_{\mathcal{C}} \beta$. Since $\beta \Vdash^{\Theta_{\At}}_{\mathcal{C}} \bot$ and $\Vdash^{p^{\beta}}_{\mathcal{C}} \beta$,  by~\ref{eq:supp-inf}, we obtain $\Vdash^{p^{\beta}, \Theta_{\At}}_{\mathcal{C}} \bot$. Hence, $p^{\beta}, \Theta_{\At} \vdash_{\mathcal{C}} \bot$ by Lemma~\ref{lemma:bottomisspecial}. Since $\mathcal{C} \supseteq \mathcal{B}$, it is also the case that $\Gamma_{\At} \vdash_{\mathcal{C}} p^{\alpha \multimap \beta}$, so the following is a deduction in $\mathcal{C}$:

\begin{prooftree}
  \def\ScoreOverhang{0.5pt}
      \AxiomC{$\Gamma_{\At} \vdash p^{\alpha \multimap \beta}$}
      \AxiomC{$\Delta_{\At} \vdash p^{\alpha}$}
      \RightLabel{\small $\multimap$E}
      \BinaryInfC{$\Gamma_{\At}, \Delta_{\At} \vdash p^\beta$}
      \AxiomC{$\Theta_{\At}, p^{\beta} \vdash \bot$}
    \RightLabel{\small Subs}
      \BinaryInfC{$\Gamma_{\At}, \Delta_{\At}, \Theta_{\At} \vdash \bot$}
\end{prooftree}
This deduction shows $\Gamma_{\At}, \Delta_{\At}, \Theta_{\At} \vdash_{\mathcal{C}} \bot$, so, by Lemma~\ref{lemma:bottomisspecial}, we conclude $\Vdash^{\Gamma_{\At}, \Delta_{\At}, \Theta_{\At}}_{\mathcal{C}} \bot$. Since $\mathcal{C} \supseteq \mathcal{B}$ such that $\Vdash^{\Delta_{\At}}_{\mathcal{C}} \alpha$ and $\beta \Vdash^{\Theta_{\At}}_{\mathcal{C}} \bot$ for arbitrary $\Delta_{\At}, \Theta_{\At}$, and $\Vdash^{\Gamma_{\At}, \Delta_{\At}, \Theta_{\At}}_{\mathcal{C}} \bot$, by~\ref{eq:supp-imply}, $\Vdash^{\Gamma_{\At}}_{\mathcal{B}} \alpha \multimap \beta$, as expected.

\end{description}
\end{proof}

\begin{theorem}[Completeness] \label{thm:completeness}
    If $\Gamma \Vdash \phi$ then $\Gamma \vdash_{\MALL} \phi$.
\end{theorem}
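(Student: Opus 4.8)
The strategy is to run the standard simulation-base argument: validity quantifies over all bases, so it holds in the tailor-made base $\mathcal{U}$, and Lemma~\ref{lemma:completeness} translates support in $\mathcal{U}$ back into atomic derivability, which in turn unwinds into a genuine $\MALL$ derivation.

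Concretely, assume $\Gamma \Vdash \phi$ and write $\Gamma = \{\psi^1,\dots,\psi^n\}$. Let $\Pi$ be the finite set consisting of all subformulas, and negations thereof, of the formulas in $\Gamma \cup \{\phi\}$, together with $\bot$; pick an atomic mapping $\sigma$ for $\Pi$ (possible since $\At$ is countably infinite and $\sigma$ is only required to be injective and to fix atoms, and recall $\bot \in \At$, so $p^{\bot} = \bot$), and let $\mathcal{U}$ be the simulation base for $\Pi$ and $\sigma$. Since $\Gamma \Vdash \phi$ holds relative to every base (Definition~\ref{def:validity}), in particular $\Gamma \Vdash^{\varnothing}_{\mathcal{U}} \phi$. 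For each $i$, the Ax instance $p^{\psi^i} \vdash_{\mathcal{U}} p^{\psi^i}$ together with the $(\Leftarrow)$ direction of Lemma~\ref{lemma:completeness} gives $\Vdash^{\{p^{\psi^i}\}}_{\mathcal{U}} \psi^i$. Plugging these into clause~\ref{eq:supp-inf} (with $\mathcal{C} = \mathcal{U}$ and $\Delta^i_{\At} = \{p^{\psi^i}\}$) yields $\Vdash^{p^{\psi^1},\dots,p^{\psi^n}}_{\mathcal{U}} \phi$, and the $(\Rightarrow)$ direction of Lemma~\ref{lemma:completeness} then produces a $\mathcal{U}$-deduction of $p^{\psi^1},\dots,p^{\psi^n} \vdash p^{\phi}$.

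The remaining task is to convert this atomic deduction into a $\MALL$ derivation. Define a translation $(\cdot)^{\ast}$ that sends an atom $p^{\chi}$ (for the unique $\chi \in \Pi$ with $\sigma(\chi) = p^{\chi}$, unique by injectivity of $\sigma$) to $\chi$, fixes every other atom, and is extended homomorphically over the connectives; thus $\bot^{\ast} = \bot$ and $(\neg p^{\chi})^{\ast} = \neg\chi$. Inspecting the rules of $\mathcal{U}$ one by one, each becomes, after applying $(\cdot)^{\ast}$ to its premises and conclusion, an inference derivable in $\MALL$: Ax and Subs become the homonymous $\MALL$ inferences (Subs being admissible, as noted in the Example), and each logical rule of $\mathcal{U}$ becomes precisely the corresponding $\MALL$ rule --- for instance the $\mathcal{U}$-rule $\Gamma_{\At} \vdash p^{\alpha \otimes \beta}$, $\Delta_{\At}, p^{\alpha}, p^{\beta} \vdash \bot \ / \ \Gamma_{\At}, \Delta_{\At} \vdash \bot$ maps to the $\MALL$ instance of $\otimes$E with conclusion $\bot$, and the Raa-rule $\Gamma_{\At}, p^{\neg\alpha} \vdash \bot \ / \ \Gamma_{\At} \vdash p^{\alpha}$ maps to the $\MALL$ rule Raa. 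By a straightforward induction on the structure of a $\mathcal{U}$-deduction, $\Sigma_{\At} \vdash_{\mathcal{U}} q$ implies $\Sigma_{\At}^{\ast} \vdash_{\MALL} q^{\ast}$. Applying this to the deduction obtained above, and noting $(p^{\psi^i})^{\ast} = \psi^i$ and $(p^{\phi})^{\ast} = \phi$, we conclude $\Gamma \vdash_{\MALL} \phi$.

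I expect the translation step to be the main obstacle, for two reasons: the schematic contexts $\Gamma_{\At}, \Delta_{\At}, \Theta_{\At}$ and the Subs rule mean that arbitrary atoms (and, depending on how one reads $\neg p^{\psi}$ in the $\parr$I rule, compound formulas) can appear in intermediate sequents of the $\mathcal{U}$-deduction, so $(\cdot)^{\ast}$ must be defined on the whole language and shown to act transparently on atoms outside the range of $\sigma$; and one must verify, case by case, that every rule scheme of $\mathcal{U}$ translates to something $\MALL$-derivable, paying attention to the fact that the simulation base only includes elimination rules with minor premises of the form $\bot$. All the genuinely semantic content is already packaged in Lemma~\ref{lemma:completeness}.
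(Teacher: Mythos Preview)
Your proposal is correct and follows essentially the same route as the paper: instantiate validity at the simulation base $\mathcal{U}$, use Lemma~\ref{lemma:completeness} in both directions to pass between support and atomic derivability, and then translate the resulting $\mathcal{U}$-deduction into a $\MALL$ derivation by replacing each $p^{\chi}$ with $\chi$. Your explicit definition of $(\cdot)^{\ast}$ and the rule-by-rule check is more careful than the paper's one-line ``rewrite each atom $p^{\psi}$ as $\psi$'', and your caution about the $\parr$I rule and atoms outside the range of $\sigma$ is well placed (the former is indeed a typographical slip in the displayed simulation-base rules), but none of this changes the argument's structure.
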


\begin{proof}
    Assume $\Gamma \Vdash \phi$. Let $\Gamma = \{\alpha^1,\dots,\alpha^n\}$ and let $\mathcal{U}$ be a simulation base for $\Gamma \cup \phi$ and some atomic mapping $\sigma$. By Definition~\ref{def:validity}, $\Gamma \Vdash_{\mathcal{U}} \phi$. Let $p^{\alpha^i}$ be an atom representing $\alpha^i$ $(1 \leq i \leq n)$. Since $p^{\alpha^i} \vdash_{\mathcal{U}} p^{\alpha^i}$, by Lemma~\ref{lemma:completeness} we have that $\Vdash^{p^{\alpha^i}}_{\mathcal{U}} \alpha^i$ for all $\alpha^i \in \Gamma$. Then, by~\ref{eq:supp-inf}, we obtain $\Vdash^{p^{\alpha^1},\dots,p^{\alpha^n}}_{\mathcal{U}} \phi$. By Lemma~\ref{lemma:completeness} again, $p^{\alpha^1},\dots,p^{\alpha^n} \vdash_{\mathcal{U}} p^{\phi}$. Denote $\Gamma_{\At} \coloneq \{p^{\alpha^1},\dots,p^{\alpha^n}\}$; then $\Gamma_{\At} \vdash_{\mathcal{U}} p^{\phi}$. Since the rules in $\mathcal{U}$ precisely correspond to the natural deduction rules of $\MALL$ (plus the admissible substitution rules), we can rewrite each atom $p^{\psi}$ in the derivation of $\Gamma_{\At} \vdash_{\mathcal{U}} p^{\phi}$ as $\psi$, hence obtaining the derivation $\Gamma \vdash \phi$ as desired.
\end{proof}

As previously remarked, our completeness proof requires only a notion of simulation base in which applications of $\otimes E$, $\oplus E$, $0 E$ and $\parr E$ have minor premises with shape $\bot$. This also yields a purely semantic proof of the following proof-theoretic property:

\begin{corollary}
    If $\Gamma \vdash_{\MALL} \phi$, then there is a $\MALL$-derivation of $\Gamma \vdash \phi$ in which all applications of $\otimes E$, $\oplus E$, $0 E$ and $\parr E$ have minor premises with shape $\bot$.
\end{corollary}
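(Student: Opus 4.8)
The statement is essentially a byproduct of the machinery already in place, so the plan is to close a loop rather than to start from scratch. First I would use soundness: assuming $\Gamma \vdash_{\MALL} \phi$, Theorem~\ref{thm:soundness} gives $\Gamma \Vdash \phi$, which places us exactly in the hypothesis of Theorem~\ref{thm:completeness}. I then re-run the construction from the proof of that theorem verbatim: fix an atomic mapping $\sigma$ for $\Gamma \cup \{\phi\}$, form the simulation base $\mathcal{U}$, and invoke Definition~\ref{def:validity} together with Lemma~\ref{lemma:completeness} to obtain a deduction $\mathcal{D}$ in $\mathcal{U}$ of $\Gamma_{\At} \vdash_{\mathcal{U}} p^{\phi}$, where $\Gamma_{\At} = \{p^{\alpha^1}, \dots, p^{\alpha^n}\}$ and $\Gamma = \{\alpha^1,\dots,\alpha^n\}$.

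Next I would inspect the shape of $\mathcal{U}$. Its rules are precisely those listed in the definition of the simulation base, together with $\mathrm{Ax}$ and $\mathrm{Subs}$. Under the rewriting $p^{\psi} \mapsto \psi$, each rule of $\mathcal{U}$ maps to a $\MALL$ inference rule (or to $\mathrm{Subs}$), and the only ones whose image is an instance of $\otimes E$, $\oplus E$, $0E$, or $\parr E$ are the correspondingly-named atomic rules of $\mathcal{U}$ — which, by construction, are exactly those with elimination premises (respectively, with conclusion, in the case of $0E$) of the form $\cdot \vdash \bot$. Hence applying the rewriting to $\mathcal{D}$ yields a derivation of $\Gamma \vdash \phi$ in the system $\MALL + \mathrm{Subs}$ in which every application of $\otimes E$, $\oplus E$, $0E$, and $\parr E$ has minor premises of shape $\bot$.

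It remains to discharge the admissible rule $\mathrm{Subs}$. Here I would appeal to the standard composition procedure witnessing its admissibility (the references cited just after the statement of Subs in the excerpt): each instance is removed by grafting the derivation of the left premise onto the $\mathrm{Ax}$-leaves of the derivation of the right premise. This transformation acts only at leaves; it never manufactures a new instance of an elimination rule, and it never alters the succedent of the conclusion of any pre-existing premise subtree. Consequently the ``$\bot$-minor-premise'' property is preserved under $\mathrm{Subs}$-elimination, and the output is a genuine $\MALL$-derivation of $\Gamma \vdash \phi$ with the claimed shape.

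The one point deserving explicit attention — and the step I would flag as the main obstacle — is precisely this last preservation claim: one must verify that eliminating $\mathrm{Subs}$ does not silently reintroduce elimination rules with non-$\bot$ minor premises. Since composition leaves the tree structure intact except at the grafting leaves, and in particular leaves untouched the conclusions of all minor-premise subtrees, no offending instance can arise; but this should be argued rather than assumed, e.g.\ by a short induction on the measure driving $\mathrm{Subs}$-elimination. Everything else is bookkeeping already carried out in the completeness proof.
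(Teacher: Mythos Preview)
Your proposal is correct and follows the same route as the paper: apply soundness, then re-run the completeness construction through the simulation base $\mathcal{U}$, observing that the only instances of $\otimes E$, $\oplus E$, $0E$, $\parr E$ in $\mathcal{U}$ already have $\bot$-shaped minor premises, so the rewritten derivation inherits the property. Your explicit treatment of $\mathrm{Subs}$-elimination and the check that it preserves succedents of minor premises is more detailed than the paper (which compresses everything into ``a straightforward induction on the length of $\pi$''), but this is added rigor rather than a different approach.
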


\begin{proof}
    Assume $\Gamma \vdash \phi$. By soundness we conclude $\Gamma \Vdash \phi$. Let $\mathcal{U}$ be a simulation base for $\Gamma \cup \phi$ and some atomic mapping $\sigma$. Theorem \ref{thm:completeness} yields a derivation $\pi$ showing $\Gamma \vdash \phi$. Since $\pi$ was obtained by rewriting atoms $p^{\phi}$ as $\phi$ in a derivation of $\mathcal{U}$ and all instances of $\otimes E$, $\oplus E$, $0 E$ and $\parr E$ in $\mathcal{U}$ have minor premises with shape $\bot$, a straightforward induction on the length of $\pi$ shows that it has the desired property.
\end{proof}

The same holds for every logic sound and complete with respect to similar classical semantics. This property, which is sometimes used in classical normalisation proofs (see, for instance, Definition 3.8 and Lemma 3.16 of \cite{Medeiros_2006}, as well as the proof corrections in \cite{LopesEnglanderLoboCruz}), highlights important features of the interaction between classical negation and classical disjunctions. Perhaps more importantly, the proof of the corollary is purely semantic and does not require any reduction procedures or similar techniques, showing once again that in $\Bes$ and $\Pts$ it is possible both to prove semantic results through syntactic means and syntactic results through semantic means.

\section{Concluding Remarks}

Switching from the truth-centered model-theoretic paradigm to the demonstrability-centered proof-theoretic paradigm yields a semantic framework perfectly suited for intuitionistic logics. This raises the question of what would a general proof-theoretic account of classical semantics look like. This paper provides an illuminating answer in terms of a simple characterisation of classical linear logic through $\Bes$. Furthermore, even though our results are proven only for (a fragment of) classical linear logic, the structure of proofs suggests that our methods are fully general and may also be applied to the intuitionistic version of other logics, resulting in a similar semantics for their classical version.

%The inner workings of our characterisation also allow us to raise some important conceptual points. Since classical proof semantics can be obtained by imposing small restrictions on the semantic clauses of intuitionistic proof semantics, which are nothing more than explicit specifications of what counts as an intuitionistic proof of each logical connective, the natural conclusion would be that the very concept of classical proof might be a {\em restriction} of the concept of constructive proof or, conversely, that the concept of constructive proof is just a {\em generalisation} of the classical concept of proof. In that sense, a classical proof is just a proof with constructive content but that requires much less information to be established. 

The inner workings of our characterisation also bring to light some important conceptual insights. Classical proof semantics can be derived by applying mild restrictions to the semantic clauses of intuitionistic proof semantics--which themselves are simply explicit descriptions of what qualifies as an intuitionistic proof for each logical connective. This suggests a natural conclusion: the very notion of a classical proof might be seen as a {\em restriction} of the constructive concept of proof, or conversely, that the constructive proof concept is a {\em generalisation} of the classical concept of proof. In this sense, a classical proof still carries constructive content but requires significantly less information to be established.

%This provides an intuitive justification for results showing that it is possible to consistently extract weakened algorithmic contents from classical proofs \cite{Murthy1990ExtractingCC,Constable_Murthy_1991}. It also helps explain how the classical version of linear logic can be considered constructive~\cite{DBLP:journals/mscs/Girard91,DBLP:journals/apal/Girard93} even in the presence of rules incorporating the \textit{reductio ad absurdum} principle, and even when there exists such a thing as intuitionistic linear logic: the absence of structural rules in the calculus increases the informational content required to establish proofs to such a degree that their algorithmic contents, regardless of the use of \textit{reductio ad absurdum}, are now robust enough for it to be deemed constructive. This does not prevent intuitionistic linear logic from being \textit{even more constructive} than its classical counterpart by requiring \textit{even more information} for proofs to be established. Naturally, this suggests that the difference between the contents of classical and constructive proofs should be regarded as {\em quantitative}, not qualitative, and also that constructivity as a phenomenon is better interpreted as a \textit{spectrum} of informational requirements for the establishment of proofs.

This provides an intuitive justification for results showing that weakened algorithmic content can consistently be extracted from classical proofs~\cite{Constable_Murthy_1991,Murthy1990ExtractingCC}. It also sheds light on how the classical version of linear logic can be seen as constructive~\cite{DBLP:journals/mscs/Girard91,DBLP:journals/apal/Girard93}, even in the presence of rules incorporating the \textit{reductio ad absurdum} principle, and despite the existence of intuitionistic linear logic. The absence of structural rules in the calculus increases the informational content required to establish proofs to such an extent that their algorithmic interpretations become robust enough to be considered constructive -- regardless of the use of \textit{reductio ad absurdum}. 

This, of course, does not prevent intuitionistic linear logic from being \textit{even more constructive} than its classical counterpart, since it demands \textit{even more information} to establish a proof. All this suggests that the difference between classical and constructive proofs is best understood as {\em quantitative}, rather than {\em qualitative}. It also supports the idea that constructivity is not a binary property but rather a \textit{spectrum} of informational requirements for proof construction.

% and also suggests that the difference between the contents of classical and constructive proofs should be regarded as \textit{quantitative}, not \textit{qualitative}

%it is only natural to conclude that the very concept of a classical proof might be a restriction of the concept of constructive proof or, conversely, that the concept of constructive proof is just a generalisation of the classical concept of proof. In that sense, a classical proof is just a proof with constructive content but that requires much less information to be established. As such, the difference between the informational contents of classical and constructive proofs should perhaps be regarded as \textit{quantitative}, not \textit{qualitative}. This points is especially relevant in the specific context of linear logic, since even classical linear logic is often considered a constructive logic.

Our characterisation of $\parr$ and $\with$ also sheds further light on the relation between classical and intuitionistic interpretations of connectives. All other logical operators are obtained after restricting their natural proof conditions for intuitionistic logic, but the classical proof conditions for $\parr$ and $\with$ can be read directly from their standard introduction and elimination rules. The reason for this seems to be wholly different in the two cases. The proof conditions for $\parr$ are extracted directly from the rules because the rules themselves seem to be \textit{essentially classical}, in the sense that they already express the restricted proof conditions we expect to see in classical logic. The fact that the inductive step for $\parr$ in the completeness result does not require applications of \textit{reductio ad absurdum} evidentiates this. This also explains why $\parr$ is often viewed as an essentially classical connective  usually absent in formulations of intuitionistic linear logic \cite{Girard10.1007/BFb0014972,HYLAND1993273}. Interestingly, it is straightforward to present an intuitionistic version of $\parr$ by writing down a unrestricted version of the classical semantic clause:

\begin{itemize}[itemsep=0.5em, font=\normalfont] \label{parrAltInt}
    \item[\namedlabel{eq:supp-parrAlternativeInt}{($\parr$ Int)}]  $\Vdash^{\Gamma_{\At}}_{\mathcal{B}} \phi \parr \psi$ iff, for all $\mathcal{C} \supseteq \mathcal{B}$, all $p \in \At$ and all $\Delta_{\At}, \Theta_{\At}$, if $\phi \Vdash^{\Delta_{\At}}_{\mathcal{C}} p$ and $\psi \Vdash^{\Theta_{\At}}_{\mathcal{C}} p$ then $\Vdash^{\Gamma_{\At}, \Delta_{\At}, \Theta_{\At}}_{\mathcal{C}} p$;
\end{itemize}
\noindent
which is very sensible for a definition of multiplicative disjunction. It is not immediately clear, however, how those proof conditions would translate into an actual proof system. In fact, even though there are proof systems for intuitionistic linear logic with $\parr$~\cite{val05}, it is generally not easy to define intuitionistic version of $\parr$ with desirable properties such as cut elimination \cite{Schellinx1991SomeSO}. Since an investigation of the properties of such a definition of $\parr$ is entirely outside the scope of this paper, this is left for future work.

On the other hand, the classical clause for $\with$ is perfectly acceptable from an intuitionistic viewpoint, which means that intuitionistic and classical logic actually \textit{share} the proof conditions for $\with$. The claim that different logics might share proof conditions for connectives figures prominently in the literature on logical ecumenism \cite{Pimentel2023ATO,DBLP:journals/Prawitz15} and was predated by a result of G\"odel showing that classical and intuitionist logic coincide w.r.t. derivability in the fragment containing only conjunction and negation \cite{RodolfoAlbareli,Godel10.1093/oso/9780195147209.003.0063}.  Since such claims are usually formulated in syntactic frameworks, our results add to the arguments to that effect by showing that this is also reflected on the semantic level.

We conclude this paper by discussing the extension of $\Bes$ to full linear logic. It is well known that incorporating exponential modalities significantly increases the complexity of the semantic analysis of $\LL$-- for example, the categorical interpretation of exponentials has been a longstanding subject of debate (see~\cite{ll-mellies}).

In the case of $\Bes$, the following semantic clause for the bang modality in intuitionistic linear logic has been proposed in~\cite{DBLP:journals/corr/abs-2402-01982}:

\begin{itemize}[itemsep=0.5em, font=\normalfont] \label{bang} \item[\namedlabel{eq:bang}{($!$ Int)}]
$\Vdash^{\Gamma_{\At}}_{\mathcal{B}} !\phi$ iff, for all $\mathcal{C} \supseteq \mathcal{B}$, all $p \in \At$ and all $\Delta_{\At}$,
if for all $\mathcal{D} \supseteq \mathcal{C}$,
$\Vdash^{\varnothing}_{\mathcal{D}} \phi$ implies $\Vdash^{\Delta_{\At}}_{\mathcal{D}} p$,
then $\Vdash^{\Gamma_{\At}, \Delta_{\At}}_{\mathcal{C}} p$; \end{itemize}
Intuitively, this clause asserts that $!\phi$ is valid relative to the multiset $\Gamma_{\At}$ if and only if anything derivable from $\phi$ without consuming {\em any} resources is also valid relative to $\Gamma_{\At}$. This goes well along with the intended meaning of $!$, specially when read from its introduction rule/promotion in natural deduction/sequent calculus.

Given the structural constraints of our semantics, it is natural to expect that, in the classical setting, the corresponding clause for the bang modality should take the following form: 
\begin{itemize}[itemsep=0.5em, font=\normalfont] \label{cbang} \item[\namedlabel{eq:cbang}{($!$)}]
$\Vdash^{\Gamma_{\At}}_{\mathcal{B}} !\phi$ iff, for all $\mathcal{C} \supseteq \mathcal{B}$ and all $\Delta_{\At}$,
if for all $\mathcal{D} \supseteq \mathcal{C}$,
$\Vdash^{\varnothing}_{\mathcal{D}} \phi$ implies $\Vdash^{\Delta_{\At}}_{\mathcal{D}} \red{\bot}$,
then $\Vdash^{\Gamma_{\At}, \Delta_{\At}}_{\mathcal{C}} \red{\bot}$. \end{itemize}
%Observe that restricting the quantification over atoms to $\bot$ actually internalizes the {\em reductio ad absurdum}.

As for the dual exponential $?$, we conjecture that the following clause provides a sound interpretation:\begin{itemize}[itemsep=0.5em, font=\normalfont] \label{cquest} \item[\namedlabel{eq:cquest}{($?$)}]
$\Vdash^{\Gamma_{\At}}_{\mathcal{B}} ?\phi$ iff, for all $\mathcal{C} \supseteq \mathcal{B}$ and all $\Delta_{\At}$,
if for all $\mathcal{D} \supseteq \mathcal{C}$,
$\Vdash^{\Delta_{\At}}_{\mathcal{D}} \phi$ implies $\Vdash^{\Delta_{\At}}_{\mathcal{D}} \red{\bot}$,
then $\Vdash^{\Gamma_{\At}, \Delta_{\At}}_{\mathcal{C}} \red{\bot}$. \end{itemize}
The idea is to match the elimination clause for $?$, in which if anything derivable from $\phi$ without consuming {\em any  extra} resources is also valid relative to $\Gamma_{\At}$.

Exploring whether these clauses indeed yield sound interpretations of $!$ and $?$ in full classical linear logic $\LL$ is an interesting direction for future work, which we intend to pursue next.

%Exploring whether those clauses indeed yields sound interpretations of $!$ and $?$ in the full classical linear logic $\LL$, and how it relates to a potential semantics for the dual exponential $?$ modality, is an interesting direction for future work which we aim to pursue next.

\iffalse
 \begin{itemize}[itemsep=0.5em, font=\normalfont] \label{cbang} \item[\namedlabel{eq:cbang}{($?$)}]
$\Vdash^{\Gamma_{\At}}_{\mathcal{B}} ?\phi$ iff, for all $\mathcal{C} \supseteq \mathcal{B}$ and for all atomic multisets $\Delta_{\At}$,
if for all $\mathcal{D} \supseteq \mathcal{C}$,
$\Vdash^{\Delta_{\At}}_{\mathcal{D}} \phi$ implies $\Vdash^{\Delta_{\At}}_{\mathcal{D}} \red{\bot}$,
then $\Vdash^{\Gamma_{\At}, \Delta_{\At}}_{\mathcal{C}} \red{\bot}$. \end{itemize}
\fi

\newpage

\bibliographystyle{entics}

%\bibliographystyle{entics}
%\bibliography{references}

\begin{thebibliography}{10}
\providecommand{\url}[1]{\texttt{#1}}
\providecommand{\urlprefix}{ }
\providecommand{\eprint}[2][]{\url{#2}}

\bibitem{DBLP:journals/tcs/Abramsky93}
Abramsky, S., \emph{Computational interpretations of linear logic}, Theor.
  Comput. Sci. \textbf{111}, pages 3--57 (1993).
\newline\urlprefix\url{https://doi.org/10.1016/0304-3975(93)90181-R}

\bibitem{DBLP:journals/jsyml/AbramskyJ94}
Abramsky, S. and R.~Jagadeesan, \emph{Games and full completeness for
  multiplicative linear logic}, J. Symb. Log. \textbf{59}, pages 543--574
  (1994).
\newline\urlprefix\url{https://doi.org/10.2307/2275407}

\bibitem{DBLP:conf/lics/AbramskyM99}
Abramsky, S. and P.~Melli{\`{e}}s, \emph{Concurrent games and full
  completeness}, in: \emph{14th Annual {IEEE} Symposium on Logic in Computer
  Science, Trento, Italy, July 2-5, 1999}, pages 431--442, {IEEE} Computer
  Society (1999).
\newline\urlprefix\url{https://doi.org/10.1109/LICS.1999.782638}

\bibitem{RodolfoAlbareli}
Albarelli, F. and R.~Ertola-Biraben, \emph{Semantics for fragments of
  propositional intuitionistic logic}, South American Journal of Logic
  \textbf{5, n. 2}, pages 279--295 (2019).
\newline\urlprefix\url{https://www.sa-logic.org/sajl-v5-i2/06-Albarelli-Ertola-SAJL.pdf}

\bibitem{DBLP:journals/jsyml/AllweinD93}
Allwein, G. and J.~M. Dunn, \emph{Kripke models for linear logic}, J. Symb.
  Log. \textbf{58}, pages 514--545 (1993).
\newline\urlprefix\url{https://doi.org/10.2307/2275217}

\bibitem{DBLP:journals/logcom/Andreoli92}
Andreoli, J., \emph{Logic programming with focusing proofs in linear logic}, J.
  Log. Comput. \textbf{2}, pages 297--347 (1992).
\newline\urlprefix\url{https://doi.org/10.1093/LOGCOM/2.3.297}

\bibitem{DBLP:conf/oopsla/AndreoliP90}
Andreoli, J. and R.~Pareschi, \emph{{LO} and behold! concurrent structured
  processes}, in: A.~Yonezawa, editor, \emph{Conference on Object-Oriented
  Programming Systems, Languages, and Applications / European Conference on
  Object-Oriented Programming, {OOPSLA/ECOOP} 1990, Ottawa, Canada, October
  21-25, 1990, Proceedings}, pages 44--56, {ACM} (1990).
\newline\urlprefix\url{https://doi.org/10.1145/97945.97953}

\bibitem{DBLP:journals/jphil/Ayhan21}
Ayhan, S., \emph{What is the meaning of proofs?}, J. Philos. Log. \textbf{50},
  pages 571--591 (2021).
\newline\urlprefix\url{https://doi.org/10.1007/S10992-020-09577-2}

\bibitem{DBLP:journals/corr/abs-2402-01982}
Buzoku, Y., \emph{A proof-theoretic semantics for intuitionistic linear logic},
  CoRR \textbf{abs/2402.01982} (2024). \eprint{2402.01982}.
\newline\urlprefix\url{https://doi.org/10.48550/ARXIV.2402.01982}

\bibitem{Constable_Murthy_1991}
Constable, R. and C.~Murthy, \emph{Finding computational content in classical
  proofs}, pages 341--362, Cambridge University Press (1991).
\newline\urlprefix\url{https://doi.org/10.1017/CBO9780511569807.014}

\bibitem{DBLP:journals/jphil/dAragona22}
d'Aragona, A.~P., \emph{Proofs, grounds and empty functions: Epistemic
  compulsion in prawitz's semantics}, J. Philos. Log. \textbf{51}, pages
  249--281 (2022).
\newline\urlprefix\url{https://doi.org/10.1007/S10992-021-09621-9}

\bibitem{DBLP:journals/mscs/Falco03}
de~Falco, L.~T., \emph{Obsessional experiments for linear logic proof-nets},
  Math. Struct. Comput. Sci. \textbf{13}, pages 799--855 (2003).
\newline\urlprefix\url{https://doi.org/10.1017/S0960129503003967}

\bibitem{val05}
de~Paiva, V. and L.~C. Pereira, \emph{A short note on intuitionistic
  propositional logic with multiple conclusions}, Manuscrito \textbf{28}, pages
  317--329 (2005).

\bibitem{sep-logic-linear}
Di~Cosmo, R. and D.~Miller, \emph{{Linear Logic}}, in: E.~N. Zalta and
  U.~Nodelman, editors, \emph{The {Stanford} Encyclopedia of Philosophy},
  Metaphysics Research Lab, Stanford University, {F}all 2023 edition (2023).
\newline\urlprefix\url{https://plato.stanford.edu/archives/fall2023/entries/logic-linear/}

\bibitem{DBLP:journals/synthese/DicherP21}
Dicher, B. and F.~Paoli, \emph{The original sin of proof-theoretic semantics},
  Synth. \textbf{198}, pages 615--640 (2021).
\newline\urlprefix\url{https://doi.org/10.1007/s11229-018-02048-x}

\bibitem{ConsistencyHillbert61554c58-c869-34f0-b322-2cff263d9ae0}
Doherty, F.~T., \emph{Hilbert on consistency as a guide to mathematical
  reality}, Logique et Analyse pages pp. 107--128 (2017), ISSN 00245836,
  22955836.
\newline\urlprefix\url{https://www.jstor.org/stable/26767845}

\bibitem{dummett1991logical}
Dummett, M., \emph{{The Logical Basis of Metaphysics}}, Harvard University
  Press (1991).

\bibitem{DBLP:journals/mscs/Ehrhard93}
Ehrhard, T., \emph{Hypercoherences: {A} strongly stable model of linear logic},
  Math. Struct. Comput. Sci. \textbf{3}, pages 365--385 (1993).
\newline\urlprefix\url{https://doi.org/10.1017/S0960129500000281}

\bibitem{DBLP:journals/tcs/Ehrhard12}
Ehrhard, T., \emph{The scott model of linear logic is the extensional collapse
  of its relational model}, Theor. Comput. Sci. \textbf{424}, pages 20--45
  (2012).
\newline\urlprefix\url{https://doi.org/10.1016/J.TCS.2011.11.027}

\bibitem{DBLP:journals/iandc/FagesRS01}
Fages, F., P.~Ruet and S.~Soliman, \emph{Linear concurrent constraint
  programming: Operational and phase semantics}, Inf. Comput. \textbf{165},
  pages 14--41 (2001).
\newline\urlprefix\url{https://doi.org/10.1006/INCO.2000.3002}

\bibitem{Godel10.1093/oso/9780195147209.003.0063}
Feferman, S., J.~W. Dawson, S.~C. Kleene, G.~H. Moore, R.~M. Solovay and J.~v.
  Heijenoort, \emph{Zur intuitionistischen arithmetik und zahlentheorie
  (1933e)}, in: \emph{Kurt G\"{o}Del Collected Works Volume I: Publications
  1929-1936}, Oxford University Press (2001), ISBN 9780195147209.
  \eprint{https://academic.oup.com/book/0/chapter/422805871/chapter-pdf/52623860/isbn-9780195147209-book-part-63.pdf}.
\newline\urlprefix\url{https://doi.org/10.1093/oso/9780195147209.003.0063}

\bibitem{DBLP:conf/birthday/GabbayG05}
Gabbay, M. and M.~Gabbay, \emph{Some formal considerations on {G}abbay's
  restart rule in natural deduction and goal-directed reasoning}, in: S.~N.
  Art{\"{e}}mov, H.~Barringer, A.~S. d'Avila Garcez, L.~C. Lamb and J.~Woods,
  editors, \emph{We Will Show Them! Essays in Honour of Dov Gabbay, Volume
  One}, pages 701--730, College Publications (2005).
\newline\urlprefix\url{https://dblp.org/rec/conf/birthday/GabbayG05.bib}

\bibitem{DBLP:conf/tableaux/GheorghiuGP23}
Gheorghiu, A.~V., T.~Gu and D.~J. Pym, \emph{Proof-theoretic semantics for
  intuitionistic multiplicative linear logic}, in: R.~Ramanayake and J.~Urban,
  editors, \emph{Automated Reasoning with Analytic Tableaux and Related Methods
  - 32nd International Conference, {TABLEAUX} 2023, Prague, Czech Republic,
  September 18-21, 2023, Proceedings}, volume 14278 of \emph{Lecture Notes in
  Computer Science}, pages 367--385, Springer (2023).
\newline\urlprefix\url{https://doi.org/10.1007/978-3-031-43513-3\_20}

\bibitem{DBLP:journals/tcs/Girard87}
Girard, J., \emph{Linear logic}, Theor. Comput. Sci. \textbf{50}, pages 1--102
  (1987).
\newline\urlprefix\url{https://doi.org/10.1016/0304-3975(87)90045-4}

\bibitem{DBLP:journals/mscs/Girard91}
Girard, J., \emph{A new constructive logic: Classical logic}, Math. Struct.
  Comput. Sci. \textbf{1}, pages 255--296 (1991).
\newline\urlprefix\url{https://doi.org/10.1017/S0960129500001328}

\bibitem{DBLP:journals/apal/Girard93}
Girard, J., \emph{On the unity of logic}, Ann. Pure Appl. Logic \textbf{59},
  pages 201--217 (1993).
\newline\urlprefix\url{https://doi.org/10.1016/0168-0072(93)90093-S}

\bibitem{Girard10.1007/BFb0014972}
Girard, J.~Y. and Y.~Lafont, \emph{Linear logic and lazy computation}, in:
  H.~Ehrig, R.~Kowalski, G.~Levi and U.~Montanari, editors, \emph{TAPSOFT '87},
  pages 52--66, Springer Berlin Heidelberg, Berlin, Heidelberg (1987).
\newline\urlprefix\url{https://doi.org/10.1007/BFb0014972}

\bibitem{DBLP:journals/axioms/Goranko19}
Goranko, V., \emph{Hybrid deduction-refutation systems}, Axioms \textbf{8},
  page 118 (2019).
\newline\urlprefix\url{https://doi.org/10.3390/AXIOMS8040118}

\bibitem{Hilbert1900}
Hilbert, D., \emph{\"{U}ber den zahlbegriff.}, Jahresbericht der Deutschen
  Mathematiker-Vereinigung \textbf{8}, pages 180--183 (1900).
\newline\urlprefix\url{http://eudml.org/doc/144659}

\bibitem{Hilbert1979-HILMPL}
Hilbert, D., M.~W. Newsom, F.~E. Browder, D.~A. Martin, G.~Kreisel and
  M.~Davis, \emph{Mathematical problems. lecture delivered before the
  international congress of mathematicians at paris in 1900}, Journal of
  Symbolic Logic \textbf{44}, pages 116--119 (1979).
\newline\urlprefix\url{https://doi.org/10.2307/2273711}

\bibitem{DBLP:journals/iandc/HodasM94}
Hodas, J.~S. and D.~Miller, \emph{Logic programming in a fragment of
  intuitionistic linear logic}, Inf. Comput. \textbf{110}, pages 327--365
  (1994).
\newline\urlprefix\url{https://doi.org/10.1006/INCO.1994.1036}

\bibitem{sep-model-theory}
Hodges, W., \emph{{Model Theory}}, in: E.~N. Zalta and U.~Nodelman, editors,
  \emph{The {Stanford} Encyclopedia of Philosophy}, Metaphysics Research Lab,
  Stanford University, {F}all 2023 edition (2023).
\newline\urlprefix\url{https://plato.stanford.edu/archives/fall2023/entries/model-theory/}

\bibitem{HYLAND1993273}
Hyland, M. and V.~{de Paiva}, \emph{Full intuitionistic linear logic (extended
  abstract)11presented at the 9th international congress of logic, methodology
  and philosophy of science held in uppsala, sweden, august 7-14, 1991.},
  Annals of Pure and Applied Logic \textbf{64}, pages 273--291 (1993), ISSN
  0168-0072.
\newline\urlprefix\url{https://doi.org/https://doi.org/10.1016/0168-0072(93)90146-5}

\bibitem{LopesEnglanderLoboCruz}
Lopes, B., C.~Englander, F.~Lobo and M.~Cruz, \emph{Revisiting the proof theory
  of classical s4}, O que nos faz pensar \textbf{25} (2017).
\newline\urlprefix\url{https://oquenosfazpensar.fil.puc-rio.br/oqnfp/article/view/511/500}

\bibitem{DBLP:journals/igpl/Makinson14}
Makinson, D., \emph{On an inferential semantics for classical logic}, Log. J.
  {IGPL} \textbf{22}, pages 147--154 (2014).
\newline\urlprefix\url{https://doi.org/10.1093/jigpal/jzt038}

\bibitem{DBLP:journals/igpl/MartinsM04}
Martins, L.~R. and A.~T.~C. Martins, \emph{Natural deduction and weak
  normalization for full linear logic}, Log. J. {IGPL} \textbf{12}, pages
  601--625 (2004).
\newline\urlprefix\url{https://doi.org/10.1093/jigpal/12.6.601}

\bibitem{Medeiros_2006}
Medeiros, M. D. P.~N., \emph{A new s4 classical modal logic in natural
  deduction}, Journal of Symbolic Logic \textbf{71}, pages 799--809 (2006).
\newline\urlprefix\url{https://doi.org/10.2178/jsl/1154698578}

\bibitem{ll-mellies}
Melli{\`{e}}s, P., \emph{Categorical semantics of linear logic}, Panoramas et
  syntheses \textbf{27}, pages 15--215 (2009).

\bibitem{DBLP:journals/apal/MillerNPS91}
Miller, D., G.~Nadathur, F.~Pfenning and A.~Scedrov, \emph{Uniform proofs as a
  foundation for logic programming}, Ann. Pure Appl. Log. \textbf{51}, pages
  125--157 (1991).
\newline\urlprefix\url{https://doi.org/10.1016/0168-0072(91)90068-W}

\bibitem{DBLP:journals/aml/Mints98}
Mints, G., \emph{Normal deduction in the intuitionistic linear logic}, Arch.
  Math. Log. \textbf{37}, pages 415--425 (1998).
\newline\urlprefix\url{https://doi.org/10.1007/S001530050106}

\bibitem{Murthy1990ExtractingCC}
Murthy, C.~R., \emph{Extracting constructive content from classical proofs}
  (1990).
\newline\urlprefix\url{https://api.semanticscholar.org/CorpusID:120346396}

\bibitem{DBLP:journals/corr/abs-2306-03656}
Nascimento, V., L.~C. Pereira and E.~Pimentel, \emph{An ecumenical view of
  proof-theoretic semantics}, CoRR \textbf{abs/2306.03656} (2023).
  \eprint{2306.03656}.
\newline\urlprefix\url{https://doi.org/10.48550/ARXIV.2306.03656}

\bibitem{DBLP:journals/aml/Negri02}
Negri, S., \emph{A normalizing system of natural deduction for intuitionistic
  linear logic}, Arch. Math. Log. \textbf{41}, pages 789--810 (2002).
\newline\urlprefix\url{https://doi.org/10.1007/S001530100136}

\bibitem{DBLP:books/sp/24/PereiraP24}
Pereira, L.~C. and E.~Pimentel, \emph{On an ecumenical natural deduction with
  stoup. part {I:} the propositional case}, in: A.~P. d'Aragona, editor,
  \emph{Perspectives on Deduction: Contemporary Studies in the Philosophy,
  History and Formal Theories of Deduction}, volume 481 of \emph{Synthese
  Library}, pages 139--169, Springer (2024).
\newline\urlprefix\url{https://doi.org/10.1007/978-3-031-51406-7\_7}

\bibitem{DBLP:journals/jphil/PiechaSS15}
Piecha, T., W.~de~Campos~Sanz and P.~Schroeder{-}Heister, \emph{Failure of
  completeness in proof-theoretic semantics}, J. Philos. Log. \textbf{44},
  pages 321--335 (2015).
\newline\urlprefix\url{https://doi.org/10.1007/S10992-014-9322-X}

\bibitem{Pimentel2023ATO}
Pimentel, E. and L.~C. Pereira, \emph{{A Tour on Ecumenical Systems}}, in:
  P.~Baldan and V.~de~Paiva, editors, \emph{10th Conference on Algebra and
  Coalgebra in Computer Science (CALCO 2023)}, volume 270 of \emph{Leibniz
  International Proceedings in Informatics (LIPIcs)}, pages 3:1--3:15, Schloss
  Dagstuhl -- Leibniz-Zentrum f{\"u}r Informatik, Dagstuhl, Germany (2023).
\newline\urlprefix\url{https://doi.org/10.4230/LIPIcs.CALCO.2023.3}

\bibitem{prawitz1965}
Prawitz, D., \emph{Natural Deduction: A Proof-Theoretical Study}, Dover
  Publications (1965).

\bibitem{DBLP:journals/Prawitz15}
Prawitz, D., \emph{Classical versus intuitionistic logic}, in: B.~L. Edward
  Hermann~Haeusler, Wagner de Campos~Sanz, editor, \emph{Why is this a Proof?,
  Festschrift for Luiz Carlos Pereira}, volume~27, pages 15--32, College
  Publications (2015).

\bibitem{pym2025categorical}
Pym, D., E.~Ritter and E.~Robinson, \emph{Categorical proof-theoretic
  semantics}, Studia Logica \textbf{113}, pages 125--162 (2025).
\newline\urlprefix\url{https://doi.org/10.1007/s11225-024-10101-9}

\bibitem{restall}
Restall, G., \emph{Speech acts \& the quest for a natural account of classical
  proof} (2021). Available at \url{https://consequently.org/writing/}.

\bibitem{Sandqvist}
Sandqvist, T., \emph{{Classical logic without bivalence}}, Analysis
  \textbf{69}, pages 211--218 (2009), ISSN 0003-2638.
  \eprint{https://academic.oup.com/analysis/article-pdf/69/2/211/296090/anp003.pdf}.
\newline\urlprefix\url{https://doi.org/10.1093/analys/anp003}

\bibitem{Sandqvist2015IL}
Sandqvist, T., \emph{Base-extension semantics for intuitionistic sentential
  logic}, Logic J. of the IGPL \textbf{23}, pages 719--731 (2015).
\newline\urlprefix\url{https://doi.org/10.1093/jigpal/jzv021}

\bibitem{Schellinx1991SomeSO}
Schellinx, H., \emph{Some syntactical observations on linear logic}, J. Log.
  Comput. \textbf{1}, pages 537--559 (1991).
\newline\urlprefix\url{https://doi.org/10.1093/logcom/1.4.537}

\bibitem{pts-91}
Schroeder-Heister, P., \emph{Uniform proof-theoretic semantics for logical
  constants (abstract)}, Journal of Symbolic Logic \textbf{56}, page 1142
  (1991).

\bibitem{DBLP:journals/synthese/Schroeder-Heister06}
Schroeder{-}Heister, P., \emph{Validity concepts in proof-theoretic semantics},
  Synth. \textbf{148}, pages 525--571 (2006).
\newline\urlprefix\url{https://doi.org/10.1007/S11229-004-6296-1}

\bibitem{sep-proof-theoretic-semantics}
Schroeder-Heister, P., \emph{{Proof-Theoretic Semantics}}, in: E.~N. Zalta and
  U.~Nodelman, editors, \emph{The {Stanford} Encyclopedia of Philosophy},
  Metaphysics Research Lab, Stanford University, {W}inter 2022 edition (2022).
\newline\urlprefix\url{https://plato.stanford.edu/archives/win2022/entries/proof-theoretic-semantics/}

\bibitem{Stafford2023-STAFAT-7}
Stafford, W. and V.~Nascimento, \emph{Following all the rules: Intuitionistic
  completeness for generalized proof-theoretic validity}, Analysis \textbf{83},
  pages 507--516 (2023).
\newline\urlprefix\url{https://doi.org/10.1093/analys/anac100}

\end{thebibliography}

%
%\newpage
%\appendix
%\section{Appendix} \label{sec:appendix}
%\input{appendix}

\end{document}